\documentclass[11pt,a4paper]{article}

\usepackage{preamble}
\usepackage[left=1in, right=1in, top=1in, bottom=1in]{geometry}
\usepackage{graphicx}
\usepackage{hyperref}
\usepackage{xcolor}
\usepackage{adjustbox}[export]
\usepackage{tikz-cd}
\usepackage{authblk}
\usepackage{wrapfig}
\usepackage{enumitem}

\usepackage{todonotes}

\newcommand{\boris}[1]{}
\newcommand{\alex}[1]{}
\newcommand{\note}[1]{}

\newcommand{\interval}[2]{{#2}\rightarrow{#1}}

\newcommand{\psup}[2]{\prescript{#1}{}{#2}}
\newcommand{\und}[1]{\underline{#1}}

\newcommand{\Bd}{\mathrm{Bd}}

\newcommand{\ii}{{\mathsf{i}}}
\newcommand{\f}{{\mathsf{f}}}

\newcommand{\Dr}{\mathrm{Dr}}

\newcommand{\Hloc}{\pi^\1(\caA^{\loc})\ket{\Omega}}
\newcommand{\mn}[2]{{[#2 \to #1]}}
\newcommand{\coneR}[1]{\pi^{\I}(\caA_{#1})''}

\DeclareMathOperator{\anchor}{  \adjincludegraphics[valign=B, width = 0.25cm]{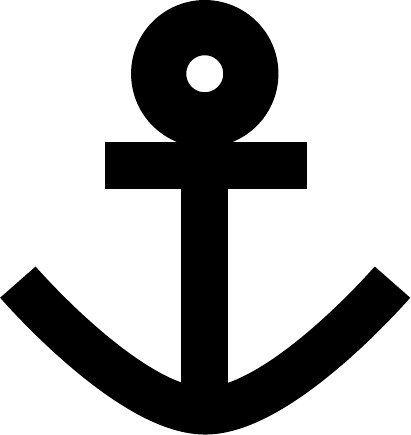}  }
\DeclareMathOperator{\anchorino}{  \adjincludegraphics[valign=B, width = 0.18cm]{anchor_symbol.pdf}  }

\DeclareMathOperator{\tr}{\mathrm{tr}}
\DeclareMathOperator{\Ob}{\mathrm{Ob}}
\DeclareMathOperator{\Irr}{\mathrm{Irr}}

\DeclareMathOperator{\Tube}{\rm{Tube}}

\newcommand{\1}{\mathbbm{1}}

\DeclareMathOperator{\DHR}{\mathsf{SSS}}
\DeclareMathOperator{\SSS}{\mathsf{SSS}}
\renewcommand{\Vec}{\mathsf{Vec}}
\DeclareMathOperator{\Nat}{\mathsf{Nat}}
\DeclareMathOperator{\Fun}{\mathsf{Fun}}

\DeclareMathOperator{\op}{\mathrm{op}}
\DeclareMathOperator{\coev}{\mathrm{coev}}

\DeclareMathOperator{\inn}{\mathrm{in}}

\newcommand{\ZC}{Z(\mathcal{C})}
\DeclareMathOperator{\gl}{\mathrm{gl}}

\newtheorem{excitation lemma}[theorem]{the Excitation Lemma}
\newtheorem{concatenation lemma}[theorem]{the Concatenation Lemma}
\newtheorem{multiplicativity lemma}[theorem]{the Multiplicativity Lemma}
\newtheorem{inclusion lemma}[theorem]{the Inclusion Lemma}
\newtheorem{isotopy lemma}[theorem]{the Isotopy Lemma}

\begin{document}

\title{Sector Theory of Levin-Wen Models II \\ Fusion and Braiding}

\author[1]{Alex Bols \thanks{email: \href{abols01@phys.ethz.ch}{abols01@phys.ethz.ch}}}
\author[2]{Boris Kj\ae r \thanks{email: \href{bbk@math.ku.dk}{bbk@math.ku.dk}}}
\affil[1]{Institute for Theoretical Physics, ETH Z{\"u}rich}
\affil[2]{QMATH, Department of Mathematical Sciences, University of Copenhagen}

\maketitle

\begin{abstract}
    This is the continuation of our study of the Levin-Wen model based on an arbitrary unitary fusion category $\caC$ on the infinite plane. The ground state of the Levin-Wen model hosts anyonic excitations whose fusion and braiding properties are captured by the associated braided $\rm C^*$-tensor category of superselection sectors $\DHR$. By constructing explicit isomorphisms between the fusion spaces of $\DHR$ and those of the Drinfeld center $Z(\caC)$, we show that these two categories have isomorphic $F$- and $R$-symbols. It follows that the full subcategory of finite sectors is unitarily braided monoidally equivalent to the Drinfeld center,  $$\,\DHR_f \simeq Z(\caC).$$ This provides the first complete characterisation of the category of superselection sectors for a class of two-dimensional lattice models supporting anyons with non-integer quantum dimensions.
\end{abstract}

\section{Introduction} \label{sec:introduction}

The ambition to understand gapped phases of quantum spin systems has generated a vast literature. Within mathematical physics, this program is cast as a classification problem; we want to establish a complete set of invariants of gapped phases so that if two gapped ground states have the same invariant, then they belong to the same gapped phase. An important invariant for this classification problem in two dimensions is the \emph{anyon content} of a gapped ground state, or more generally, the topological defects admitted by the ground state. It has become clear over the past decade that these topological defects, together with their fusion and braiding properties, can be captured rigorously using \emph{sector theory} \cite{naaijkens2011localized, ogata2022derivation, bhardwaj2025posets, jones2024dhr, kawagoe2024operator, rubio2024anyonic}.

In this paper we continue our study, initiated in \cite{bols2025levinI} (hereafter `Part I'), of the sector theory of Levin-Wen models on the infinite square lattice based on a unitary fusion category (UFC) $\caC$. These models are believed to represent all gapped phases in two spatial dimensions that admit a gapped boundary. In Part I we classified the irreducible anyon sectors of Levin-Wen models, namely the unitary equivalence classes of irreducible representations of the observable algebra $\caA$ that satisfy the superselection criterion with respect to the GNS-representation $\pi^{\I}$ of the unique Levin-Wen ground state $\omega^{\I}$. Concretely, for each simple object $X \in \Irr \ZC$ of the \emph{Drinfeld center} of $\caC$ we constructed an endomorphism $\rho^X \in \End(\caA)$ so that $\{\pi^X := \pi^\1 \circ \rho^X\}_{X \in \Irr Z(\caC)}$ is a complete set of disjoint irreducible anyon representations.

Under the assumption of (approximate) Haag duality, one can construct a braided $\rm C^*$-tensor category $\DHR$ of \emph{superselection sectors} \cite{naaijkens2011localized, Fiedler2014, ogata2022derivation} associated to $\omega^{\I}$. This construction adapts the sector theory of algebraic quantum field theory \cite{doplicher1969fields, doplicher1971local, buchholz1982locality, frohlich1990braid} to the setting of lattice spin systems. 
The category $\DHR$ rigorously captures the anyon content of the theory, describing the fusion and braiding of anyons, and establishes these data as an invariant of gapped phases \cite{nachtergaele2019quasilocality, ogata2022derivation}.

The irreducible anyon representations $\rho^X$ constructed in $\cite{bols2025levinI}$ are the simple objects of $\DHR$. The results of \cite{bols2025levinI} therefore establish that $\DHR_f$, the full semisimple subcategory of $\DHR$ generated by its simple objects, is linearly equivalent to $Z(\caC)$. The goal of the present paper is to analyse the fusion and braiding properties of the Levin-Wen model's anyonic excitations by establishing the equivalence $\DHR_f \simeq Z(\caC)$ as braided $\rm C^*$-tensor categories. We achieve this by identifying the fusion spaces of these categories by constructing isomorphisms 
\begin{equation*}
    \Phi_{XY}^Z : \ZC(X \otimes Y \to Z) \to \DHR_f(\rho^X \otimes \rho^Y \to \rho^Z),
\end{equation*}
and showing that these isomorphisms preserve $F$- and $R$-symbols. In Appendix \ref{app:equivalence}, we review the details of the familiar statement that the $F$- and $R$-symbols determine semisimple braided monoidal categories up to equivalence. 

Under the assumption of bounded spread Haag duality for the ground state of the Levin-Wen model (\Cref{ass:bounded spread Haag duality} below), we therefore obtain
\begin{theorem}
    \label[theorem]{thm:main theorem}
    There is a unitary braided monoidal equivalence
    \begin{equation*}
        \ZC \simeq \DHR_f.
    \end{equation*}
\end{theorem}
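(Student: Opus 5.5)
The plan is the one announced in the introduction: identify the fusion spaces of the two categories explicitly, and then invoke the reconstruction statement of Appendix \ref{app:equivalence}. That statement says that two semisimple braided monoidal categories with matching simple objects, matching fusion spaces, and equal $F$- and $R$-symbols (with respect to compatible bases) are braided monoidally equivalent. It is unitarily so if the identifications are unitary for the natural inner products on the fusion spaces.

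Part I gives a bijection $X \mapsto \rho^X$ between $\Irr \ZC$ and the simple objects of $\DHR_f$; we use it as the object part of the functor. We also need $\rho^{\1}$ to be the identity endomorphism, or at least canonically isomorphic to it, and the trivial anyon representation to be $\pi^{\I}$.

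\textbf{Step 1: the maps $\Phi_{XY}^Z$.} For $X,Y,Z$ simple, we build $\Phi_{XY}^Z$ concretely. Each $\rho^X$ is implemented by string-net ribbon operators along a path to infinity, carrying a label $X=(x,\sigma_x)$ of the Drinfeld center. Given $v \in \ZC(X\otimes Y \to Z)$, we place the ribbons for $X$ and $Y$ in a common cone. We then define an operator in $\coneR{\Lambda}$ by inserting $v$, viewed as a morphism $x\otimes y\to z$ in $\caC$, at the point where the two strings merge into a single $z$-string. The half-braidings $\sigma$ are used to pass the strings through the remaining lattice plaquettes.

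Three things must then be checked.
\begin{enumerate}[label=(\roman*)]
\item The resulting operator intertwines $\pi^{\I}\circ\rho^X\rho^Y$ with $\pi^{\I}\circ\rho^Z$. By bounded spread Haag duality (\Cref{ass:bounded spread Haag duality}) it lies in the cone algebra, and so defines a morphism of $\DHR$.
\item The map $\Phi_{XY}^Z$ is linear and injective.
\item $\Phi_{XY}^Z$ is an isomorphism. Since Part I gives linear equivalence, the fusion multiplicities agree, $\dim\DHR_f(\rho^X\otimes\rho^Y\to\rho^Z)=N_{XY}^Z$, so injectivity is enough.
\end{enumerate}
Injectivity would be shown by evaluating on the ground-state vector and using the graphical calculus for string-net states, which is faithful because $\caC$ is a UFC. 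We also normalise so that $\Phi$ intertwines the adjoint (dagger) with the $\ZC$ inner product, making it isometric.

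\textbf{Step 2: $F$-symbols.} We compare the two composites $(\rho^X\otimes\rho^Y)\otimes\rho^W\to\rho^V$ for the two bracketings. In $\DHR$ the tensor product of morphisms is $T\otimes S = T\rho^X(S)$, so we must compute $\rho^X$ applied to a fusion operator. The idea is that, up to ground-state-preserving moves, this composite is again a string-net diagram. Local string-net relations (the Excitation, Concatenation, Multiplicativity and Isotopy lemmas) then show that the two orders of composition produce the same diagram, rewritten via the associator of $\caC$. The result is the same basis change as in $\ZC$, so the $F$-symbols coincide.

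\textbf{Step 3: $R$-symbols, the main obstacle.} The $\DHR$ braiding is defined by transporting $\rho^Y$ to a cone spacelike to that of $\rho^X$ and using the asymptotic limit, which depends on a choice of left/right convention. We must show that
\[
\Phi_{YX}^Z\bigl(v\circ c_{X,Y}\bigr)=\Phi_{XY}^Z(v)\circ \epsilon_{\rho^X,\rho^Y},
\]
where $c$ is the half-braiding $\sigma$ and $\epsilon$ is the braiding of $\DHR$. The hard part is to compute the transporter explicitly as a ribbon operator sweeping the $Y$-string around the $X$-string, and to identify the crossing this produces with $\sigma_x(y)$. This must be done using the half-braiding relation rather than a naive over/under crossing, which does not exist in $\caC$.

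I would first choose the cones so that the swap is realised by a single local move near the fusion vertex. I would then control the tail at infinity with the isotopy invariance from Part I, which says that the ribbon operators depend on the path only up to the ground state, so the limit stabilises after finitely many steps. If the sign or orientation convention comes out reversed, the result is an equivalence with $\ZC^{\mathrm{rev}}$, which we correct by switching the cone convention.

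With $F$ and $R$ matched and $\Phi$ unitary, the appendix result yields the unitary braided monoidal equivalence $\ZC\simeq\DHR_f$.
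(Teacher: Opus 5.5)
Your overall architecture matches the paper's: fusion maps built from string operators with a Drinfeld insertion that merges the two excitations, $F$-symbols via $\bar\rho^X$ applied to a fusion operator plus the Inclusion and Multiplicativity lemmas, $R$-symbols via an explicit transporter, then Proposition~\ref{prop:isomophic symbols implies isomorphic categories}. However, Step~1(iii) rests on a false inference. Part~I only shows that $\{\bar\rho^X\}_{X\in\Irr\ZC}$ is a complete set of simple objects of $\DHR_f$, i.e.\ a \emph{linear} equivalence, and this says nothing about the tensor product of $\DHR$. It does not give $\dim\DHR(\bar\rho^X\times\bar\rho^Y\to\bar\rho^Z)=N_{XY}^Z$. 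It does not even show that $\bar\rho^X\times\bar\rho^Y$ lies in $\DHR_f$; the paper obtains that closure as a \emph{consequence} of Proposition~\ref{prop:fusion rules in DHR}. Injectivity or isometry of each $\Phi^Z_{XY}$ only yields $\dim\ge N_{XY}^Z$, and a priori $\bar\rho^X\times\bar\rho^Y$ could have further summands, even infinite ones. Rescuing the count through statistical dimensions would need conjugates for $\bar\rho^X$, which are a corollary of the theorem, not an input.

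The missing ingredient is the completeness relation
\[
\sum_{Z,\kappa}\Phi_{XY}^Z[\pi_{Z,\kappa}]^*\,\Phi_{XY}^Z[\pi_{Z,\kappa}]=\id_{\bar\rho^X\times\bar\rho^Y}
\]
for an orthogonal decomposition $\{\pi_{Z,\kappa}\}$ of $X\otimes Y$. The paper gets it from the second identity of Lemma~\ref{lem:Phi dagger lemma}: $\Phi^{XY}_Z[\delta^\dagger]\Phi^Z_{XY}[\alpha]$ is the limit of $(U^X_{n+1})^*(U^Y_n)^*\Dr_{n+1}[\delta^\dagger\circ\alpha]\,U^Y_nU^X_{n+1}$. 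Summing over channels produces $\Dr_{n+1}[\id_{X\otimes Y}]$, which acts as the identity on the range of $P^{(X,Y)}_{L_{n+1}}$. The Excitation Lemma (Lemma~\ref{lem:strings produce anyons}) puts $U^Y_nU^X_{n+1}\ket\Psi$ in that range for every local state and large $n$. Together with the orthogonality relations from the first identity, this shows that $\{\Phi_{XY}^Z[\pi_{Z,\kappa}]\}$ is an orthogonal direct sum decomposition of $\bar\rho^X\times\bar\rho^Y$. That single fact gives the multiplicities, bijectivity of $\Phi^Z_{XY}$, and closure of $\DHR_f$ under $\times$. It also gives the hypothesis ``preserves orthogonal direct sum decompositions'' needed in the \emph{unitary} part of the appendix proposition; isometry of each map separately does not give this.

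Two smaller points. First, the isotopy invariance you attribute to Part~I is not there. It is Lemma~\ref{lem:isotopy lemma} together with Lemma~\ref{lem:variation on concatenation} and the bridge construction of this paper, and it is what makes the transporter sequence eventually constant. Second, ``switching the cone convention'' is not a legitimate fallback, because the braiding of $\DHR_f$ is fixed by its definition. The computation must produce $\beta_{Y,X}$ for that braiding.
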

 The proof appears in Section \ref{sec:proof of main theorem}. Note that this equivalence pushes the unitary modular tensor category (UMTC) structure of $Z(\caC)$ forward to $\DHR_f$, showing in particular that anyons of the Levin-Wen model have conjugates (antiparticles).

A similar result has recently been obtained \cite{bols2025classification, bols2025category} for Kitaev's quantum double models \cite{kitaev2003fault} on the plane. For these models one can explicitly construct localized and transportable \emph{amplimorphisms}, originally introduced in \cite{naaijkens2015kitaev}, representing all equivalence classes of objects in $\DHR_f$. The arguments in \cite{bols2025category} rely crucially on the fact that these amplimorphisms moreover provide an action of the model's anyon theory, namely the representation category of the quantum double $\caD(G)$ of the gauge group, on the observable algebra. It follows from the discussion in \cite[Section 5.1]{chen2022q} that the analogous property \emph{cannot} hold for string operators of the Levin-Wen models covered here (indeed, $Z(\caC)$ may have non-integer quantum dimensions). Our methods for computing the braided monoidal structure of $\DHR_f$ for Levin-Wen models therefore necessarily differ significantly from the strategy pursued in $\cite{bols2025category}$. We believe that the strategy presented here can be used to compute the sector theory of representative ground states of \emph{all} gapped phases of two dimensional spin systems \cite{sopenko2023chiral}.

This paper is structured as follows. After introducing the model and its category of superselection sectors in Section \ref{sec:model and DHR}, we summarise in Section \ref{sec:basic notions} some important concepts from Part I that will be used in this second part. The simple objects of $\DHR$ are characterised in Section \ref{sec:simple objects of DHR}. In Section \ref{sec:fusion}, we construct the isomorphisms $\Phi_{XY}^Z$ of fusion spaces and show that they preserve $F$-symbols. In Section \ref{sec:braiding}, we establish that they also preserve $R$-symbols. Finally, the short Section \ref{sec:proof of main theorem} gives the proof of the main theorem \ref{thm:main theorem} by appealing to the result of Appendix \ref{app:equivalence} which reviews how the maps $\Phi_{X Y}^Z$ can be used to construct a unitary braided monoidal equivalence between $Z(\caC)$ and $\DHR_f$.

\subsection*{Acknowledgements} 
We thank Corey Jones and David Penneys for useful conversations. B. K. was supported by the
Villum Foundation through the QMATH Center of Excellence (Grant No. 10059) and the Villum Young Investigator (Grant No. 25452) programs.


\setcounter{tocdepth}{2}
\tableofcontents

\section{The Levin-Wen model and its category of superselection sectors} \label{sec:model and DHR}

This section recaps the definition of the Levin-Wen model, and follows with a detailed definition of the braided C$^*$-tensor category of superselection sectors associated with the gapped ground state of the model under the assumption of bounded spread Haag duality \cite{naaijkens2011localized, ogata2022derivation, bhardwaj2025posets}.
The reader familiar with this material can skip immediately to \Cref{sec:basic notions}. 
\subsection{The Levin-Wen model}

The Levin-Wen model is defined based on a unitary fusion category (UFC). 
Throughout, we fix an arbitrary UFC $\caC$ with representative simple objects $\Irr \caC$ (see Section \ref{ssubsec:UFC} for details).

\subsubsection{Local degrees of freedom}

To each lattice site $v \in \Z^2 \subset \R^2$, we associate a local Hilbert space
\[
\caH_v = \bigoplus_{a, b, c, d \in \Irr \caC} \, \caC(a \otimes b \rightarrow c \otimes d).
\]
An element $\phi \in \caH_v$ in the subspace $\caC(a \otimes b \rightarrow c \otimes d)$ is represented graphically as
\[
\adjincludegraphics[valign=c, height=0.8cm]{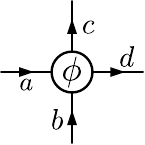},
\]
where the morphism $\phi$ is associated to the site $v$, and the diagram is read from bottom left to top right.
Following \cite{kong2014universal, christian2023lattice, green2024enriched}, we equip $\caH_v$ with the \emph{skein inner product} given by
\[
\langle \phi, \psi \rangle := \frac{\tr \{ \phi^{\dag} \circ \psi \}}{\sqrt{d_a d_b d_c d_d}} 
= \frac{1}{\sqrt{d_a d_b d_c d_d}} \, \adjincludegraphics[valign=c, height=1.0cm]{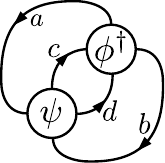}
\qquad\text{for }\phi, \psi \in \caC(a \otimes b \rightarrow c \otimes d).
\]
For any $v \in \Z^2$, let $\caA_v = \End(\caH_v)$. For finite $V \subset \Z^2$, define
\[
\caH_V = \bigotimes_{v \in V} \caH_v, \quad \caA_V = \bigotimes_{v \in V} \caA_v \simeq \End(\caH_V).
\]
If $V \subset W$ are finite subsets of $\Z^2$, there is a natural embedding $\caA_V \hookrightarrow \caA_W$ by tensoring with the identity on $\caA_{W \setminus V}$. For any (possibly infinite) $X \subset \Z^2$, these inclusions form a directed system of matrix algebras with direct limit
\[
\caA_X^{\loc} = \varinjlim_{V \subset\subset X} \caA_V, \quad \caA_X = \overline{\caA_X^{\loc}}^{\norm{\cdot}}.
\]
The *-algebra $\caA_X^{\loc}$ is called the algebra of \emph{local observables} on $X$, and the $\rm C^*$-algebra $\caA_X$ is called the \emph{quasi-local algebra} on $X$. We also write $\caA = \caA_{\Z^2}$ and $\caA^{\loc} = \caA_{\Z^2}^{\loc}$.

For any $S \subset \R^2$, define $\overline{S} = S \cap \Z^2$ and set
\[
\caA_S := \caA_{\overline S}, \quad \caH_S := \caH_{\overline S} \quad \text{if $\overline S$ is finite}.
\]

\subsubsection{The Levin-Wen Hamiltonian and its ground state}

Let $\bse_1 = (1,0)$ and $\bse_2 = (0,1)$ be the unit vectors of the square lattice $\Z^2$. Denote the set of oriented edges by $\vec \caE = \{ (v_0, v_1) \in \Z^2 \times \Z^2 : \dist(v_0, v_1) = 1 \}$.

For $e \in \vec \caE$, write $\bar e = (v_1, v_0)$, $\partial_\ii e = v_0$, and $\partial_\f e = v_1$. For each oriented edge $e = (v_0, v_1) \in \vec \caE$, define the projector $A_e$ on $\caH_{v_0} \otimes \caH_{v_1}$ by
\[
A_e \,\, \adjincludegraphics[valign=c, height=0.8cm]{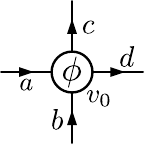} \otimes 
\adjincludegraphics[valign=c, height=0.8cm]{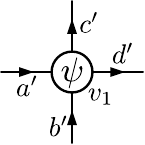} = 
\delta_{d\, a'} \, \adjincludegraphics[valign=c, height=0.8cm]{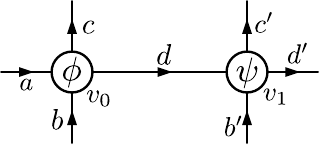}.
\]
Clearly $A_e = A_{\bar e}$. We say that $A_e$ enforces the \emph{string-net constraint} at $e$.

Denote by $\caF$ the set of faces of $\Z^2$. For each face $f \in \caF$, define $\caH_f = \bigotimes_{v \in f} \caH_v$ and an orthogonal projector $B_f$ on $\caH_f$ which annihilates vectors orthogonal to $\prod_{e \in f} A_e \caH_f$ and acts on $\prod_{e \in f} A_e \caH_f$ by inserting the regular element of $\caC$ and using local relations:
\[
	B_f \, \adjincludegraphics[valign=c, height=1.0cm]{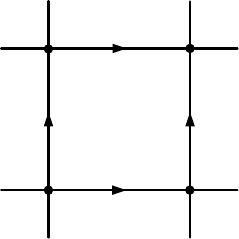} :=
\frac{1}{\caD^2} \sum_{a \in \Irr \caC} d_a \, \adjincludegraphics[valign=c, height=1.0cm]{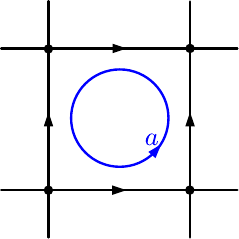},
\]
where $d_a$ is the quantum dimension of $a$ and $\caD^2 = \sum_a d_a^2$. See Section \ref{ssubsec:Bf defined} for a precise definition. The projector $B_f$ is orthogonal. Moreover, all $\{B_f\}_{f \in \caF}$ commute by Lemma \ref{lem:commutativity lemma}, see also \cite[Proposition 5.14]{zhang2016temperley}.

The \emph{Levin-Wen Hamiltonian} is the formal commuting projector Hamiltonian
\[
H_{LW} = - \sum_{f \in \caF} B_f.
\]

A state $\omega: \caA \to \C$ is a \emph{frustration-free ground state} of $H_{LW}$ if $\omega(B_f) = 1$ for all $f \in \caF$. The following Proposition has been proved in \cite{jones2023localtopologicalorderboundary} (see section 2.3 and Theorem 4.8 of that paper) and more recently using different methods in \cite[~Proposition 2.1]{bols2025levinI}:
\begin{proposition} \label{prop:unique ffgs}
	The Levin-Wen Hamiltonian has a unique frustration free ground state which we denote by $\omega^{\I}$. This frustration free ground state is pure.
\end{proposition}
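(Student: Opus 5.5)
Existence and uniqueness are proved separately, and purity then follows from uniqueness.

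\emph{Existence.} For each finite region $V$, take the compact subset of finite faces $\caF_V$ whose vertices lie in $V$. Since the projectors $\{B_f\}_{f\in\caF_V}$ commute (Lemma~\ref{lem:commutativity lemma}), $P_V=\prod_{f\in\caF_V}B_f$ is a projector on $\caH_V$. Its range is nonzero: it contains string-net states obtained by evaluating the empty diagram on a disk, suitably inserted. Let $\omega_V$ be the normalised trace of $P_V$, extended to $\caA$ by any state on the complement. A weak-$*$ limit point along an exhausting sequence then satisfies $\omega(B_f)=1$ for every $f$, which gives existence.

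\emph{Uniqueness.} This is the main step. I would prove local topological quantum order. For a square $V$ and a slightly smaller square $V'\subset V$, the claim is that for every $O\in\caA_{V'}$,
\[
P_V\, O\, P_V = c(O)\, P_V
\]
for some scalar $c(O)$. To prove it, identify $P_V\caH_V$ with the skein module of the disk $V$ with boundary conditions given by the labels on the boundary edges. Because the $B_f$ insert the regular element $\sum_a (d_a/\caD^2)\,a$, $P_V$ projects onto vectors whose interior is an evaluated string-net. The image of $P_V$ therefore decomposes as $\bigoplus_{\partial}\caC(\1\to \partial)$ over boundary labels $\partial$, and interior operators can only act through the disk's boundary data. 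Concretely, for $O$ supported deep inside, $P_VOP_V$ commutes with the action of $\caA_{V\setminus V'}$ compressed by $P_V$. By a local-relations argument this compressed algebra acts irreducibly on each boundary sector, and distinct sectors are related by boundary operators. Schur's lemma then forces $P_VOP_V\propto P_V$.

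Given LTQO, any frustration-free state $\omega$ satisfies $\omega(O)=\omega(P_VOP_V)=c(O)$ for all local $O$. Here we use that $\omega(P_V)=1$ implies $\omega(X)=\omega(P_VXP_V)$ by Cauchy--Schwarz. Hence $\omega$ is determined on $\caA^{\loc}$, and by density on $\caA$, which proves uniqueness.

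\emph{Purity.} The set of frustration-free ground states is a face of the state space: if $\omega=\lambda\omega_1+(1-\lambda)\omega_2$ with $\omega(B_f)=1$, then each $\omega_i(B_f)=1$ because $B_f\le 1$. A face consisting of a single point is an extreme point, so $\omega^{\I}$ is pure.

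The expected obstacle is the LTQO step for a general UFC $\caC$, in particular the irreducibility/Schur argument on the boundary sectors with multiplicities. The skein inner product is exactly what makes $P_V$ compatible with the skein-module isomorphism unitarily. Alternatively, one can cite \cite{jones2023localtopologicalorderboundary}, which establishes precisely this statement.
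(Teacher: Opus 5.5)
Your three reductions are correct and follow the standard route. The paper itself gives no proof here; it defers to \cite{jones2023localtopologicalorderboundary} and Part I, and the former proceeds via local topological order. Weak-$*$ compactness gives existence. Next, $\omega(B_f)=1$ for the commuting $B_f$, together with Cauchy--Schwarz, gives $\omega(X)=\omega(P_VXP_V)$, so LTQO forces uniqueness. Finally, $B_f\le 1$ makes the frustration-free states a face, so a unique one is extreme, i.e.\ pure. The gap is in your proof of LTQO itself.

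You claim that for $O\in\caA_{V'}$ the operator $P_VOP_V$ commutes with $P_V\caA_{V\setminus V'}P_V$. Disjoint supports only give $[O,A]=0$ for $A\in\caA_{V\setminus V'}$. But $P_V$ contains the projectors $B_f$ of faces straddling $\partial V'$, so it commutes with neither $O$ nor a generic $A$, and $P_VOP_VAP_V=P_VAP_VOP_V$ does not follow. (Also, $P_V\caA_{V\setminus V'}P_V$ is not an algebra.) Once one knows this set acts irreducibly on $P_V\caH_V$, the commutation you assert is equivalent to the conclusion $P_VOP_V\propto P_V$. So the argument is circular as written, and ``a local-relations argument'' does not supply the missing input.

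The fix is to use only boundary operators that genuinely commute with $P_V$: the tube-algebra action $\frt_{\caS_0}$ on the outer collar $C^{\caS_0}$, with $V'$ chosen disjoint from that collar.
\begin{itemize}
\item By Proposition~\ref{prop:isomorphism of skein subspace and skein module}, $H_V=P_V\caH_V$ is invariant under the $*$-representation $\frt_{\caS_0}$. Hence $H_V^{\perp}$ is invariant too, i.e.\ $[\frt_{\caS_0}(a),P_V]=0$.
\item By support, $[\frt_{\caS_0}(a),O]=0$. Therefore $P_VOP_V$ commutes with $\frt_{\caS_0}(\Tube_{\caS_0})$.
\item By Proposition~\ref{prop:the great interface} with $m=0$, only $X_0=\I$ survives, so $H_V\cong\caC(\I\to\chi^{\otimes\caS_0})$ as $\Tube_{\caS_0}$-modules.
\item This module is irreducible: the matrix units $E^{\I}_{\underline b,j;\underline a,i}$ of Proposition~\ref{prop:matrix units for Tube_n} act on it as a full matrix algebra, in particular connecting different boundary labels.
\end{itemize}
Schur's lemma then gives $P_VOP_V=c(O)P_V$, and the remainder of your argument goes through unchanged.
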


Since $H_{LW}$ is a commuting projector Hamiltonian, it follows that $\omega^{\I}$ is in fact a gapped ground state of $H_{LW}$.
Let $(\pi^{\I}, \caH, \ket\Omega)$ be the GNS triple of the unique frustration free ground state $\omega^{\I}$ of the Levin-Wen Hamiltonian. Since $\omega^{\I}$ is pure, $\pi^{\I}$ is irreducible.

\subsection{The category of superselection sectors} \label{subsec:DHR}

For completeness, we review the construction of the braided $\rm C^*$-tensor category of superselection sectors, in detail, under the assumption of bounded spread Haag duality. See also \cite[~Section 6.1]{bhardwaj2025posets} for a slightly different presentation.

Note that while this section deals with the $\DHR$ category associated with the ground state $\omega^{\I}$ of our Levin-Wen model, the discussions in this section are actually valid for any pure gapped ground state $\omega^{\I}$ on a two-dimensional spin system.

\subsubsection{Cones and anyon representations}

A \emph{cone} $\Lambda \subset \R^2$ is a subset of the plane of the form
\begin{equation*}
	\Lambda = \Lambda_{a, \hat v, \theta} :=  \{ x \in \R^2 \, : \, (x - a) \cdot \hat v > \norm{x-a} \cos (\theta/2)   \} \quad\quad\quad\quad \adjincludegraphics[width=2.5cm,valign=c]{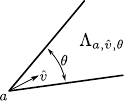}
\end{equation*}
for some \emph{apex} $a \in \R^2$, a unit vector $\hat v \in \R^2$ called the \emph{axis}, and an \emph{opening angle} $\theta \in (0, 2\pi)$. The closure of a cone is also called a cone with the same apex, axis, and opening angle.

Given any cone $\Lambda = \Lambda_{a, \hat v, \theta}$ we write $\Lambda^{+s} := \Lambda_{b, \hat v, \theta}$ with $b = a - s \sin(\theta/2)^{-1} \cdot \hat v$ for the cone obtained by moving the bounding rays of $\Lambda$ perpendicularly outwards by a distance $s$.\\

The basic objects of study in sector theory are so-called anyon representations.
\begin{definition} \label[Definition]{def:anyon representation}
	A representation $\pi: \caA \to B(\caH)$ is localized in a cone $\Lambda$ (with respect to $\pi^{\I}$) if 
	\begin{equation*}
		\pi(x) = \pi^{\I}(x) \qquad \text{for all } x\in \caA_{\Lambda^c}.
	\end{equation*}
	A representation $\pi$ satisfies the superselection criterion (with respect to $\pi^{\I}$) if for any cone $\Lambda$ there is a unitarily equivalent representation localized in $\Lambda$.
	Representations satisfying the superselection criterion are called anyon representations, and their unitary equivalence classes are called anyon sectors.
\end{definition}

\subsubsection{Localized and transportable endomorphisms of the allowed algebra}

Anyon sectors capture the types of anyonic excitations that are supported by the ground state $\omega^{\I}$. In order to investigate the fusion of these anyons, we need a way to compose anyon representations. It turns out that under the assumption of (approximate) Haag duality, any anyon representation can be extended from $\caA$ to an \emph{endomorphism} of a slightly larger algebra. The fusion of anyons then corresponds to the composition of such endomorphisms. \\

For any cone $\Lambda$ we put $\caR_{\Lambda} := \coneR{\Lambda}$ which we call the \emph{cone algebra} of $\Lambda$. Since $\omega^{\I}$ is pure and a gapped ground state of a local Hamiltonian, it follows from \cite[~Lemma 5.3]{ogata2022derivation} that all the cone algebras $\caR_{\Lambda}$ are infinite factors.
The following assumption is adapted from \cite{jones2024dhr}.

\begin{assumption}[Bounded spread Haag duality] \label{ass:bounded spread Haag duality}
    There is $s \geq 0$ so that
    $$ \caR_{\Lambda^c}' \subseteq \caR_{\Lambda^{+s}} $$
    holds for all cones $\Lambda$.
\end{assumption}

\begin{remark}
    (Bounded spread) Haag duality was proven for a wide class of commuting projector models based on $\rm C^*$-weak Hopf algebras in \cite{ogata2025haag}. This class includes models which are believed to be in the same phase as the Levin-Wen model considered here. That is, the Levin-Wen ground state can be transformed into the ground state of one of the models of $\cite{ogata2025haag}$ by finite depth quantum circuits and adding/removing decoupled ancillas. Since bounded spread Haag duality is stable with respect to such transformations, we expect that the Levin-Wen ground state considered here does satisfy bounded spread Haag duality.
    Note, moreover, that the results presented in this paper suffice to derive a meaningful result on the anyon theory even without making the assumption of bounded spread Haag duality; see Remark \ref{rem:no haag} below.
\end{remark}

Fix a unit vector $\hat f \in \R^2$ which we call the \emph{forbidden direction}. We say a cone $\Lambda = \Lambda_{a, \hat v, \theta}$ is \emph{allowed} if $\hat v \cdot \hat f < \cos(\theta/2)$, in which case we write $\Lambda \perp \hat f$. Define the \emph{allowed algebra}
$$ \caB := \overline{ \bigcup_{\Lambda \perp \hat f} \, \caR_{\Lambda} }^{\norm{\cdot}}. $$
Note that $\pi^{\I}(\caA) \subset \caB$ is WOT-dense.

\begin{lemma} \label[lemma]{lem:extension to B}
    Let $\pi : \caA \to B(\caH)$ be an anyon representation. There is a unique WOT-continuous representation $\rho : \caB \to B(\caH)$ such that $\rho \circ \pi^{\I} = \pi$. We call $\rho$ the extension of $\pi$ to $\caB$.
\end{lemma}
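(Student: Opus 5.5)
The extension is built cone by cone: $\rho$ is defined on each cone algebra $\caR_\Lambda$ with $\Lambda$ allowed, these definitions are shown to agree where they overlap, and the result is extended by norm continuity. For a fixed allowed cone $\Lambda$, the forbidden direction guarantees that $\Lambda^c$ contains a cone $\Gamma$ pointing roughly along $\hat f$. By the superselection criterion there is a unitary $V_\Gamma : \caH \to \caH$ and a representation $\pi_\Gamma = \mathrm{Ad}(V_\Gamma) \circ \pi$ localized in $\Gamma$. The candidate definition is
\[
  \rho(x) := V_\Gamma^* \, x \, V_\Gamma \qquad \text{for } x \in \caR_{\Lambda}.
\]
The motivation is that $\pi_\Gamma$ agrees with $\pi^{\I}$ on $\caA_{\Gamma^c} \supseteq \caA_\Lambda$. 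Hence for $a \in \caA_\Lambda$ we get $V_\Gamma^* \pi^{\I}(a) V_\Gamma = V_\Gamma^* \pi_\Gamma(a) V_\Gamma = \pi(a)$. The map $\mathrm{Ad}(V_\Gamma^*)$ is a WOT-continuous $*$-homomorphism on all of $B(\caH)$. It therefore extends $\pi \circ (\pi^{\I})^{-1}$ from $\pi^{\I}(\caA_\Lambda)$ to its weak closure $\caR_\Lambda$, and it is the only normal such extension, since $\pi^{\I}(\caA_\Lambda)$ is WOT-dense in $\caR_\Lambda$.

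The second step is consistency. If $\Lambda_1, \Lambda_2$ are allowed cones, both contained in a larger allowed cone $\Lambda_3$, then the definitions agree on $\caR_{\Lambda_1}$ and $\caR_{\Lambda_2}$, because each agrees with the definition on $\caR_{\Lambda_3}$. This uses the uniqueness of normal extensions just noted: any two normal maps agreeing on $\pi^{\I}(\caA_{\Lambda_3})$ agree on $\caR_{\Lambda_3}$, so the choice of $\Gamma$ and $V_\Gamma$ is irrelevant. Two allowed cones need not both sit inside a single allowed cone. However, the set of allowed cones is directed enough that any two can be connected by a finite chain of allowed cones in which consecutive members lie in a common allowed cone. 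One can also simply note that each $\caR_{\Lambda_i}$ contains $\pi^{\I}(\caA_S)$ for finite $S$, and use the following fact: on $\caR_{\Lambda_1} \cap \caR_{\Lambda_2}$ both definitions are normal and agree with $\pi \circ (\pi^{\I})^{-1}$ on a WOT-dense subalgebra. I expect this compatibility, in particular checking that the intersection is the weak closure of the local algebra in the overlap, to be the main obstacle. The chain argument avoids it, because it only requires uniqueness on whole cone algebras.

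With compatibility in hand, $\rho$ is a well-defined $*$-homomorphism on $\bigcup_{\Lambda \perp \hat f} \caR_\Lambda$. This union is a $*$-algebra, since any two allowed cones are contained in a common allowed cone after the chaining step, or one can work with the algebra they generate. On each $\caR_\Lambda$, $\rho$ is implemented by conjugation with a unitary and is therefore isometric. So $\rho$ is contractive and extends uniquely by norm continuity to a representation of $\caB$. By construction $\rho \circ \pi^{\I} = \pi$ on $\caA^{\loc}$, and hence on $\caA$ by continuity. Finally, $\rho$ is WOT-continuous on each cone algebra, which is the sense of WOT-continuity intended here. Uniqueness of $\rho$ follows from the WOT-density of $\pi^{\I}(\caA_\Lambda)$ in each $\caR_\Lambda$ together with norm density of the union in $\caB$.
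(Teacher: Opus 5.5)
Your construction on a single allowed cone is the same as the paper's: conjugation by a transporter into a cone inside $\Lambda^c$, justified by uniqueness of the normal extension from the WOT-dense subalgebra $\pi^{\I}(\caA_\Lambda)$. The norm extension and the uniqueness argument also match. The gap is in the consistency step, which is the only nontrivial point.

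Your premise that two allowed cones need not lie in a common allowed cone is false. The closed angular sectors of $\Lambda_1$ and $\Lambda_2$ both avoid $\hat f$, so they lie in a single open arc of directions $S$ whose closure still avoids $\hat f$. The cone with directions $S$, with its apex moved far enough opposite to its axis, contains both $\Lambda_1$ and $\Lambda_2$. To check this, use convexity of that cone if its opening angle is at most $\pi$, and convexity of its complement otherwise. So the allowed cones form a directed set, and this is exactly what the paper uses. If $\Lambda_1,\Lambda_2\subset\Lambda_3$, both of your definitions are restrictions of the one on $\caR_{\Lambda_3}$, so they agree on $\caR_{\Lambda_1}\cap\caR_{\Lambda_2}$. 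Moreover $\bigcup_{\Lambda\perp\hat f}\caR_\Lambda$ is itself a $*$-algebra on which $\rho$ is a homomorphism.

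Your own later sentence (``any two allowed cones are contained in a common allowed cone'') states the correct fact and contradicts your earlier claim. With it, the proof goes through as in the paper.

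The chain argument you propose as a substitute does not work. Suppose $\Lambda_1=\Delta_1,\dots,\Delta_k=\Lambda_2$, with consecutive members contained in common cones $\Gamma_i$. You only learn that the definitions on $\caR_{\Gamma_i}$ and $\caR_{\Gamma_{i+1}}$ agree on $\caR_{\Delta_{i+1}}$. An element of $\caR_{\Lambda_1}\cap\caR_{\Lambda_2}$ that does not lie in the intermediate algebras is not controlled. Even agreement on all pairwise intersections would not give a homomorphism on the algebra generated by non-nested cone algebras, which is what ``work with the algebra they generate'' would need.

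In fact no argument of this shape can work. The family of \emph{all} cones is chain-connected in your sense, yet there a representation normal on every cone algebra and extending $\pi$ fails to exist in general when the braiding is nontrivial. If it existed, transporting $\sigma$ clockwise or counterclockwise past $\rho$ would give the same operator $V^*\rho(V)$, forcing trivial monodromy. That is precisely why a forbidden direction is introduced, and directedness is the property it buys.
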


\begin{proof}
    Since $\pi$ is an anyon representation there is a unitary $V_{\Lambda}$ for every cone $\Lambda$, such that $\Ad[V_{\Lambda}] \circ \pi^{\I}$ is localized in $\Lambda^c$. Equivalently, $\pi|_{\caA_{\Lambda}} = \Ad[V_{\Lambda}] \circ \pi^{\I}|_{\caA_{\Lambda}}$. Note that if $\Lambda \subset \Lambda'$ then
    $$ \Ad[V_{\Lambda'}] \circ \pi^{\I}|_{\caA_{\Lambda}} = \pi|_{\caA_{\Lambda}} = \Ad[V_{\Lambda}] \circ \pi^{\I}|_{\caA_{\Lambda}}. $$
    Since the adjoint action of a unitary is WOT-continuous, this implies that $\Ad[V_{\Lambda}]|_{\caR_{\Lambda}} = \Ad[V_{\Lambda'}]|_{\caR_{\Lambda}}$ whenever $\Lambda \subset \Lambda'$. Noting that the set of allowed cones, partially ordered by inclusion, is directed, this implies that we can consistently define $\rho$ on $\bigcup_{\Lambda \perp \hat f} \caR_{\Lambda}$ by setting $\rho(x) = \Ad[V_{\Lambda'}](x)$ whenever $x \in \caR_{\Lambda'}$ for an allowed cone $\Lambda'$. Finally, $\rho$ extends to $\caB$ by norm-continuity.

    Any $x \in \caA^{\loc}$ belongs to some $\caA_{\Lambda}$ for an allowed cone $\Lambda$, so we have $\rho(\pi^{\I}(x)) = (\Ad[V_{\Lambda}] \circ \pi^{\I})(x) = \pi(x)$. The equality $\rho \circ \pi^{\I} = \pi$ therefore holds on $\caA^{\loc}$, and it extends to the whole of $\caA$ by continuity.
\end{proof}

At this point we have shown that each anyon representation can be uniquely extended to a representation of the allowed algebra. We will now show, under the assumption of bounded spread Haag duality, that if the anyon representation is localized in an allowed cone, this extension is actually an endomorphism of $\caB$.

\begin{lemma} \label{lem:extension is endomorphism}
	Let $\pi$ be an anyon representation localized in an allowed cone $\Lambda$, and let $\rho : \caB \rightarrow B(\caH)$ be its extension to $\caB$. Under \Cref{ass:bounded spread Haag duality} we have $\rho(\caR_{\Delta}) \subset \caR_{\Delta^{+s}}$ for any allowed cone $\Delta$ that contains $\Lambda$. In particular, $\rho$ is an endomorphism of $\caB$.
\end{lemma}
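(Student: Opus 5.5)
The plan is the standard DHR argument: show that $\rho(\caR_\Delta)$ commutes with $\caR_{\Delta^c}$, then use \Cref{ass:bounded spread Haag duality} to place it in $\caR_{\Delta^{+s}}$.

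Fix an allowed cone $\Delta \supseteq \Lambda$. Since $\pi$ is localized in $\Lambda$, we have $\pi(x) = \pi^{\I}(x)$ for all $x \in \caA_{\Lambda^c}$. In particular this holds for $x \in \caA_{\Delta^c}$, because $\Delta^c \subseteq \Lambda^c$.

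\textbf{Step 1.} Take $y \in \caA_\Delta$ and $x \in \caA_{\Delta^c}$. These commute in $\caA$, so
\[
\rho(\pi^{\I}(y))\,\pi^{\I}(x) = \pi(y)\pi(x) = \pi(x)\pi(y) = \pi^{\I}(x)\,\rho(\pi^{\I}(y)).
\]
Hence $\rho(\pi^{\I}(\caA_\Delta)) \subseteq \pi^{\I}(\caA_{\Delta^c})' = \caR_{\Delta^c}'$.

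\textbf{Step 2.} Pass to the weak closure. By \Cref{lem:extension to B}, $\rho$ is WOT-continuous. More concretely, its construction gives $\rho|_{\caR_{\Delta'}} = \Ad[V_{\Delta'}]$ on any allowed cone $\Delta' \supseteq \Delta$, so $\rho$ is WOT-continuous on $\caR_\Delta$. The commutant $\caR_{\Delta^c}'$ is weakly closed, and $\pi^{\I}(\caA_\Delta)$ is WOT-dense in $\caR_\Delta$. It follows that $\rho(\caR_\Delta) \subseteq \caR_{\Delta^c}'$.

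\textbf{Step 3.} Apply \Cref{ass:bounded spread Haag duality} to get $\caR_{\Delta^c}' \subseteq \caR_{\Delta^{+s}}$, hence $\rho(\caR_\Delta) \subseteq \caR_{\Delta^{+s}}$. This is the step I expect to need the most care. Two points must be checked:
\begin{itemize}
\item The assumption is stated for cones, and $\Delta$ is one. If its closure is intended, $\caA_{\Delta^c}$ changes only by boundary sites, and one can absorb this by enlarging $s$ slightly or by working with the closure consistently.
\item $\Delta^{+s}$ must be allowed again, so that $\caR_{\Delta^{+s}} \subseteq \caB$. The translate $\Delta^{+s}$ has the same axis and opening angle as $\Delta$, and allowedness depends only on $\hat v \cdot \hat f < \cos(\theta/2)$. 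So $\Delta^{+s}$ is allowed.
\end{itemize}

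\textbf{Step 4.} Conclude that $\rho$ is an endomorphism. Every allowed cone $\Delta_0$ is contained in some allowed cone $\Delta \supseteq \Delta_0 \cup \Lambda$, by the directedness of allowed cones already used in the proof of \Cref{lem:extension to B}. Then $\rho(\caR_{\Delta_0}) \subseteq \rho(\caR_\Delta) \subseteq \caR_{\Delta^{+s}} \subseteq \caB$. So $\rho$ maps $\bigcup_{\Delta_0 \perp \hat f} \caR_{\Delta_0}$ into $\caB$. Since $\rho$ is a $*$-homomorphism and hence norm-continuous, and $\caB$ is norm closed, $\rho(\caB) \subseteq \caB$.
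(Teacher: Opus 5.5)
Your proof is correct and takes the same route as the paper. Localization gives that $\rho(\pi^{\I}(\caA_\Delta))$ commutes with $\pi^{\I}(\caA_{\Delta^c})$, bounded spread Haag duality places it in $\caR_{\Delta^{+s}}$, and WOT- and then norm-continuity extend this to $\caR_\Delta$ and to $\caB$. The differences are cosmetic: you take the weak closure before invoking Haag duality rather than after, and you spell out the directedness of allowed cones and the allowedness of $\Delta^{+s}$, which the paper compresses into ``again by continuity.''
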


\begin{proof}
    By Lemma \ref{lem:extension to B} there is a unique WOT-continuous representation $\rho : \caB \to B(\caH)$ such that $\rho \circ \pi^{\I} = \pi$.

    Suppose $\Delta$ is an allowed cone containing $\Lambda$, and take $x \in \caA_{\Delta}$. For any $y \in \caA_{\Delta^c}$ we have
    $$ [\rho(\pi^{\I}(x)), \pi^{\I}(y)] = [ \pi(x), \pi(y) ] = 0$$
    where we used that $\pi$ is localized in $\Lambda$. This implies that $\rho( \pi^{\I}( \caA_{\Delta} ) ) \subseteq \pi^{\I}(\caA_{\Delta^c})' \subseteq \caR_{\Delta^{+s}}$, where the last inclusion is based on the assumption of bounded spread Haag duality. By continuity we obtain $\rho(\caR_{\Delta}) \subseteq \caR_{\Delta^{+s}}$ for any allowed cone $\Delta$ containing $\Lambda$. Again by continuity, this implies $\rho(\caB) \subseteq \caB$ as required.
\end{proof}

Extensions of anyon representations satisfy the following localization and transportability conditions:
\begin{lemma} \label[lemma]{lem:transportability}
    Let $\pi$ be an anyon representation localized in an allowed cone $\Lambda$. Then the extension $\rho$ of $\pi$ to $\caB$  is localized in $\Lambda$ in the sense that $\rho \circ \pi^{\I}|_{\caA_{\Lambda^c}} = \pi|_{\caA_{\Lambda^c}} = \pi^{\I}|_{\caA_{\Lambda^c}}$. Moreover, the extension $\rho$ is transportable in the sense that for any allowed cone $\Delta$ there is a representation $\rho' : \caB \rightarrow \caB(\caH)$ localized in $\Delta$ which is unitarily equivalent to $\rho$.
\end{lemma}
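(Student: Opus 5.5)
The localization statement is immediate from the defining property of the extension in \Cref{lem:extension to B}: $\rho \circ \pi^{\I} = \pi$ holds on all of $\caA$. Restricting this identity to $\caA_{\Lambda^c}$ and using that $\pi$ is localized in $\Lambda$ (\Cref{def:anyon representation}) gives $\rho \circ \pi^{\I}|_{\caA_{\Lambda^c}} = \pi|_{\caA_{\Lambda^c}} = \pi^{\I}|_{\caA_{\Lambda^c}}$. Nothing beyond this is needed for the first claim.

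For transportability, fix an allowed cone $\Delta$. Since $\pi$ is an anyon representation, the superselection criterion provides a representation $\pi'$ of $\caA$ localized in $\Delta$ and a unitary $U$ with $\pi' = \Ad[U] \circ \pi$. By \Cref{lem:extension to B}, $\pi'$ is itself an anyon representation, since it is unitarily equivalent to $\pi$. It therefore has a unique WOT-continuous extension $\rho' : \caB \to B(\caH)$ with $\rho' \circ \pi^{\I} = \pi'$. By the first part, applied to $\pi'$ and $\Delta$, this $\rho'$ is localized in $\Delta$.

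It remains to show that $\rho'$ is unitarily equivalent to $\rho$. I would show that $\rho' = \Ad[U] \circ \rho$ on $\caB$. Both sides are WOT-continuous representations of $\caB$, because $\Ad[U]$ is WOT-continuous. Both also satisfy $(\,\cdot\,) \circ \pi^{\I} = \pi'$. The uniqueness clause of \Cref{lem:extension to B} then forces them to coincide.

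The only point needing care is that uniqueness in \Cref{lem:extension to B} is stated for WOT-continuous representations. More precisely, the continuity required there is continuity on each cone algebra $\caR_{\Lambda}$, on which $\pi^{\I}(\caA_\Lambda)$ is WOT-dense. I would check that $\Ad[U]\circ\rho$ has this continuity. On each $\caR_{\Lambda}$ with $\Lambda$ allowed, $\rho$ acts as $\Ad[V_\Lambda]$, and composing with $\Ad[U]$ preserves this property. Agreement on the dense $\pi^{\I}(\caA_\Lambda)$ then gives agreement on $\caR_\Lambda$, hence on the union over allowed cones, and by norm continuity on all of $\caB$. I expect this check to be the main (though minor) obstacle. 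Note also that the transportability statement does not require $\rho'$ to be an endomorphism. Nevertheless, if $\Delta$ is allowed, \Cref{lem:extension is endomorphism} shows that it is one.
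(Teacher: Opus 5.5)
Your proof is correct and is essentially the paper's argument: localization is read off from $\rho \circ \pi^{\I} = \pi$, and transportability comes from extending $\pi' = \Ad[U]\circ\pi$ and identifying that extension with $\Ad[U]\circ\rho$ via the uniqueness clause of \Cref{lem:extension to B}, which spells out the paper's one-line ``by continuity'' step. One small slip: that $\pi'$ is an anyon representation follows directly from the definition, since the superselection criterion is invariant under unitary equivalence, and not from \Cref{lem:extension to B}.
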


\begin{proof}
    The localization claim is immediate from $\rho \circ \pi^{\I} = \pi$. For transportability, since $\pi$ is an anyon representation there is a unitary $V_{\Delta}$ so that $\pi' := \Ad[V_{\Delta}] \circ \pi$ is localized in $\Delta$. The representation $\pi'$ is an anyon representation, and by Lemma \ref{lem:extension to B} it has a unique WOT-continuous extension $\rho'$ to $\caB$ which is localized in $\Delta$.
    The unitary $V_{\Delta}$ intertwines $\pi$ and $\pi'$, and by continuity this implies that $V_{\Delta}$ intertwines $\rho$ and $\rho'$ as well. That is, $\rho$ and $\rho'$ are unitarily equivalent. This concludes the proof.
\end{proof}

These preliminaries motivate the following definition.
\begin{definition} \label{def:superselection category}
    An endomorphism $\rho \in \End(\caB)$ is localized in a cone $\Lambda$ if $\rho \circ \pi^{\I}|_{\caA_{\Lambda^c}} = \pi^{\I}|_{\caA_{\Lambda^c}}$. The endomorphism $\rho$ is transportable if for any cone $\Delta$ there is an endomorphism $\rho' : \caB \to \caB$ localized in $\Delta$ which is unitarily equivalent to $\rho$.
    
    The category of superselection sectors $\DHR$ is the $\rm C^*$-category whose objects are endomorphisms of $\caB$ which are localized in an allowed cone, and are transportable. The morphisms of $\DHR$ are intertwiners. For any $\rho, \rho' \in \Ob \DHR$ we denote the space of intertwiners from $\rho$ to $\rho'$ by
    $$\DHR(\rho \to \rho') = \{ V \in B(\caH) \, |\, V\rho(x) = \rho'(x) V \text{ for all }x\in \caB\}.$$
    
    Define $\DHR_f$ as the full subcategory of $\DHR$ whose objects $\rho$ have a finite-dimensional endomorphism space $\DHR(\rho \to \rho)$.
\end{definition}

\begin{remark}
    If $\rho \in \Ob \DHR$ is localized in an allowed cone $\Lambda$ then $\rho \circ \pi^{\I}$ is an anyon representation localized in $\Lambda$, of which $\rho$ is the unique extension to $\caB$. It therefore follows from Lemma \ref{lem:transportability} that, under the assumption of bounded spread Haag duality, the objects of $\DHR$ are precisely the extensions of anyon representations localized in allowed cones.
\end{remark}

\begin{remark} \label{rem:why finite endomorphism spaces}
    The reason for introducing $\DHR_f$ is that we want to prove an equivalence with the Drinfeld center $Z(\caC)$, all of whose objects have finite dimensional endomorphism spaces. Since $\DHR$ admits infinite direct sums, restricting to $\DHR_f$ is necessary for the equivalence to hold. We believe that all infinite objects of $\DHR$ are infinite direct sums or integrals of objects of $\DHR_f$.
\end{remark}

\begin{remark} \label[remark]{rem:no haag}
	In Section \ref{sec:string operators} we give explicit constructions of localized and transportable endomorphisms of $\caB$ representing each isomorphism class of simple objects in $\DHR$. 
    It can be shown directly that the construction indeed gives endomorphisms of $\caB$ without using Haag duality (see e.g. \cite[~Proposition 4.6]{naaijkens2011localized}).
    Transportability of these endomoprhisms also holds independently of Haag duality. Indeed, we construct explicit transporters in Section \ref{subsec:transporters} that are localized in the sense of Lemma \ref{lem:locality of intertwiners}.
    
    The subcategory $\caD \subset \DHR$ generated by all finite products and direct sums of those explicitly constructed endomorphisms, naturally has the structure of a braided $C^*$-tensor category. Indeed, Eq. \eqref{eq:braiding defined} can be adapted to define the braiding, direct sums exist by virtue of the cone algebras being properly infinite (cf. Section \ref{subsec:direct sums and subobjects}), while \Cref{lem:Phi dagger lemma}, proven below, can be applied to construct subobjects. See \cite{bols2024double} for a similar approach.

    Having equipped $\caD$ with the structure of a braided $\rm C^*$-tensor category in this way, The arguments in this paper show that $\caD$ is unitary braided monoidally equivalent to $Z(\caC)$, without any (weakened) assumption of Haag duality.
\end{remark}

\subsubsection{Tensor product and braiding} \label{ssubsec:tensor product and braiding}

We show how the category $\DHR$ of localized and transportable endomorphisms is naturally equipped with a tensor product and a braiding. Let us note first the following locality property of intertwiners.
\begin{lemma} \label{lem:locality of intertwiners}
    Let $\rho$ and $\rho'$ be endomorphisms of $\caB$ both localized in an allowed cone $\Lambda$. If \Cref{ass:bounded spread Haag duality} holds, then $\DHR(\rho \to \rho') \subset \caR_{\Lambda^{+s}} \subset \caB$.
\end{lemma}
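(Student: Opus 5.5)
The plan is to show that an intertwiner $V$ commutes with $\pi^{\I}(\caA_{\Lambda^c})$, and then let bounded spread Haag duality (\Cref{ass:bounded spread Haag duality}) place $V$ inside $\caR_{\Lambda^{+s}}$. The final inclusion $\caR_{\Lambda^{+s}} \subset \caB$ should follow from the definition of $\caB$, once we check that $\Lambda^{+s}$ is again an allowed cone.

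Take $V \in \DHR(\rho \to \rho')$, so that $V\rho(x) = \rho'(x)V$ for all $x \in \caB$. For $y \in \caA_{\Lambda^c}$, the element $\pi^{\I}(y)$ lies in $\caB$, since $\pi^{\I}(\caA) \subset \caB$. Both endomorphisms are localized in $\Lambda$, which gives
\[
\rho(\pi^{\I}(y)) = \pi^{\I}(y) = \rho'(\pi^{\I}(y)).
\]
The intertwining relation then reduces to $V\pi^{\I}(y) = \pi^{\I}(y)V$. Hence $V \in \pi^{\I}(\caA_{\Lambda^c})'$. Taking the double commutant shows that this commutant equals $\caR_{\Lambda^c}'$, where $\caR_{\Lambda^c} = \pi^{\I}(\caA_{\Lambda^c})''$.

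The set $\Lambda^c$ is not a cone, but its closure (or interior) is one. Here I would simply read $\caR_{\Lambda^c}$ as it appears in the assumption, where the lattice algebra $\caA_{\Lambda^c}$ is meant. \Cref{ass:bounded spread Haag duality} then gives $\caR_{\Lambda^c}' \subseteq \caR_{\Lambda^{+s}}$, and therefore $V \in \caR_{\Lambda^{+s}}$.

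It remains to check that $\Lambda^{+s}$ is allowed. The cone $\Lambda^{+s}$ has the same axis $\hat v$ and the same opening angle $\theta$ as $\Lambda$; only the apex moves backwards along the axis. The allowedness condition $\hat v \cdot \hat f < \cos(\theta/2)$ involves only $\hat v$ and $\theta$, so it holds for $\Lambda^{+s}$ as well. Consequently $\caR_{\Lambda^{+s}}$ is one of the algebras in the union defining $\caB$, which gives $\caR_{\Lambda^{+s}} \subset \caB$.

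The only genuinely delicate point is the bookkeeping around $\Lambda^c$ and its closure in the Haag duality assumption; everything else is direct.
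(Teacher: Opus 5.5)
Your proposal is correct and matches the paper's proof: localization of $\rho$ and $\rho'$ forces any intertwiner to commute with $\pi^{\I}(\caA_{\Lambda^c})$, so it lies in $\caR_{\Lambda^c}'$, and \Cref{ass:bounded spread Haag duality} then places it in $\caR_{\Lambda^{+s}}$. Your extra check that $\Lambda^{+s}$ has the same axis and opening angle as $\Lambda$, so that it is allowed and $\caR_{\Lambda^{+s}} \subset \caB$, is a detail the paper leaves implicit.
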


\begin{proof}
    Let $T \in \DHR(\rho \to \rho')$. Take $y \in \caA_{\Lambda^c}$. Since $\rho$ and $\rho'$ are both localized in $\Lambda$ we have
    $$ T \pi^{\I}(y) = T \rho(\pi^{\I}(y)) = \rho'(\pi^{\I}(y)) T = \pi^{\I}(y) T, $$
    which shows that $T \in \pi^{\I}(\caA_{\Lambda^c})' = \caR_{\Lambda^c}' \subseteq \caR_{\Lambda^{+s}}$, where the last inclusion is by bounded spread Haag duality.
\end{proof}

It follows immediately from this lemma that the following strict tensor product on $\DHR$ is well defined:
\begin{align*}
    \rho \times \sigma &:= \rho \circ \sigma  \quad\quad\quad\quad\quad\qquad\,\, \rho, \sigma \in \Ob \SSS \\
    S \times T &:= S \rho(T) = \rho'(T)S, \quad\quad S \in \DHR(\rho \to \rho'), T \in \DHR(\sigma \to \sigma').
\end{align*}

\begin{remark}
    It will follow from \Cref{prop:fusion rules in DHR} below that $\DHR_f$ is closed under taking tensor products.
\end{remark}

The tensor product of endomorphisms that are localized in disjoint cones that are suitably far apart commutes.

\begin{lemma} \label{lem:commuting endomorphisms}
    Let $\rho$ and $\sigma$ be endomorphisms of $\caB$ localized in allowed cones $\Lambda$ and $\Delta$ respectively, so that $\Lambda^{+s} \cap \Delta^{+s} = \emptyset$. Under \Cref{ass:bounded spread Haag duality} we have $\rho \times \sigma = \sigma \times \rho$.
\end{lemma}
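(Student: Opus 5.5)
We show that $\rho(\sigma(x)) = \sigma(\rho(x))$ for all $x$ in a norm-dense subset of $\caB$, namely $\bigcup_{\Gamma \perp \hat f} \caR_\Gamma$, and then pass to all of $\caB$ by norm continuity of endomorphisms. By WOT-continuity on cone algebras (both $\rho$ and $\sigma$ are restrictions of $\Ad$ of unitaries on each $\caR_\Gamma$, as in the proof of Lemma \ref{lem:extension to B}), it is enough to verify the identity on $x = \pi^{\I}(a)$ with $a \in \caA^{\loc}$.

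\emph{Reduction to localized pieces.} Take $a \in \caA^{\loc}$. Because $\Lambda$ and $\Delta$ are disjoint, we can choose a finite set $V \supset \operatorname{supp}(a)$ and split $V = V_\Lambda \sqcup V_\Delta \sqcup V_0$, with $V_\Lambda \subset \Lambda$, $V_\Delta \subset \Delta$ and $V_0 \subset (\Lambda \cup \Delta)^c$. The algebra $\caA_V$ is spanned by elementary tensors $a_\Lambda a_\Delta a_0$ with each factor supported in its piece. By multiplicativity, it then suffices to check the identity on each of the three types of factor.

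\emph{The three cases.} For $a_0 \in \caA_{(\Lambda\cup\Delta)^c}$, localization gives $\rho(\pi^{\I}(a_0)) = \sigma(\pi^{\I}(a_0)) = \pi^{\I}(a_0)$, so both composites equal $\pi^{\I}(a_0)$. For $a_\Lambda \in \caA_\Lambda \subset \caA_{\Delta^c}$, we have $\sigma(\pi^{\I}(a_\Lambda)) = \pi^{\I}(a_\Lambda)$. Hence $\rho\sigma(\pi^{\I}(a_\Lambda)) = \rho(\pi^{\I}(a_\Lambda))$, while $\sigma\rho(\pi^{\I}(a_\Lambda))$ requires knowing how $\sigma$ acts on $\rho(\pi^{\I}(a_\Lambda))$. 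This is where Lemma \ref{lem:extension is endomorphism} and Assumption \ref{ass:bounded spread Haag duality} come in. Since $\Lambda$ is allowed and contains itself, $\rho(\caR_\Lambda) \subset \caR_{\Lambda^{+s}}$, so $\rho(\pi^{\I}(a_\Lambda)) \in \caR_{\Lambda^{+s}}$. By hypothesis $\Lambda^{+s} \subset (\Delta^{+s})^c \subset \Delta^c$, so we need $\sigma$ to act trivially on $\caR_{\Lambda^{+s}}$. This follows because $\sigma\circ\pi^{\I}$ equals $\pi^{\I}$ on $\caA_{\Lambda^{+s}} \subset \caA_{\Delta^c}$, and this equality extends to the weak closure $\caR_{\Lambda^{+s}}$ by WOT-continuity of $\sigma$ on cone algebras. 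Note that $\Lambda^{+s}$ is itself allowed because it has the same axis and opening angle as $\Lambda$, so $\sigma$ is indeed $\Ad$ of a unitary on it. Therefore $\sigma\rho(\pi^{\I}(a_\Lambda)) = \rho(\pi^{\I}(a_\Lambda))$, which matches. The case $a_\Delta$ is symmetric with the roles of $\rho$ and $\sigma$ exchanged.

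\emph{Main obstacle.} The only delicate point is the WOT-continuity step: $\sigma$ is only norm-continuous on $\caB$, but its restriction to each allowed cone algebra is implemented by a unitary, as in the construction of Lemma \ref{lem:extension to B}. Concretely, $\sigma|_{\caR_\Gamma} = \Ad[V_\Gamma]$ for every allowed $\Gamma$, and for $\Gamma = \Lambda^{+s} \subset \Delta^c$ the unitary $V_\Gamma$ commutes with $\pi^{\I}(\caA_\Gamma)$ and hence with $\caR_\Gamma$. This settles the identity on generators, and therefore $\rho\times\sigma = \sigma\times\rho$ on all of $\caB$.
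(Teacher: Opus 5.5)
Your proof is correct and follows essentially the same route as the paper. You use the same three-way split into pieces in $\Lambda$, in $\Delta$ and in $\Lambda^c\cap\Delta^c$, and the same use of Lemma \ref{lem:extension is endomorphism} to put $\rho(x)$ in $\caR_{\Lambda^{+s}}$ (resp.\ $\sigma(x)$ in $\caR_{\Delta^{+s}}$), where the other endomorphism acts trivially, before extending to $\caB$. You are more careful than the paper, whose ``by continuity'' hides the local WOT-continuity you make explicit; the one detail to add is that WOT-continuity of the composites on $\caR_\Gamma$ needs $\sigma(\caR_\Gamma)$ (resp.\ $\rho(\caR_\Gamma)$) to land in an allowed cone algebra, which you get by enlarging $\Gamma$ to an allowed cone containing $\Lambda\cup\Delta$ and applying Lemma \ref{lem:extension is endomorphism}.
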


\begin{proof}
    If $x \in \pi^{\I}( \caA_{\Lambda^c \cap \Delta^c} )$ then by localization we have $\rho(x) = x$ and $\sigma(x) = x$, so $\rho(\sigma(x)) = \rho(x) = x = \sigma(x) = \sigma(\rho(x))$ as required.

    If $x \in \pi^{\I}(\caA_{\Delta})$ then by localization $\rho(x) = x$ and therefore $\sigma(\rho(x)) = \sigma(x)$. By Lemma \ref{lem:extension is endomorphism} we have $\sigma(x) \in \caR_{\Delta^{+s}}$, and again by localization $\rho(\sigma(x)) = \sigma(x)$. This shows $\rho(\sigma(x)) = \sigma(\rho(x))$ in this case as well. We find $\rho(\sigma(x)) = \sigma(\rho(x))$ for $x\in \pi^{\I}(\caA_{\Lambda})$ in the same way.
    
    Since $\rho \circ \sigma$ and $\sigma \circ \rho$ are endomorphisms and $\caA$ is generated by $\caA_{\Lambda}, \caA_{\Delta}$, and $\caA_{\Lambda^c \cap \Delta^c}$, it follows from the above that $\rho \circ \sigma = \sigma \circ \rho$ on $\pi^{\I}(\caA)$. Finally, the equality extends to the whole of $\caB$ by continuity.
\end{proof}

For the braiding, note first that for any allowed cone $\Lambda$ there is an allowed cone $\Delta$ lying clockwise from $\Lambda$ w.r.t. the forbidden direction and such that $\Lambda^{+s} \cap \Delta^{+s} = \emptyset$, see Figure \ref{fig:braiding cones}. Given $\rho, \sigma \in \Ob \DHR$ localized in $\Lambda$ and assuming bounded spread Haag duality (\Cref{ass:bounded spread Haag duality}), Lemmas \ref{lem:transportability} and \ref{lem:extension is endomorphism} provide an endomorphism $\sigma' \in \Ob \DHR$ localized in $\Delta$ and a unitary $V \in \SSS(\sigma \to \sigma')$. Then we may define
\begin{equation} \label{eq:braiding defined}
    b_{\rho, \sigma} := V^* \rho(V) \in \DHR(\rho \times \sigma \to \sigma \times \rho).
\end{equation}
To see that $b_{\rho, \sigma}$ intertwines $\rho \times \sigma$ and $\sigma \times \rho$, compute for any $x \in \caB$
\begin{align*}
    b_{\rho, \sigma} \, (\rho \times \sigma)(x) &= V^* \rho(V) \rho(\sigma(x)) = V^* \rho( V \sigma(x) ) = V^* \rho( \sigma'(x) ) \rho(V) = V^* \sigma'( \rho(x) ) \rho(V) \\
    &= \sigma(\rho(x)) V^* \rho(V) = (\sigma \times \rho)(x) \, b_{\rho, \sigma}
\end{align*}
where we used Lemma \ref{lem:commuting endomorphisms} in the fourth step.

\begin{lemma}
    Under \Cref{ass:bounded spread Haag duality}, the morphism defined in Eq. \eqref{eq:braiding defined} does not depend on the choice of 
    $\Delta$, $\sigma'$, or $V$.
\end{lemma}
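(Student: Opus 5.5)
First I would show independence of the choice of $V$ with $\Delta$ and $\sigma'$ held fixed. Suppose $V_1, V_2 \in \SSS(\sigma \to \sigma')$ are both unitary. Then $W := V_2 V_1^* \in \SSS(\sigma' \to \sigma')$. Since $\sigma'$ is localized in $\Delta$, Lemma \ref{lem:locality of intertwiners} places $W$ in $\caR_{\Delta^{+s}}$. By assumption $\Lambda^{+s} \cap \Delta^{+s} = \emptyset$, so $\caR_{\Delta^{+s}}$ sits inside the von Neumann algebra generated by $\pi^{\I}(\caA_{\Lambda^c})$. The endomorphism $\rho$ acts trivially on $\pi^{\I}(\caA_{\Lambda^c})$ and is WOT-continuous on cone algebras (Lemma \ref{lem:extension to B}), so $\rho(W) = W$. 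This gives
$$ V_2^* \rho(V_2) = V_1^* W^* \rho(W) \rho(V_1) = V_1^* \rho(V_1). $$
The same computation handles a change of $\sigma'$ inside a fixed $\Delta$. If $\sigma''$ is also localized in $\Delta$ with unitary $V' \in \SSS(\sigma \to \sigma'')$, then $W := V' V^* \in \SSS(\sigma' \to \sigma'')$. Lemma \ref{lem:locality of intertwiners} again puts $W$ in $\caR_{\Delta^{+s}}$, and the identical cancellation follows.

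Next I would treat two admissible cones $\Delta_1, \Delta_2$, each clockwise from $\Lambda$ with $\Lambda^{+s} \cap \Delta_i^{+s} = \emptyset$. The plan is to connect them by a chain of admissible cones in which consecutive cones are nested, for instance both contained in a larger admissible cone $\Delta_3$. The set of allowed directions clockwise from $\Lambda$, away from both $\hat f$ and $\Lambda$, is an angular interval. Translating cones outward along their axes keeps them disjoint from $\Lambda^{+s}$, so such a chain should exist. I expect this geometric step to be the main obstacle: I must make sure the connecting cones stay allowed, stay on the clockwise side, and keep $s$-fattenings disjoint from $\Lambda^{+s}$.

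For a nested pair $\Delta_1 \subset \Delta_3$, an endomorphism $\sigma_1$ localized in $\Delta_1$ is also localized in $\Delta_3$. Take $\sigma_1$ with unitary $V_1 \in \SSS(\sigma \to \sigma_1)$. The pair $(\sigma_1, V_1)$ is then an admissible choice for $\Delta_3$, so the previous paragraph (with $\Delta_3$ fixed) identifies the braidings computed with $\Delta_1$ and with $\Delta_3$. Walking along the chain then identifies the braidings for $\Delta_1$ and $\Delta_2$.

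The one subtle ingredient in the first step is the inclusion $\caR_{\Delta^{+s}} \subset \pi^{\I}(\caA_{\Lambda^c})''$. I need $\rho$ to act as the identity on the weak closure, not only on $\pi^{\I}(\caA_{\Lambda^c})$. This holds because $\rho$ restricted to each cone algebra is of the form $\Ad[V_{\Lambda'}]$, which is normal. Concretely, $\Delta^{+s}$ lies in an allowed cone disjoint from $\Lambda$, and on that cone $\rho$ restricts to $\Ad[V_{\Lambda'}]$. That adjoint action equals the identity on the relevant $\pi^{\I}(\caA_{\Delta^{+s}})$ and hence, by normality, on its WOT closure.
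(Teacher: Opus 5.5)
Your proposal is correct and is essentially the paper's argument. The paper compares two choices whose cones lie in a common admissible cone $\Gamma$, uses Lemma~\ref{lem:locality of intertwiners} to place $VW^*$ in $\caR_{\Gamma^{+s}}$ so that $\rho(VW^*)=VW^*$ by localization, and then reaches arbitrary admissible cones by a zig-zag of such cones, whose existence it cites from \cite{bhardwaj2025posets} rather than constructing. Your split into a fixed-$\Delta$ step and a nested-cone step is only a reorganization of this, and your normality argument for $\rho$ acting trivially on $\caR_{\Delta^{+s}}$ makes explicit a point the paper uses without comment.
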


\begin{proof}
    Suppose $\Delta'$, $\sigma''$, and $W$ satisfy the same hypotheses as $\Delta, \sigma'$ and $V$ above. 
    
    Assume there is an allowed cone $\Gamma \supset \Delta, \Delta'$ such that $\Gamma^{+s} \cap \Lambda^{+s} = \emptyset$ and lies clockwise from $\Lambda$ w.r.t. the forbidden direction. Now, $VW^* \in \DHR(\sigma'' \to \sigma')$, and both $\sigma'$ and $\sigma''$ are localized in $\Gamma$. It therefore follows from Lemma \ref{lem:locality of intertwiners} that $VW^* \in \caR_{\Gamma^{+s}}$. Since $\rho$ is localized in $\Lambda$, we have $(VW^*)^*\rho(VW^*) =1$. Using this, we obtain
    $$ W^* \rho(W) = W^* (VW^*)^* \rho( VW^* W ) = V^* \rho(V), $$
    as required.
    
    In general, one can always find a \emph{zig-zag} from $\Delta$ to $\Delta'$ which is disjoint from $\Lambda^{+s}$ \cite{bhardwaj2025posets}. That is, a sequence of allowed cones $\Delta_1 = \Delta, \Delta_2, \cdots, \Delta_{n-1}, \Delta_n = \Delta'$ and allowed cones $\Gamma_1, \cdots, \Gamma_{n-1}$ such that $\Gamma_i^{+s} \cap \Lambda^{+s} = \emptyset$ for each $i = 1, \cdots, n-1$, and such that each $\Gamma_i$ lies clockwise from $\Lambda$ w.r.t. the forbidden direction, and contains $\Delta_{i}$ and $\Delta_{i+1}$. The result then follows from a repeated application of the above.
\end{proof}

\begin{lemma} \label{lem:braiding}
    Under \Cref{ass:bounded spread Haag duality}, the morphisms given by Eq. \eqref{eq:braiding defined} define a unitary braiding on $\DHR$. That is, for all $\rho, \sigma \in \Ob \DHR$ the morphism $b_{\rho, \sigma}$ is unitary, and we have
    $$ b_{\rho', \sigma'} \, S \times T = T \times S \, b_{\rho, \sigma} $$
    for any $\rho',\sigma' \in \Ob\DHR$,  $S \in \DHR(\rho \to \rho')$, and  $T \in \DHR(\sigma \to \sigma')$, and
    \begin{align*}
        b_{\rho \times \sigma, \tau} &= ( b_{\rho, \tau} \times \I_{\sigma} ) \, ( \I_{\rho} \times b_{\sigma, \tau} ) \\
        b_{\rho, \sigma \times \tau} &= (\I_{\sigma} \times b_{\rho, \tau}) \, ( b_{\rho, \sigma} \times \I_{\tau} )
    \end{align*}
    for all $\rho, \sigma, \tau \in \Ob \DHR$.
\end{lemma}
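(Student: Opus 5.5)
The plan is to verify the three properties directly from the definition $b_{\rho,\sigma} = V^*\rho(V)$, using the independence of choices from the preceding lemma to pick convenient transporters each time.

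\textbf{Unitarity.} Since $\rho$ is an endomorphism and $V$ is unitary, $\rho(V)$ is unitary. Hence $b_{\rho,\sigma} = V^*\rho(V)$ is unitary.

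\textbf{Naturality.} Localize $\rho,\sigma$ in $\Lambda$ and $\rho',\sigma'$ in $\Lambda'$, and choose an allowed cone $\Gamma \supset \Lambda \cup \Lambda'$ (enlarging if needed). Pick a cone $\Delta$ clockwise from $\Gamma$ with $\Delta^{+s}\cap\Gamma^{+s}=\emptyset$, together with $\tilde\sigma$ localized in $\Delta$ and unitaries $V\in\SSS(\sigma\to\tilde\sigma)$, $V'\in\SSS(\sigma'\to\tilde\sigma)$. By the independence lemma, $b_{\rho,\sigma}=V^*\rho(V)$ and $b_{\rho',\sigma'}=V'^*\rho'(V')$. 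Put $\tilde T := V' T V^* \in \SSS(\tilde\sigma\to\tilde\sigma)$. By \Cref{lem:locality of intertwiners}, $\tilde T\in\caR_{\Delta^{+s}}$, and $S \in \caR_{\Gamma^{+s}}$. So $\tilde T$ is fixed by $\rho$ and $\rho'$ (each is localized in $\Gamma$), and $\tilde\sigma(S)=S$ (as $\tilde\sigma$ is localized in $\Delta$). Then
$$b_{\rho',\sigma'}\, S\rho(T) = V'^*\rho'(V')S\rho(V'^*\tilde T V) = V'^* S\rho(\tilde T)\rho(V) = V'^*\tilde T S\rho(V).$$
Here the second step uses $\rho'(V')S = S\rho(V')$, and the third uses $\rho(\tilde T)=\tilde T$ together with $\tilde T S = S\tilde T$; the latter holds because the two lie in algebras of disjoint cones, and hence commute by locality of $\pi^{\I}$ and WOT-continuity. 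On the other side,
$$(T\times S)\, b_{\rho,\sigma} = T\sigma(S)V^*\rho(V) = V'^*\tilde T V\sigma(S)V^*\rho(V) = V'^*\tilde T\tilde\sigma(S)\rho(V) = V'^*\tilde T S\rho(V),$$
so the two sides agree.

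\textbf{Hexagon identities.} For the first identity, take $\rho,\sigma,\tau$ localized in $\Lambda$ and use a single transporter $V\in\SSS(\tau\to\tau')$ with $\tau'$ localized in a far clockwise cone. Then
$$b_{\rho\times\sigma,\tau} = V^*\rho\sigma(V) = V^*\rho(V)\,\rho\big(V^*\sigma(V)\big) = (b_{\rho,\tau}\times \I_\sigma)(\I_\rho\times b_{\sigma,\tau}),$$
using $S\times \I = S$ and $\I\times T = \rho(T)$.

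For the second identity, take transporters $V\in\SSS(\sigma\to\sigma')$ and $W\in\SSS(\tau\to\tau')$ with $\sigma',\tau'$ localized in disjoint cones clockwise from $\Lambda$. Then $U := V\sigma(W) = \sigma'(W)V$ transports $\sigma\tau$ to $\sigma'\tau'$, which is localized in the union of the two cones. Here I must make sure this union sits inside an allowed cone $\Gamma$ still clockwise of and $s$-separated from $\Lambda$, so that the independence lemma applies (I would pick both cones inside such a $\Gamma$ from the start). Expanding,
$$b_{\rho,\sigma\tau} = \sigma(W^*)V^*\rho(V)\rho\sigma(W) = \sigma(W^*)\, b_{\rho,\sigma}\, \rho\sigma(W),$$
and $b_{\rho,\sigma}\,\rho\sigma(W) = \sigma\rho(W)\, b_{\rho,\sigma}$ by the intertwining property. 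This gives $\sigma\big(W^*\rho(W)\big)\, b_{\rho,\sigma} = (\I_\sigma\times b_{\rho,\tau})(b_{\rho,\sigma}\times\I_\tau)$.

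\textbf{Main obstacle.} The main obstacle is the geometric bookkeeping: ensuring that the relevant cones can be chosen allowed, clockwise, and $s$-separated (enlarging $\Lambda$ to contain both localization regions, with zig-zag arguments otherwise), and checking the commutation of $S$ and $\tilde T$, which are localized in distant cones.
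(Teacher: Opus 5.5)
\textbf{Gap in the naturality step.} You fix a single $\tilde\sigma$ localized in $\Delta$, together with unitaries $V\in\SSS(\sigma\to\tilde\sigma)$ and $V'\in\SSS(\sigma'\to\tilde\sigma)$. Such a pair exists only if $\sigma\simeq\sigma'$, because then $V'^*V$ would be a unitary in $\SSS(\sigma\to\sigma')$. But $T\in\SSS(\sigma\to\sigma')$ is an arbitrary intertwiner between possibly inequivalent objects. For example, take $\sigma$ simple, $\sigma'$ a nontrivial direct sum containing $\sigma$, and $T$ the isometric inclusion. As written, your argument therefore proves naturality only when $\sigma$ and $\sigma'$ are unitarily equivalent, which is not the general case the statement requires.

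\textbf{The repair.} Transport $\sigma'$ separately, to some $\tilde\sigma'$ localized in the same cone $\Delta$, by a unitary $V'\in\SSS(\sigma'\to\tilde\sigma')$. Then set $\tilde T:=V'TV^*\in\SSS(\tilde\sigma\to\tilde\sigma')$.
\begin{itemize}
\item Since $\tilde\sigma$ and $\tilde\sigma'$ are both localized in $\Delta$, Lemma \ref{lem:locality of intertwiners} still gives $\tilde T\in\caR_{\Delta^{+s}}$.
\item The independence lemma still gives $b_{\rho',\sigma'}=V'^*\rho'(V')$.
\end{itemize}
Your two chains of equalities used only these facts, together with $\tilde\sigma(S)=S$, $\rho(\tilde T)=\tilde T$ and $S\tilde T=\tilde T S$. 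So they go through verbatim, and both sides equal $V'^*\tilde T S\rho(V)$. Unitarity, the first hexagon identity, and the second hexagon identity are fine as you have them. In the second hexagon you use the intertwining property $b_{\rho,\sigma}\,\rho\sigma(W)=\sigma\rho(W)\,b_{\rho,\sigma}$, where the paper invokes $\rho\sigma'=\sigma'\rho$ from Lemma \ref{lem:commuting endomorphisms}. These are equivalent. You can also drop the union-of-cones worry by localizing $\sigma'$ and $\tau'$ in the same cone $\Delta$, as the paper does.

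\textbf{Comparison with the paper.} Once repaired, your naturality proof is correct and genuinely different from the paper's. The paper proves two special cases and then combines them via $S\times T=(S\times\I_{\sigma'})(\I_\rho\times T)$:
\begin{itemize}
\item $T=\I_\sigma$: transport $\sigma$ clockwise and use $\sigma_R(S)=S$.
\item $S=\I_\rho$: transport $\rho$ \emph{counterclockwise} to $\rho_L$ in an auxiliary cone $\Delta_L$, and rewrite $b_{\rho,\sigma}=\sigma(U^*)\,b_{\rho_L,\sigma}\,U$ using the first case.
\end{itemize}
Your single computation handles $S$ and $T$ simultaneously and needs no counterclockwise cone. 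The price is one extra locality input: $S\in\caR_{\Gamma^{+s}}$ and $\tilde T\in\caR_{\Delta^{+s}}$ commute because $\Gamma^{+s}\cap\Delta^{+s}=\emptyset$, and $\rho$ fixes $\tilde T$. Both facts sit at the same level of rigor as the paper's own use of $\sigma_R(S)=S$.
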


\begin{figure}
    \begin{center}
        \includegraphics[width=5cm]{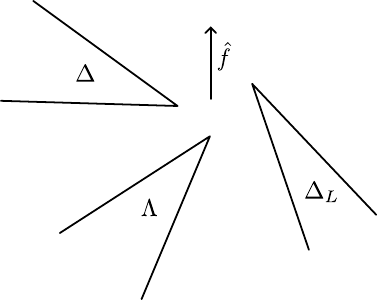}
    \end{center}
    \caption{An allowed cone $\Lambda$ together with allowed cones $\Delta$ and $\Delta_L$, both disjoint from $\Lambda^{+s}$ and lying respectively clockwise and counter clockwise from $\Lambda$ w.r.t. the forbidden direction $\hat f$. }
    \label{fig:braiding cones}
\end{figure}

\begin{proof}
    Unitarity follows immediately from the definition.
    To show $b_{\rho', \sigma'} S \times T = T \times S b_{\rho, \sigma}$, we suppose that the endomorphisms $\rho, \rho', \sigma$ and $\sigma'$ are all localized in an allowed cone $\Lambda$, and fix an allowed cone $\Delta$ that lies clockwise from $\Lambda$ so that $
    \Lambda^{+s} \cap \Delta^{+s}=\emptyset$, as well as endomorphisms $\sigma_R, \sigma'_R$ localized in $\Delta$ and unitaries $V \in \DHR(\sigma \rightarrow \sigma_R)$ and $W \in \DHR(\sigma' \rightarrow \sigma'_R)$.
    
    Let us first show the special case where $\sigma = \sigma'$ and $T = \I_{\sigma}$. then $S \times T = S$ and $T \times S = \sigma(S)$. We must therefore show $b_{\rho', \sigma} S = \sigma(S) b_{\rho, \sigma}$. This we do as follows:
    $$ b_{\rho', \sigma} S = V^* \rho'(V) S = V^* S \rho(V) = V^* \sigma_R(S) \rho(V) = \sigma(S) V^* \rho(V) = \sigma(S) b_{\rho, \sigma},$$
    where we noted that Lemma \ref{lem:locality of intertwiners} and the localization of $\sigma_R$ implies that $\sigma_R(S) = S$.

    Let us now show the special case where $\rho = \rho'$ and $S = \I_{\rho}$. Then $S \times T = \rho(T)$ and $T \times S = T$, so we must show $b_{\rho, \sigma'} \rho(T) = T b_{\rho, \sigma}$. To do this, note first that we can pick an allowed cone $\Delta_L$ which lies counter clockwise from $\Lambda$ and such that $\Delta_L^{+s} \cap \Lambda^{+s} = \emptyset$ (see Figure \ref{fig:braiding cones}), as well as an endomorphism $\rho_L$ localized on $\Delta_L$ and a unitary $U \in \DHR(\rho \to \rho_L)$. The previous special case then yields
    $$ b_{\rho, \sigma} = \sigma(U^*) b_{\rho_L, \sigma} U \quad \text{and} \quad b_{\rho, \sigma'} = \sigma'(U^*) b_{\rho_L, \sigma'} U. $$
    Using this, we compute
    $$ b_{\rho, \sigma'} \rho(T) = \sigma'(U^*) b_{\rho_L, \sigma'} U \rho(T) = \sigma'(U^*) W^* \rho_L(W) \rho_L(T) U = \sigma'(U^*) T S $$
    where we used Lemma \ref{lem:locality of intertwiners} and the localization of $\rho_L$ to conclude that $\rho_L(T) = T$ and $\rho_L(W) = W$. Similarly,
    $$ T b_{\rho, \sigma} = T \sigma(U^*) b_{\rho_L, \sigma} U = \sigma'(U^*) T V^* \rho_{L}(V) U = \sigma'(U^*) T S, $$
    so both special cases are now proven. The general case follows because
    \begin{align*}
        b_{\rho', \sigma'} \, S \times T &= b_{\rho', \sigma'} (S \times \I_{\sigma'}) (\I_{\rho} \times T) = (\I_{\sigma'} \times S) b_{\rho, \sigma'} (\I_{\rho} \times T) \\
        &= (\I_{\sigma'} \times S) (T \times \I_{\rho}) b_{\rho, \sigma} = T \times S \, b_{\rho, \sigma}.
    \end{align*}

    Fix $\rho', \sigma', \tau'$ localized in $\Delta$ with unitary transporters $V_{\rho} \in \DHR(\rho \to \rho')$, $V_{\sigma} \in \DHR(\sigma \to \sigma')$, and $V_{\tau} \in \DHR(\tau \to \tau')$. Then
    $$ (b_{\rho, \tau} \times \I_{\sigma})(\I_{\rho} \times b_{\sigma, \tau}) = \beta_{\rho, \tau} \, \rho( \beta_{\sigma, \tau} ) = V_{\tau}^* \rho(V_{\tau}) \, \rho( V_{\tau}^* \sigma(V_{\tau}) ) = V_{\tau}^* (\rho \times \sigma)( V_{\tau} ) = b_{\rho \times \sigma, \tau}, $$
    which shows the first hexagon identity. For the second, note that $V_{\sigma \times \tau} := V_{\sigma} \times V_{\tau} = V_{\sigma} \sigma(V_{\tau}) \in \DHR( \sigma \times \tau \to \sigma' \times \tau' )$ and compute
    \begin{align*}
        (\I_{\sigma} \times b_{\rho, \tau})(b_{\rho, \sigma} \times \I_{\tau}) &= \sigma \big( V_{\tau}^*  \rho(V_{\tau}) \big) \, V_{\sigma}^* \rho(V_{\sigma}) = \sigma(V_{\tau})^* V_{\sigma}^* \, \sigma' \big( \rho(V_{\tau}) \big) \rho(V_{\sigma}) \\
        &=  \big( V_{\sigma} \sigma(V_{\tau}) \big)^*\, \rho \big( \sigma'(V_{\tau}) V_{\sigma} \big) = V^*_{\sigma \times \tau} \, \rho \big(  V_{\sigma \times \tau} \big) = b_{\rho, \sigma \times \tau}
    \end{align*}
    where we used Lemma \ref{lem:commuting endomorphisms} in the third step.
\end{proof}

\subsubsection{Direct sums and subobjects} \label{subsec:direct sums and subobjects}

Let $\rho_1, \cdots, \rho_n \in \Ob \DHR$, then there is an allowed cone $\Lambda$ so that each $\rho_i$ is localized in $\Lambda$. Since the cone algebra $\caR_{\Lambda}$ is properly infinite we can use the Halving Lemma \cite[~Lemma 6.3.3]{kadison16fundamentals} to construct isometries $\{ v_i \}_{i = 1}^n$ generating an isomorphic copy of the Cuntz algebra $\caO_n$. One then easily checks that
$$ \rho := \sum_{i = 1}^n \Ad[v_i] \circ \rho_i $$
is an endomorphism of $\caB$ which is localized in $\Lambda$ and transportable. This shows that $\DHR$ has direct sums. It is clear from the construction that $\DHR_f$ is closed under direct sums.

Now let $\rho \in \Ob \DHR$ be localized in an allowed cone $\Lambda$ and let $p \in \DHR(\rho \to \rho)$ be a non-trivial orthogonal projection. Assuming bounded spread Haag duality (\Cref{ass:bounded spread Haag duality}), Lemma \ref{lem:locality of intertwiners} yields $p \in \caR_{\Lambda^{+s}}$, so $\tilde \sigma(x) := p \rho(x) = \rho(x) p$ defines a (non-unital) endomorphism of $\caB$. In particular, $\tilde \sigma$ restricts to a non-zero representation of $\pi^{\I}(\caA)$ on $p \caH$. Since $\caA$ is simple, this representation is faithful. This implies that $p$ must be infinite. Since $\caR_{\Lambda^{+s}}$ is a properly infinite factor, it follows that $p \sim \I$, that is, there is a partial isometry $w \in \caR_{\Lambda^{+s}}$ such that $p = w w^*$ and $\I = w^* w$. Consider now the endomorphism $\sigma = \Ad[w] \circ \rho$. One easily checks that $\sigma$ is localized in $\Lambda^{+s}$, and that it is transportable. This shows that $\DHR$ admits subobjects. Note finally that if $\rho \in \Ob \DHR_f$, then also $\sigma \in \Ob \DHR_f$, so $\DHR_f$ is closed under taking subobjects. \\

In summary, we conclude that if \Cref{ass:bounded spread Haag duality} holds, then both $\DHR$ and $\DHR_f$ are braided $\rm C^*$-tensor categories, and $\DHR_f$ is moreover semisimple.

\section{Summary of Part I} \label{sec:basic notions}

In this section, we review general facts about skein theory of unitary fusion categories as well as results from \cite{bols2025levinI} that are used in this manuscript.

\subsection{Skein modules}

\subsubsection{Unitary fusion categories and the Drinfeld center}\label{ssubsec:UFC}

Fix a unitary fusion category (UFC) $\caC$ with its canonical spherical structure; see
\cite{etingof2015tensor,turaev2017monoidal,penneysUFCnotes,penneys2018unitary}.
For objects $x,y \in \Ob\caC$, denote the morphism space by $\caC(x\to y)$.
Each endomorphism space $\caC(x\to x)$ carries the spherical trace $\tr$, and any
nonzero object has quantum dimension $d_x := \tr(\id_x) > 0$.
Choose representatives $\Irr\caC$ of simple objects containing the tensor unit $\I$.
The total quantum dimension is
\[
\caD := \Bigl(\sum_{a\in\Irr\caC} d_a^2\Bigr)^{1/2}.
\]

Duals are written $x^*$ with evaluation and coevaluation maps
$\ev_x : x^*\otimes x \to \I$ and $\coev_x : \I \to x\otimes x^*$.
For each simple $a$ there is a unique $\bar a\in\Irr\caC$ isomorphic to $a^*$.
The unitary structure provides adjoints
$f \mapsto f^\dagger : \caC(x\to y)\to\caC(y\to x)$,
making $\caC(x\to y)$ a Hilbert space with the \emph{trace inner product}
$(f,g)_{\tr} := \tr(f^\dagger g)$.

We employ the graphical calculus throughout (cf.~\cite[Ch.~I.2]{turaev2017monoidal}),
suppressing associators, unitors, and pivotal maps as usual.

\medskip
The \emph{Drinfeld center} $Z(\caC)$ has objects $(X,\sigma)$, where
$X\in\Ob\caC$ and $\sigma : X\otimes - \Rightarrow -\otimes X$ is a half-braiding
with components
\begin{equation}\label{eq:half-braidings}
\sigma_x =
\adjincludegraphics[valign=c,height=1.0cm]{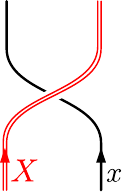},
\qquad
\sigma_x^{-1} =
\adjincludegraphics[valign=c,height=1.0cm]{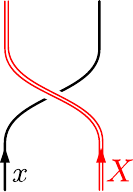}.
\end{equation}
These satisfy $\sigma_{\I}=\id_X$ and
$\sigma_{x\otimes y}=(\id_y\otimes\sigma_z)\circ(\sigma_y\otimes\id_z)$.
A morphism $(X,\sigma)\to(X',\sigma')$ is a map $f\in\caC(X\to X')$ intertwining the half-braidings:
\[
\adjincludegraphics[valign=c,height=1.5cm]{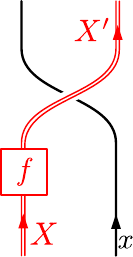}
\;=\;
\adjincludegraphics[valign=c,height=1.5cm]{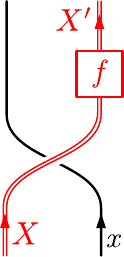}.
\]
We generally omit half-braidings from the notation for objects.

The category $Z(\caC)$ is a unitary modular tensor category whose dagger and
pivotal structures agree with those of $\caC$
\cite{muger2003subfactorsquantumdouble}; see also
\cite[Prop.~2.3]{henriques2015categorifiedtracemoduletensor} in the non-strict
pivotal case.
Let $\Irr Z(\caC)$ be representatives of simple objects containing $\I$; for each
$X \in \Irr \ZC$ there is a unique $\bar X \in \Irr \ZC$ isomorphic to $X^*$.
The braiding in $Z(\caC)$ is
\begin{equation}\label{eq:Drinfeld center braiding}
\beta_{(X,\sigma),(X',\sigma')}=\sigma_{X'}=
\adjincludegraphics[valign=c,height=1.2cm]{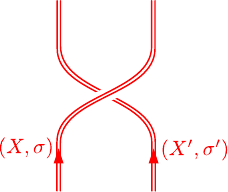}.
\end{equation}

\subsubsection{Skein modules on extended surfaces}

We define skein modules for decorated surfaces; compare
\cite[App.~A]{koenig2010quantum}, \cite[Sec.~2]{kirillov2011string},
\cite{walker2021universal,walker2006tqft}. All manifolds are taken to be
piecewise linear. A \emph{decorated $1$-manifold} is a compact oriented
$1$-manifold $\caN$ (possibly with boundary) equipped with finitely many
\emph{signed marked points} $m_{\caN}\subset\caN$.  
A \emph{decorated surface} is an oriented surface $\Sigma$ together with a
decoration of $\partial\Sigma$. We freely refer to the topology of the
underlying surface (e.g.\ a sphere with $m$ disks removed).

A \emph{string diagram} in $\Sigma$ consists of an embedded graph $\Gamma$
whose edges are labeled by objects of $\caC$ and whose internal vertices are
labeled by compatible morphisms in $\caC$, following the graphical calculus.
Via the forgetful functor $Z(\caC)\to\caC$, labels from $Z(\caC)$ are also
allowed. The graph intersects $\partial\Sigma$ transversely at marked points,
with edge orientation pointing into (resp.\ out of) the boundary at positive
(resp.\ negative) points. These boundary labels determine the
\emph{boundary condition} $b : m_{\partial\Sigma}\to\Ob\caC$.

This boundary condition is \emph{simple} when all labels lie in $\Irr\caC$.  
Let $\Bd(\Sigma)$ denote the set of connected boundary components, viewed as
decorated $1$-manifolds. For $\caN\subset\partial\Sigma$, write $b_{\caN}$ for
the restriction of $b$, and for a diagram $x$ let $x_{\caN}$ denote its induced
boundary data.

For a decorated $1$-manifold $\caN$, define $\hat\caN$ by reversing orientation
and switching all signs; similarly obtain $\hat\Sigma$ from $\Sigma$, so that
$\partial\hat\Sigma=\widehat{\partial\Sigma}$.  
A diagram $x$ on $\Sigma$ determines $\hat x$ on $\hat\Sigma$ by reversing edge
orientations and replacing each internal vertex label with its dagger.

Let $\scrS(\Sigma;b)$ be the set of string diagrams on $\Sigma$ with boundary
condition $b$. Define
\[
A(\Sigma;b) := \C[\scrS(\Sigma;b)]/\!\sim,
\]
the vector space spanned by such diagrams modulo the relations generated by  
\begin{enumerate}[label=(\roman*)]
    \item isotopy in $\Sigma$ fixing $\partial\Sigma$, and
    \item local relations applied inside contractible disks intersecting
edges transversely.
\end{enumerate}

The \emph{skein module} of $\Sigma$ is
\[
A(\Sigma) := \bigoplus_{b\in\caB(\Sigma)} A(\Sigma;b),
\]
where $\caB(\Sigma)$ is the finite set of simple boundary conditions.

For any diagram $x$ in $\Sigma$ we denote its class in $A(\Sigma)$ by
$[x]_{\Sigma}$, or simply by $[x]$ when the surface is clear from context.

\begin{convention} \label[convention]{conv:dotted line}
    We allow string diagrams to have edges coloured by a dotted line, defined by
    \begin{equation} \label{eq:dotted line defined}
        \adjincludegraphics[valign=c, height = 0.8cm]{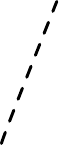} = \frac{1}{\caD^{2}} \sum_{a \in \Irr \caC} d_a \,\,\, \adjincludegraphics[valign=c, height = 0.8cm]{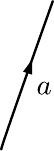}\quad.
    \end{equation}
    This dotted line satisfies the following local relations for any $x \in \Ob \caC$ and any $X \in \Ob Z(\caC)$ (\cite[Corollary~3.5]{kirillov2011string}, \cite[Lemma~2.2]{kirillov2010turaev}):
    \begin{equation} \label{eq:normalization and cloaking properties}
        \adjincludegraphics[valign=c, width = 1.5cm]{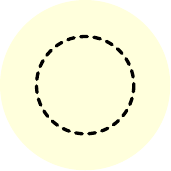} = \adjincludegraphics[valign=c, width = 1.5cm]{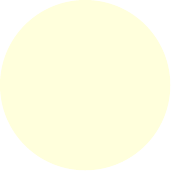}, \quad 
        \adjincludegraphics[valign=c, width = 1.5cm]{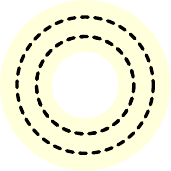} = \adjincludegraphics[valign=c, width = 1.5cm]{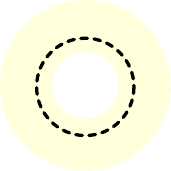}, \quad \adjincludegraphics[valign=c, width = 1.5cm]{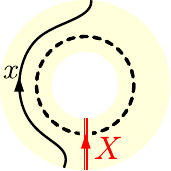} = \adjincludegraphics[valign=c, width = 1.5cm]{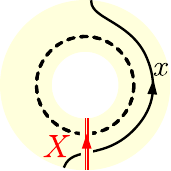}
    \end{equation}
    on embedded disks and annuli in any decorated surface. These are called the normalization, projector, and cloaking properties of the dotted line.
\end{convention}

\subsubsection{Gluing}\label{ssubsec:gluing}

Let $\Sigma$ be a decorated surface and let $\caN,\caM\subset\partial\Sigma$
be disjoint decorated boundary submanifolds.  
We say $\caN$ and $\caM$ are \emph{gluable} if
$\caN \cong \hat{\caM}$ as decorated $1$-manifolds.  
In this case there exists an orientation-reversing homeomorphism
$\psi:\caN\to\caM$ preserving marked points, and we denote by
$\Sigma_{\psi}$ the surface obtained by identifying $\caN$ with $\caM$
via $\psi$.  
The resulting decorated surface depends only on
the isotopy class of $\psi$ \cite[Lem.~4.1.1]{bakalov2001lectures}.

Fix such a map $\psi$ and write $\Sigma_{\gl}$ for the glued surface.
For a string diagram $x$ on $\Sigma$, the boundary data along
$\caN$ and $\caM$ are said to \emph{match} when each pair of corresponding
marked points $(m,\psi(m))$ carries the same label.  
When this holds, $x$ naturally determines a diagram $x_{\gl}$ on
$\Sigma_{\gl}$:
\[
\adjincludegraphics[valign=c,height=1.0cm]{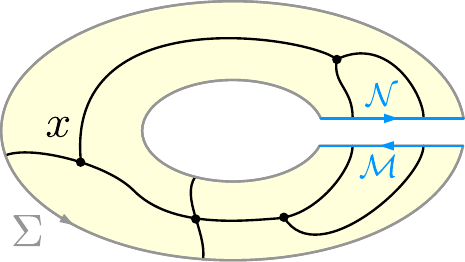}
\;\xmapsto{\gl}\;
\adjincludegraphics[valign=c,height=1.0cm]{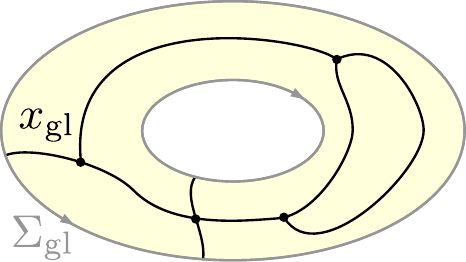}.
\]

This construction descends linearly to skein modules, giving a well-defined
map
\[
\gl : A(\Sigma) \to A(\Sigma_{\gl}), \qquad
\gl([x]) :=
\begin{cases}
[x_{\gl}], & \text{if the boundary labels match},\\
0, & \text{otherwise}.
\end{cases}
\]

\subsubsection{Tube algebras and actions on skein modules}

Let $\caS$ be a decorated circle.  
Forming the product with the unit interval $I$ produces a decorated cylinder
$\caS\times I$ whose bottom and top boundaries are
\[
\partial_b(\caS\times I)=\hat\caS\times\{0\}, 
\qquad
\partial_t(\caS\times I)=\caS\times\{1\}.
\]
The skein module
\[
\Tube_{\caS}:=A(\caS\times I)
\]
carries a natural $\mathrm{C}^*$-algebra structure: multiplication is defined by
gluing the bottom of one representative to the top of another, and the involution
is given by reflection
\[
[x]^*=[f(\hat x)], \qquad
f(\theta,r)=(\theta,1-r).
\]
This algebra is the \emph{Tube algebra} associated to $\caS$
\cite{izumi2000,izumi2001examples,muger2003subfactorsquantumdouble}.

To describe matrix units, we introduce \emph{extended} manifolds:  
a decorated $1$-manifold together with a chosen fiducial point (distinct from
marked points) is called an extended $1$-manifold, and a decorated surface each of whose
boundary components is equipped with a fiducial point is an \emph{extended surface}.
Fiducial points are indicated graphically by an $\mathrm{X}$.

Choose a fiducial point on $\caS$, and suppose $\caS$ has marked points
$m_{\caS}=\{m_1,\dots,m_n\}$ ordered from the fiducial point in the direction
opposite to the orientation of $\caS$.  
Let $\sigma_i\in\{+,-\}$ denote the sign of $m_i$.
For a boundary condition $\underline a:m_{\caS}\to\Irr\caC$, define
\[
\otimes\underline a
:= a(m_1)^{\sigma_1}\otimes\cdots\otimes a(m_n)^{\sigma_n}, 
\qquad
d_{\underline a}:=d_{\otimes\underline a},
\]
and set
\[
\chi^{\otimes\caS}
=\bigoplus_{\underline a:m_{\caS}\to\Irr\caC}\otimes\underline a.
\]

For each $X\in\Irr Z(\caC)$ and boundary condition $\underline a$, fix an
orthonormal basis $\{w^{X\,\underline a}_i\}_i\subset\caC(X\to\otimes\underline a)$ with respect to the trace inner product, represented graphically by
$$ w^{X \underline{a}}_{i} = \adjincludegraphics[valign=c, height = 1.2cm]{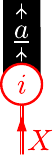}, \quad\quad (w^{X \underline{a}}_{i})^{\dag} =  \adjincludegraphics[valign=c, height = 1.2cm]{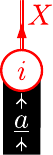}, \quad\quad \text{where} \quad \quad  \adjincludegraphics[valign=c, height = 1.0cm]{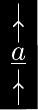} = \adjincludegraphics[valign=c, height = 1.0cm]{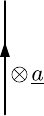} = \adjincludegraphics[valign=c, height = 1.0cm]{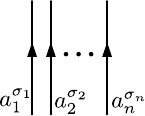}\quad.
$$

\begin{proposition}[{\cite[Prop. 3.4]{bols2025levinI}}]
\label[proposition]{prop:matrix units for Tube_n}
The elements
\[
E^{X}_{\underline b,j;\underline a,i}
:= d_X\,
\adjincludegraphics[valign=c,width=2.0cm]{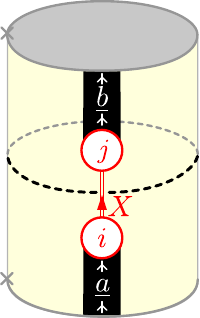}
\]
form a complete system of matrix units in $\Tube_{\caS}$. They satisfy
\begin{equation}\label{eq:Tube_n matrix unit properties}
E^{Y}_{\underline c,k;\underline b',j'}
\,E^{X}_{\underline b,j;\underline a,i}
=
\delta_{XY}\,\delta_{\underline b\,\underline b'}\,\delta_{jj'}\,
E^{X}_{\underline c,k;\underline a,i},
\qquad
\sum_{X,\underline a,i}E^{X}_{\underline a,i;\underline a,i}
=\id_{\Tube_{\caS}}.
\end{equation}
Consequently, the minimal central projections are
\[
P^X=\sum_{\underline a,i}E^{X}_{\underline a,i;\underline a,i},
\qquad X\in\Irr Z(\caC).
\]
\end{proposition}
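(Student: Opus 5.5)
The statement is the matrix-unit description of $\Tube_{\caS}$, so the proof has two parts: compute products of the diagrams $E^X_{\underline b,j;\underline a,i}$ directly, and then show that these elements span $\Tube_{\caS}$.

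For the multiplication rule, I would stack $E^{Y}_{\underline c,k;\underline b',j'}$ on top of $E^{X}_{\underline b,j;\underline a,i}$. The gluing map $\gl$ returns zero unless the boundary labels agree, which forces $\underline b=\underline b'$. After gluing, the middle of the cylinder contains $(w^{Y\underline b}_{j'})^\dagger$ composed with $w^{X\underline b}_j$, possibly with dotted lines running around the tube. I would use the projector and cloaking properties \eqref{eq:normalization and cloaking properties} to pass the $Z(\caC)$-labelled strands through the dotted loop, using the half-braidings. This reduces the composite to the morphism $(w^{Y\underline b}_{j'})^\dagger\circ w^{X\underline b}_j\in\caC(X\to Y)$, read inside $Z(\caC)$ after averaging over the dotted loop. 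Averaging against the dotted line projects a $\caC$-morphism $X\to Y$ onto $Z(\caC)(X\to Y)$. Since $X,Y$ are simple, this gives $\delta_{XY}\,\tr((w_{j'})^\dagger w_j)/d_X\,\id_X=\delta_{XY}\delta_{jj'}\id_X/d_X$ by orthonormality in the trace inner product. The prefactor $d_X$ in the definition absorbs the resulting $1/d_X$. Normalisation of the dotted loop via \eqref{eq:normalization and cloaking properties} should fix the remaining constants.

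For completeness, I would argue by dimension. The matrix-unit relations make the $E^X$ linearly independent and show they span a C$^*$-subalgebra of dimension $\sum_X\sum_{\underline a}\dim\caC(X\to\otimes\underline a)^2$. For each boundary condition, $\dim\caC(F(X)\to\otimes\underline a)=\dim Z(\caC)(X\to I(\otimes\underline a))$, where $F$ is the forgetful functor and $I$ its adjoint. The Tube algebra is isomorphic to the well-known $\bigoplus_{\underline a,\underline b}\caC(\otimes\underline a\otimes\cdot\to\cdot\otimes\otimes\underline b)$-type description, which has exactly this dimension. Alternatively, I would prove the unit identity $\sum E^X_{\underline a,i;\underline a,i}=\id$ directly. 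This amounts to decomposing $\otimes\underline a$, dressed by a dotted loop around the cylinder, into simple summands of $Z(\caC)$ via the half-braiding completeness relation, which is the standard Kirillov/Müger computation. The unit identity together with the multiplication rule then forces $\Tube_{\caS}\cong\bigoplus_X M_{n_X}$, with $P^X$ the minimal central projections.

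The main obstacle I expect is the unit identity, i.e.\ the completeness of the family $\{w^{X\underline a}_i\}$ in the tube. This requires showing that a dotted loop encircling a strand of $\otimes\underline a$ decomposes as $\sum_{X,i} d_X\, w_i(w_i)^\dagger$ threaded by the half-braiding. I would cite the known decomposition of the dotted-loop operator $\sum_a \frac{d_a}{\caD^2}(\cdot)$ as the projection onto the Drinfeld-central part, as in \cite{kirillov2011string}.
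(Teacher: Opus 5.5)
Your two-part architecture (product rule by gluing and averaging, then completeness) is the standard one; the paper itself simply imports the result from Part I. However, the completeness step, which carries most of the content, targets the wrong identity. The unit of $\Tube_{\caS}$ is $\sum_{\underline a}$ of the bare vertical strands $\underline a\times I$, with no dotted loop at all (a bare $\caC$-strand cannot cross a dotted loop). So what must be shown is that these bare strands equal $\sum_{X,i}d_X\,w^{X\underline a}_i\,[\text{dotted loop around the tube, threaded by }X]\,(w^{X\underline a}_i)^\dagger$. The fact you plan to cite, that a dotted loop acts as the projection onto $Z(\caC)$-morphisms, concerns a contractible loop encircling a coupon. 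It can neither create nor remove a loop around the tube, so it is the input for the product rule, not for completeness. Expanding the dotted loop as $\caD^{-2}\sum_c d_c$ times a $c$-loop, completeness is a column-orthogonality relation for half-braidings: in the sum over $X,i$, the components with loop label $c\neq\I$ must cancel and the $c=\I$ component must reproduce the bare strands. For $\caC=\mathrm{Vec}_G$ with no marked points this is $\sum_\pi\dim\pi\,\chi_\pi(g)=|G|\,\delta_{g,e}$. It needs genuine structural input, e.g.\ the identification of $\Tube_{\caS}$ with $\End_{Z(\caC)}(I(\chi^{\otimes\caS}))$, with $I$ adjoint to the forgetful functor \cite{muger2003subfactorsquantumdouble}, together with semisimplicity of $Z(\caC)$. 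Your dimension count does not bypass this, because $\sum_{c\in\Irr\caC}\dim\caC(c\otimes x\to y\otimes c)=\sum_X\dim\caC(X\to x)\dim\caC(X\to y)$ is the same fact. The count is also wrong: for fixed $X$ the units are indexed by pairs $((\underline b,j),(\underline a,i))$, so they span a space of dimension $\sum_X\bigl(\sum_{\underline a}\dim\caC(X\to\otimes\underline a)\bigr)^2$, not $\sum_X\sum_{\underline a}\dim\caC(X\to\otimes\underline a)^2$. Finally, the matrix-unit relations do not by themselves give linear independence (or $P^X\neq0$), since $0$ satisfies them. You must also check $E^X_{\underline a,i;\underline a,i}\neq0$, e.g.\ by letting it act on $w^{X\underline a}_i$ as in \eqref{eq:Tube_n_module}, or via positivity of the TQFT inner product.

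The product rule also skips its key step. After gluing you have two dotted loops around the tube with the non-central coupon $f=(w^{Y\underline b}_{j'})^\dagger\circ w^{X\underline b}_j$ between them. None of the normalization, projector or cloaking properties moves a loop past $f$. You must first merge the two loops into one loop around the tube plus a contractible dotted loop encircling $f$. This is essentially a handle slide, using that the regular element is idempotent under fusion and that half-braidings are natural and monoidal; in $\mathrm{Vec}_G$ it is just the reindexing $c=ab$ of the double sum over loop labels. Only then can you apply the averaging lemma to the contractible loop. For that lemma, knowing that averaging is an idempotent onto $Z(\caC)(X\to Y)$ does not yet give $\tr(f)/d_X$, since for $X=Y$ there are many idempotents onto $\C\,\id_X$. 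You also need that the averaging map preserves $\tr$, which follows from sphericality and $\caD^{-2}\sum_a d_a^2=1$.
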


A projection $p\in\Tube_{\caS}$ is said to be of \emph{type} $X$ if
$p \le P^X$.

\subsubsection{The TQFT inner product}\label{ssubsec:TQFT inner product}

The skein modules $A(\Sigma)$ serve as state spaces for the
Turaev-Viro–Barrett-Westbury TQFT \cite{turaev1992state,barrett1996invariants}.
Since $\caC$ is unitary, the TQFT partition function $Z$ induces an inner
product on $A(\Sigma)$ \cite{walker2021universal,walker2006tqft} which we now recall.

The partition function $Z$ assigns a complex number $Z(M;x)$ to any 3-manifold $M$
equipped with a string diagram $x\in\scrS(\partial M)$.  
This depends only on the class $[x]\in A(\partial M)$, giving a linear map
\[
Z_M : A(\partial M) \to \C.
\]

For a surface $\Sigma$, define the pinched product $\Sigma\times I$ by
collapsing $\partial\Sigma\times I\simeq\partial\Sigma$, so that
\[
\partial(\Sigma\times I) = \Sigma \cup \hat\Sigma.
\]  
Given $[x]\in A(\Sigma;b)$ and $[y]\in A(\Sigma;b')$, define
\[
([x],[y])_{A(\Sigma)} := \delta_{b\,b'} \; Z_{\Sigma\times I}(\hat x \cup y),
\]
where $\hat x \cup y$ is the diagram on $\partial(\Sigma\times I)$ consisting
of $\hat x$ on $\hat\Sigma$ and $y$ on $\Sigma$.  
This yields a well-defined positive-definite inner product on $A(\Sigma)$,
called the \emph{TQFT inner product}.

With respect to this inner product, the Tube algebra actions along boundary
components are *-representations, endowing $A(\Sigma)$ with the structure of a
unitary $A(\partial\Sigma)$-module.

\subsubsection{Skein modules on punctured disks}\label{ssubsec:skein modules on punctured disks}

Let $\Sigma$ be an extended surface homeomorphic to a disk with $m$ interior holes.

An \emph{anchor} $\anchor$ is an embedded graph in $\Sigma$ with a single vertex, the \emph{anchor point}, and edges that connect it transally to each boundary
component $\caS\in\Bd(\Sigma)$ at a point distinct from the fiducial point.
The edges are ordered clockwise around the anchor, inducing an enumeration
$\{\caS^{\anchorino}_{\kappa}\}_{\kappa=0}^m$ of $\Bd(\Sigma)$, with
$\caS^{\anchorino}_0$ being the outer boundary.

\begin{definition}[\cite{bols2025levinI}]\label[definition]{def:equivalence of anchors}
Two anchors are \emph{equivalent} if they induce the same enumeration of
$\Bd(\Sigma)$ and can be deformed into each other via isotopy fixing all
fiducial points.  
They are \emph{equivalent up to a boundary component} $\caS$ if the subgraphs
obtained by removing the edge that connects to $\caS$ from each anchor are isotopic.
\end{definition}

For any extended circle $\caS$ and $X\in\Ob Z(\caC)$, the space
$\caC(X\to\chi^{\otimes\caS})$ is a unitary left $\Tube_{\caS}$-module via
\begin{equation}\label{eq:Tube_n_module}
\adjincludegraphics[valign=c,width=2.0cm]{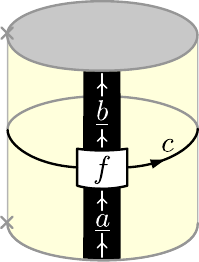} 
\;\triangleright\; 
\adjincludegraphics[valign=c,width=0.7cm]{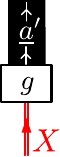}
=
\delta_{\underline a,\underline a'} \;\adjincludegraphics[valign=c,width=2.5cm]{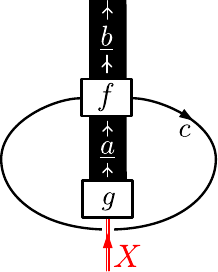} \,\,.
\end{equation}

Define
\begin{align}\label{eq:definition of algebraic module}
\begin{split}
\caC_{\anchorino}(\Sigma) := \bigoplus_{X_0,\dots,X_m\in\Irr(Z(\caC))}
Z(\caC)(X_0^*\to X_1\otimes\dots\otimes X_m) 
\;\otimes\;&\caC(X_0\to \chi^{\otimes\caS^{\anchorino}_0})\\
&\otimes\cdots\otimes \caC(X_m\to \chi^{\otimes\caS^{\anchorino}_m}).
\end{split}
\end{align}
(This space was called $\caC_{\anchorino}^*$ in \cite{bols2025levinI}). Each boundary component $\caS^{\anchorino}_\kappa$ carries a unitary
$\Tube_{\caS^{\anchorino}_\kappa}$ action $\triangleright_\kappa$, making
$\caC_{\anchorino}(\Sigma)$ a unitary $A(\partial\Sigma)$-module. Define a linear map (This map was called $\Phi_{\Sigma}^{\anchorino, *}$ in \cite{bols2025levinI})
\begin{equation}\label{eq:Phi_anchor_defined}
\Phi_{\Sigma}^{\anchorino} : \al \otimes w_1 \otimes \dots \otimes w_m
\;\mapsto\;
\Bigl(\prod_{\kappa=1}^m d_{X_\kappa}\Bigr)^{1/2}\,\caD^{m-1}\;
\adjincludegraphics[valign=c,width=6.0cm]{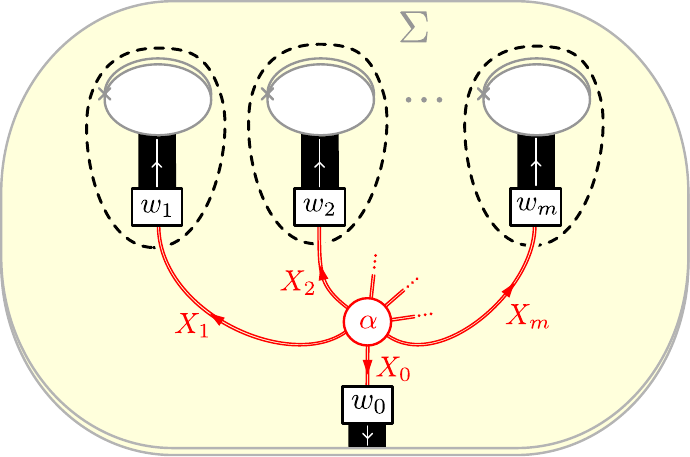},
\end{equation}
where $\al$ sits at the anchor point and each strand $X_\kappa$ follows the
$\kappa$-th edge to resolve into $w_\kappa$ near $\caS^{\anchorino}_\kappa$.
The canonical isomorphism
$Z(\caC)(X_0^*\to Y)\simeq Z(\caC)(\I\to X_0\otimes Y)$ is used to interpret
$\al$ as a morphism from $\I$.  
Equivalent anchors yield the same map $\Phi_\Sigma^{\anchorino}$.

\begin{proposition}[{\cite[Prop. 3.5]{bols2025levinI}}]\label[proposition]{prop:characterization of skein modules}
The map
\[
\Phi_\Sigma^{\anchorino} : \caC_{\anchorino}(\Sigma)\longrightarrow A(\Sigma)
\]
is a unitary isomorphism of $A(\partial\Sigma)$-modules.
\end{proposition}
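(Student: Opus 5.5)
The plan is to reduce the statement to the case of a single disk and then decompose along the tube algebra actions at each boundary component. The anchor gives an embedded graph that cuts $\Sigma$, up to isotopy, into a neighbourhood of the anchor together with one collar annulus $\caS^{\anchorino}_\kappa\times I$ around each boundary component. In the first step I will check that $\Phi_\Sigma^{\anchorino}$ is well defined and intertwines the module structures. The $\Tube_{\caS^{\anchorino}_\kappa}$ action on $A(\Sigma)$ glues a tube diagram onto the collar. Acting this way on the image of $\al\otimes w_1\otimes\cdots\otimes w_m$ only modifies the region near $w_\kappa$, and by \eqref{eq:Tube_n_module} this coincides with $\Phi_\Sigma^{\anchorino}(\al\otimes\cdots\otimes(\text{tube}\triangleright_\kappa w_\kappa)\otimes\cdots)$. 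This requires moving the $X_\kappa$ strand through the tube, which is legitimate because $X_\kappa$ carries a half-braiding. The outer boundary is handled the same way after dualising $X_0$.

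In the second step I will prove surjectivity. Any diagram in $A(\Sigma;b)$ can first be isotoped, and then the dotted line \eqref{eq:dotted line defined} can be inserted around each boundary component by the normalization property. Using the cloaking and projector properties, the strands crossing each collar are fused into a single strand. Decomposing that strand with the matrix units of Proposition~\ref{prop:matrix units for Tube_n}, every diagram becomes a combination of diagrams in which a simple $X_\kappa\in\Irr Z(\caC)$ runs from a vertex $w^{X_\kappa\underline a}_i$ near $\caS_\kappa$ inward along the anchor edge. The remaining interior is a disk containing only the $X_\kappa$ strands meeting at the anchor point, so it evaluates to a morphism $\al\in Z(\caC)(X_0^*\to X_1\otimes\cdots\otimes X_m)$. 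Centrality of $\al$ follows because the dotted loops let the half-braidings pass freely.

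In the third step I will establish isometry, and hence injectivity. Here I compute the TQFT inner product $Z_{\Sigma\times I}(\hat x\cup y)$ of two image vectors. By construction $\Sigma\times I$ is a handlebody, and the dotted loops around the boundary components become surgery-like projectors. Evaluating the partition function reduces the pairing to $\delta$'s in $X_\kappa$ together with the trace pairing of the $w$'s and of the $\al$'s. Each collar contributes a factor $d_{X_\kappa}^{-1}\caD^{-1}$ and the global sphere contributes $\caD$, and these cancel the normalisation $(\prod d_{X_\kappa})^{1/2}\caD^{m-1}$.

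I expect the main obstacle to be this normalisation and inner-product computation, which must be carried out with the precise conventions of the pinched product and the Turaev–Viro partition function. My first attempt will be the case $m=0,1$, identifying the annulus with the tube algebra and checking against the matrix units. After that I will glue inductively, using that gluing along a disk multiplies inner products by the appropriate power of $\caD$. Finally, unitarity follows from isometry together with surjectivity.
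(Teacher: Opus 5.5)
There is a genuine gap, and it starts in your surjectivity step. You cannot insert a dotted loop around a boundary component ``by the normalization property''. Normalization only concerns a dotted loop bounding an empty disk. A dotted loop encircling $\caS_\kappa$ acts as the vacuum projection $P^{\I}\in\Tube_{\caS_\kappa}$; this is exactly how $B_f=\frt_f(P^{\I})$ is defined. That projection annihilates every sector with $X_\kappa\neq\I$. If your insertion were legitimate, $A(\Sigma)$ would be spanned by vacuum-sector vectors, which already fails for the annulus, where $A(\caS\times I)=\Tube_{\caS}\neq P^{\I}\Tube_{\caS}$.

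The correct move is to glue in the identity $\sum_{X,\underline a,i}E^X_{\underline a,i;\underline a,i}$ of $\Tube_{\caS_\kappa}$, so that dotted lines appear only inside the matrix units, threaded by an $X$-strand. Even then two things remain to be shown:
\begin{itemize}
\item The interior is still a disk with $m$ holes, not a disk. To push the rest of the diagram off the punctures you need the cloaking property of dotted lines crossed by $\ZC$-strands.
\item The disk you finally obtain evaluates to a morphism in $\caC(\I\to X_0\otimes\cdots\otimes X_m)$, not a central one. You must show it can be replaced by its projection onto $\ZC$-morphisms; ``the dotted loops let the half-braidings pass freely'' does not establish this.
\end{itemize}
The same point is missing from your Step 1. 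After gluing a tube, its wrapping strand encircles the puncture, whereas in $\Phi_\Sigma^{\anchorino}(\cdots t\triangleright_\kappa w_\kappa\cdots)$ it encircles $w_\kappa$ and crosses $X_\kappa$. These differ by a loop around the puncture, and the half-braiding alone cannot remove it. What is needed is the puncture being cloaked by a dotted line in the picture defining $\Phi_\Sigma^{\anchorino}$. Without such a projector the map is not even a module map: by the annulus example, the bare diagram for $X_\kappa=\I$ is not $P^{\I}$-invariant.

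The isometry step, which is the real content of ``unitary'', is not carried out. The bookkeeping you sketch also does not balance. The squared normalisation is $\prod_{\kappa=1}^m d_{X_\kappa}\,\caD^{2m-2}$. Your factors, one per inner collar plus $\caD$ for the sphere, give $\prod_{\kappa=1}^m d_{X_\kappa}^{-1}\,\caD^{1-m}$, leaving a residual $\caD^{m-1}$.

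The inductive plan is too optimistic as well. The TQFT inner product does not simply factor under gluing along a boundary interval: the interval labels become internal, so you get sums over them with label-dependent dimension factors (compare the $d_{\partial l}^{-1/4}$ in $\sigma_{C_1}$), not just a power of $\caD$. You would also need a formula for how $\Phi$ of the two pieces glues, in particular how the two outer vectors combine into a new $w_0$.

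A cleaner way to organise this, once Step 1 is repaired, is Schur's lemma. The pulled-back inner product is $A(\partial\Sigma)$-invariant, and the $\caC(X\to\chi^{\otimes\caS})$ are irreducible $\Tube_{\caS}$-modules. So on each block the form is some positive form on $\ZC(X_0^*\to X_1\otimes\cdots\otimes X_m)$ tensored with the trace inner products. Isometry then reduces to evaluating a single closed $\ZC$-coloured graph in the genus-$m$ handlebody $\Sigma\times I$. That evaluation is exactly what your proposal is missing.
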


\subsection{Skein subspaces}

\subsubsection{Regions and string-net subspaces}\label{ssubsec:regions and string-net subspaces}

Let $C^{\Z^2}$ be the standard cell complex with vertices $C_0^{\Z^2}=\Z^2$,
edges $\caE$, and faces $\caF$. The set of oriented edges is $\vec E_C := \{ e\in \vec\caE : e\in E_C \text{ or } \bar e \in E_C \}$. A \emph{region} is a subcomplex $C\subset
C^{\Z^2}$, with vertex, edge, and face sets denoted $V_C$, $E_C$, $F_C$. We write $\caA_C = \caA_{V_C}$ for the algebra of observables supported on $C$, and if $C$ is finite, $\caH_C := \caH_{V_C}$.  

The \emph{string-net subspace} associated with a finite region is
\[
H_{C_1} := \Bigl(\prod_{e\in E_C} A_e\Bigr)\caH_C,
\]
depending only on the 1-skeleton $C_1$ of $C$.

\subsubsection{Extended surfaces assigned to regions}\label{ssubsec:extended surface of region}

To each region $C$ we assign the extended surface
\[
\Sigma_C := \{ v\in\R^2 : \dist_\infty(v,C) \le 1/3\},
\quad \dist_\infty(v,C) := \inf_{w\in C} \|v-w\|_\infty,
\]
the $1/3$-fattening of $C \subset \R^2$. The marked boundary points are the intersections of $\partial\Sigma_C$ with edges
of $\Z^2$. A marked boundary point is \emph{positive} if the corresponding edge (oriented
toward the top-left) exits $\Sigma_C$, and \emph{negative} if it enters $\Sigma_C$.

Each connected boundary component $\caN\in\Bd(\Sigma_C)$ is a union of
horizontal and vertical segments. For compact components, the fiducial point is
placed at the top end of top leftmost vertical segment; for non compact
components, it is chosen arbitrarily.  

When $C$ is finite and connected, $\Sigma_C$ has a distinguished \emph{outer}
boundary; the other boundaries are called \emph{inner}.

\subsubsection{Isomorphism of string-net subspaces with skein modules}
\label{ssubsec:isomorphism of string-net with skein}

Let $C$ be a finite region. Product vectors in the string-net subspace $H_{C_1}$
naturally correspond to string diagrams on $\Sigma_{C_1}$.  
For a labeling $l\in\caL_{SN}(C)$, let $f_v\in\caH_v(l)$ be unit vectors for each
vertex $v\in C_0$, and set $f=(f_v)_{v\in C_0}$ with
\[
\phi_f := \bigotimes_{v\in C_0} f_v \in \caH_C(l) \subset H_{C_1}.
\]
Define the associated string diagram $x_f$ on $\Sigma_{C_1}$ using the intersection
of the $\Z^2$ graph with $\Sigma_{C_1}$, edges directed top-right, labeled by
$l$, and vertices labeled by $f_v$.  
The map
\begin{equation}\label{eq:pi_C defined}
\pi_{C_1}(\phi_f) := [x_f]_{\Sigma_{C_1}}
\end{equation}
identifies product vectors with their diagrams.

Also define
\begin{equation}\label{eq:sigma_C defined}
\sigma_{C_1}(\phi_f) := d_{\partial l}^{-1/4}\,[x_f]_{\Sigma_{C_1}},
\qquad
d_{\partial l} := \prod_{e\in \vec\partial C} d_{l(e)},
\end{equation}
and extend linearly. By construction, $\sigma_C(H_{C_1}^b) \subset A(\Sigma_{C_1};b)$
for any boundary condition $b$.  

The following Lemma combines \cite[~Lem. A. 5]{bols2025levinI} and a special case of \cite[~Prop 4.4]{bols2025levinI}:
\begin{lemma}\label[lemma]{lem:string-net isomorphism}
    The map $\pi_{C_1}: H_{C_1} \to A(\Sigma_{C_1})$ is a vector space isomorphism,
    and $\sigma_{C_1}: H_{C_1} \to A(\Sigma_{C_1})$ is a Hilbert space isomorphism.
\end{lemma}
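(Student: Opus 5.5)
We prove the two claims separately; the first is combinatorial and the second rests on a local computation of inner products.

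\emph{Step 1: $\pi_{C_1}$ is a vector space isomorphism.} Decompose $H_{C_1}$ according to edge labelings $l\in\caL_{SN}(C)$ compatible with the string-net constraints $A_e$. Each summand $\caH_C(l)=\bigotimes_v \caH_v(l)$ is the tensor product of the vertex morphism spaces. On the skein side, a diagram on $\Sigma_{C_1}$ is obtained by placing morphisms at the vertices and joining them along the edges. The map $\pi_{C_1}$ is well defined and linear by multilinearity of diagram labels. For surjectivity, take any string diagram on $\Sigma_{C_1}$ and isotope it so that it lies in a tubular neighbourhood of the graph $C_1$. Since $\Sigma_{C_1}$ deformation retracts onto $C_1$ with each vertex disk and edge band contractible, local relations then let us fuse everything onto the vertex disks and resolve each edge band into a sum over simple labels, using $\id_x=\sum$ of simple projections. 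For injectivity, compare dimensions. Using \Cref{prop:characterization of skein modules}, or more elementarily an independent count of $\dim A(\Sigma_{C_1};b)$ via a handle/gluing decomposition, we obtain
\[
\dim A(\Sigma_{C_1};b)=\sum_{l \text{ with } \partial l=b}\ \prod_v \dim \caH_v(l),
\]
and the surjective map between equal finite dimensions is injective. We expect this dimension count to be the main obstacle. The cleanest route is to build $\Sigma_{C_1}$ by gluing vertex disks along edge bands. The skein module of a disk with boundary points is the corresponding morphism space, and gluing along an interval sums over simple labels. So one must show that the gluing map is injective onto the span described, which holds because the band contributes only a sum over simple labels with no further relations.

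\emph{Step 2: $\sigma_{C_1}$ is unitary.} Since $\sigma_{C_1}$ is $\pi_{C_1}$ composed with a diagonal rescaling on each boundary-condition summand, it is bijective by Step 1. It remains to check inner products on product vectors. Different boundary conditions are orthogonal on both sides. For a fixed $b$, we compute $Z_{\Sigma\times I}(\hat x_f\cup x_g)$ by cutting the pinched product into vertex balls glued along edge tubes. Evaluating the TQFT by gluing theta/bigon relations, each internal edge label $a$ contributes $d_a^{-1}$. Each vertex ball contributes $\tr(f_v^\dagger g_v)$. Internal edges also force $l=l'$. Comparing with the skein inner product $\tr(\phi^\dagger\psi)/\sqrt{d_ad_bd_cd_d}$ at each vertex, every internal edge is counted twice, giving $d^{-1/2}\cdot d^{-1/2}=d^{-1}$, which matches. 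Each boundary edge is counted only once, leaving a mismatch of $d_{l(e)}^{1/2}$ per boundary edge. This is exactly compensated by $d_{\partial l}^{-1/4}$ appearing squared in the inner product. Hence $\langle\sigma\phi_f,\sigma\phi_g\rangle=\langle\phi_f,\phi_g\rangle$, which we verify on the one-vertex case and then propagate by the gluing formula for $Z$.
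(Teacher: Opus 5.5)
Your surjectivity argument in Step 1 and your inner-product computation in Step 2 are sound. However, injectivity in Step 1 is not established, and Step 2 inherits the gap through ``bijective by Step 1''.

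The displayed dimension formula is asserted rather than derived. Reading it off from \Cref{prop:characterization of skein modules} is not a routine count. Take $C_1$ to be the boundary of a single face with all outer legs labelled $\I$. The string-net side has dimension $|\Irr\caC|$, whereas the anchor description gives $\sum_{X\in\Irr Z(\caC)}(\dim\caC(X\to\I))^2$. Their equality is a genuine theorem about the induction functor $\caC\to Z(\caC)$, and for general graphs the identity you need is essentially the lemma itself.

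Your alternative, that gluing along an interval ``contributes only a sum over simple labels with no further relations'', just restates injectivity. Proving it means constructing the inverse map: cut each edge band transversally, decompose into simples, and evaluate in the vertex disks. You would then have to check that this map is invariant under isotopies pushing the diagram across the cut arcs, and under local relations in disks straddling them. None of this is done.

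The repair is already in your proposal. Cut the pinched product along the bigons $\gamma_e\times I$, where $\gamma_e$ is an arc crossing the edge $e$ transversally. Computing $Z_{\Sigma\times I}(\hat x_f\cup x_g)$ this way never uses injectivity of $\pi_{C_1}$. It yields $\langle\sigma_{C_1}\phi,\sigma_{C_1}\psi\rangle=\langle\phi,\psi\rangle$ on product vectors, hence on all of $H_{C_1}$. Since the skein inner product is positive definite, $\sigma_{C_1}$ is injective. Hence so is $\pi_{C_1}$, because the two differ by an invertible rescaling on each labelling sector. Together with your surjectivity argument this proves both claims, and the dimension count can simply be dropped.

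Two inputs should be stated explicitly. The first is the gluing axiom of the TQFT along disks: the orthonormal basis $d_a^{-1/2}\id_a$ of the bigon space is what produces your factor $d_a^{-1}$. The second is the normalization that $Z$ on the pinched product of a disk, which is a $3$-ball, is plain evaluation. Your ``one-vertex case'' is exactly this convention, not something to verify.

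For comparison, the paper does not prove the lemma in this text. It imports bijectivity of $\pi_{C_1}$ from Lemma~A.5 of Part~I. It obtains unitarity of $\sigma_{C_1}$ as the face-free case of Proposition~4.4 there, since $F_C=\emptyset$ gives $\sigma_C=\sigma_{C_1}$. Once reordered, your argument is a self-contained direct proof of both statements.
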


\begin{convention}\label[convention]{conv:graphical representation}
    We freely use $\pi_{C_1}$ to represent states in $H_{C_1}$ by string diagrams
    on $\Sigma_{C_1}$, employing the graphical calculus.  
    Conversely, for a string diagram $x$ on $\Sigma_{C_1}$, write
    \[
    \phi_x := \pi_{C_1}^{-1}([x])
    \]
    for the corresponding vector in $H_{C_1}$.
\end{convention}

\subsubsection{Actions of Tube algebras on string-net subspaces}
\label{ssubsec:Tube actions on collars}

Let $C$ be a region such that $\Sigma_C$ has compact boundary.  
For each connected boundary component $\caS \in \Bd(\Sigma_C)$, define the
\emph{collar region} $C^{\caS}$ as the minimal subregion with
$\caS \in \Bd(\Sigma_{C^{\caS}})$.  

If $C$ contains a face $f$, removing it yields a boundary component $\caS_f$; the
corresponding collar region is $C^f := C^{\caS_f}$.  
Similarly, removing an edge along with its two neighbouring faces belonging to $C$, 
produces a boundary component $\caS^e$ with collar region $C^e := C^{\caS_e}$. We refer to $\caS_e$ as the puncture at $e$.

For each $\caS \in \Bd(\Sigma_C)$, define a representation
\[
\frt_{\caS} : \Tube_{\caS} \to \mathrm{End}(\caH_{C^{\caS}})
\]
by letting $\frt_{\caS}(a)$ vanish on $H_{C^{\caS}}^\perp$, and act on the string-net
subspace $H_{C^{\caS}}$ by
\begin{equation}\label{eq:frt_caS defined}
\frt_{\caS}(a)\, \phi := \sigma_{C^{\caS}}^{-1} \bigl( a \triangleright_{\caS} \sigma_{C^{\caS}}(\phi) \bigr),
\quad \phi \in H_{C^{\caS}}.
\end{equation}
By Lemma~\ref{lem:string-net isomorphism}, each $\frt_{\caS}$ is a *-representation
of $\Tube_{\caS}$.

We write $\frt_f := \frt_{\caS_f}$ and $\frt_e := \frt_{\caS_e}$
for any face $f$ and edge $e$.

\subsubsection{Definition of \texorpdfstring{$B_f$}{Bf} projectors}
\label{ssubsec:Bf defined}

For a boundary component $\caS_f$ corresponding to a face $f\in\caF$, let
$\frt_f := \frt_{\caS_f}$ be the associated $\Tube_{\caS_f}$-representation on
$\caH_{C^f}$.  
Define the orthogonal projector
\begin{equation}\label{eq:Bf defined}
B_f := \frt_f(P^{\I}), \qquad
P^{\I} = \adjincludegraphics[valign=c,width=0.8cm]{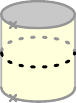} \in \Tube_{\caS_f}.
\end{equation}

\subsubsection{Skein subspaces} \label{ssubsec:skein subspaces defined}

Let $C$ be a finite region. Its \emph{skein subspace} is
\begin{equation}\label{eq:skein subspace defined}
H_C := B_C H_{C_1} = P_C \caH_C,
\end{equation}
where $B_C := \prod_{f\in F_C} B_f$ and $P_C := \prod_{f\in F_C} B_f \,\,
\prod_{e\in E_C} A_e$. This subspace inherits the skein inner product from
$\caH_C$.

Since local relations and isotopies in $\Sigma_{C_1}$ are also valid in $\Sigma_C$, the inclusion $\Sigma_{C_1} \subset \Sigma_C$ induces a well-defined surjective map
\[
\iota_C : A(\Sigma_{C_1}) \to A(\Sigma_C), \quad [y]_{\Sigma_{C_1}} \mapsto [y]_{\Sigma_C}.
\]
Recalling \eqref{eq:sigma_C defined}, define
\[
\sigma_C := \caD^{-|F_C|}\, \iota_C \circ \sigma_{C_1} : H_{C_1} \to A(\Sigma_C).
\]

\begin{proposition}[{\cite[Prop. 4.4]{bols2025levinI}}]
\label[proposition]{prop:isomorphism of skein subspace and skein module}
The map $\sigma_C|_{H_C} : H_C \to A(\Sigma_C)$ is unitary.  
For any connected boundary $\caS \in \Bd(\Sigma_C)$, we have
\[
a \triangleright_{\caS} \sigma_C(\psi) = \sigma_C(\frt_{\caS}(a) \psi),
\quad \psi \in H_C,\, a \in \Tube_{\caS}.
\]
Thus, $H_C$ with the $\Tube$-actions $\{\frt_{\caS}\}$ is a unitary
$A(\partial \Sigma_C)$-module, and $\sigma_C|_{H_C}$ is an intertwiner of unitary modules.
\end{proposition}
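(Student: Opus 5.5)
The plan is to reduce the statement to the anchor description of skein modules, Proposition \ref{prop:characterization of skein modules}, applied to both $\Sigma_{C_1}$ and $\Sigma_C$. By Lemma \ref{lem:string-net isomorphism}, $\sigma_{C_1} : H_{C_1} \to A(\Sigma_{C_1})$ is already a Hilbert space isomorphism. The surface $\Sigma_{C_1}$ differs from $\Sigma_C$ only in that it has one extra hole $\caS_f$ for each face $f \in F_C$. Each such $\caS_f$ carries no marked points, since the edges of $f$ stay outside the removed open face.

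\textbf{Identifying $H_C$.} By \eqref{eq:frt_caS defined} and \eqref{eq:Bf defined}, $\sigma_{C_1} B_f \sigma_{C_1}^{-1}$ is the action of $P^{\I}\in\Tube_{\caS_f}$ on the boundary component $\caS_f$ of $\Sigma_{C_1}$. For this I would check that the collar map $\sigma_{C^f}$ and the global map $\sigma_{C_1}$ are compatible under gluing $\Sigma_{C^f}$ to the rest of $\Sigma_{C_1}$. The factors $d_{\partial l}^{-1/4}$ on the two sides of the glued boundary edges should combine into the correct normalisation, because the Tube action does not change boundary labels.

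Hence $\sigma_{C_1}(H_C)$ is the joint range of the central projections $P^{\I}$ at all face punctures. Under $\Phi^{\anchorino}_{\Sigma_{C_1}}$, with the anchor chosen so that its restriction to $\Sigma_C$ is an anchor there, this range is the summand with $X_\kappa=\I$ at every face puncture. On that summand $w_\kappa$ is the unique unit vector of $\caC(\I\to\I)$.

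\textbf{Comparing the two anchor maps.} On this summand, a strand labelled $\I$ can be erased. The prefactor in \eqref{eq:Phi_anchor_defined} loses one power of $\caD$ for each hole filled in, and $d_\I=1$. Therefore
\[
\iota_C\circ\Phi^{\anchorino}_{\Sigma_{C_1}} = \caD^{|F_C|}\,\Phi^{\anchorino}_{\Sigma_C}
\]
on this summand, after the obvious identification of the two algebraic modules $\caC_{\anchorino}$.

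Consequently $\sigma_C|_{H_C} = \Phi^{\anchorino}_{\Sigma_C}\circ (\Phi^{\anchorino}_{\Sigma_{C_1}})^{-1}\circ\sigma_{C_1}|_{H_C}$ is a composition of unitaries, and so it is unitary. It is worth recording that $\iota_C$ annihilates the orthogonal complement, since $\iota_C(P^{\I}\triangleright_{\caS_f} y)=\iota_C(y)$ by the normalization property \eqref{eq:normalization and cloaking properties} of the dotted loop on a disk. This shows that $\sigma_C$ factors through $B_C$.

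\textbf{Intertwining.} For a boundary $\caS\in\Bd(\Sigma_C)$, the operator $\frt_\caS(a)$ acts on the collar $C^\caS$. Using the same gluing compatibility as above, it corresponds under $\sigma_{C_1}$ to $a\triangleright_\caS$ on $A(\Sigma_{C_1})$. This action commutes with $\iota_C$, because it is supported near $\caS$, and it commutes with the face projections. The identity $a\triangleright_\caS\sigma_C(\psi)=\sigma_C(\frt_\caS(a)\psi)$ then follows. Since $\sigma_C|_{H_C}$ is unitary, $H_C$ is a unitary $A(\partial\Sigma_C)$-module.

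The main obstacle is the gluing compatibility between the collar-defined $\sigma_{C^\caS}$ (or $\sigma_{C^f}$) and the global $\sigma_{C_1}$, including the bookkeeping of the $d^{-1/4}$ factors. I would handle it by writing a product vector as a tensor product of a collar part and a remainder, and checking that the skein class factors through gluing.
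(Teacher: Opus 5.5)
Your reduction of the general case to the face-free case is sound. The gluing compatibility of the collar maps identifies $\sigma_{C_1}B_f\sigma_{C_1}^{-1}$ with $P^{\I}\triangleright_{\caS_f}$. On the summand with trivial labels at the face punctures, $\iota_C$ composed with the anchor map for $\Sigma_{C_1}$ equals $\caD^{|F_C|}$ times the anchor map for $\Sigma_C$; apply the anchor description componentwise if $C$ is disconnected. The intertwining then follows as you say. One small correction: for a general $\caS\in\Bd(\Sigma_C)$ the Tube action does change the labels on $\caS$. The normalisations still agree because those labels carry the same weight $d^{-1/4}$ in $\sigma_{C^{\caS}}$ and in $\sigma_{C_1}$, while the labels on the edges joining the collar to the rest of $C$ are untouched.

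The gap is the base case. You import the unitarity of $\sigma_{C_1}$ from Lemma~\ref{lem:string-net isomorphism}, but the paper says, when stating that lemma, that this part is a special case of the very proposition you are proving. It is the statement for a region with $F_C=\emptyset$, where $H_C=H_{C_1}$, $\iota_C=\id$ and $\sigma_C=\sigma_{C_1}$. So your argument only shows that the face-free case implies the general one, and for $F_C=\emptyset$ it is circular. At no point do you compare the skein inner product $\tr(\phi^{\dagger}\psi)/\sqrt{d_a d_b d_c d_d}$ on $\caH_v$ with the TQFT inner product on $A(\Sigma_{C_1})$. That comparison is the real content of the proposition, and it is what the weights $d_{\partial l}^{-1/4}$ are designed for.

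You need an independent argument for it, for instance the following.
\begin{itemize}
\item Cut $\Sigma_{C_1}$ at the midpoints of its internal edges into vertex squares.
\item The single-vertex case amounts to the TQFT inner product on a square being the spherical evaluation of $\hat x\cup y$.
\item What remains is that regluing along an interval crossed by one strand labelled $a$ multiplies TQFT inner products by $d_a^{-1}$. This matches the two factors $d_a^{-1/2}$ contributed by the skein inner products at the two endpoints of that edge.
\end{itemize}
Since $\Sigma_{C_1}$ has a hole at every face of $C$, some of these regluings are non-separating. This gluing property therefore has to be justified for the TQFT inner product itself, not just by planar evaluation on disks.
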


\begin{lemma}[{\cite[Lemma 4.5]{bols2025levinI}}]
\label[lemma]{lem:commutativity lemma}
For distinct boundary components $\caS, \caS' \in \Bd(\Sigma_C)$ and any
$a \in \Tube_{\caS}$, $b \in \Tube_{\caS'}$,
\[
    [\frt_{\caS}(a), \frt_{\caS'}(b)] = 0.
\]
In particular, all $B_f$ projectors commute.
\end{lemma}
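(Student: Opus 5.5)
The plan is to realise both actions on a single skein module, where they become gluing operations along two different boundary circles; gluing along disjoint boundaries obviously commutes. Set $D := C^{\caS} \cup C^{\caS'}$, a finite region, and regard $\frt_{\caS}(a)$ and $\frt_{\caS'}(b)$ as operators on $\caH_D$ by tensoring with the identity. If $C^{\caS}$ and $C^{\caS'}$ share no vertices, the two operators are supported on disjoint tensor factors and commute trivially, so only the overlapping case needs work.

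\textbf{Step 1: invariance.} First I decompose $\caH_D = H_{D_1} \oplus H_{D_1}^{\perp}$ and check that both operators respect this splitting. By definition $\frt_{\caS}(a)$ vanishes on $H_{C^{\caS}}^{\perp}$ and maps $H_{C^{\caS}}$ into itself. Its action preserves the labels on the outer edges of the collar, since the Tube action fixes the boundary condition on the outer side. Hence it commutes with every $A_e$, both for $e$ outside $C^{\caS}$ and for edges straddling the collar boundary. It follows that $\frt_{\caS}(a)$ preserves $H_{D_1}$ and annihilates $H_{D_1}^{\perp}$. The same holds for $\caS'$, so it suffices to prove commutation on $H_{D_1}$.

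\textbf{Step 2: identify the actions with skein actions on $\Sigma_{D_1}$.} Because $\caS$ and $\caS'$ are distinct boundary components of $\Sigma_C$, the hole enclosed by $\caS$ meets no vertex or edge of $C$, hence none of $D$. So $\caS$, and likewise $\caS'$, is also a boundary component of $\Sigma_{D_1}$. The main claim is then
\[
\frt_{\caS}(a)\,\phi = \sigma_{D_1}^{-1}\bigl(a \triangleright_{\caS} \sigma_{D_1}(\phi)\bigr), \qquad \phi \in H_{D_1},
\]
and similarly for $\caS'$.

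To prove it, I would check the formula on product vectors $\phi_f$. The diagram $x_f$ on $\Sigma_{D_1}$ is obtained by gluing the diagram on $\Sigma_{C^{\caS}_1}$ to the diagram on the rest of $D$ along the arcs cut by the outer collar boundary. The Tube action on $\caS$ is local near $\caS$, so it commutes with this gluing map, which is well defined on skein modules by Section \ref{ssubsec:gluing}.

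I expect the bookkeeping of the normalisations $d_{\partial l}^{-1/4}$ in $\sigma$ to be the main obstacle. The labels on $\caS$ itself contribute the same factor whether computed in $\sigma_{C^{\caS}_1}$ or in $\sigma_{D_1}$. The labels on the outer collar edges, which become interior edges of $D$ or edges crossing other boundaries, are left unchanged by the action. Therefore the normalisation factors should cancel between the two sides. I would verify this on a product basis, using Lemma \ref{lem:string-net isomorphism} to pass freely between $\pi$ and $\sigma$.

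\textbf{Step 3: commutation.} With Step 2 in hand, both $\frt_{\caS}(a)$ and $\frt_{\caS'}(b)$ are conjugated by the isomorphism $\sigma_{D_1}$ to gluing Tube elements onto the disjoint collars of $\caS$ and $\caS'$ in $\Sigma_{D_1}$. These two gluings commute by associativity of gluing together with isotopy. Hence $[\frt_{\caS}(a), \frt_{\caS'}(b)] = 0$ on $H_{D_1}$, and by Step 1 also on all of $\caH_D$.

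\textbf{The $B_f$ projectors.} For faces $f \neq f'$, take $C$ to be a region containing both faces (the union of their closures will do). Removing them yields distinct boundary components $\caS_f$ and $\caS_{f'}$. Since $B_f = \frt_f(P^{\I})$ and $B_{f'} = \frt_{f'}(P^{\I})$, the general statement gives $[B_f, B_{f'}] = 0$.
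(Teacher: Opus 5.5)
The paper only quotes this lemma from Part I, so I am judging your argument on its own. The strategy is sound, and Steps 2 and 3 go through. Only the labels on $\caS$ change under the action, and they enter $\sigma_{C^{\caS}}$ and $\sigma_{D_1}$ with the same weight, so the $d^{-1/4}$ factors match. Extending a diagram on $\Sigma_{C^{\caS}}$ by the fixed remainder of a product diagram also commutes with the Tube action at $\caS$. (If $\caS$ is the outer boundary, replace ``the hole enclosed by $\caS$'' by the unbounded complementary region; the distance argument is unchanged.)

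The error is the last sentence of Step 1. Viewed on $\caH_D$, $\frt_{\caS}(a)$ does \emph{not} annihilate $H_{D_1}^{\perp}$; it only kills $H_{C^{\caS}}^{\perp}\otimes\caH_{D\setminus C^{\caS}}$. For example, let $\caS=\caS_f$ and $\caS'=\caS_{f'}$ for adjacent faces, and let $v,w$ be the two vertices of $f'$ not in $f$. Take $\phi\otimes\psi$ with $0\neq\phi\in B_f H_{C^f}$ and $\psi\in\caH_v\otimes\caH_w$ violating the constraint on the edge $vw\in E_D$. This vector lies in $H_{D_1}^{\perp}$, yet $B_f(\phi\otimes\psi)=\phi\otimes\psi\neq 0$. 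So, as written, you only know that both operators preserve $H_{D_1}^{\perp}$, and that alone does not make their commutator vanish there.

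The repair uses only what you already have. Put $P_{C^{\caS}}=\prod_{e\in E_{C^{\caS}}}A_e$ and define $P_{C^{\caS'}}$ similarly. Since $E_D=E_{C^{\caS}}\cup E_{C^{\caS'}}$, we have $P_{D_1}=P_{C^{\caS}}P_{C^{\caS'}}$. By definition $\frt_{\caS}(a)=P_{C^{\caS}}\frt_{\caS}(a)P_{C^{\caS}}$ and $\frt_{\caS'}(b)=P_{C^{\caS'}}\frt_{\caS'}(b)P_{C^{\caS'}}$, and by your Step 1 both commute with every $A_e$, $e\in E_D$, hence with $P_{D_1}$. Therefore
\[
\frt_{\caS}(a)\frt_{\caS'}(b)=\frt_{\caS}(a)P_{C^{\caS}}P_{C^{\caS'}}\frt_{\caS'}(b)=P_{D_1}\frt_{\caS}(a)\frt_{\caS'}(b)P_{D_1},
\]
and likewise $\frt_{\caS'}(b)\frt_{\caS}(a)=P_{D_1}\frt_{\caS'}(b)\frt_{\caS}(a)P_{D_1}$. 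The commutator is thus $P_{D_1}[\frt_{\caS}(a),\frt_{\caS'}(b)]P_{D_1}$, which vanishes by your Steps 2--3. With this in place of the reduction, the proof is complete.
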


A \emph{boundary condition} $\underline p = \{p_{\caS}\}_{\caS \in \Bd(\Sigma_C)}$ for $H_C$ assigns a projector $p_{\caS} \in \Tube_{\caS}$ to each boundary component.
Define
\begin{equation}\label{eq:def H_C(p)}
H_C(\underline p) := \Bigl(\prod_{\caS \in \Bd(\Sigma_C)} \frt_{\caS}(p_{\caS})\Bigr) H_C,
\quad
P_C(\underline p) := B_C \prod_{\caS \in \Bd(\Sigma_C)} \frt_{\caS}(p_{\caS}).
\end{equation}
Given an enumeration $\{\caS_\kappa\}_{\kappa=0}^m$ of $\Bd(\Sigma_C)$, we also write
\[
H_C(p_{\caS_0},\dots,p_{\caS_m}) = H_C(\underline p),\quad
P_C(p_{\caS_0},\dots,p_{\caS_m}) = P_C(\underline p).
\]

\subsubsection{Characterization of skein subspaces} 
\label{ssubsec:characterization of skein subspaces}

Let $C$ be a finite connected region. Then $\Sigma_C$ is homeomorphic to a disk with $m$ holes for some $m \in \N_0$.  
We refer to an anchor for $\Sigma_C$ as an anchor for $C$.

\begin{proposition}[{\cite[Prop. 4.6]{bols2025levinI}}]
\label[proposition]{prop:the great interface}
Let $C$ be a finite connected region with anchor $\anchor$. Then
\begin{equation}\label{eq:great interface isomorphism}
\Psi_C^{\anchorino} := \sigma_C^{-1} \circ \Phi_C^{\anchorino} : 
\caC^*_{\anchorino}(\Sigma_C) \longrightarrow H_C
\end{equation}
is a unitary isomorphism of $A(\partial \Sigma_C)$-modules.
\end{proposition}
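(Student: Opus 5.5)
This is a composition of two unitary module isomorphisms already in hand, so I would prove it by chaining Proposition~\ref{prop:characterization of skein modules} with Proposition~\ref{prop:isomorphism of skein subspace and skein module}.

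\textbf{Step 1 (the skein module side).} Since $C$ is finite and connected, $\Sigma_C$ is an extended surface homeomorphic to a disk with $m$ holes. The anchor $\anchor$ for $C$ is, by definition, an anchor for $\Sigma_C$. Proposition~\ref{prop:characterization of skein modules} then says that
\[
\Phi_{\Sigma_C}^{\anchorino} : \caC_{\anchorino}(\Sigma_C) \to A(\Sigma_C)
\]
is a unitary isomorphism of $A(\partial\Sigma_C)$-modules. Here $A(\Sigma_C)$ carries the TQFT inner product, and the Tube actions $\triangleright_\caS$ act on $A(\Sigma_C)$ by gluing.

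\textbf{Step 2 (the lattice side).} Proposition~\ref{prop:isomorphism of skein subspace and skein module} says that $\sigma_C|_{H_C} : H_C \to A(\Sigma_C)$ is unitary and intertwines the actions: for each $\caS \in \Bd(\Sigma_C)$,
\[
\frt_\caS(a) \text{ on } H_C \quad\text{corresponds to}\quad a \triangleright_\caS \text{ on } A(\Sigma_C).
\]
A unitary map between Hilbert spaces is surjective, so $\sigma_C|_{H_C}$ is invertible. Its inverse $\sigma_C^{-1} : A(\Sigma_C) \to H_C$ is again a unitary intertwiner. Explicitly, applying $\sigma_C^{-1}$ to $a \triangleright_\caS \sigma_C(\psi) = \sigma_C(\frt_\caS(a)\psi)$ shows $\sigma_C^{-1}(a \triangleright_\caS \xi) = \frt_\caS(a)\,\sigma_C^{-1}(\xi)$ for every $\xi \in A(\Sigma_C)$.

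\textbf{Step 3 (composition).} $\Psi_C^{\anchorino} = \sigma_C^{-1}\circ \Phi_C^{\anchorino}$ is a composition of unitary isomorphisms, hence a unitary isomorphism. It intertwines each $\triangleright_\kappa$ on $\caC_{\anchorino}(\Sigma_C)$ with $\frt_{\caS^{\anchorino}_\kappa}$ on $H_C$. By Lemma~\ref{lem:commutativity lemma}, the actions of distinct boundary components commute on both sides, so intertwining each $\Tube_\caS$ separately gives intertwining of the full $A(\partial\Sigma_C)$-module structure.

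\textbf{Main obstacle.} The only delicate point is consistency of conventions. The target $H_C$ must carry the skein inner product restricted from $\caH_C$, and the normalizations $\caD^{-|F_C|}$ and $d_{\partial l}^{-1/4}$ in $\sigma_C$ must match the TQFT inner product. These normalizations are exactly what Proposition~\ref{prop:isomorphism of skein subspace and skein module} already guarantees. The remaining check is that the Tube action on $\caC_{\anchorino}(\Sigma_C)$ in Proposition~\ref{prop:characterization of skein modules} is indexed by the same boundary enumeration induced by the anchor. Given those, nothing further beyond bookkeeping is needed.
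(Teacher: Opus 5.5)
Your proposal is correct and is essentially the paper's argument. The proposition follows directly by composing the unitary $A(\partial\Sigma_C)$-module isomorphism of Proposition~\ref{prop:characterization of skein modules} with the inverse of the unitary intertwiner $\sigma_C|_{H_C}$ from Proposition~\ref{prop:isomorphism of skein subspace and skein module}; your appeal to Lemma~\ref{lem:commutativity lemma} in Step 3 is harmless but unnecessary, since that proposition already gives the intertwining for the full module structure.
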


\subsection{Drinfeld insertions} \label{subsec:Drinfeld insertions}

For each $X \in \Irr Z(\caC)$ and each extended circle $\caS$, fix a unit vector
$w^X_{\caS} \in \caC(X \rightarrow \chi^{\otimes \caS})$ with respect to the trace inner product, 
and let $p_{\caS}^X \in \Tube_{\caS}$ be the corresponding minimal projector 
(see Proposition~\ref{prop:matrix units for Tube_n}).  

For $X = \I$ and $\caS = \caS_e$, we take
\begin{equation} \label{eq:ground state boundary condition at puncture}
w^{\I}_{\caS_e} = \frac{1}{\caD} \sum_{a \in \Irr \caC} d_a^{1/2} 
\adjincludegraphics[valign=c, width=1.0cm]{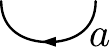},
\end{equation}
and define
\begin{equation} \label{eq:p^one defined}
p^{\I}_{\caS_e} := v v^*, \quad v := \frac{1}{\caD} \sum_{a \in \Irr \caC} d_a^{1/2} 
\adjincludegraphics[valign=c, width=2.0cm]{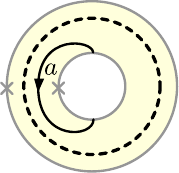} \in \Tube_{\caS_e}.
\end{equation}
Here $\caS_e \times I$ is presented as an annulus with the outer boundary identified with the bottom of $\caS_e \times I$. By \cite[Lemma 5.4]{bols2025levinI} $\frt_e(p^{\I}_{\caS_e}) = B_{f_1} B_{f_2}$ enforces ground state constraints on the two faces bordering $e$. We drop the $\caS$ subscript when it is clear from context: $w^X := w^X_{\caS}$ and $p^X := p^X_{\caS}$.

\begin{convention} \label[convention]{conv:Drinfeld strands attaching to boundary components}
Given the fixed boundary vectors $w^X$, we adopt the following graphical convention for string diagrams:
\begin{equation}
\adjincludegraphics[valign=c, height=1.0cm]{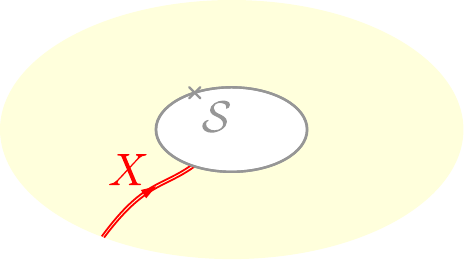} 
= \adjincludegraphics[valign=c, height=1.0cm]{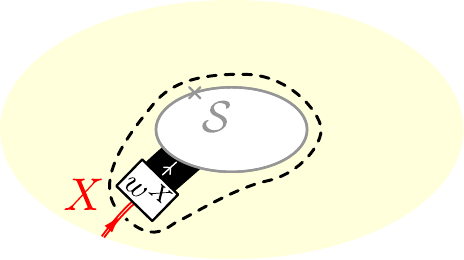}, 
\quad
\adjincludegraphics[valign=c, height=1.0cm]{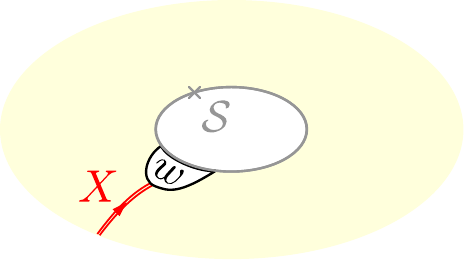} 
= \adjincludegraphics[valign=c, height=1.0cm]{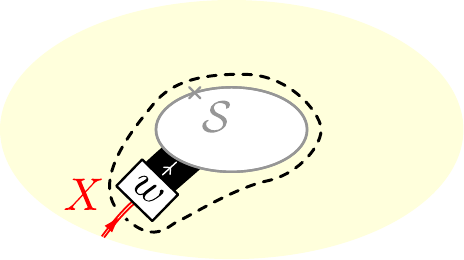}.
\end{equation}
\end{convention}

Let $C$ be a finite connected region with anchor $\anchor$. Proposition~\ref{prop:the great interface} implies that vectors of the form
\begin{align*}
\Psi_C^{\anchorino} \big( \al \otimes w_0 & \otimes w^{X_1} \otimes \cdots \otimes w^{X_m} \big) \\ 
&= \left( \prod_{\kappa=0}^m d_{X_\kappa} \right)^{1/2} \caD^m 
\sigma_C^{-1} \left( \adjincludegraphics[valign=c, height=2.0cm]{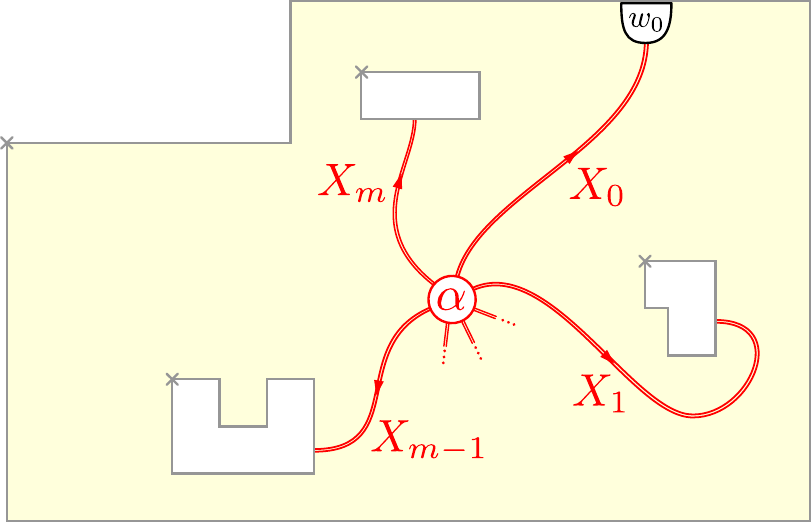} \right),
\end{align*}
for $X_0 \in \Irr Z(\caC)$, $w_0 \in \caC(X_0 \rightarrow \chi^{\otimes \caS_0^{\anchorino}})$ 
and $\al \in Z(\caC)(X_0^* \rightarrow X_1 \otimes \cdots \otimes X_m)$ span
\begin{equation} \label{eq:skein subspace with free outer boundary}
H_C^{(X_1, \dots, X_m)} := H_C(\id, p^{X_1}, \dots, p^{X_m}).
\end{equation}
The orthogonal projector onto this subspace is $P_C^{(X_1, \dots, X_m)} := P_C(\id, p^{X_1}, \dots, p^{X_m})$.

For any $\beta \in Z(\caC)(X_1 \otimes \dots \otimes X_m \rightarrow Y_1 \otimes \dots \otimes Y_m)$, 
the \emph{Drinfeld insertion} $\Dr_C^{\anchorino}[\beta] \in \caA_C$ acts on $H_C^{(X_1, \dots, X_m)}$ by
\begin{align} \label{eq:Drinfeld insertion defined}
\Dr_C^{\anchorino}[\beta] : 
\Psi_C^{\anchorino} \big( \al \otimes w_0 \otimes w^{X_1} \otimes \dots \otimes w^{X_m} \big) 
\mapsto 
\Psi_C^{\anchorino} \big( (\beta \circ \al) \otimes w_0 \otimes w^{Y_1} \otimes \dots \otimes w^{Y_m} \big),
\end{align}
and annihilates the orthogonal complement of $H_C^{(X_1, \dots, X_m)}$.  
Graphically, $\sigma_C \circ \Dr_C^{\anchorino}[\beta] \circ \sigma_C^{-1}$ is represented by
\begin{equation} \label{eq:Drinfeld insertion graphical}
\adjincludegraphics[valign=c, height=1.7cm]{graphical_vector_in_H_C.pdf} \mapsto 
\left( \prod_{\kappa=1}^m \frac{d_{Y_\kappa}}{d_{X_\kappa}} \right)^{1/2} 
\adjincludegraphics[valign=c, height=1.7cm]{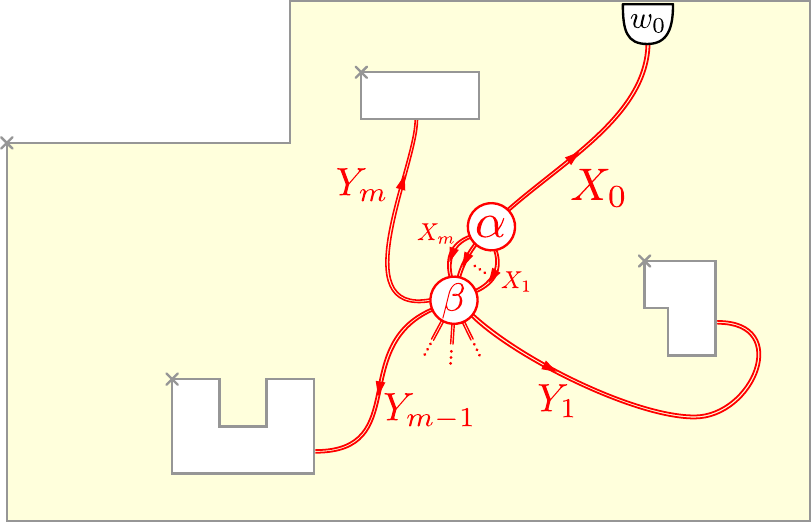}.
\end{equation}

\begin{lemma}[Multiplicativity {\cite[Lemma 6.2]{bols2025levinI}}]
\label[multiplicativity lemma]{lem:Drinfeld insertion multiplicativity}
Let $C$ be a finite connected region with $|\Bd(\Sigma_C)| = m+1$ and anchor $\anchor$. 
If $\Theta = \bigoplus_{X \in \Irr(Z(\caC))} X$, then
\[
\Dr_C^{\anchorino} : \End_{Z(\caC)}(\Theta^{\otimes m}) \longrightarrow \End\Big( \bigoplus_{X_1, \dots, X_m} H_C^{(X_1, \dots, X_m)} \Big)
\]
defines a *-representation.
\end{lemma}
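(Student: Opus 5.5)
Everything happens on the orthogonal decomposition
\[
\caK := \bigoplus_{X_1,\dots,X_m} H_C^{(X_1,\dots,X_m)},
\]
where each summand is spanned by vectors $\Psi_C^{\anchorino}(\al\otimes w_0\otimes w^{X_1}\otimes\cdots\otimes w^{X_m})$. Since $\End_{Z(\caC)}(\Theta^{\otimes m}) = \bigoplus_{\underline X,\underline Y} Z(\caC)(X_1\otimes\cdots\otimes X_m \to Y_1\otimes\cdots\otimes Y_m)$ as a block matrix algebra, I extend $\Dr_C^{\anchorino}$ linearly over the blocks. The plan is to transport the problem to the algebraic model $\caC_{\anchorino}(\Sigma_C)$ via the unitary $\Psi_C^{\anchorino}$ of Proposition~\ref{prop:the great interface}. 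There, $\Dr_C^{\anchorino}[\beta]$ becomes the map
\[
\al\otimes w_0\otimes w^{X_1}\otimes\cdots\otimes w^{X_m} \mapsto (\beta\circ\al)\otimes w_0\otimes w^{Y_1}\otimes\cdots\otimes w^{Y_m},
\]
and the three properties (linearity, multiplicativity, $*$-preservation) can be checked directly.

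First, well-definedness and linearity. The subspace $\caC_{\anchorino}(\Sigma_C)(\id,p^{X_1},\dots,p^{X_m})$ is isomorphic to $\bigoplus_{X_0} Z(\caC)(X_0^*\to X_1\otimes\cdots\otimes X_m)\otimes\caC(X_0\to\chi^{\otimes\caS_0})$, because the fixed unit vectors $w^{X_\kappa}$ span the range of the minimal projections $p^{X_\kappa}$. On this subspace the map is $L_\beta\otimes\id$, where $L_\beta(\al)=\beta\circ\al$, so it is well defined and linear in $\beta$. Off $\caK$ it is zero by definition.

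Second, multiplicativity. For $\beta$ in the $(\underline X\to\underline Y)$ block and $\gamma$ in the $(\underline Y'\to\underline Z)$ block, the composite $\Dr[\gamma]\Dr[\beta]$ vanishes unless $\underline Y'=\underline Y$. This matches the block structure, since $\gamma\circ\beta=0$ in $\End(\Theta^{\otimes m})$ unless the middle indices agree. When they do agree, $L_\gamma L_\beta = L_{\gamma\beta}$, which gives $\Dr[\gamma]\Dr[\beta]=\Dr[\gamma\circ\beta]$. The unit $\id_{\Theta^{\otimes m}}$ is sent to the projection onto $\caK$, which is the unit of $\End(\caK)$.

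Third, the $*$-property, which is the main obstacle. It suffices to show that $L_\beta$ is adjoint to $L_{\beta^\dagger}$, with $\beta^\dagger\in Z(\caC)(\underline Y\to\underline X)$, for the inner product that $\caC_{\anchorino}(\Sigma_C)$ carries as a tensor product of its factors. This requires knowing the normalization of the inner product on the factor $Z(\caC)(X_0^*\to X_1\otimes\cdots\otimes X_m)$ that makes $\Phi_\Sigma^{\anchorino}$ unitary in Proposition~\ref{prop:characterization of skein modules}. I expect it to be the trace inner product $\tr(\al^\dagger\al')$, possibly with a scalar depending only on $X_0$. If the scalar depends on $X_1,\dots,X_m$ instead, it must be the one absorbed by the factor $(\prod_\kappa d_{Y_\kappa}/d_{X_\kappa})^{1/2}$ in \eqref{eq:Drinfeld insertion graphical}.

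Assuming the trace inner product, the identity
\[
\tr\big((\beta\al)^\dagger\al'\big) = \tr\big(\al^\dagger\beta^\dagger\al'\big)
\]
holds by sphericality and the fact that the dagger is contravariant. Hence $L_\beta^* = L_{\beta^\dagger}$. The $w_0$ factor is untouched, and the $w^{X_\kappa}$ are unit vectors, so $\Dr[\beta]^* = \Dr[\beta^\dagger]$. If instead the normalization carries a $d_{X_0}$ factor, it is the same on both sides because $X_0$ is preserved by $\beta$, so the conclusion is unchanged.
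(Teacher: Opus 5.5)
Your argument is correct and is essentially the standard one. This paper does not reprove the lemma but imports it from Part I, where it is read off from the definition of $\Dr_C^{\anchorino}$ through the unitary module isomorphism $\Psi_C^{\anchorino}$ of Proposition~\ref{prop:the great interface}. In the algebraic model, $\Dr[\beta]$ is $\al\otimes w_0\mapsto(\beta\circ\al)\otimes w_0$ from the $\underline X$-summand to the $\underline Y$-summand, and linearity, multiplicativity and the $*$-property follow at once.

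The $*$-step is not really an obstacle. The Hilbert structure on $\caC_{\anchorino}(\Sigma_C)$ is the tensor product of trace inner products, so it needs only $(\beta\circ\al)^\dagger=\al^\dagger\circ\beta^\dagger$, with no sphericality. Your fallback remark, however, is wrong. The factor $(\prod_\kappa d_{Y_\kappa}/d_{X_\kappa})^{1/2}$ in \eqref{eq:Drinfeld insertion graphical} is just how the normalization of $\Psi_C^{\anchorino}$, which your computation already uses, appears in diagrams, so it cannot absorb an $\underline X$-dependent scalar $c(\underline X)$ in the inner product. Such a scalar would give $\Dr[\beta]^*=\frac{c(\underline Y)}{c(\underline X)}\Dr[\beta^\dagger]$ and would genuinely destroy the $*$-property.
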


\subsection{The Inclusion Lemma} \label{subsec:inclusion lemma}

A \emph{dual path} $I = \{f_i\}_{i=1}^l$ is a sequence of faces with
$f_i$ neighbouring $f_{i+1}$ for all $i$, and is \emph{self-avoiding} if all
$f_i$ are distinct. Denote $\partial_{\ii} I = f_1$, $\partial_{\f} I = f_l$,
and $I_{\inn} = \{ f_2, \dots, f_{l-1} \}$ (possibly empty). Let $\caE_I \subset
\caE$ be the edges between successive faces of $I$.

Let $D$ be a finite connected region. We say $D$ can be cut along a
self-avoiding $I$ if $I_{\inn} \subset D$, $\caE_I \subset D$, and the initial
and final faces lie on the outer boundary of $D$. Define
\[
D \setminus I := D \setminus \Bigl(I_{\inn} \cup \bigcup_{e \in \caE_I} e\Bigr).
\]
Then $D \setminus I = C \sqcup C'$ for uniquely determined disjoint connected regions $C$ and $C'$, with all inner boundary components of $\Sigma_D$ inherited from $C$ or $C'$. We say $D$ is obtained by gluing $C$ and $C'$ along $I$, writing $D= C\sqcup_I C'$.

All inner boundary components of $\Sigma_C$, resp. $\Sigma_{C'}$, identify with inner boundary components of $\Sigma_D$.  
The outer boundaries of $\Sigma_C$ and $\Sigma_{C'}$ each contain a \emph{boundary interval} $\caI$ and $\caI'$ determined by $I$. Let $m_{C,I}$ (resp. $m_{C',I}$) be the marked boundary points of $\Sigma_C$ (resp. $\Sigma_{C'}$) lying on edges of $\caE_I$, and define $\caI$ (resp. $\caI'$) as the minimal closed subinterval of $\partial \Sigma_C$ (resp. $\partial \Sigma_{C'}$) containing the balls $B_{1/3}(m)$ for each $m \in m_{C,I}$ (resp. $m_{C',I}$).  

Gluing $\caI$ to $\caI'$ yields a surface homeomorphic to $\Sigma_D$, which we identify with $\Sigma_D$. The induced gluing map
\[
\gl : A(\Sigma_C \sqcup \Sigma_{C'}) \simeq A(\Sigma_C) \otimes A(\Sigma_{C'}) \longrightarrow A(\Sigma_D)
\]
is as in Section~\ref{ssubsec:gluing}.

Let $\anchor_C$ be an anchor of $\Sigma_C$ whose attachment point to the outer boundary component $\caS^{\anchorino_C}_0$ of $\Sigma_C$ also lies on the outer boundary of $\Sigma_D$. Let $\anchor_D$ be an anchor for $\Sigma_D$ such that $\caS^{\anchorino_D}_0$ is the outer boundary of $\Sigma_D$ and such that $\caS^{\anchorino_C}_{\kappa} = \caS^{\anchorino_D}_{\kappa + \lambda}, \quad \text{for } \kappa = 1, \dots, m$ for some \emph{offset} $\lambda \in \{ 0, 1, \dots, n-m \}$.  
If the graph of $\anchor_D$ contains the graph of $\anchor_C$ as a subgraph, we say that $\anchor_D$ \emph{extends} $\anchor_C$ with offset $\lambda$. More generally, any anchor $\anchor'_D \sim \anchor_D$ is also said to extend $\anchor_C$ with offset $\lambda$.  

Recall that $P^{(X_1, \dots, X_n)}_D$ is the orthogonal projector onto the skein subspace $H_D^{(X_1, \dots, X_n)} \subset \caH_D$ defined in Eq.~\eqref{eq:skein subspace with free outer boundary}.  The following \emph{Inclusion Lemma} is crucial for manipulating Drinfeld insertions defined on different regions.

\begin{lemma}[Inclusion {\cite[Lemma 6.7]{bols2025levinI}}] 
\label[inclusion lemma]{lem:inclusion lemma} 
Let $D=C\sqcup_I C'$ (so $D$ is obtained by gluing $C$ and $C'$ along a self-avoiding dual path $I$), and let $\anchor_D$ extend $\anchor_C$ with offset $\lambda$. For any $X_1, \dots, X_n, Y_1, \dots, Y_n \in \Irr Z(\caC)$ and any $\beta \in Z(\caC)\big( X_{\lambda+1} \otimes \cdots \otimes X_{\lambda+m} \rightarrow Y_1 \otimes \cdots \otimes Y_m \big)$ we have
\[
\Dr^{\anchorino_C}_C[\beta] \, P^{(X_1, \dots, X_n)}_D 
= \Dr_{D}^{\anchorino_D}\big[ \id_{X_1 \otimes \cdots \otimes X_{\lambda}} \otimes \beta \otimes \id_{X_{\lambda + m + 1} \otimes \cdots \otimes  X_n} \big].
\]
\end{lemma}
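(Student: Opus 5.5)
Both sides of the identity are operators on $\caH_D$, and the plan is to compare them on a spanning set of $H_D^{(X_1,\dots,X_n)}$ and on its orthogonal complement. The right-hand side annihilates the complement by definition of the Drinfeld insertion. For the left-hand side, $P_D^{(X_1,\dots,X_n)}$ kills the complement, so the product $\Dr^{\anchorino_C}_C[\beta]\,P_D^{(X_1,\dots,X_n)}$ does too. It therefore suffices to check the identity on vectors $\Psi_D^{\anchorino_D}(\al\otimes w_0\otimes w^{X_1}\otimes\cdots\otimes w^{X_n})$. By \Cref{prop:the great interface} these span $H_D^{(X_1,\dots,X_n)}$, and on them $P_D^{(X_1,\dots,X_n)}$ acts as the identity.

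The first step is to decompose such a vector relative to the cut $D=C\sqcup_I C'$.
\begin{itemize}
\item $\caH_D=\caH_C\otimes\caH_{C'}\otimes\caH_{\rm cut}$, where $\caH_{\rm cut}$ collects the vertices on $\caE_I$ and $I_{\inn}$.
\item The skein subspace $H_D$ sits inside $H_C\otimes(\cdots)$, since $B_f$ and $A_e$ for faces and edges of $C$ are among those of $D$.
\item Concretely, I will use the anchor $\anchor_D$ extending $\anchor_C$ to redraw the diagram $\Phi_D^{\anchorino_D}(\al\otimes\cdots)$ on $\Sigma_D$. Using local relations and the dotted-line/cloaking properties \eqref{eq:normalization and cloaking properties}, the part of the diagram inside $\Sigma_C$ becomes an anchored diagram: the strands $X_{\lambda+1},\dots,X_{\lambda+m}$ emanate from a local anchor point and fuse via a tree into a single $Z(\caC)$ strand that exits $\Sigma_C$ through the boundary interval $\caI$.
\item Resolving that strand through $\caI$ with an orthonormal basis of $\caC(X_0^C\to\chi^{\otimes\caS_0^{\anchorino_C}})$, I expect to obtain
$$\Psi_D^{\anchorino_D}(\al\otimes\cdots)=\sum \sigma_D^{-1}\gl\big(\Phi^{\anchorino_C}_C(\gamma\otimes u\otimes w^{X_{\lambda+1}}\otimes\cdots)\otimes y\big),$$
with $\gamma$ running over a basis of $Z(\caC)(X_0^{C*}\to X_{\lambda+1}\otimes\cdots\otimes X_{\lambda+m})$.
\item Pulling back through $\sigma_C^{-1}$, this exhibits the vector as $\sum_k \psi_k\otimes\eta_k$ with $\psi_k\in H_C^{(X_{\lambda+1},\dots,X_{\lambda+m})}$.
\end{itemize}

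The second step applies $\Dr_C^{\anchorino_C}[\beta]$. It acts only on the $\caH_C$ factor and replaces $\gamma$ by $\beta\circ\gamma$ and $w^{X}$ by $w^{Y}$. Its graphical form \eqref{eq:Drinfeld insertion graphical} is the insertion of a $\beta$ coupon on the strands just after the anchor point of $C$, with normalization factor $\prod (d_{Y_\kappa}/d_{X_\kappa})^{1/2}$. I then re-glue, which is compatible because gluing is local and $\sigma_C,\sigma_D$ differ only by the $d^{-1/4}$ and $\caD$ factors attached to the cut edges and faces. Since $\anchor_D$ contains $\anchor_C$ as a subgraph, the $\beta$ coupon slides along the strands to the anchor point of $D$, where it appears as $\id_{X_1\otimes\cdots\otimes X_\lambda}\otimes\beta\otimes\id$ composed with $\al$. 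Comparing with \eqref{eq:Drinfeld insertion defined} on $D$ completes the argument.

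The main obstacle is the first step. It requires matching the normalizations exactly: the prefactors $(\prod d_{X_\kappa})^{1/2}\caD^{m-1}$ in \eqref{eq:Phi_anchor_defined}, the $\caD^{-|F|}$ and $d_{\partial l}^{-1/4}$ factors in $\sigma_C$ and $\sigma_D$, and the factors from cutting the faces $I_{\inn}$ with dotted loops. Rather than track these constants term by term, I would use unitarity of $\Psi_C^{\anchorino_C}$ and $\Psi_D^{\anchorino_D}$ together with linearity in $\beta$. Both sides are linear in $\beta$ and have the same graphical form up to a scalar that is independent of $\beta$. Setting $\beta=\id$, the left side equals $\Dr_C[\id]P_D=P_C^{(\dots)}P_D=P_D$ and the right side equals $P_D$, so that scalar is $1$.
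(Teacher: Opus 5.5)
Your reduction to the spanning vectors $\Psi_D^{\anchorino_D}(\al\otimes w_0\otimes w^{X_1}\otimes\cdots\otimes w^{X_n})$ is fine. So is the final graphical step, where you split $\al$ through the fusion channel of $X_{\lambda+1}\otimes\cdots\otimes X_{\lambda+m}$ and slide $\beta$ to the common anchor point. The step in between, however, does not hold as written.

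The map $\sigma_D^{-1}\circ\gl\circ(\sigma_C\otimes\sigma_{C'})$ is not the naive inclusion $\psi\otimes\eta\mapsto\psi\otimes\eta$. Since $\sigma_D=\caD^{-|F_D|}\iota_D\circ\sigma_{D_1}$, and a dotted loop in a filled face is trivial, one has $\sigma_D(\phi)=\sigma_D(B_f\phi)$ for every $\phi\in H_{D_1}$ and every face $f$ of $D$. Consequently, up to powers of $\caD$ and factors depending on the cut-edge labels (from the $d_{\partial l}^{-1/4}$ in $\sigma_{C_1},\sigma_{C'_1}$),
\[
\sigma_D^{-1}\gl(\sigma_C\psi\otimes\sigma_{C'}\eta)=\prod_{f\in I_{\inn}}B_f\prod_{e\in\caE_I}A_e\,(\psi\otimes\eta).
\]
An elementary tensor with $\psi\in H_C$, $\eta\in H_{C'}$ essentially never satisfies the constraints $B_f$, $f\in I_{\inn}$, which straddle the cut. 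So your decomposition produces $P_D\big(\sum_k\psi_k\otimes\eta_k\big)$, not $\sum_k\psi_k\otimes\eta_k$.

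Your instruction to ``apply $\Dr_C^{\anchorino_C}[\beta]$ to the $\caH_C$ factor, replacing $\gamma$ by $\beta\circ\gamma$, and re-glue'' therefore silently assumes
\[
\Dr_C^{\anchorino_C}[\beta]\,P_D(\psi\otimes\eta)=P_D\big(\Dr_C^{\anchorino_C}[\beta]\psi\otimes\eta\big).
\]
That commutation with the cut constraints is the real content of the Inclusion Lemma. In particular, the lemma implies that $\Dr_C^{\anchorino_C}[\beta]$ maps $H_D^{(X_1,\dots,X_n)}$ into $H_D$. This is not evident, because $\Dr_C^{\anchorino_C}[\beta]$ and the $B_f$ with $f\in I_{\inn}$ have overlapping supports, and ``gluing is local'' does not supply it.

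To close the gap, proceed as follows.
\begin{enumerate}
\item Observe that $\Dr_C^{\anchorino_C}[\beta]$ intertwines the action $\frt_{\caS_0}$ of $\Tube_{\caS_0}$ on the outer boundary $\caS_0$ of $\Sigma_C$. It changes only $\al$ and the inner boundary vectors in $\caC_{\anchorino}(\Sigma_C)$, and $\Psi_C^{\anchorino_C}$ is a module isomorphism.
\item Deduce that $\Dr_C^{\anchorino_C}[\beta]$ preserves the outer boundary labels of $\Sigma_C$. Hence it commutes with the $A_e$, $e\in\caE_I$, and with any rescaling that depends only on cut labels.
\item Show that on $H_C\otimes H_{C'}$, with the cut $A_e$ imposed, each $B_f$ with $f\in I_{\inn}$ acts as a finite sum $\sum_i\frt_{\caS_0}(t_i)\otimes\frt_{\caS'_0}(t'_i)$, up to such label-diagonal factors. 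Here $\caS'_0$ is the outer boundary of $\Sigma_{C'}$. To see this, fuse the dotted loop around $f$ with the two cut strands of $f$ and cut there. The resulting arc on each side runs along the outer boundary and attaches to two boundary strands, so it is a diagram in the collar, i.e. a tube-algebra element.
\end{enumerate}
With this in hand your argument goes through. The cut-label factors appear identically before and after $\Dr_C^{\anchorino_C}[\beta]$ and cancel.

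You should drop the shortcut of fixing ``the scalar'' by setting $\beta=\id$. The discrepancy is not a single scalar but depends on the cut labels. Moreover, $\beta=\id$ only probes the case $Y_\kappa=X_{\lambda+\kappa}$, so it could not detect a constant that depends on the $Y$'s.
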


\begin{remark} \label[remark]{rem:adding vacuum lines}
    With care, the Inclusion Lemma can be extended with more flexible ways of defining the extension of anchors. We mention one trivial but useful relation that is not covered by the Inclusion Lemma as stated (see also discussion at Eq. (58) of \cite{bols2025levinI}). 

    Let $C$ be a finite connected region with anchor $\anchor$ and assume that $\caS^{\anchorino}_{k+1} = \caS_e$ for some edge $e$ and number $k$. Let $C' \supset C$ be the finite subregion obtained from $C$ by filling in the puncture at $e$, and let $\anchor'$ be the unique subanchor of $\anchor$ which is an anchor on $C'$. It follows immediately from the definitions and the choice of vacuum boundary condition $p^{\I}$ in Eq. \eqref{eq:p^one defined} that
    $$  P_{C'}^{(X_1, \cdots, X_m)} = P_{C}^{(X_1, \cdots, X_k, \I, X_{k+1}, \cdots, X_m)}, $$
    where $m+1 = \abs{\Bd(\Sigma_{C'})}$. Moreover, given a morphism $\beta : X_1 \otimes \cdots \otimes X_m \rightarrow Y_1 \otimes \cdots \otimes Y_m$ we have
    $$ \Dr^{\anchorino'}_{C'} \left[ \adjincludegraphics[valign=c, height = 0.8cm]{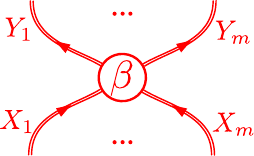} \right] P_{C'}^{(X_1, \cdots, X_m)} = \Dr^{\anchorino}_{C} \left[  \adjincludegraphics[valign=c, height = 0.8cm]{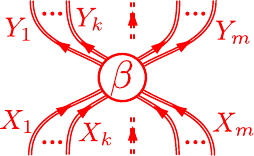}  \right] P_{C}^{(X_1, \cdots, X_k, \I, X_{k+1}, \cdots, X_m)}.$$
\end{remark}

\section{Simple objects of \texorpdfstring{$\DHR_f$}{DHR} from string operators}  \label{sec:simple objects of DHR}

We recall the construction of string operators from \cite{bols2025levinI} in \Cref{sec:string operators}, and use these to construct a representative set of simple objects for $\DHR_f$ in \Cref{subsec:representative set for DHR}.

\subsection{String operators} \label{sec:string operators}

\subsubsection{Links} \label{ssubsec:links}

Recall from \cite[~Section 6.3]{bols2025levinI} that a \emph{link} is a finite dual path $L = (f_1, \cdots, f_l)$ consisting of at least five faces and such that faces $f_i$ and $f_j$ share an edge if and only if $|i-j|\le 1$, and such that the first four faces of $L$ lie on a straight line, as do the last four faces of $L$.

Let $e_{\ii} = \partial_{\ii} L$ denote the edge between faces $f_1$ and $f_2$ and $e_{\f} = \partial_{\f} L$ the edge between the last two faces $f_{l-1}$ and $f_l$. The puncture at $e_{\ii}$, resp. $e_{\f}$, is called the \emph{initial}, resp. \emph{final}, puncture of $L$.

The region $C^L$ is the region (see Section \ref{ssubsec:regions and string-net subspaces}) consisting of the \emph{bulk faces} $F_L := \{f_3, \cdots, f_{l-2}\}$ of $L$, all edges belonging to faces of $L$ except $e_{\ii}$ and $e_{\f}$, as well as all the vertices belonging to such edges. The associated extended surface $\Sigma_L := \Sigma_{C^{L}}$ (see Section \ref{ssubsec:extended surface of region}) is a twice punctured disk, with punctures at $\partial_{\ii}L$ and $\partial_{\f} L$.

For any such link $L$ we fix an anchor $\anchor_L$ of $C_L$ (see Sections \ref{ssubsec:skein modules on punctured disks} and \ref{ssubsec:characterization of skein subspaces}) whose equivalence class up to the outer boundary component $\caS_0^{\anchorino_L}$ (Definition \ref{def:equivalence of anchors}) is uniquely determined by the condition that the underlying graph of $\anchor_L$ lies in the right strip $\Sigma_{L, r} := \Sigma_{C^{L, r}}$, where $C^{L, r}$ is the subregion of $C^L$ consisting of all vertices and edges on the right hand side of $C^L$ w.r.t. the orientation of $L$. (See \cite[Section 6.3]{bols2025levinI})

\begin{figure}[ht]
    \centering
    \includegraphics[width=0.4\textwidth]{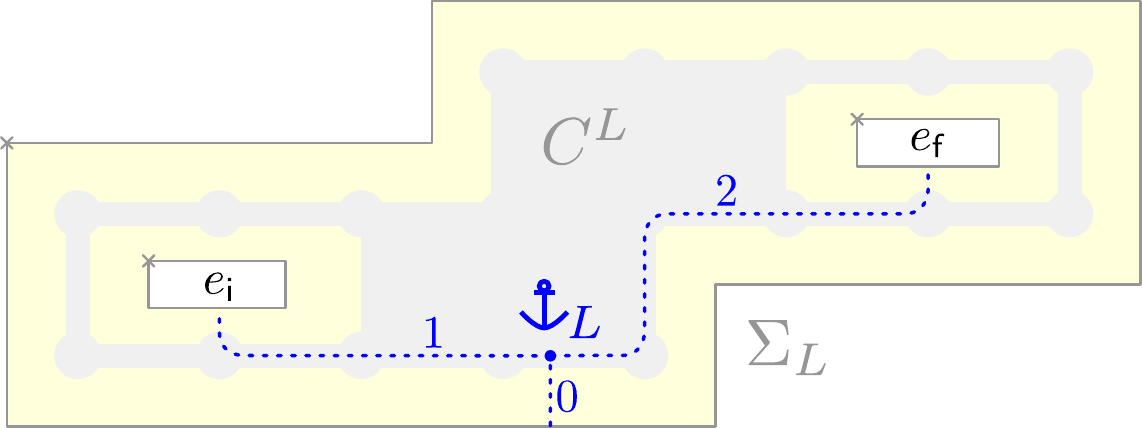}
    \caption{A link $L$ together with the region $C^L$ indicated in light grey, and the associated surface $\Sigma_L$ in light yellow. The anchor $\anchor_L$ is shown in blue.}
    \label{fig:link}
\end{figure}

We write $\supp(L) = V_{C^L}$ for the \emph{support} of the link $L$. We will often write $L$ instead of $C^L$ in notations depending on the region $C^L$. For example, $\caA_L := \caA_{C^L}, \caH_L := \caH_{C^L}$, $H_L := H_{C^L}$, $\sigma_L := \sigma_{C^L}$, $\Psi_L^{\anchorino} := \Psi_{C^L}^{\anchorino}$, etc. Consistent with this notation, we have $F_L = F_{C^L} = \{ f_3, \cdots, f_{l-2} \}$. 
We also suppress the standardised choice of anchor $\anchor_L$, writing $\Dr_L[\beta] = \Dr_{C^L}^{\anchorino_L}[\beta]$.

The boundary components of $\Sigma_L$ will always have the linear ordering induced by $\anchor_L$, so that $H_L(p_0, p_1, p_2) = H_L(\underline p)$ with $p_{\kappa} = p_{\caS^{\anchorino_L}_{\kappa}}$ for $\kappa = 0, 1, 2$. We also write $\frt^L_{\kappa} := \frt_{\caS_{\kappa}^{\anchorino_L}}$ for the corresponding $\Tube$-action on these boundary components. (see Section \ref{ssubsec:Tube actions on collars})

Two links $L_1 = (f_1, \cdots, f_k)$ and $L_2 = (f'_1\cdots, f'_l)$ are said to be \emph{composable} if $\partial_{\ii}(L_1)=\partial_\f(L_2)$ and the \emph{composite} $L_2 \wedge L_1 := (f'_1, \cdots, f'_{l-1}, f'_l, f_2, \cdots, f_k)$ is again a link.

\subsubsection{Hopping, pair creation, and the definition of unitary gates} \label{ssubsec:hopping and pair creation}

For each $X \in \Irr Z(\caC)$ we fix a unitary morphism $\zeta_X : X^* \rightarrow \bar X$, and represent it and its inverse graphically as
\begin{equation} \label{eq:star to bar}
    \zeta_X = \,\,\, \adjincludegraphics[valign=c, height = 0.8cm]{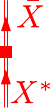} \,\,\, = \,\,\, \adjincludegraphics[valign=c, height = 0.8cm]{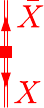} \,\,,\quad\quad \zeta^{-1}_X = \zeta^{\dag}_X = \,\,\, \adjincludegraphics[valign=c, height = 0.8cm]{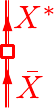} \,\,\, = \,\,\, \adjincludegraphics[valign=c, height = 0.8cm]{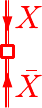}\,.
\end{equation}
We also put
\begin{equation} \label{eq:graphical gagger duals}
    \ev_X^{\dag} = \,\,\, \adjincludegraphics[valign=c, width = 1.5cm]{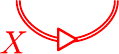} \,\,\, , \,\,\, \coev_X^{\dag} = \,\,\, \adjincludegraphics[valign=c, width = 1.5cm]{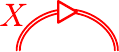} \,\, 
\end{equation}
and define Drinfeld insertion operators (see Section \ref{subsec:Drinfeld insertions})
\begin{align}
	\Dr_L^{(\I\I \rightarrow X \bar X)} &:= d_X^{-1/2} \, \Dr_{L} \left[ \,\,\,\,\, \adjincludegraphics[valign=c, height = 0.8cm]{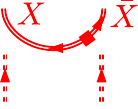} \right], \\
	\Dr_L^{(X \bar X \rightarrow \I \I)} &:= (\Dr_L^{(\I\I \rightarrow X \bar X)})^* = d_X^{-1/2} \, \Dr_{L} \left[ \,\,\,\,\, \adjincludegraphics[valign=c, height = 0.8cm]{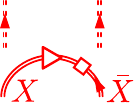} \right], \\
	\Dr_L^{(X \I \rightarrow \I X)} &:= \Dr_{L}\left[ \adjincludegraphics[valign=c, height = 1.0cm]{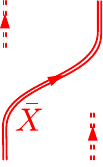} \right], \quad \Dr_L^{(\I X \rightarrow X \I)} := (\Dr_L^{(\I X \rightarrow X \I)})^* = \Dr_{L}\left[ \adjincludegraphics[valign=c, height = 1.0cm]{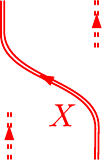} \right].
\end{align}

By \cite[~Lemma 6.8]{bols2025levinI} these Drinfeld insertions are partial isometries with
\begin{align}\label{eq:P_L defs1}
	P_L^{\I \I} &:= \frt^L_1( p^{\I} ) \frt^L_2(p^{\I}) B_L
        = (\Dr_L^{(\I\I \rightarrow X \bar X )})^* \Dr_L^{(\I\I \rightarrow X \bar X )}, \\
        P_L^{X \bar X} &:= \frt^L_0(P^{\I}) \frt^L_1( p^X ) \frt^L_2(p^{\bar X}) B_L
        = \Dr_L^{(\I\I \rightarrow X \bar X )} (\Dr_L^{(\I\I \rightarrow X \bar X )})^* \\
	P_L^{X \I} &:= \frt^L_1( p^{X} ) \frt^L_2(p^{\I}) B_L 
        = (\Dr_L^{(X \I \rightarrow \I X)})^* \Dr_L^{(X \I \rightarrow \I X)}, \\
        P_L^{\I X} &:= \frt^L_1( p^{\I} ) \frt^L_2(p^{X}) B_L 
        = \Dr_L^{(X \I \rightarrow \I X)} (\Dr_L^{(X \I \rightarrow \I X)})^*.
\label{eq:P_L defs4}
\end{align}
where $B_L = \prod_{f \in F_L} B_f$ imposes $B_f$ for the bulk faces of $L$.

Write further $P^{\I ?}_L := P_L^{\I \I} + P_L^{\I X}$,
$P^{X ?}_L := P_L^{X \I} + P_L^{X \bar X }$,
and $P^{??}_L = P^{X ?}_L + P^{\I ?}_L = P^{\I\I} + P_L^{X \I} + P_L^{\I X} + P_L^{X \bar X}$, and define a self-adjoint unitary gate
\begin{equation} \label{eq:unitary gate defined}
    u_L^X := \Dr_L^{(\I \I \rightarrow X \bar X)} + \Dr_L^{(X \bar X \rightarrow \I \I)} + \Dr_L^{(X \I \rightarrow \I X)} + \Dr_L^{(\I X \rightarrow X \I)} + (\I - P^{??}_L)
\end{equation}
associated to the link $L$. By construction, $u_L$ satsifies the following intertwining properties:
\begin{equation} \label{eq:intertwining properties of u_L}
    u^X_L P_L^{\I \I} = P_L^{X \bar X} u^X_L, \quad \quad u^X_L P_L^{ \I X} = P_L^{X \I } u^X_L.
\end{equation}

These unitary gates also satisfy the \emph{concatenation lemma}:
\begin{lemma}[\cite{bols2025levinI}, Lemma 6.9] \label[concatenation lemma]{lem:concatenation lemma}
    Let $X\in \Irr Z(\caC)$ and let $L_1$ and $L_2$ be composable links with $L = L_2 \wedge L_1$. Then
    $$ u^X_{L_2} u^X_{L_1} P_L^{\I ?} = u_L^X P_L^{\I ?}. $$
\end{lemma}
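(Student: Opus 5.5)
The identity only involves states on which the outer boundary of $\Sigma_{L_1}$, $\Sigma_{L_2}$, $\Sigma_L$ is unconstrained and the initial puncture of $L$ carries the vacuum. So the plan is to reduce everything to Drinfeld insertions on the single region $C^L$ and compare them with the Inclusion Lemma (\Cref{lem:inclusion lemma}) and the Multiplicativity Lemma (\Cref{lem:Drinfeld insertion multiplicativity}).

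Concretely, set $D := C^{L_1} \cup C^{L_2}$ with the shared puncture $e := \partial_\ii L_1 = \partial_\f L_2$ kept open. Then $\Sigma_D$ is a thrice punctured disk with punctures at $\partial_\ii L_2$, $e$ and $\partial_\f L_1$, and filling in $e$ gives (a region containing) $C^L$. Choose an anchor $\anchor_D$ whose graph lies in the right strip of $L$. Since the anchors $\anchor_{L_1}$ and $\anchor_{L_2}$ also lie in their right strips, $\anchor_D$ extends $\anchor_{L_2}$ with offset $0$ and $\anchor_{L_1}$ with offset $1$. Moreover, deleting its middle edge yields an anchor equivalent to $\anchor_L$, because all three lie in the right strip.

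Next, decompose $P_L^{\I ?} = P_L^{\I\I} + P_L^{\I X}$ and treat each summand. Remark~\ref{rem:adding vacuum lines} identifies these with $P_D^{(\I,\I,\I)}$ and $P_D^{(\I,\I,X)}$ restricted to $C^L$. Then track each gate:
\begin{enumerate}[label=(\roman*)]
    \item On $P_L^{\I\I}$: the gate $u^X_{L_1}$ acts only through $\Dr_{L_1}^{(\I\I\to X\bar X)}$, since the other summands of \eqref{eq:unitary gate defined} vanish there. By the Inclusion Lemma this equals $\Dr_D[\id_\I \otimes (\text{cup})]$. The resulting state lies in $H_D^{(\I,X,\bar X)}$, so $u^X_{L_2}$ acts through $\Dr_{L_2}^{(X\I\to\I X)}$, which equals $\Dr_D[(\text{hop})\otimes \id_{\bar X}]$.
    \item On $P_L^{\I X}$: first $\Dr_{L_1}^{(\I X\to X\I)}$ acts, then $\Dr_{L_2}^{(X\I\to\I X)}$.
    \item In both cases, Multiplicativity turns the product into a single insertion $\Dr_D[\beta]$, and the graphical calculus in $Z(\caC)$ simplifies the composite morphism $\beta$. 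In case (i), a cup for $X\bar X$ followed by a hop of $X$ past the vacuum strand gives a cup from puncture $0$ to puncture $2$ tensored with $\id_\I$ on the middle. In case (ii), two hops compose to a single hop.
    \item Finally, Remark~\ref{rem:adding vacuum lines} removes the middle vacuum strand, giving $\Dr_L^{(\I\I\to X\bar X)}P_L^{\I\I}$ and $\Dr_L^{(\I X\to X\I)}P_L^{\I X}$ respectively. These agree with $u^X_L P_L^{\I\I}$ and $u^X_L P_L^{\I X}$.
\end{enumerate}

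The main obstacle is bookkeeping, not algebra. First, $D$ (or $C^L$) must genuinely be obtained by gluing $C^{L_1}$, resp. $C^{L_2}$, to a complement along a self-avoiding dual path, so that the Inclusion Lemma applies; the straightness of the first and last four faces of a link should make this possible. Second, the normalisations $d_X^{\pm1/2}$ and the $\zeta_X$ conventions must match when the cup and hop are composed. Third, one must check that the anchors stay equivalent up to the outer boundary, which is exactly what the right-strip convention for $\anchor_L$ is designed to ensure. I would first draw the three-puncture anchor explicitly and verify the Inclusion Lemma hypotheses, since everything else is then a diagrammatic identity in $Z(\caC)$.
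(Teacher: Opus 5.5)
Your proposal is correct and follows essentially the argument the paper relies on (cited from Part I and re-run in the proof of Lemma~\ref{lem:variation on concatenation}): work on the thrice-punctured region $D=C^{L_1}\cup C^{L_2}$ with an anchor extending those of $L_2$ and $L_1$ with offsets $0$ and $1$, reduce each gate to the single Drinfeld insertion that survives on the relevant projector, merge these with the Inclusion and Multiplicativity Lemmas, and fill in the middle puncture using Remark~\ref{rem:adding vacuum lines}. There is one labelling slip to fix: the first inner puncture of $\Sigma_{L_2}$ is $\partial_\ii L_2$, and the $X$ sits at $e=\partial_\f L_2$ when $u^X_{L_2}$ acts, so in both cases the surviving summand is $\Dr_{L_2}^{(\I X\to X\I)}$, not $\Dr_{L_2}^{(X\I\to\I X)}$ (whose initial projection $P^{X\I}_{L_2}$ is orthogonal to these states); likewise, the resulting cup in case (i) joins the first and third inner punctures, not ``puncture $0$'', which is the outer boundary.
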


\subsubsection{String operators as endomorphisms of \texorpdfstring{$\caA$}{A}.} \label{ssubsec:definition of string operators}

A \emph{chain} $\scrC = ( L_n )_{n \in \N}$ is a half-infinite sequence of links $L_n$ such that $L^{\scrC}_{\interval{m}{n}} := L_n \wedge L_{n-1} \wedge \cdots \wedge L_m$ is a well-defined link for all natural numbers $m \leq n$. We write $\scrC[\interval{m}{n}]=(L_i)_{i=m}^n$ for the finite subchains of $\scrC$ as well as $\scrC_n := \scrC[\interval{1}{n}]$, and $\partial_{\ii} \scrC = \partial_{\f} L_1$ for the initial puncture of the chain, which is the final puncture of the first link. We also write $\supp(\scrC) = \bigcup_{n \in \N} \supp(L_n)$ for the support of the chain $\scrC$.

For any $X \in \Irr Z(\caC)$ and natural numbers $m \leq n$ we construct a finite unitary circuit $$U^X_{\scrC[\interval{m}{n}]} = u^X_{L_n} \times \cdots \times u^X_{L_m}$$
out of the gates of Eq. \eqref{eq:unitary gate defined}. We also put $\rho^X_{\scrC_n} := \Ad[ (U^X_{\scrC[\interval{1}{n}]})^* ]$.

\begin{lemma}[\cite{bols2025levinI}, Lemma 6.10] \label[lemma]{lem:limiting endomorphism}
    Let $\scrC$ be a chain and $X \in \Irr Z(\caC)$.

    The limit
	$$ \rho_{\scrC}^X := \lim_{n \uparrow \infty} \, \rho_{{\scrC}_n}^X,$$
	exists pointwise in norm and defines a unital *-endomorphism of $\caA$. This endomorphism is supported on $\supp({\scrC})$ in the sense that if $x \in \caA_{\supp({\scrC})^c}$ then $\rho_{\scrC}^X(x) = x$.
\end{lemma}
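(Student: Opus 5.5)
The plan is to show that the sequence $\rho^X_{\scrC_n}(x)$ stabilises for local $x$, so that the pointwise norm limit exists on the dense subalgebra $\caA^{\loc}$. The endomorphism property and the support statement then come by continuity.

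\textbf{Step 1: stabilisation on local observables.} Fix $x \in \caA_V$ with $V$ finite. The links $L_n$ of a chain eventually leave every bounded set. This is because $L^{\scrC}_{\interval{1}{n}}$ is a link for every $n$, so its faces are distinct and the supports march off to infinity. Choose $N$ such that $\supp(L_n) \cap V = \emptyset$ for all $n > N$. Then $U^X_{\scrC[\interval{1}{n}]} = u^X_{L_n} \cdots u^X_{L_{N+1}} \, U^X_{\scrC[\interval{1}{N}]}$, and conjugating $x$ gives
$$\rho^X_{\scrC_n}(x) = \Ad[(U^X_{\scrC_N})^*]\big(\Ad[(u^X_{L_{N+1}})^* \cdots (u^X_{L_n})^*](x)\big).$$
Each $u^X_{L_k}$ with $k > N$ lies in $\caA_{L_k}$, which commutes with $\caA_V$, so the inner conjugation acts trivially on $x$. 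Hence $\rho^X_{\scrC_n}(x) = \rho^X_{\scrC_N}(x)$ for all $n \ge N$, and the limit exists trivially on $\caA^{\loc}$.

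\textbf{Step 2: extension to $\caA$.} Each $\rho^X_{\scrC_n}$ is an inner automorphism, hence isometric. An $\varepsilon/3$ argument then shows that $\rho^X_{\scrC_n}(x)$ is Cauchy for every $x \in \caA$, so the limit $\rho^X_{\scrC}$ exists pointwise in norm on all of $\caA$. Pointwise norm limits of unital $*$-homomorphisms are unital $*$-homomorphisms, because multiplication and involution are norm continuous and the sequence is uniformly bounded. Thus $\rho^X_{\scrC}$ is a unital $*$-endomorphism of $\caA$.

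\textbf{Step 3: support.} Take $x \in \caA_{\supp(\scrC)^c}$. Every gate $u^X_{L_k}$ lies in $\caA_{\supp(L_k)} \subset \caA_{\supp(\scrC)}$, so every finite circuit commutes with $x$. Hence $\rho^X_{\scrC_n}(x) = x$ for every $n$, and therefore also in the limit.

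\textbf{Main obstacle.} The only delicate point is the claim in Step 1 that only finitely many links of a chain meet a given finite set. I would derive this from the link axioms applied to every finite composite $L^{\scrC}_{\interval{1}{n}}$. A link's faces are pairwise distinct, and in fact only consecutive faces are adjacent, so the infinitely many faces of the chain are all distinct. A finite $V$ touches only finitely many faces, and each $L_k$ has at least five faces, so only finitely many $L_k$ can meet $V$.

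I would also check that the support of a link is contained in the vertices of its faces. This holds because $C^L$ consists of edges of faces of $L$ together with their vertices.

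Interestingly, neither the Concatenation Lemma nor any Haag duality input is needed here. Those become relevant only later, when comparing different chains and proving transportability.
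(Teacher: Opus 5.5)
Your proposal is correct and follows essentially the same route as the proof cited from Part I; this paper also notes that $\rho^X_{\scrC_n}(x)$ is eventually constant for local $x$. The steps match: stabilisation on $\caA^{\loc}$ because the far gates $u^X_{L_k}\in\caA_{L_k}$ commute with $x$, extension to $\caA$ using that the inner automorphisms $\rho^X_{\scrC_n}$ have norm one, and the support claim because every gate lies in $\caA_{\supp(\scrC)}$. Your argument that only finitely many links of a chain meet a finite set, using that the faces of every finite composite $L^{\scrC}_{\interval{1}{n}}$ are distinct, correctly supplies the one point the paper leaves implicit.
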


\subsection{Representative sets of simple objects of \texorpdfstring{$\DHR_f$}{DHRf}}
\label{subsec:representative set for DHR}

By \cite[Proposition 7.2, Proposition 7.10]{bols2025levinI}, $\pi_{\scrC}^X = \pi^{\I} \circ \rho_{\scrC}^X$ is an anyon representation localized in any cone that contains the support of $\scrC$. 
Let $\bar \rho_{\scrC}^X$ be its unique extension to $\caB$, as given by \Cref{lem:extension to B}, satisfying
$$\bar \rho_{\scrC}^X \circ \pi^{\I} = \pi^{\I} \circ \rho_{\scrC}^X.$$

Recall that $\pi^{\I}(\caA)$ is WOT-dense in $\caB$, so intertwiners among the extensions $\{\bar \rho_{\scrC}^X\}_{X \in \Irr Z(\caC)}$ are exactly the same as the intertwiners among the anyon representations $\{ \pi_{\scrC}^X \}_{X \in \Irr Z(\caC)}$. Therefore, we get an equivalent formulation of the main theorem of \cite{bols2025levinI}.
\begin{theorem}[{\cite[Thm. 2.3]{bols2025levinI}}]
\label[theorem]{thm:bijection of simples}
	Let $\scrC$ be a chain supported in an allowed cone. Then the set of endomorphisms $\{ \bar \rho_{\scrC}^X \}_{X \in \Irr Z(\caC)}$ is a complete set of representative simple objects for $\DHR_f$. In particular, under \Cref{ass:bounded spread Haag duality}, $\DHR_f$ is finite semisimple.
\end{theorem}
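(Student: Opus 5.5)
The theorem is a reformulation of \cite[Thm. 2.3]{bols2025levinI}, so the plan is to transport that result along the extension map of \Cref{lem:extension to B}, and then to deduce finite semisimplicity from the structure established in \Cref{subsec:DHR}.

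\emph{Step 1: the $\bar\rho^X_{\scrC}$ are objects of $\DHR_f$.} Fix a chain $\scrC$ in an allowed cone $\Lambda$. By \cite[Props. 7.2, 7.10]{bols2025levinI}, $\pi^X_{\scrC}$ is an anyon representation localized in $\Lambda$. By \Cref{lem:extension to B}, \Cref{lem:extension is endomorphism} and \Cref{lem:transportability}, its extension $\bar\rho^X_{\scrC}$ is then a localized, transportable endomorphism of $\caB$, i.e. an object of $\DHR$.

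\emph{Step 2: intertwiners agree.} For any anyon representations $\pi_1,\pi_2$ with extensions $\rho_1,\rho_2$, I claim $\DHR(\rho_1\to\rho_2)$ equals the space of intertwiners from $\pi_1$ to $\pi_2$. One inclusion is immediate, since $\rho_i\circ\pi^{\I}=\pi_i$. For the other, let $T$ intertwine $\pi_1$ and $\pi_2$. Then $T$ intertwines $\rho_1,\rho_2$ on $\pi^{\I}(\caA)$. On each allowed cone algebra $\caR_\Delta$, the $\rho_i$ agree with $\Ad[V^{(i)}_\Delta]$, which is WOT-continuous, so the relation extends from $\pi^{\I}(\caA_\Delta)$ to $\caR_\Delta=\pi^{\I}(\caA_\Delta)''$. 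It then passes to $\caB$ by norm continuity.

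\emph{Step 3: simplicity and disjointness.} By Step 2, $\DHR(\bar\rho^X_{\scrC}\to\bar\rho^Y_{\scrC})\cong \delta_{XY}\C$, using that the $\pi^X_{\scrC}$ are irreducible and pairwise disjoint \cite{bols2025levinI}. Hence each $\bar\rho^X_{\scrC}$ lies in $\DHR_f$, is simple, and no two are isomorphic.

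\emph{Step 4: completeness.} Let $\rho\in\DHR_f$ be simple. Then $\rho\circ\pi^{\I}$ is an anyon representation, and by Step 2 it is irreducible. By the classification in Part I it is unitarily equivalent to some $\pi^X_{\scrC}$. A unitary implementing this equivalence is, by Step 2 again, an isomorphism $\rho\cong\bar\rho^X_{\scrC}$.

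\emph{Step 5: finite semisimplicity.} Under \Cref{ass:bounded spread Haag duality}, \Cref{subsec:direct sums and subobjects} shows that $\DHR_f$ has direct sums and subobjects. Take $\rho\in\DHR_f$. Its endomorphism algebra is a finite-dimensional $\rm C^*$-algebra, so $\I_\rho$ splits into finitely many minimal projections. Each corresponds to a simple subobject, and $\rho$ is their direct sum. Together with the finiteness of $\Irr Z(\caC)$ and Step 4, this gives finite semisimplicity.

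The main obstacle is the continuity argument in Step 2, namely passing from $\pi^{\I}(\caA_\Delta)$ to its weak closure. This is handled exactly as in the proof of \Cref{lem:extension to B}; the remaining steps are direct citations.
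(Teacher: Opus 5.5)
Your proposal is correct and follows essentially the paper's route: $\bar\rho^X_{\scrC}$ is the extension of $\pi^X_{\scrC}$, and intertwiners of the extensions coincide with intertwiners of the anyon representations, so the statement becomes a reformulation of the classification in Part I. Your cone-by-cone WOT-continuity argument in Step 2 is a more careful version of the paper's appeal to WOT-density of $\pi^{\I}(\caA)$ in $\caB$ (needed because the extensions are only normal on each cone algebra). Semisimplicity comes from the direct sum and subobject constructions of Section~\ref{subsec:direct sums and subobjects}, exactly as in your Step 5. One small caveat: Step 1 invokes \Cref{lem:extension is endomorphism} and therefore already uses \Cref{ass:bounded spread Haag duality}; this is harmless here and can be avoided by citing \Cref{rem:no haag} instead.
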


\section{Fusion}\label{sec:fusion}

The result of the previous section implies that $Z(\caC)$ and $\DHR_f$ are isomorphic as semisimple unitary categories. In this section, we show that $Z(\caC)$ and $\DHR_f$ have \emph{isomorphic $F$-symbols}. It is well known that this implies that $Z(\caC)$ and $\DHR_f$ are monoidally equivalent. We will actually be able to conclude that they are unitarily monoidally equivalent by Proposition \ref{prop:isomophic symbols implies isomorphic categories}.

\begin{notation} \label{not:fixed chain}
    Throughout this section we fix an arbitrary chain $\scrC = (L_n)$ which is supported in an allowed cone, 
    and denote $e_n = \partial_{\ii}L_n$.
    We will write $\bar \rho^X = \bar \rho^X_{\scrC}$ for the corresponding simple endomorphisms, and simplify notation by writing $u_n^X = u_{L_n}^X$, $U_n^X = U_{\scrC_n}^X$, $U^X_{n \to m} = U^X_{\scrC[n \to m]}$, and $P_n^{\bullet \bullet} = P_{L_n}^{\bullet \bullet}$, as well as $P_{n \to m}^{\bullet \bullet} = P_{L^{\scrC}_{n \to m}}^{\bullet \bullet}$.
\end{notation}

\subsection{Local states and anyon creation}

We start with some basic general observations that will be useful throughout the manuscript.

\subsubsection{Local states and eventually constant sequences of operators}

\begin{definition} \label[definition]{def:eventually constant}
    Let $\caK \subset \caH$ be a subspace of a Hilbert space $\caH$. 
    A sequence $(B_n) \subset B(\caH)$ is \emph{eventually constant on} $\caK$ if for every $\ket\Psi \in \caK$ there is $N\in \N$, such that for all $n\ge N$,
    \begin{equation}\label{eq:def evt constant locally}
        B_n \ket{\Psi} = B_N \ket{\Psi}.
    \end{equation}
\end{definition}

\begin{lemma}\label[lemma]{lem:evt constant product}
    If $(A_n) \subset B(\caK)$ and $(B_n) \subset B(\caH)$ are eventually constant sequences on $\caK\subset \caH$ then, for any $k \in \Z$, the sequence $(B_{n+k}A_n)$ is eventually constant on $\caK$.
    If moreover, $\overline \caK= \caH$ and the sequences $(A_n), (B_n)$ are uniformly bounded, 
    then the limits $A = \lim_n A_n$ and $B = \lim_n B_n$ exist in the strong operator topology, and
    \begin{equation}\label{eq:evt constant product}
        BA = \lim_n B_{n+k}A_n.
    \end{equation}
\end{lemma}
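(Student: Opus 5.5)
The plan has two parts. First I show the sequence $(B_{n+k}A_n)$ is eventually constant on $\caK$. Then, under the density and boundedness assumptions, I identify its strong limit with $BA$.

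For the first part, fix $\ket\Psi \in \caK$. Since $(A_n)$ is eventually constant on $\caK$, there is $N_1$ with $A_n\ket\Psi = A_{N_1}\ket\Psi$ for all $n \ge N_1$. The vector $\ket\Phi := A_{N_1}\ket\Psi$ lies in $\caK$, because $A_n \in B(\caK)$ by hypothesis. This is exactly where the asymmetry between $B(\caK)$ and $B(\caH)$ in the statement is used. Since $(B_n)$ is eventually constant on $\caK$, there is $N_2$ with $B_m\ket\Phi = B_{N_2}\ket\Phi$ for all $m \ge N_2$. Put $N := \max(N_1, N_2 - k)$, enlarged if necessary so that $N + k \geq 1$ and all indices make sense. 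Then for every $n \ge N$,
\[
B_{n+k}A_n\ket\Psi = B_{n+k}\ket\Phi = B_{N_2}\ket\Phi = B_{N+k}A_N\ket\Psi,
\]
which is the required eventual constancy.

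For the second part, I first construct the strong limits by the standard density argument. On $\caK$ the sequences $(A_n\ket\Psi)$ and $(B_n\ket\Psi)$ are eventually constant, so they converge. Let $M$ be a uniform bound for the norms of $A_n$ and $B_n$. For arbitrary $\ket\xi \in \caH$, choose $\ket\Psi \in \caK$ with $\norm{\ket\xi - \ket\Psi} < \epsilon$. Then $\norm{A_n\ket\xi - A_m\ket\xi} \le 2M\epsilon + \norm{A_n\ket\Psi - A_m\ket\Psi}$, and the last term vanishes for large $n,m$. So $(A_n\ket\xi)$ is Cauchy, and $A := \lim_n A_n$ exists strongly as a bounded operator with norm at most $M$. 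The same argument gives $B$. (Here I read $A_n \in B(\caK)$ as also extending boundedly to $\caH$, as the statement implicitly requires.)

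Next I prove \eqref{eq:evt constant product}. On $\ket\Psi \in \caK$, with $\ket\Phi$ and the indices as above, $B_{n+k}A_n\ket\Psi = B_{n+k}\ket\Phi$ for $n \ge N_1$, and this tends to $B\ket\Phi = BA\ket\Psi$, using that $A\ket\Psi = \ket\Phi$. For general $\ket\xi \in \caH$, I use the estimate
\[
\norm{(B_{n+k}A_n - BA)\ket\xi} \le 2M^2\norm{\ket\xi - \ket\Psi} + \norm{(B_{n+k}A_n - BA)\ket\Psi}.
\]
Taking $\ket\Psi \in \caK$ close to $\ket\xi$ makes the first term small, and the second term vanishes as $n \to \infty$, so the strong convergence extends to all of $\caH$.

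The only delicate point is the first part: it needs $A_{N_1}\ket\Psi \in \caK$, which is guaranteed precisely because $A_n$ preserves $\caK$. Everything else is a routine $\epsilon/3$ argument.
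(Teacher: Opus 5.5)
Your proof is correct and follows essentially the same route as the paper. You fix $\ket\Psi\in\caK$, use that $A_{N_1}\ket\Psi\in\caK$ to invoke eventual constancy of $(B_n)$ on that vector, and then pass from $\caK$ to $\caH$ using density and the uniform bound. Your version is somewhat more complete than the paper's sketch: it treats both signs of $k$ at once, and it writes out the $\epsilon/3$ argument for strong convergence on all of $\caH$, which the paper only asserts.
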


\begin{proof}
    We prove the claim for $k \geq 0$. The case $k < 0$ is treated analogously. Let $\ket\Psi \in \caK$, and take  $N$ satisfying \eqref{eq:def evt constant locally} for $(A_n)$.
    Since $A_{N}\ket{\Psi}\in \caK$, there is $M$ such that $B_{m} A_N \ket{\Psi} = B_M A_{N} \ket{\Psi}$ for all $m \ge M$.
    Then for all $n \geq n' \ge N$ such that $n'+k \geq M$, 
    \begin{equation}\label{eq:evt constant product sequence}
        B_{n+k} A_n \ket\Psi = 
        B_{n+k} A_{N} \ket\Psi = 
        B_{n'+k} A_{N} \ket\Psi = 
        B_{n'+k} A_{n'} \ket\Psi,
    \end{equation}
    proving the first claim.
   The limits are clearly defined pointwise on $\caK$. This defines a bounded operator on $\caK$ assuming a uniform bound for each sequence which extends to a bounded operator on $\caH$ if $\caK$ is dense. 
   Then Eq. \eqref{eq:evt constant product} follows by evaluating $BA$ on the dense subspace $\caK$ as in Eq. \eqref{eq:evt constant product sequence}.
\end{proof}

\begin{definition} \label[definition]{def:eventually constant on local states}
    A sequence $(A_n) \subset \caA^{\loc}$ is \emph{eventually constant on local states} if $(\pi^{\I}(A_n))$ is eventually constant on $\Hloc$.
\end{definition}
Recall that string operators are defined on $x\in \caA^{\loc}$ by the eventually constant sequence
$$\rho^X_{\scrC_n}(x) = (U_{\scrC_n}^X)^* x U_{\scrC_n}^X.$$
Trivially, this sequence is eventually constant on local states.

\subsubsection{Long strings create anyons}

Referring to the puncture at the initial edge $e_{n-1}$ of a link $L_{n-1}$, the unitary gate $u_{n-1}^X$ either does not do anything or creates an excitation of type $X$ at $e_{n-1}$ when acting on an arbitrary state that satisfies the vacuum conditions at $e_{n-1}$.
This is by design so that 
    \begin{equation}\label{eq:creating excitations}
	U_{n}^X 
        P_{n}^{\1\1} 
        = 
        P_{n}^{X?} U_n^X P_{n}^{\1\1}
        = p_n^X U_n^X P_n^{\1\1}.
    \end{equation}
This fact was used in the proof of \cite[Lemma 7.6]{bols2025levinI} as part of showing that string operators produce irreducible anyon representations.
The following lemma is a convenient packaging of the same fact applied to multiple string operators acting on local states.

Recall that we fixed a chain $\scrC = (L_n)$ and use the simplified Notation \ref{not:fixed chain}.

\begin{lemma}[Excitation] \label[excitation lemma]{lem:strings produce anyons}
    Let $\ket\Psi = \pi^{\I}(x) \ket\Omega$ for some $x \in \caA^{\loc}$. 
    Then there is $N$ such that for all $k \geq 1$ and $n \geq N+k$, and for all $X_0,\ldots, X_k\in \Irr\ZC$,
    \begin{equation} \label{eq:strings produce anyons}
        \pi^{\I}(U_{n}^{X_0} \cdots U_{n+k}^{X_k}) \ket \Psi = 
        \pi^{\I}(P_D^{(X_k,\ldots,X_0)} U_{n}^{X_0} \cdots U_{n+k}^{X_k}) \ket \Psi,
    \end{equation}
	where $D= \bigcup_{\kappa=1}^{k} C^{L_{n+\kappa}}$ with the boundary components enumerated so that $\caS_\kappa$ is the puncture at $\partial_\ii L_{n+k-\kappa+1}$ for $\kappa=1,\ldots, k+1$.
    Moreover, \eqref{eq:strings produce anyons} holds if $D$ is replaced by any finite region $D'$ 
    as long as $D'$ has the same inner boundary components as $D$ and is disjoint from the support of $\scrC\mn{1}{n-1}$ and the support of $x$.
\end{lemma}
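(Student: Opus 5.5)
The plan is to turn the product of circuits into a single Drinfeld insertion on one large region, read off from it the boundary types at the punctures $e_n,\dots,e_{n+k}$, and then restrict to $D$. Write $x \in \caA_V$ with $V$ finite. Choose $N$ large enough that three conditions hold. First, $V$ is disjoint from $\supp(L_m)$ and from the collars of $e_m$ for all $m \geq N$. Second, for $m \geq N$ each link $L_m$ is far from $\supp(L_j)$ whenever $|j - m| \geq 2$. Third, every face $f$ and edge $e$ of $\bigcup_{m \geq N} C^{L_m}$ satisfies $B_f\ket\Psi = \ket\Psi$ and $A_e \ket\Psi = \ket\Psi$. The third condition holds because $\omega^{\I}(B_f)=\omega^{\I}(A_e)=1$ and $x$ commutes with these projectors once they are supported away from $V$. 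Throughout, the factors $U^{X}_{N-1}$ are supported on $\supp(\scrC[\interval{1}{N-1}])$, which for $n \geq N+k$ is disjoint from $D$.

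\emph{Step 1: reduce each circuit to one long gate.} Split $U^{X_\kappa}_{n+\kappa} = U^{X_\kappa}_{\interval{N}{n+\kappa}}\, U^{X_\kappa}_{N-1}$. The factor $U^{X_\kappa}_{N-1}$ commutes with $P_D$ and with all projectors living on the far region. Hence the vector $U^{X_\kappa}_{N-1}(\cdots)\ket\Psi$ still satisfies the vacuum constraints on the bulk faces of $L_{\interval{N}{n+\kappa}}$ and at the initial puncture $e_{n+\kappa}$, i.e.\ it lies in the range of $P^{\I ?}_{\interval{N}{n+\kappa}}$, provided the strings applied before it have not excited anything in that region. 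By the Concatenation Lemma \ref{lem:concatenation lemma}, on such vectors $U^{X_\kappa}_{\interval{N}{n+\kappa}}$ acts as the single gate $u^{X_\kappa}_{L_{\interval{N}{n+\kappa}}}$. Restricted to the relevant subspaces, this gate is a Drinfeld insertion $\Dr^{(\I\I\to X\bar X)}$ or $\Dr^{(\I X\to X\I)}$ on the long link, depending on whether the final puncture (near $e_N$) carries $\I$ or not.

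\emph{Step 2: combine on one region.} Proceed by induction from the innermost operator $U^{X_k}_{n+k}$ outward. Fix one large region $\tilde D \supset \bigcup_{N \leq m \leq n+k} C^{L_m}$ with inner punctures $e_{n+k},\dots,e_n$ and $e_N$, and with an anchor extending the standard anchors of all the long links. Using the Inclusion Lemma \ref{lem:inclusion lemma}, together with Remark \ref{rem:adding vacuum lines} to fill the punctures that carry $\I$, rewrite each long gate, when it multiplies $P_{\tilde D}^{(\cdots)}$, as $\Dr_{\tilde D}[\id \otimes \beta_\kappa \otimes \id]$. Here $\beta_\kappa$ creates (or hops) $X_\kappa$ so that it ends at $e_{n+\kappa}$.

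The main obstacle is the case where a later string's long link passes over punctures already excited by earlier strings. For instance, $L_{\interval{N}{n+\kappa-1}}$ runs along the same chain but stops before $e_{n+\kappa}$, whereas the collar of $e_{n+\kappa}$ overlaps the last faces of that link. I expect to handle this by choosing all anchors inside the common right strip so that the extensions are compatible, and then invoking the Multiplicativity Lemma \ref{lem:Drinfeld insertion multiplicativity}. The product of the long gates on the relevant subspace then becomes a single $\Dr_{\tilde D}[\beta]$ with $\beta$ a morphism into $X_k\otimes\cdots\otimes X_0 \otimes(\text{stuff at } e_N)$. If this compatibility of anchors fails, the fallback is to verify directly that each new long gate commutes with $\frt_{\caS}(p^{X_j})$ at the punctures $e_{n+j}$, $j>\kappa$. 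For the $B_f$ constraints this is immediate from Lemma \ref{lem:commutativity lemma}, since $\frt_{\caS}(p^{X_j})$ acts on a boundary component distinct from the ones the gate changes.

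\emph{Step 3: conclude.} By the definition \eqref{eq:Drinfeld insertion defined}, the range of $\Dr_{\tilde D}[\beta]$ lies in $H_{\tilde D}$ with boundary conditions $p^{X_\kappa}$ at $e_{n+\kappa}$ for $\kappa=0,\dots,k$. Since $D \subset \tilde D$ has exactly these inner boundary components, $P_{\tilde D}$ with these conditions is dominated by $P_D^{(X_k,\dots,X_0)}$. The early factors $U_{N-1}$ commute with $P_D$, which gives \eqref{eq:strings produce anyons}. For the final claim, the same argument applies to any $D'$ with the same inner boundaries that avoids $\supp(\scrC[\interval{1}{n-1}])$ and $V$. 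The reason is that the outer-boundary and bulk constraints of $D'$ hold on $\ket\Psi$ and commute with every operator in the product, and the inner boundary constraints are identical to those of $D$.
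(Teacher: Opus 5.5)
There is a genuine gap in Step 1. It is not the near-end overlap you flag in Step 2; it sits at the far end of your long links. All the long links $L_{n+\kappa\to N}$ share the same final puncture $e_{N-1}$. Your conditions on $N$ say nothing about $L_{N-1},L_{N-2}$, where $x$ may act.

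Take for instance $x=u^{Y}_{L_{N-2}}$ and $X_k=X_{k-1}=X$, with $X\not\simeq\bar X$ and $Y,\bar Y\notin\{\I,X\}$.
\begin{itemize}
\item In the first string, the gates $u_{L_{N-3}},u_{L_{N-2}},u_{L_{N-1}}$ all act through their $\I-P^{??}$ term. So $U^{X}_{N-1}\ket\Psi$ still has vacuum at $e_{N-1}$, the first long gate acts by pair creation, and it leaves $\bar X$ at $e_{N-1}$.
\item For the next string, $U^{X}_{N-1}$ leaves this $\bar X$ in place. So the input of $U^{X}_{n+k-1\to N}$ is not in the range of $P^{\I ?}_{n+k-1\to N}$, and the Concatenation Lemma is unavailable.
\item Worse, the reduction is false there. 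The long gate $u^{X}_{L_{n+k-1\to N}}$ acts as the identity on this vector. The actual short gates behave differently: $u_{L_N}$ acts trivially, $u_{L_{N+1}}$ creates a fresh pair at $(e_{N+1},e_N)$, and the remaining gates carry $X$ up to $e_{n+k-1}$.
\end{itemize}
Compatible anchors and the Multiplicativity Lemma cannot repair this. The factor $U^{X_\kappa}_{N-1}$ is not a Drinfeld insertion on $\tilde D$, and it changes the sector at $e_{N-1}$ in an uncontrolled way. Your fallback only protects excitations already created; it does not show that the new one is created.

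The missing idea is to stagger the split point with $\kappa$, which is what the paper does. It writes $U^X_{n+\kappa}=U^X_{n+\kappa\to n-\kappa+1}\,U^X_{n-\kappa}$ and works on the vacuum window $P^{\I\I}_{n+\kappa\to n-\kappa}$.
\begin{itemize}
\item The short link $L_{n-\kappa}$ then lies entirely in the vacuum. By Eq.~\eqref{eq:creating excitations}, $U^X_{n-\kappa}$ leaves exactly $X$ at $e_{n-\kappa}$, whatever happens closer to $\supp(x)$.
\item Hence the Concatenation Lemma applies on $P^{\I X}_{n+\kappa\to n-\kappa+1}$, and the long gate hops rather than creates. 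Together with \eqref{eq:intertwining properties of u_L} this gives $U^X_{n+\kappa}P^{\I\I}_{n+\kappa\to n-\kappa}=p^X_{n+\kappa}P^{\I\I}_{n+\kappa-1\to n-\kappa+1}U^X_{n+\kappa}P^{\I\I}_{n+\kappa\to n-\kappa}$.
\item The vacuum window thus shrinks by one link at each end, and the windows are nested. So one can induct from $\kappa=k$ down to $\kappa=0$ without any string ever meeting a charge left by a previous one.
\end{itemize}
This is exactly where $n\ge N+k$ is used: it gives $P^{\I\I}_{n+k\to n-k}\ket\Psi=\ket\Psi$. Your argument never uses the $k$-dependence of that hypothesis, which is a sign something is missing. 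The statement for $D$, or for any admissible $D'$, then follows by inserting $P_{D'}$, which fixes $\ket\Psi$ and commutes with all the strings. No Drinfeld insertion on a large region is needed.
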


\begin{remark} \label[remark]{rem:single string produces single anyon}
    The enumeration of boundary components above is chosen to match the enumeration induced by the standard anchor on the link $L_{n+1}$ in the case $k=1$ where $D=C^{L_{n+1}}$. 
    For example, if $X_0=\I$ the lemma states that $\pi^{\I}(U_{n+1}^{X_1}) \ket\Psi = \pi^{\I}(P_{L_{n+1}}^{X_1 \I } U_{n+1}^{X_1}) \ket\Psi $, noting that $P_{L_{n+1}}^{X_1 \I } = P_{L_{n+1}}^{(X_1, \I)}$ with the chosen enumeration.
\end{remark}

\begin{proof}
    For any $X \in \Irr Z(\caC)$, let us write $p_n^X = \frt_{\caS_{e_n}}(p^X)$ for all $n\in \N$, where $e_n = \partial_\ii L_n$.
    Let $n$ be arbitrary. 
    Using Eq. \eqref{eq:creating excitations}
    we find for any $\kappa < n$ that
    \begin{equation*}
        U_{n-\kappa}^X P_{n + \kappa \to n - \kappa}^{\I \I} 
        = P^{\I X}_{{n+\kappa} \to {n-\kappa+1}} U_{n-\kappa}^X P_{n + \kappa \to n - \kappa}^{\I \I}.
    \end{equation*}
    By iterations of \cref{lem:concatenation lemma} and the intertwining properties of unitary gates \eqref{eq:intertwining properties of u_L},
    \begin{align*}
        U_{n+\kappa}^X P_{n + \kappa \to n - \kappa}^{\I \I}  
        & = U^X_{{n+\kappa} \to {n-\kappa+1}}U_{n-\kappa}^X P_{n + \kappa \to n - \kappa}^{\I \I} \\
        &= U^X_{{n+\kappa} \to {n-\kappa+1}}  P^{\I X}_{{n+\kappa} \to {n-\kappa+1}} U_{n-\kappa}^X P_{n + \kappa \to n - \kappa}^{\I \I} \\
        &= u^X_{L_{{n+\kappa} \to {n-\kappa+1}}}  P^{\I X}_{{n+\kappa} \to {n-\kappa+1}} U_{n-\kappa}^X P_{n + \kappa \to n - \kappa}^{\I \I} \\
        &=  P^{X \I}_{{n+\kappa} \to {n-\kappa+1}}  u^X_{L_{{n+\kappa} \to {n-\kappa+1}}}   U_{n-\kappa}^X P_{n + \kappa \to n - \kappa}^{\I \I} \\
        & = P^{X\I }_{{n+\kappa} \to {n-\kappa+1}} U_{n+\kappa}^X P_{n + \kappa \to n - \kappa}^{\I \I}  \\
        & = P^{\I \I}_{{n+\kappa-1} \to {n-\kappa+1}} p_{n+\kappa}^{X } U_{n+\kappa}^X P_{n + \kappa \to n - \kappa}^{\I \I}. 
    \end{align*}
    It follows that for $n>k$,
    \begin{equation*}
        U_{n}^{X_0} \cdots U_{n+k}^{X_k} P^{\I \I}_{{n+k} \to {n-k}}
        = p_{n}^{X_0} \cdots p_{n+k}^{X_k} U_{n}^{X_0} \cdots U_{n+k}^{X_k} P^{\I \I}_{{n+k} \to {n-k}}.
    \end{equation*}

    Since $x$ has finite support, we may take $N$ such that $L_n \cap \supp(x) = \emptyset$ for all $n\ge N$.
    Then $\pi^{\I} \big( P_{n+k \to n-k}^{\I \I} \big) \ket\Psi = \ket\Psi$ for all $n\ge N+k$. 
    Now the statement follows if we take $D'$ as under the hypothesis for given $k\ge 1$ and $n\ge N+k$, 
    since $\pi^{\I}(P_{D'})\ket\Psi = \ket{\Psi}$ because $D'$ has disjoint support from $x$, and  since $P_{D'}$ commutes with $U_{n+\kappa}^{X_\kappa}$ for all $\kappa=0, \ldots, k$. Indeed, this is true of $B_f$ for every face in $D'$ by Lemma 6.5  of \cite{bols2025levinI}, and it is easily extended to cover the case of edges in $D'$ that do not belong to a face of $D'$.
\end{proof}

\subsection{Isomorphisms of fusion spaces} \label{subsec:isomorphism of fusion spaces}

We now define maps
\begin{equation*}
    \Phi_{XY}^Z : \ZC(X \otimes Y \to Z) \to \DHR(\bar \rho^X \otimes \bar \rho^Y \to \bar \rho^Z)
\end{equation*}
for all $X,Y,Z \in \Irr(\ZC)$ as follows. For $\alpha : X \otimes Y \to Z$, and
every $n \geq m$, we define for any link (with its standard anchor)
\begin{equation*}
	\Dr_{L}[\al] = \Dr_{L} \left[ \adjincludegraphics[valign=c, height = 0.8cm]{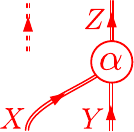} \right],
    \quad \quad
	\Dr_{L}[\al^\dagger] = \Dr_{L} \left[ \adjincludegraphics[valign=c, height = 0.8cm]{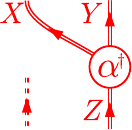} \right]
\end{equation*}
and we use the following shorthand notation:
\begin{equation*}
    \Dr_n[\alpha] = \Dr_{L_n}[\alpha], \quad \quad 
    \Dr_{\interval{m}{n}}[\alpha] = \Dr_{L_{\interval{m}{n}}^{\scrC}}[ \alpha].
\end{equation*}
Then define
\begin{equation}
	\Phi_n[\alpha] := (U_{n}^Z)^* \Dr_{n+1}\big[ \alpha \big] U_{n}^Y U_{n+1}^X.
\end{equation}

\begin{lemma}\label[lemma]{lem:Phi defined is intertwiner}
    Let $X,Y,Z \in \Irr Z(\caC)$ and $\alpha : X \otimes Y \to Z$.
    The sequence $( \Phi_n[\alpha] )_{n \in \N}$ is eventually constant on local states.
    The limit,
    \begin{equation*}
        \Phi_{XY}^Z[\alpha] := \lim_{n}\pi^{\1}(\Phi_n[\alpha]) \in \pi^{\1}(\caA_{\Lambda})'',
    \end{equation*}
    is an intertwiner  $\Phi_{XY}^Z[\alpha] : \bar\rho^X \times \bar\rho^Y \to \bar \rho^Z$.
    
    Also the sequence $(\Phi_n[\alpha]^*)_{n \in \N}$ is eventually constant on local states, and 
    \begin{equation*}
        \Phi_Z^{XY}[\alpha^\dagger] := \big(\Phi_{XY}^{Z}[\alpha] \big)^* = \lim_n \pi^\1(\Phi_n[\alpha]^*).
    \end{equation*}
\end{lemma}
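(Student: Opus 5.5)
The proof splits into three steps: eventual constancy, the intertwining relation, and the adjoint statement.

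\textbf{Eventual constancy.} I will show that for every local $\ket\Psi = \pi^{\I}(x)\ket\Omega$ one has $\pi^{\I}(\Phi_{n+1}[\alpha])\ket\Psi = \pi^{\I}(\Phi_n[\alpha])\ket\Psi$ for all $n$ large. First write $U_{n+1}^Y = u_{n+1}^Y U_n^Y$ and $U_{n+2}^X = u_{n+2}^X u_{n+1}^X U_n^X$, and use that gates on disjoint links commute. By the \cref{lem:strings produce anyons}, for $n \geq N$ the vector $\pi^{\I}(U_n^Y U_{n+1}^X)\ket\Psi$ lies in the range of $P_D^{(X,Y)}$, where $D = C^{L_{n+1}}$. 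The strategy is to push the extra gates $u^X_{n+2}$ and $u^Y_{n+1}$ through this vector using the intertwining relations \eqref{eq:intertwining properties of u_L} together with the \cref{lem:concatenation lemma}; this transports the $X$ and $Y$ excitations one link further. The Drinfeld insertion $\Dr_{n+2}[\alpha]$ then fuses them into $Z$ at $e_{n+1}$, and $(u^Z_{n+1})^*$ transports the resulting $Z$ back to $e_n$.

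The identity needed for this is local:
\[
\Dr_{n+2}[\alpha]\, u^Y_{n+1} u^X_{n+2} u^X_{n+1} \;=\; u^Z_{n+1}\, \Dr_{n+1}[\alpha]
\]
on the subspace where the punctures $e_n, e_{n+1}$ carry $X, Y$ and the puncture $e_{n+2}$ is vacuum. I plan to prove it in the skein picture on the union region $L_{n+2\to n+1}$, using the \cref{lem:inclusion lemma} and the \cref{lem:Drinfeld insertion multiplicativity}. In that picture both sides become the same Drinfeld insertion of the morphism $\alpha$ composed with hopping morphisms, which are trivial in $Z(\caC)$. This local identity is the main obstacle: one must check carefully that the anchors of $L_{n+1}$, $L_{n+2}$ and of their composite extend each other with the correct offsets. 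Remark \ref{rem:adding vacuum lines} handles the vacuum punctures.

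\textbf{Limit and intertwining.} Each $\Phi_n[\alpha]$ has norm at most $\|\alpha\|$ times a constant, and $\Hloc$ is dense, so \cref{lem:evt constant product} yields a strong limit. To show $\Phi := \Phi^Z_{XY}[\alpha]$ intertwines $\bar\rho^X\times\bar\rho^Y$ and $\bar\rho^Z$, take $y \in \caA^{\loc}$. For large $n$, the composite $\bar\rho^X\bar\rho^Y(\pi^{\I}(y))$ equals $\pi^{\I}$ applied to $(U^X_{n+1})^*(U^Y_n)^* y\, U^Y_n U^X_{n+1}$, after adjusting indices; this uses that $\rho^X_{\scrC_m}(y)$ stabilizes and that the gates beyond the support do not see $y$. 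Then $\Phi_n[\alpha]$ times this equals $(U^Z_n)^* \Dr_{n+1}[\alpha]\, y\, U^Y_nU^X_{n+1}$. Since $\Dr_{n+1}[\alpha]$ commutes with $y$ once $L_{n+1}$ is disjoint from $\supp(y)$, this equals $\rho^Z_{\scrC_n}(y)\,\Phi_n[\alpha]$. Taking limits on local states via \cref{lem:evt constant product} gives $\Phi\,\bar\rho^X\bar\rho^Y(x) = \bar\rho^Z(x)\,\Phi$ on $\pi^{\I}(\caA)$, and WOT-continuity of all maps involved extends this to $\caB$.

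\textbf{Locality.} The membership $\Phi \in \pi^{\I}(\caA_\Lambda)''$ follows because every $\Phi_n[\alpha]$ is supported in $\supp(\scrC) \subset \Lambda$.

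\textbf{Adjoint.} For the adjoint sequence I run the same argument with $\Phi_n[\alpha]^* = (U^X_{n+1})^*(U^Y_n)^*\Dr_{n+1}[\alpha^\dagger]U^Z_n$. The \cref{lem:strings produce anyons} with the single string $U^Z_n$ places $Z$ at $e_n$, and the adjoint of the local identity above (which is again a Drinfeld-insertion identity) gives eventual constancy. Finally, strong limits of eventually constant sequences whose adjoints are also eventually constant on a dense set satisfy $\langle\phi,\Phi^*\psi\rangle = \lim\langle \Phi_n\phi,\psi\rangle$, so the limit of $\Phi_n[\alpha]^*$ is $\Phi^*$.
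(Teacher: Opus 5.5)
Your argument is correct in substance and uses the same ingredients as the paper's proof: the Excitation Lemma to locate the anyons, and the Inclusion and Multiplicativity Lemmas (with Remark~\ref{rem:adding vacuum lines}) to collapse the gates into one Drinfeld-insertion identity. The intertwiner and adjoint computations are also essentially the same. The only organisational difference is that you compare consecutive terms $\Phi_{n+1}[\alpha]\ket\Psi$ and $\Phi_n[\alpha]\ket\Psi$ on the three-punctured region $C^{L_{n+2}}\cup C^{L_{n+1}}$, so each string gets one extra gate and no concatenation is needed. The paper instead compares $\Phi_n[\alpha]\ket\Psi$ directly with $\Phi_{N+1}[\alpha]\ket\Psi$: it merges the runs of gates on $L_{N+2},\dots,L_{n+1}$ with the Concatenation Lemma and computes once on a four-punctured region.

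However, your displayed local identity, which is the crux, is misstated. By your own decomposition $U^Y_{n+1}U^X_{n+2}=u^Y_{n+1}u^X_{n+2}\,U^Y_nU^X_{n+1}$, the gate $u^X_{n+1}$ already sits inside $\phi:=U^Y_nU^X_{n+1}\ket\Psi$, which carries $X$ at $e_{n+1}$ and $Y$ at $e_n$. What you must prove is
\[
\Dr_{n+2}[\alpha]\,u^Y_{n+1}u^X_{n+2}\,\phi=u^Z_{n+1}\,\Dr_{n+1}[\alpha]\,\phi .
\]
With the stray $u^X_{n+1}$ the identity is false whenever $Y\in\{\I,\bar X\}$, since that gate first moves the $X$ excitation to $e_n$ or annihilates it against $\bar X$.

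Two smaller points should be made explicit. First, you need the vacuum at $e_{n+2}$ and the $B_f$ constraints on the bulk faces of $L_{n+2}$ so that $u^X_{n+2}$ acts as the hopping insertion. These come from the ``moreover'' clause of the Excitation Lemma applied to $C^{L_{n+2}}\cup C^{L_{n+1}}$ with its $e_{n+2}$ puncture filled in, exactly as the paper does for its region $D$. Second, for the adjoint sequence, taking adjoints of an identity $AP=BP$ only yields $PA^*=PB^*$. You should instead run the analogous direct computation on vectors with $Z$ at $e_n$ and vacuum at $e_{n+1},e_{n+2}$, which is what the paper leaves to the reader.
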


\begin{proof}
    Let $\ket\Psi \in \Hloc$, and take $N$ as per \Cref{lem:strings produce anyons}, such that for all $n \ge N+1$,
    \begin{equation}
        \pi^{\I}(U_n^Y U_{n+1}^X) \ket\Psi 
        = \pi^{\I}(P^{(X,Y)}_{L_{n+1}} U_n^Y U_{n+1}^X) \ket\Psi.
    \end{equation}
    Now let $n \geq N+3$ and consider the region $D$ obtained from $L_{\interval{N+2}{n+1}}$ with additional punctures corresponding to punctures of $L_{N+2}$ and $L_{n+1}$, and enumerate the boundary components according to the punctures $(e_{n+1}, e_n, e_{N+2}, e_{N+1})$. See \Cref{fig:D region and anchor}.

    \begin{figure}
    \begin{center}
        \includegraphics[width=10cm]{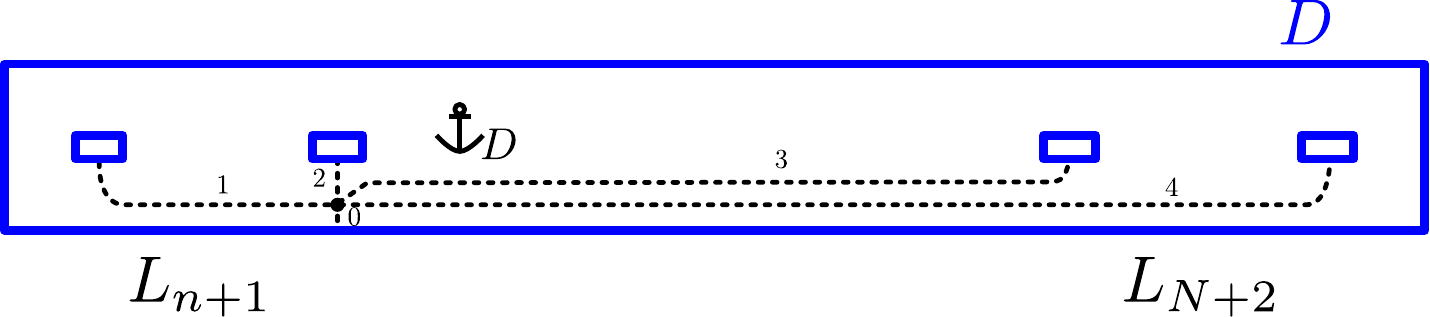}
    \end{center}
    \caption{Schematic representation of the region $D$ and its anchor $\anchor_D$. The enumeration of the punctures of $D$ induced by $\anchor_D$ corresponds to the ordering $(e_{n+1}, e_n, e_{N+2}, e_{N+1})$. The anchor $\anchor_D$ extends the standard anchor $\anchor_{L_{n+1}}$ and the standard anchor $\anchor_{L_{N+2}}$ with offset $2$.}
    \label{fig:D region and anchor}
    \end{figure}
    
    From \cref{lem:strings produce anyons} we obtain
    \begin{equation*}
        \pi^{\I}(U_{N+1}^Y U_{N+2}^X) \ket\Psi 
        = \pi^{\I}(P_D^{(\I,\I,X,Y)} U_{N+1}^Y U_{N+2}^X) \ket\Psi.
    \end{equation*}
    Take an anchor $\anchor_D$ for $D$ as in \Cref{fig:D region and anchor}. By repeated use of \cref{lem:concatenation lemma} and \cref{lem:inclusion lemma}, as well as Remark \ref{rem:adding vacuum lines}, we find 
    \begin{align*}
        (U^Z_{\interval{N+2}{n}})^* &\Dr_{n+1}\big[\alpha \big] U_{\interval{N+2}{n}}^Y U_{\interval{N+3}{n+1}}^X 
P_D^{(\I,\I,X,Y)} \\
         = & \Dr_{\interval{N+2}{n}}^{(Z \1 \to \1 Z)} \Dr_{n+1}\big[ \alpha \big] \Dr_{\interval{N+2}{n}}^{(\1 Y \to  Y \1 )} \Dr_{\interval{N+3}{n+1}}^{(\1 X\to X\1)}
        P_D^{(\I,\I,X,Y)} \\
        = & \Dr_{D}^{\anchorino_D}\left[ \adjincludegraphics[valign=c, width = 2.0cm]{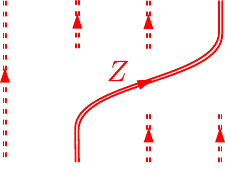} \right] \\
        \times & \Dr_{D}^{\anchorino_D}\left[ \adjincludegraphics[valign=c, width = 2.0cm]{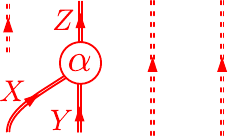} \right] \\
        \times & \Dr_{D}^{\anchorino_D}\left[ \adjincludegraphics[valign=c, width = 2.0cm]{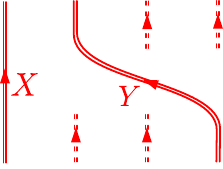} \right] \\
        \times & \Dr_{D}^{\anchorino_D}\left[ \adjincludegraphics[valign=c, width = 2.0cm]{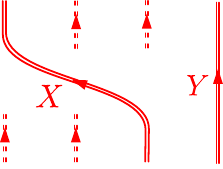} \right]  \, P_D^{(\I,\I,X,Y)} \\
        = & \Dr_{N+2}\big[\alpha \big] 
        P_D^{(\I,\I,X,Y)}.
    \end{align*}
    In combination with 
    \begin{align*}
        (U_n^Z)^* & = (U_{N+1}^Z)^* (U_{\interval{N+2}{n}}^Z)^*, \\
        U_n^Y U_{n+1}^X &= U^Y_{\interval{N+2}{n}} U^X_{\interval{N+3}{ n+1}} U_{N+1}^Y U_{N+2}^X,
    \end{align*}
    we obtain
    \begin{equation*}
        \pi^{\I}(\Phi_n[\alpha]) \ket\Psi
        = \pi^{\I}(\Phi_{N+1}[\alpha]) \ket\Psi
    \end{equation*}
    for all $n \geq N+3$. Since $\ket\Psi$ was arbitrary, we conclude that $\Phi_n[\alpha]$ is eventually constant on local states.

    We previously noted that $\rho^Y(x)$ is the limit of a sequence which is eventually constant on local states. Since $\rho^Y(x) \in \caA^{\loc}$, also $\rho^X(\rho^Y(x))$ is the limit of a sequence eventually constant on local states.
    So we compute for $n\ge N$ such that $\Dr_n[\alpha ]$ commutes with $x$,
    \begin{align*}
        \Phi_n[\alpha] \rho_{n+1}^X(\rho_{n}^Y(x)) 
        &= (U_n^Z)^* \Dr_{n+1}\big[ \alpha \big] U_{n}^Y U_{n+1}^X (U_{n+1}^X)^* (U_n^Y)^* x U_n^Y  U_{n+1}^X \\
        &= (U_n^Z)^* x \Dr_{n+1}\big[ \alpha \big] U_n^Y  U_{n+1}^X \\
        &= (U_n^Z)^* x (U_n^Z)^* U_n^Z \Dr_{n+1}\big[ \alpha \big] U_n^Y  U_{n+1}^X \\
        &= \rho_n^Z(x) \Phi_n[\alpha],
    \end{align*}
    Using \Cref{lem:evt constant product} we conclude
    $$
    \Phi_{XY}^Y[\alpha] \bar\rho^X(\bar\rho^Y(x)) = \bar\rho^Z(x) \Phi_{XY}^Y[\alpha].
    $$
    This extends to all of $\caA$.
    
    Finally, note that by \cref{lem:Drinfeld insertion multiplicativity}, $\Phi_n\big[\alpha\big]^* = (U_{n+1}^X)^*(U_n^Z)^* \Dr_{n+1}\big[\alpha^\dagger\big] U_n^Z$.
    The proof showing that $\Phi_n[\alpha]^*$ is eventually constant on local states runs completely analogously to the above, and is left to the reader. 
    It follows that $\big(\Phi_{XY}^{Z}[\alpha] \big)^* = \Phi_Z^{XY}[\alpha^\dagger]$.
\end{proof}

\begin{remark} \label[remark]{rem:fusion channels with punctures}
    Observe that $\pi^\1(\Phi_n[\alpha]) \to \Phi_{XY}^Z[\alpha]$ converges in the SOT-closed subalgebra $\pi^\1(\caA_\Lambda)''$ and that $\bar \rho^W$ is SOT-continuous on this subalgebra for any $W\in \Irr\ZC$. Therefore,
    \begin{equation}\label{eq:rho of Phi}
        \bar \rho^W(\Phi_{XY}^Z[\alpha]) = \lim_n \bar\rho^W(\pi^\1(F_n[\alpha])) = \lim_n \pi^\1\Big[(U_{n+1}^W)^* \Phi_n[\alpha] U_{n+1}^W\Big].
    \end{equation}
    Moreover, the sequence $\big( \, (U_{n+1}^W)^* \Phi_n[\alpha] U_{n+1}^W \, \big)_{n \in \N}$ is eventually constant on local states.
    We omit the proof which is identical to the above with the additional presence of $W$ excitations.
\end{remark}

The first observation about the maps $\Phi_{XY}^Z$ is that they provide an isomorphism of fusion rules (\Cref{prop:fusion rules in DHR}). This will follow from the following Lemma.

\begin{lemma}\label[lemma]{lem:Phi dagger lemma}
    For each $X,Y,Z \in \Irr \ZC$ and $\alpha, \delta : X \otimes Y \to Z$ we have
    \begin{equation*}
        \Phi_{XY}^{Z} \big[\alpha \big] \Phi^{XY}_{Z} \big[\delta^\dagger \big]
        = d_Z^{-1}\tr\big\{ \alpha \circ \delta^\dagger \big\} \times 1_{\caH},
    \end{equation*}
    and
    \begin{equation*}
        \Phi_Z^{XY} \big[\delta^\dagger \big] \Phi_{YX}^Z \big[\alpha \big] = \lim_n \pi^\1\Big( (U_{n+1}^X)^* (U_{n}^Y)^* \Dr_{n+1}\big[ \delta^\dagger \circ \alpha \big] U_n^Y U_{n+1}^X \Big).
    \end{equation*}
\end{lemma}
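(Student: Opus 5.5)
The plan is to compute both products at the level of the approximating sequences and pass to the limit using \Cref{lem:evt constant product}. By \Cref{lem:Phi defined is intertwiner}, the sequences $(\Phi_n[\alpha])$, $(\Phi_n[\delta]^*)$ and their adjoints are eventually constant on local states. Each $\Phi_n[\cdot]$ lies in $\caA^{\loc}$, so $\pi^{\I}(\Phi_n[\cdot])$ maps $\Hloc$ into itself. The sequences are also uniformly bounded: the $U$'s are unitary, and $\Dr_{n+1}[\beta]$ has norm at most $\norm{\beta}$ because $\Dr_L$ is a $*$-representation by \Cref{lem:Drinfeld insertion multiplicativity}. Since $\Hloc$ is dense, \Cref{lem:evt constant product} (with $k=0$) gives
\[
\Phi_{XY}^Z[\alpha]\,\Phi^{XY}_Z[\delta^\dagger] = \lim_n \pi^{\I}\big(\Phi_n[\alpha]\Phi_n[\delta]^*\big),
\qquad
\Phi_Z^{XY}[\delta^\dagger]\,\Phi_{XY}^Z[\alpha] = \lim_n \pi^{\I}\big(\Phi_n[\delta]^*\Phi_n[\alpha]\big).
\]
I read the $\Phi_{YX}^Z$ in the second formula as $\Phi_{XY}^Z$, since only then are the domains composable.

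For the second identity, write out $\Phi_n[\delta]^*\Phi_n[\alpha]$ using the formula for $\Phi_n[\delta]^*$ from the proof of \Cref{lem:Phi defined is intertwiner}. The middle factor $U_n^Z (U_n^Z)^*$ cancels. The Multiplicativity Lemma then gives $\Dr_{n+1}[\delta^\dagger]\Dr_{n+1}[\alpha] = \Dr_{n+1}[\delta^\dagger\circ\alpha]$, which is exactly the claimed expression. This part is purely algebraic.

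For the first identity, the same cancellation of $U_n^Y U_{n+1}^X (U_{n+1}^X)^*(U_n^Y)^*$, followed by multiplicativity, gives
\[
\Phi_n[\alpha]\Phi_n[\delta]^* = (U_n^Z)^*\,\Dr_{n+1}[\alpha\circ\delta^\dagger]\,U_n^Z .
\]
Since $Z$ is simple, $\alpha\circ\delta^\dagger = d_Z^{-1}\tr\{\alpha\circ\delta^\dagger\}\,\id_Z$. Under the standard anchor this is the morphism $\id$ on the strand configuration $(Z,\I)$ at the two punctures of $L_{n+1}$. So $\Dr_{n+1}[\id]$ equals the projector $P^{(Z,\I)}_{L_{n+1}} = P^{Z\I}_{L_{n+1}}$, with the enumeration of Remark \ref{rem:single string produces single anyon}.

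It remains to show that $\pi^{\I}\big((U_n^Z)^* P^{Z\I}_{L_{n+1}} U_n^Z\big)\ket\Psi = \ket\Psi$ for each local state $\ket\Psi$ and all large $n$. Apply the Excitation Lemma with $k=1$, $X_0 = Z$ and $X_1 = \I$, shifting the index so that the relevant region is $C^{L_{n+1}}$. It gives $\pi^{\I}(U_n^Z)\ket\Psi = \pi^{\I}(P^{Z\I}_{L_{n+1}}U_n^Z)\ket\Psi$ for $n$ large, since $U^{\I}$ is trivial. Hence the limit operator agrees with $d_Z^{-1}\tr\{\alpha\circ\delta^\dagger\}\,1$ on the dense subspace $\Hloc$, and therefore everywhere.

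The main obstacle I expect is bookkeeping. One must check that the Drinfeld insertion of $\alpha\circ\delta^\dagger$, in the graphical convention defining $\Dr_L[\alpha]$ (with $Z$ at the initial puncture $e_n$ and vacuum at $e_{n+1}$), is indeed the projector $P^{Z\I}_{L_{n+1}}$ and not $P^{\I Z}_{L_{n+1}}$. One must also check that the index shift in the Excitation Lemma matches, i.e.\ that $U_n^Z$ places the $Z$ excitation at $\partial_{\ii}L_{n+1}=e_{n+1}$'s partner puncture $\partial_{\f}L_n$. If the conventions turn out to be mismatched, I would fall back on \eqref{eq:creating excitations} directly, together with $P^{\I\I}$ acting trivially on $\ket\Psi$ far from $\supp(x)$.
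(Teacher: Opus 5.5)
Your proposal is correct and is essentially the paper's proof. As there, you cancel the inner unitaries, use multiplicativity to reduce to $(U_n^Z)^*\Dr_{n+1}[\alpha\circ\delta^\dagger]U_n^Z$, use simplicity of $Z$, and invoke the Excitation Lemma to see that $U_n^Z\ket\Psi$ lies in the range of the relevant projector. You are also right that $\Phi_{YX}^Z$ should read $\Phi_{XY}^Z$, and that the second identity is purely algebraic at the level of the approximants.

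The one correction is exactly the bookkeeping you flagged. With the standard anchor, $\caS_1$ is $\partial_{\ii}L_{n+1}=e_{n+1}$ and $\caS_2$ is $\partial_{\f}L_{n+1}=e_n=\partial_{\ii}L_n$, and $e_n$ is where $U_n^Z$ places its $Z$. So the Excitation Lemma with $(X_0,X_1)=(Z,\I)$ returns $P^{(X_1,X_0)}_{L_{n+1}}=P^{\I Z}_{L_{n+1}}$. Likewise, $\alpha\circ\delta^\dagger$ is inserted on the configuration $(\I,Z)$, giving $\Dr_{n+1}[\id_{\I}\otimes\id_Z]=P^{\I Z}_{L_{n+1}}$. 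Both of your occurrences of $P^{Z\I}_{L_{n+1}}$ should therefore be $P^{\I Z}_{L_{n+1}}$; as written, $P^{Z\I}_{L_{n+1}}U_n^Z\ket\Psi=0$ for $Z\neq\I$. With that relabelling the argument goes through unchanged.
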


\begin{proof}
    Throughout this proof we identify $\caA$ with its image $\pi^{\I}(\caA)$ and drop $\pi^{\I}$ from the notation.
    We verify the identity on the dense subspace of local states.
    Let $\ket{\Psi} \in \Hloc$. From \cref{lem:strings produce anyons}, \Cref{rem:single string produces single anyon} we find 
    $U_n^Z \ket\Psi = 
    P_{n+1}^{\I Z} U_n^Z \ket\Psi$
    for all $n$ large enough. Using \Cref{lem:evt constant product} and \cref{lem:Drinfeld insertion multiplicativity} we compute, 
    \begin{align*}
        \Phi_{XY}^{Z} \big[\alpha \big] \Phi^{XY}_{Z} \big[\delta^\dagger \big] \ket{\Psi}
        & = (U_n^Z)^* \Dr_{n+1}\big[ \alpha \big] U_{n}^Y U_{n+1}^X (U_{n+1}^X)^* (U_n^Y)^* \Dr_{n+1}\big[\delta^\dagger \big] U_n^Z \ket\Psi \\
        &= (U_n^Z)^* \Dr_{n+1}\left[ \adjincludegraphics[valign=c, height = 1.0cm]{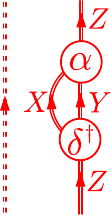} \right] U_n^Z \ket\Psi \\
        &= d_Z^{-1} \tr\big\{ \alpha \circ \delta^\dagger \big\} \times (U_n^Z)^* \Dr_{n+1}\big[ \id_{\I} \otimes \id_Z \big] P_{n+1}^{\I Z} U_n^Z \ket\Psi\\
        &= d_Z^{-1} \tr\big\{ \alpha \circ \delta^\dagger \big\} \times \ket\Psi,
    \end{align*}
    where we use that $\alpha \circ \delta^\dagger = d_Z^{-1} \tr\big\{ \alpha \circ \delta^\dagger \big\} \times \id_Z$ and the face that $\Dr_{n+1}[\id_{\I} \otimes \id_Z] P_{n+1}^{\I Z} = P_{n+1}^{\I Z}$.

    Similarly, by \Cref{lem:strings produce anyons} we have $U_n^Y U_{n+1}^X \ket{\Psi} = P_{n+1}^{(X,Y)} U_n^Y U_{n+1}^X \ket{\Psi}$ for all $n$ large enough.
    Arguing by \Cref{lem:evt constant product} and \Cref{lem:Drinfeld insertion multiplicativity} we find that
    \begin{align*}
        \Phi_Z^{XY} \big[\delta^\dagger \big] \Phi_{XY}^Z \big[\alpha \big] \ket{\Psi} 
        &= (U_{n+1}^X)^* (U_{n}^Y)^* \Dr_{n+1}\big[ \delta^\dagger \big] U_n^Z (U_n^Z)^* \Dr_{n+1}\big[\alpha \big] U_n^Y U_{n+1}^X \ket{\Psi} \\
        &= (U_{n+1}^X)^* (U_{n}^Y)^* \Dr_{n+1}\big[ \delta^\dagger \circ \alpha \big] U_n^Y U_{n+1}^X \ket{\Psi},
    \end{align*}
    finishing the proof.
\end{proof}

\begin{proposition}\label[proposition]{prop:fusion rules in DHR}
    Let $X, Y \in \Irr Z(\caC)$ and let $$\{ \pi_{Z, \kappa} : X \otimes Y \to Z \}_{Z \in \Irr Z(\caC), \kappa = 1, \cdots, N_{XY}^Z}$$ provide an orthogonal direct sum decomposition of $X \otimes Y$. 
    Then $$\{ \Phi_{X Y}^Z \big[ \pi_{Z, \kappa} \big] : \bar \rho^X \times \bar \rho^Y \to \bar \rho^Z \}_{Z \in \Irr Z(\caC), \kappa = 1, \cdots, N_{XY}^Z}$$ provides an orthogonal direct sum decomposition of $\bar \rho^X \times \bar \rho^Y$. 
    In particular, $\SSS_f$ is closed under the tensor product and $\{\Phi_{XY}^Z\}$ provides and isomorphism of fusion rules as defined in Section  \ref{subsec:isomorphism of F- and R-symbols}.
\end{proposition}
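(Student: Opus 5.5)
Set $V_{Z,\kappa} := \Phi_{XY}^Z[\pi_{Z,\kappa}]$. The goal is to show that these form a family of isometries onto mutually orthogonal subprojections of $\I$ whose ranges sum to $\I$. In other words, we want
\[
V_{Z,\kappa}V_{Z',\kappa'}^* = \delta_{ZZ'}\delta_{\kappa\kappa'} \I_{\bar\rho^Z} \quad\text{and}\quad \sum_{Z,\kappa} V_{Z,\kappa}^*V_{Z,\kappa} = \I .
\]

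\textbf{Step 1: the relation $V V^* = \I$.} The normalisation convention for the orthogonal decomposition is $\pi_{Z,\kappa}\circ\pi_{Z',\kappa'}^\dagger = \delta_{ZZ'}\delta_{\kappa\kappa'}\id_Z$ and $\sum \pi_{Z,\kappa}^\dagger\pi_{Z,\kappa} = \id_{X\otimes Y}$. In the case $Z=Z'$, the first identity of \Cref{lem:Phi dagger lemma} with $\alpha=\pi_{Z,\kappa}$ and $\delta=\pi_{Z,\kappa'}$ gives
\[
V_{Z,\kappa}V_{Z,\kappa'}^* = d_Z^{-1}\tr\{\pi_{Z,\kappa}\pi_{Z,\kappa'}^\dagger\}\,\I = \delta_{\kappa\kappa'}\,\I .
\]
For $Z\neq Z'$, \Cref{lem:Phi dagger lemma} as stated has the same target $Z$ on both sides, so I would redo its computation. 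On a local state $\ket\Psi$ we have $U_n^{Z'}\ket\Psi = P_{n+1}^{\I Z'}U_n^{Z'}\ket\Psi$, and by multiplicativity (\Cref{lem:Drinfeld insertion multiplicativity}) the product collapses to $(U_n^Z)^*\Dr_{n+1}[\id_\I\otimes \pi_{Z,\kappa}\pi_{Z',\kappa'}^\dagger]U_n^{Z'}\ket\Psi$. This vanishes because $\pi_{Z,\kappa}\pi^\dagger_{Z',\kappa'}=0$ by Schur's lemma.

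\textbf{Step 2: completeness.} Using the second identity of \Cref{lem:Phi dagger lemma} (after correcting the evident typo $YX \to XY$), summing over $Z,\kappa$, and using linearity of $\Dr$, we get
\[
\sum_{Z,\kappa} V_{Z,\kappa}^*V_{Z,\kappa} = \lim_n \pi^\I\big((U_{n+1}^X)^*(U_n^Y)^*\Dr_{n+1}[\id_X\otimes\id_Y]\,U_n^YU_{n+1}^X\big).
\]
By the Excitation Lemma (\Cref{lem:strings produce anyons}), on any local state we have $U_n^YU_{n+1}^X\ket\Psi = P_{n+1}^{(X,Y)}U_n^YU_{n+1}^X\ket\Psi$ for $n$ large. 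The key fact is then
\[
\Dr_{n+1}[\id_{X\otimes Y}]\,P_{n+1}^{(X,Y)} = P_{n+1}^{(X,Y)} .
\]
This holds because $\Dr_L[\id]$ acts as the identity on $H_L^{(X,Y)}$ by the definition \eqref{eq:Drinfeld insertion defined}; note that the outer boundary is left free there, so no condition on it is needed. Hence the sum acts as the identity on the dense set of local states, and by boundedness it equals $\I$.

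\textbf{Step 3: consequences.} Each $V_{Z,\kappa}$ is now a coisometry intertwiner $\bar\rho^X\times\bar\rho^Y\to\bar\rho^Z$. The $p_{Z,\kappa}:=V_{Z,\kappa}^*V_{Z,\kappa}$ are mutually orthogonal projections in $\DHR(\bar\rho^X\times\bar\rho^Y\to \bar\rho^X\times\bar\rho^Y)$ summing to $\I$, and each subobject is isomorphic to the simple $\bar\rho^Z$. So $\bar\rho^X\times\bar\rho^Y\cong\bigoplus_Z N_{XY}^Z\,\bar\rho^Z$. The endomorphism space is then finite dimensional, which shows $\DHR_f$ is closed under tensor products, since every object of $\DHR_f$ is a finite direct sum of simples by \Cref{thm:bijection of simples}. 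The multiplicity count gives $\dim\DHR(\bar\rho^X\times\bar\rho^Y\to\bar\rho^Z)=N_{XY}^Z$, so $\Phi_{XY}^Z$ is a linear isomorphism. It is injective because by Step 1 it maps an orthonormal basis to an orthonormal family for the inner product $\langle V,W\rangle\I=WV^*$. Moreover, $\Phi^Z_{XY}$ is linear in $\alpha$, since $\Dr$ is linear.

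\textbf{Main obstacle.} I expect the main obstacle to be the bookkeeping in the off-diagonal case $Z\neq Z'$ of Step 1, and more generally checking that the limits can be multiplied. For the latter I would invoke \Cref{lem:evt constant product}, with the eventual-constancy statements supplied by \Cref{lem:Phi defined is intertwiner}.
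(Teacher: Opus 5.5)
Your proposal is correct and follows the paper's route: the paper also reads off the relations $V_{Z,\kappa}V_{Z',\kappa'}^* = \delta_{ZZ'}\delta_{\kappa\kappa'}\,\I$ and $\sum_{Z,\kappa} V_{Z,\kappa}^*V_{Z,\kappa} = \I$ directly from \Cref{lem:Phi dagger lemma}. You additionally spell out what the paper leaves implicit: the case $Z\neq Z'$, which the lemma as stated does not cover. You also justify completeness via $\Dr_{n+1}[\id_{X\otimes Y}]P^{(X,Y)}_{n+1}=P^{(X,Y)}_{n+1}$ together with \Cref{lem:strings produce anyons}, and give the dimension count showing each $\Phi_{XY}^Z$ is an isomorphism.
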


\begin{proof}
    It is immediate from \Cref{lem:Phi dagger lemma}, that 
    \begin{equation*}
        \sum_{Z,\kappa} \Phi^{Z}_{XY} \big[\pi_{Z,\kappa}\big]^* \Phi_{XY}^Z\big[\pi_{Z,\kappa}\big] =
        \sum_{Z,\kappa} \Phi_{Z}^{XY}\big[\pi_{Z,\kappa}^\dagger\big] \Phi_{XY}^Z\big[\pi_{Z,\kappa}\big] =
        \id_{\bar\rho^X\times \bar\rho^Y},
    \end{equation*}
    and
    \begin{equation*}
        \Phi^{XY}_Z\big[\pi_{Z,\kappa}\big]   \Phi_{Z'}^{XY}\big[\pi_{Z',\lambda}\big]^* = 
        \Phi_{XY}^Z\big[\pi_{Z,\kappa}\big]   \Phi_{Z'}^{XY}\big[\pi_{Z',\lambda}^\dagger \big] = 
        \delta_{Z,Z'}\delta_{i,i'} \times \id_{\bar\rho^Z}
    \end{equation*}
    for all $Z,Z' \in \Irr\ZC.$
\end{proof}

\subsection{Isomorphism of \texorpdfstring{$F$}{F}-symbols} \label{subsec:monoidality}

$F$-symbols for $\ZC$ are unitary maps defined by the commuting diagram
\begin{equation*}
    \begin{tikzcd}
        \ZC(X \otimes (Y \otimes Z) \to W) \, \arrow[rr, "\simeq"] \arrow[d, "- \circ \alpha_{X,Y,X}"'] & & 
        \displaystyle\bigoplus_{V\in \Irr(\ZC)} \ZC(X\otimes V \to W) \otimes \ZC(Y \otimes Z \to V) \arrow[d, "F_{XYZ}^{W}"] \\
        \ZC((X \otimes Y) \otimes Z \to W) \, \arrow[rr, "\simeq"]  & & 
        \displaystyle\bigoplus_{U\in \Irr(\ZC)} \ZC(U\otimes Z \to W) \otimes \ZC(X \otimes Y \to U)  
    \end{tikzcd}
\end{equation*}
where the left vertical arrow is given by the precomposition with the associator $\alpha$, and horizontal isomorphisms from right to left are given component-wise by composition: $\xi \otimes \eta \mapsto \xi \circ (\id_X \otimes \eta)$.
$F$-symbols are defined similarly for $\DHR_f$, where the associator is the identity. 

The following Lemma shows that $\{\Phi_{XY}^Z\}$ intertwines the $F$-symbols of $Z(\caC)$ and $\DHR_f$ in the sense described in Appendix \ref{app:equivalence}.

\begin{lemma}\label[lemma]{lem:Phi is monoidal}
    Let $D_n = C^{L_n} \cup C^{L_{n+1}}$, and $U,V,W,X,Y,Z \in \Irr \ZC$.
    For all $\xi: U \otimes Z \to W$ and $\eta : X \otimes Y \to U$,
    and all $\gamma: X \otimes V \to W$ and $\delta : Y \otimes Z \to V$,
    \begin{gather}
	    \Phi^W_{UZ}[\xi] \circ (\Phi^U_{XY}[\eta] \times \id_{\bar \rho^Z}) = \lim_n \pi^\1\left( (U^W_n)^* \Dr_{D_{n+1}} \left[ \adjincludegraphics[valign=c, height=1.0cm]{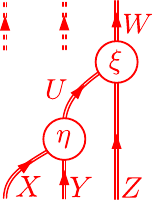} \right]  U_{n}^Z U_{n+1}^Y U_{n+2}^X \right), \label{eq:first fusion tree correspondence} \\
        \Phi^W_{XV}[\gamma] \circ (\id_{\bar \rho^X} \times \Phi^V_{YZ}[\delta])
										= \lim_n \pi^\1  \left( (U_{n}^W)^* \Dr_{D_{n+1}}  \left[ \adjincludegraphics[valign=c, height=1.0cm]{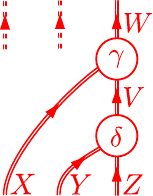} \right] U_{n}^Z U_{n+1}^Y U_{n+2}^X \right), \label{eq:second fusion tree correspondence}
    \end{gather}
    where the limits are in the SOT-topology.
\end{lemma}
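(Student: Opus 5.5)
We work on the dense subspace $\Hloc$, identifying $\caA$ with $\pi^{\I}(\caA)$. We first write each composite on the left as an SOT-limit of finite-volume expressions, then collapse the Drinfeld insertions into a single insertion on $D_{n+1}$. Start with \eqref{eq:first fusion tree correspondence}. By definition $\Phi^U_{XY}[\eta]\times\id_{\bar\rho^Z} = \Phi^U_{XY}[\eta]$, since the identity on $\bar\rho^Z$ is the unit. So the left-hand side is the product $\Phi^W_{UZ}[\xi]\,\Phi^U_{XY}[\eta]$. Both factors are SOT-limits of sequences that are eventually constant on local states (Lemma \ref{lem:Phi defined is intertwiner}). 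By Lemma \ref{lem:evt constant product} with shift $k=1$, the product is the limit of
\[
\Phi_n[\xi]\,\Phi_{n+1}[\eta] = (U_n^W)^*\Dr_{n+1}[\xi]\,U_n^Z U_{n+1}^U (U_{n+1}^U)^*\Dr_{n+2}[\eta]\,U_{n+1}^Y U_{n+2}^X ,
\]
which after cancelling $U^U_{n+1}$ equals $(U_n^W)^*\Dr_{n+1}[\xi]\,U_n^Z\,\Dr_{n+2}[\eta]\,U_{n+1}^Y U_{n+2}^X$.

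Next we move $\Dr_{n+2}[\eta]$ past $U_n^Z = U_{n+1\to n}^{Z}\cdots$, or more precisely past the gates of $U_n^Z$. Only the gate $u^Z_{n}$ on $L_n$ could fail to commute with it, and $L_n$ shares at most the puncture $e_{n+1}$ with $L_{n+2}$. In fact $L_{n+2}$ only touches the puncture $e_{n+2}$ shared with $L_{n+1}$, so $\Dr_{n+2}[\eta]$ is supported away from $L_n$ up to collar regions. I would check this with the locality statement used in the Excitation Lemma (commutation of gates with operators on disjoint regions). Then the sequence becomes $(U_n^W)^*\Dr_{n+1}[\xi]\Dr_{n+2}[\eta]\,U_n^Z U_{n+1}^Y U_{n+2}^X$.

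By the Excitation Lemma \ref{lem:strings produce anyons} with $k=2$, on a fixed local state $\ket\Psi$ and for $n$ large we may insert $P_{D_{n+1}}^{(X,Y,Z)}$ to the right of $\Dr_{n+1}[\xi]\Dr_{n+2}[\eta]$. The enumeration of the punctures of $D_{n+1}=C^{L_{n+1}}\cup C^{L_{n+2}}$ is $(e_{n+2},e_{n+1},e_n)$, matched to $X,Y,Z$. Choose an anchor on $D_{n+1}$ extending $\anchor_{L_{n+2}}$ with offset $0$ and $\anchor_{L_{n+1}}$ with offset $1$. Applying the Inclusion Lemma \ref{lem:inclusion lemma} to each factor (with Remark \ref{rem:adding vacuum lines} if needed) turns them into $\Dr_{D_{n+1}}[\eta\otimes\id_Z]$ and $\Dr_{D_{n+1}}[\xi]$ (after the intermediate projection onto type $(U,Z)$). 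Multiplicativity (Lemma \ref{lem:Drinfeld insertion multiplicativity}) then combines these into $\Dr_{D_{n+1}}[\xi\circ(\eta\otimes\id_Z)]$, which is the first diagram.

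For \eqref{eq:second fusion tree correspondence}, $\id_{\bar\rho^X}\times\Phi^V_{YZ}[\delta]=\bar\rho^X(\Phi^V_{YZ}[\delta])$. Remark \ref{rem:fusion channels with punctures} gives this as the limit of $(U_{n+1}^X)^*\Phi_n[\delta]U_{n+1}^X$, eventually constant on local states. Composing with $\Phi_n[\gamma]$ via Lemma \ref{lem:evt constant product} (the $U^V_n$ and $U^X_{n+1}$ cancel, after shifting indices suitably) yields
\[
(U_n^W)^*\Dr_{n+1}[\gamma]\,\Dr_{n+1}[\delta]\,U_n^Z U_{n+1}^Y U_{n+1\to}\cdots
\]
Arranging the indices so that $X$ lives on $L_{n+2}$ requires a concatenation step: I would use the Concatenation Lemma \ref{lem:concatenation lemma} to rewrite $U^X_{n+1}$ versus $U^X_{n+2}$ on states where the relevant puncture is vacuum. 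Then, as before, the Excitation, Inclusion and Multiplicativity Lemmas give $\Dr_{D_{n+1}}[\gamma\circ(\id_X\otimes\delta)]$. Here the string $X$ must be routed through the $L_{n+1}$ part, which is where half-braidings/anchor geometry enter.

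The main obstacle is this index and anchor bookkeeping in the second identity. We must ensure that the Drinfeld insertion for $\delta$, originally on $L_{n+1}$ with its standard anchor, corresponds under the Inclusion Lemma to $\id_X\otimes\delta$ with the correct offset in the anchor of $D_{n+1}$. The $X$ strand passes along the right strip, and it must not pick up a half-braiding. I would first try to choose the $D_{n+1}$ anchor so that it simultaneously extends both standard anchors, and then verify using Figure \ref{fig:D region and anchor}-type pictures.
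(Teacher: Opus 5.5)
Your argument for \eqref{eq:first fusion tree correspondence} is the paper's. You pair $\Phi_n[\xi]$ with $\Phi_{n+1}[\eta]$, cancel $U^U_{n+1}$, and commute $\Dr_{n+2}[\eta]$ past $U^Z_n$. You then insert $P^{(X,Y,Z)}_{D_{n+1}}$ by the Excitation Lemma and collapse with the Inclusion and Multiplicativity Lemmas. The anchor is a single one on $D_{n+1}$ extending the standard anchors of $L_{n+2}$ (offset $0$) and $L_{n+1}$ (offset $1$).

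The second identity has a genuine gap, in two places.

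\emph{The approximant.} The approximant of $\bar\rho^X(\Phi^V_{YZ}[\delta])$ must be $(U^X_{n+2})^*\Phi_n[\delta]U^X_{n+2}$, not $(U^X_{n+1})^*\Phi_n[\delta]U^X_{n+1}$; the paper's proof uses $U^X_{n+2}$. The gate $u^X_{n+2}$ overlaps $C^{L_{n+1}}$, so $\rho^X(\Phi_n[\delta])$ only stabilises once the $X$-string has passed $L_{n+2}$. With $U^X_{n+1}$, the $X$ excitation sits at $e_{n+1}$, which is exactly the puncture where $\Dr_{n+1}[\delta]$ expects $Y$.

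\emph{The cancellation.} With the correct approximant the strings do not cancel completely, so your expression $(U^W_n)^*\Dr_{n+1}[\gamma]\Dr_{n+1}[\delta]\cdots$ cannot be right. After $\Dr_{n+1}[\delta]$, $V$ sits at $e_n$ while $X$ sits at $e_{n+2}$. These are not the two punctures of any single link, so a hopping gate must sit between the two fusion insertions.

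With your unshifted pairing, use $U^X_{n+1}(U^X_{n+2})^*=u^X_{n+2}$. Also use that $u^X_{n+2}$ commutes with $U^V_n$, for the same reason $\Dr_{n+2}[\eta]$ commutes with $U^Z_n$ above. This gives
\[
\Phi_n[\gamma]\,(U^X_{n+2})^*\Phi_n[\delta]\,U^X_{n+2} = (U^W_n)^*\,\Dr_{n+1}[\gamma]\,u^X_{n+2}\,\Dr_{n+1}[\delta]\,U^Z_nU^Y_{n+1}U^X_{n+2}.
\]
On the range of $\Dr_{n+1}[\delta]P^{(X,Y,Z)}_{D_{n+1}}$ there is vacuum at $e_{n+1}$. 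There $u^X_{n+2}$ acts as $\Dr^{(X\I\to\I X)}_{L_{n+2}}$ and carries $X$ to $e_{n+1}$, where $\Dr_{n+1}[\gamma]$ can fuse it with $V$.

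The paper instead pairs $\Phi_{n+1}[\gamma]$ with the approximant. This leaves
\[
(U^W_n)^*\,u^W_{n+1}\,\Dr_{n+2}[\gamma]\,u^V_{n+1}\,\Dr_{n+1}[\delta]\,U^Z_nU^Y_{n+1}U^X_{n+2}.
\]
That is, it hops $V$ to $e_{n+1}$, fuses there with $X$, and hops $W$ back to $e_n$.

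In either version, these leftover hopping gates are the missing ingredient. After inserting $P^{(X,Y,Z)}_{D_{n+1}}$, each factor becomes a $\Dr_{D_{n+1}}$-insertion by the Inclusion Lemma, using the \emph{same} anchor as for the first identity. Multiplicativity then yields $\Dr_{D_{n+1}}[\gamma\circ(\id_X\otimes\delta)]$.

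The Concatenation Lemma is not what is needed here. Nor does any half-braiding arise: each hop only moves a strand between adjacent legs of that anchor.
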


\begin{proof}
    We verify both identities on the dense subspace of local states and omit $\pi^{\I}$ from notation throughout this proof. Let $\ket\Psi \in \pi^\1(\caA^{\loc})\ket{\Omega}$, and take $n$ large enough according to \Cref{lem:strings produce anyons} such that 
    $$ 
        U_{n}^Z U_{n+1}^Y U_{n+2}^X \ket{\Psi} =
        P_{D_{n+1}}^{(X,Y,Z)} U_{n}^Z U_{n+1}^Y U_{n+2}^X \ket{\Psi}.
    $$
    Using this together with $\Phi^U_{XY}[\eta] \times \id_{\bar \rho^Z} = \Phi^U_{XY}[\eta]$ and \Cref{lem:evt constant product} we find that the left hand side of \eqref{eq:first fusion tree correspondence} becomes
    \begin{align*}
        \Phi^W_{UZ}[\xi] \Phi^U_{XY}[\eta]  \ket{\Psi} 
        &= (U_n^W)^* \Dr_{n+1}\big[ \xi \big] U_{n}^Z U_{n+1}^U (U_{n+1}^U)^* \Dr_{n+2}\big[ \eta \big] U_{n+1}^Y U_{n+2}^X \ket{\Psi} \\
        &= (U_n^W)^* \Dr_{n+1}\big[ \xi \big] \Dr_{n+2}\big[ \eta \big] P_{D_{n+1}}^{(X,Y,Z)} U_{n}^Z U_{n+1}^Y U_{n+2}^X \ket{\Psi} \\
        &= (U_n^W)^* \Dr_{D_{n+1}}\big[ \xi \circ (\eta \otimes \id_Z) \big] U_{n}^Z U_{n+1}^Y U_{n+2}^X \ket{\Psi},
    \end{align*}
    where we used \Cref{lem:Drinfeld insertion multiplicativity} in combination with \cref{lem:inclusion lemma} applied to $L_n, L_{n+1} \subset D_{n+1}$ with an anchor $\anchor_{D_{n+1}}$ extending $\anchor_{L_n}$ and $\anchor_{L_{n+1}}$. This shows the first identity.
    
    For the second identity we have $\id_{\bar \rho^X} \times \Phi^V_{YZ}[\delta] = \bar \rho^X (\Phi^V_{YZ}[\delta])$,
    so in this case we find
    \begin{align*}
        & \Big( \Phi^W_{XV}[\gamma] \circ (\id_{\bar \rho^X} \times \Phi^V_{YZ}[\delta]) \Big) \ket{\Psi} \\
        &= (U_{n+1}^W)^* \Dr_{n+2}\big[\gamma  \big] U_{n+1}^V U_{n+2}^X (U_{n+2}^X)^* (U_n^V)^* \Dr_{n+1}\big[ \delta \big] U_n^Z U_{n+1}^Y U_{n+2}^X \ket{\Psi} \\
        &= (U_{n}^W)^*  \big( u_{n+1}^W \Dr_{n+2}\big[\gamma  \big] u_{n+1}^V \Dr_{n+1}\big[ \delta \big] P_{D_{n+1}}^{(X,Y,Z)} \big) U_n^Z U_{n+1}^Y U_{n+2}^X \ket{\Psi} \\
        &= (U_{n}^W)^* \Dr_{D_{n+1}}\big[\gamma \circ (\id_X \otimes \delta)   \big] U_n^Z U_{n+1}^Y U_{n+2}^X \ket{\Psi}.
    \end{align*}
    Here, the second equality is a simple algebraic reduction using definitions and disjointness of support. The last equality follows from a repeated application of \cref{lem:inclusion lemma} and \cref{lem:Drinfeld insertion multiplicativity}, which shows: 
    \begin{align*}
        & u_{n+1}^W \Dr_{n+2}\big[\gamma  \big] u_{n+1}^V \Dr_{n+1}\big[ \delta \big] P_{D_{n+1}}^{(X,Y,Z)} \\
        = & \Dr_{D_{n+1}}^{\anchorino_{D_{n+1}}} \left[ \adjincludegraphics[valign=c, width = 1.6cm]{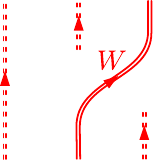} \right] \\
        \times & \Dr_{D_{n+1}}^{\anchorino_{D_{n+1}}} \left[ \adjincludegraphics[valign=c, width = 1.6cm]{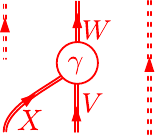} \right] \\
        \times & \Dr_{D_{n+1}}^{\anchorino_{D_{n+1}}} \left[ \adjincludegraphics[valign=c, width = 1.6cm]{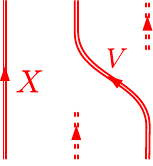} \right] \\
        \times & \Dr_{D_{n+1}}^{\anchorino_{D_{n+1}}} \left[ \adjincludegraphics[valign=c, width = 1.8cm]{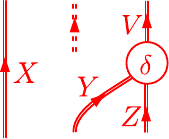} \right]  \, P_{D_{n+1}}^{(X,Y,Z)} \\
        = & \Dr_{D_{n+1}}\big[\gamma \circ (\id_X \otimes \delta)   \big] 
        P_{D_{n+1}}^{(X,Y,Z)}.
    \end{align*}
\end{proof}

\section{Braiding} \label{sec:braiding}

In this section, we show that the identification $\{ \Phi_{X Y}^Z \}$ of fusion spaces also intertwines $R$-symbols. This will imply that $Z(\caC)$ and $\DHR_f$ are equivalent as braided $\rm C^*$-tensor categories by Proposition \ref{prop:isomophic symbols implies isomorphic categories}.

Note that the notation \ref{not:fixed chain} is no longer in force because we will be dealing with more than one chain.

\subsection{Transporters} \label{subsec:transporters}

In order to compute the braiding of $\DHR_f$ from Eq. \eqref{eq:braiding defined}, we construct explicit unitary transporters between $\bar \rho^X_\scrC$ and $\bar \rho^X_{\scrC'}$, for arbitrary $X\in \Irr\ZC$ and  sufficiently nice half-infinite chains $\scrC$ and $\scrC'$. Although it would suffice to do so for the fixed fiducial chain and a single auxilliary chain, we describe the general construction for completeness, cf. \Cref{rem:no haag}.

\subsubsection{Isotopy Lemma} \label{ssubsec:isotopy lemma}

The central ingredient for the transportation of string operators is a form of isotopy invariance enjoyed by the unitary gates of the string operators.

\begin{definition} \label[definition]{def:simple isotopy}
	There is a simple isotopy between links $L$ and $L'$ if $\partial_{\ii} L = \partial_{\ii} L'$ and $\partial_{\f} L = \partial_{\f} L'$, and there is a region $E$, such that
	\begin{enumerate}
		\item $C^L, C^{L'} \subset E$,
		\item There is a region $C \subset E$ such that $E$ is obtained by gluing $L$ to $C$ along a self-avoiding dual path (see Section \ref{subsec:inclusion lemma}),
		\item there is a region $C' \subset E$ such that $E$ is obtained by gluing $L'$ to $C'$ along a self-avoiding dual path,
        \label{item:stupid face assumption on bridges}
		\item There is an anchor $\anchor_E$ for $E$ which extends both $\anchor_L$ and $\anchor_{L'}$ with offset zero.
	\end{enumerate}
	The region $E$ is said to support a simple isotopy between $L$ and $L'$.
\end{definition}

The definition is designed so that \cref{lem:inclusion lemma} can be applied. Note that it follows from the assumptions that $\Sigma_E$ is a twice punctured disk whose punctures coincide with $\partial_{\ii} L$ and $\partial_{\f} L$. The subregions $C$ and $C'$ glued to $L$ and $L'$ respectively to obtain $E$ have associated regions $\Sigma_C$ and $\Sigma_{C'}$ which are disks. See Figure \ref{fig:simple isotopy of links}.

\begin{figure}[h]
    \begin{center}
        \includegraphics[width=8cm]{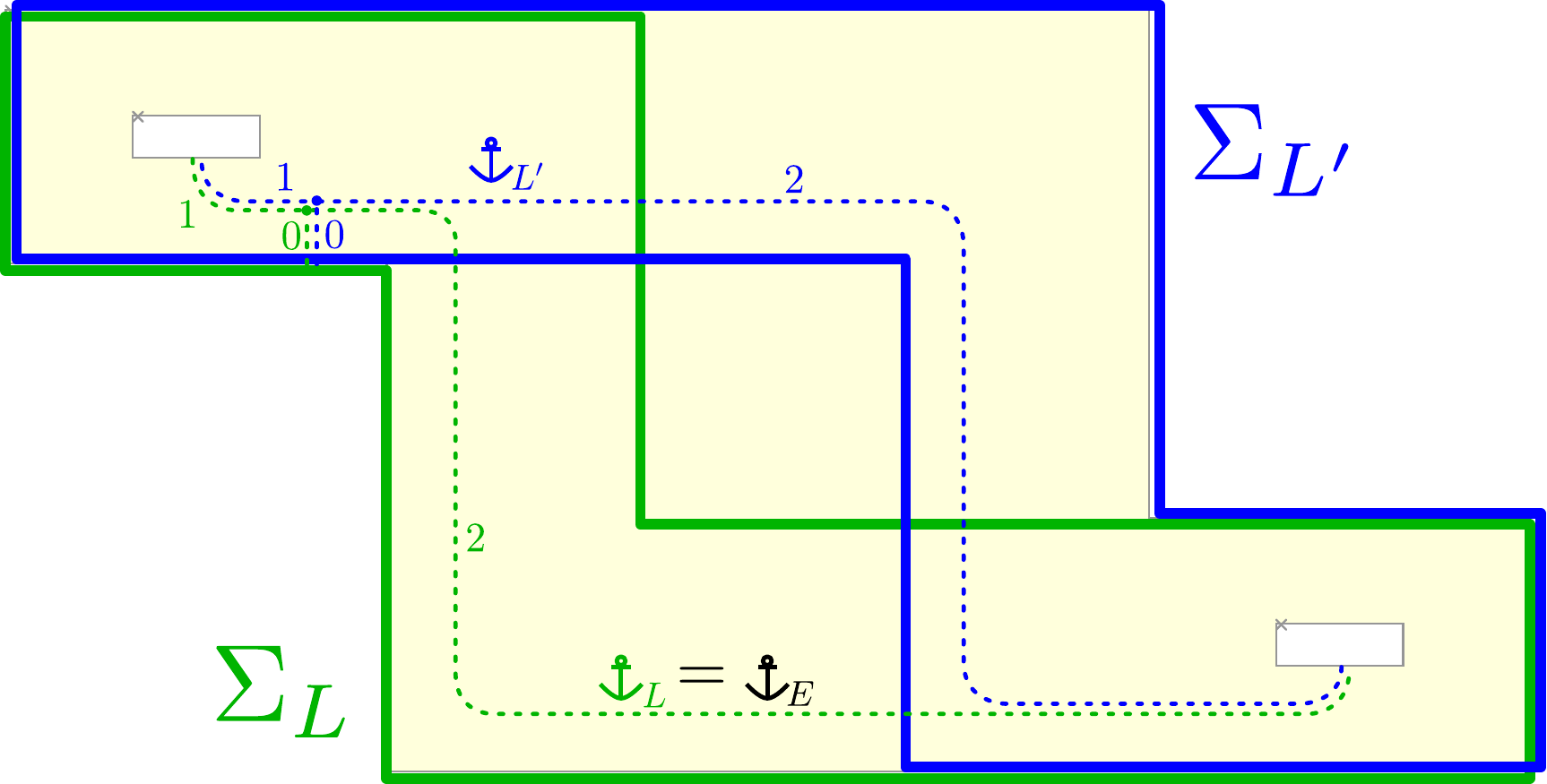}
    \end{center}
    \caption{Isotopy of links $L$ and $L'$ supported by region $E$. The regions $\Sigma_L, \Sigma_{L'}$ and $\Sigma_E$ are shown in green, blue, and light yellow respectively. A choice of anchors $\anchor_E = \anchor_{L}$ and $\anchor_{L'}$ is also shown, together with the enumeration of their strands.}
    \label{fig:simple isotopy of links}
\end{figure}

Recall that for any region $E$, the projector $P_E \in \caA_E$ is the orthogonal projector onto the skein subspace $H_E \subset \caH_E$ associated to $E$. See Section \ref{ssubsec:skein subspaces defined}.
\begin{lemma}[Isotopy] \label[isotopy lemma]{lem:isotopy lemma}
	If there is a simple isotopy between links $L$ and $L'$ supported by a region $E$, then
	$$ u_L^X P_E = u_{L'}^X P_E $$
	for any $X \in \Irr Z(\caC)$. 
\end{lemma}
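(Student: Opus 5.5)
The plan is to expand $u^X_L$ into its five summands from Eq.~\eqref{eq:unitary gate defined}, multiply each on the right by $P_E$, and use the Inclusion Lemma to rewrite every resulting operator as a Drinfeld insertion on $E$. That rewritten form does not depend on whether we started from $L$ or from $L'$. The whole argument rests on condition (iv) of Definition~\ref{def:simple isotopy}: $\anchor_E$ extends both $\anchor_L$ and $\anchor_{L'}$ with offset zero.

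\textbf{Step 1: decompose $P_E$.} Since $\Sigma_E$ is a twice punctured disk whose punctures are $\partial_\ii L = \partial_\ii L'$ and $\partial_\f L = \partial_\f L'$, we use Proposition~\ref{prop:matrix units for Tube_n} and choose minimal projections $p^{Y}$ in the decomposition of $P^Y$ at the two punctures. This gives
\begin{equation*}
P_E = \sum_{Y_1,Y_2} P_E(\id, q_1, q_2),
\end{equation*}
with the outer boundary left free. The only summands that matter are those with $q_\kappa = p^{Y_\kappa}$ for the fixed vectors $w^{Y_\kappa}$. Summands where the puncture projections are not of this form are annihilated by $P_L^{??}$ and by every Drinfeld piece of $u^X_L$. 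They are also fixed by $\I - P_L^{??}$, and the same holds for $L'$, so they contribute identically to both sides. It is cleanest to check that $\frt^L_\kappa$ and $\frt^E_\kappa$ agree, because they are the same punctures with the same collar regions.

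On the remaining part we apply the Inclusion Lemma twice. Since $E = C^L \sqcup_I C$ and $\anchor_E$ extends $\anchor_L$ with offset zero,
\begin{equation*}
\Dr_L[\beta]\, P_E^{(Y_1,Y_2)} = \Dr_E^{\anchorino_E}[\beta]\, P_E^{(Y_1,Y_2)}
\end{equation*}
for every $\beta \in Z(\caC)(Y_1\otimes Y_2 \to Y_1'\otimes Y_2')$. The identical identity holds with $L'$ in place of $L$, using $E = C^{L'}\sqcup_{I'} C'$. Hence $\Dr_L[\beta] P_E = \Dr_{L'}[\beta] P_E$ for each of the four gate pieces $\Dr_L^{(\I\I\to X\bar X)}$, $\Dr_L^{(X\bar X\to\I\I)}$, $\Dr_L^{(X\I\to\I X)}$ and $\Dr_L^{(\I X\to X\I)}$.

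\textbf{Step 2: the projection term $(\I - P^{??}_L)P_E = P_E - P^{??}_L P_E$.} We write each of the four projections in Eqs.~\eqref{eq:P_L defs1}--\eqref{eq:P_L defs4} as a product of two Drinfeld insertions on $L$, for example $P_L^{X\bar X} = \Dr_L^{(\I\I\to X\bar X)}\Dr_L^{(X\bar X\to \I\I)}$. The product $\Dr_L[a]\Dr_L[b]P_E$ is then handled in two moves. First, $\Dr_L[b]P_E = \Dr_E[b]P_E$, and the range of $\Dr_E[b]$ lies in $H_E$, so $\Dr_E[b] = P_E\Dr_E[b]$. Second, we apply the Inclusion Lemma again to $\Dr_L[a]P_E$. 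This gives
\begin{equation*}
\Dr_L[a]\Dr_L[b]P_E = \Dr_E[a]\Dr_E[b]P_E,
\end{equation*}
which is symmetric in $L$ and $L'$. Consequently $P^{??}_L P_E = P^{??}_{L'} P_E$. Adding the pieces from Steps 1 and 2 yields $u^X_L P_E = u^X_{L'} P_E$.

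\textbf{Main obstacle.} The delicate point is the vacuum outer boundary condition $\frt^L_0(P^{\I})$ that appears in $P^{X\bar X}_L$. We must make sure that $P_E$ does not secretly interact with the outer boundary of $L$ in a way the Inclusion Lemma does not see. Writing $P_L^{X\bar X}$ as a product of Drinfeld insertions, as in Step 2, should sidestep this, since the outer-boundary constraint is then encoded inside $\Dr_L^{(\I\I\to X\bar X)}$ via $\coev$.

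The second thing to verify carefully is the compatibility of anchors: the offset-zero extension of $\anchor_L$ and $\anchor_{L'}$ by the same $\anchor_E$ must produce the same enumeration of the two punctures. This is precisely what makes the Drinfeld insertions on $E$ coincide.
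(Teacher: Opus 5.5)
Your proposal is correct and follows essentially the same route as the paper: you reduce to the four Drinfeld pieces of $u^X_L$ and use the Inclusion Lemma, with the common offset-zero extension of both link anchors by the anchor of $E$, to identify $\Dr_L[\beta]P_E$ and $\Dr_{L'}[\beta]P_E$ with the same Drinfeld insertion on $E$. The paper simply calls $P^{\bullet\bullet}_L P_E = P^{\bullet\bullet}_{L'} P_E$ clear, whereas you derive it by writing each $P^{\bullet\bullet}_L$ as a product of two Drinfeld insertions, which cleanly disposes of the outer-boundary factor $\frt^L_0(P^{\I})$ in $P^{X\bar X}_L$.
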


\begin{proof}
	Recall the definition Eq. \eqref{eq:unitary gate defined} of the unitary $u_{L}^X$. The Drinfeld insertions appearing in that definition are partial isometries with initial and final projections given by Eqs. \eqref{eq:P_L defs1}-\eqref{eq:P_L defs4}. It is clear that $P_L^{\bullet \bullet} P_E = P_{L'}^{\bullet \bullet}P_E$ for all these projectors, so it is sufficient to show the following four equalities:
	\begin{align*}
		\Dr_L^{(\I \I \rightarrow X \bar X)} P_E &= \Dr_{L'}^{(\I \I \rightarrow X \bar X)} P_E, \quad \Dr_L^{(X \bar X \rightarrow \I \I)} P_E = \Dr_{L'}^{(X \bar X \rightarrow \I \I)} P_E, \\
		\Dr_L^{(X \I \rightarrow \I X)} P_E &= \Dr_{L'}^{(X \I \rightarrow \I X)} P_E, \quad \Dr_L^{(\I X \rightarrow X \I)} P_E = \Dr_{L'}^{(\I X \rightarrow X \I)}.
	\end{align*}

	Let us show the first of these equalities. The proof of the others is completely analogous. We use the enumeration of boundary components induced by the anchors $\anchor_L, \anchor_{L'}$ and $\anchor_E$. Since $E$ is obtained by gluing a disk $C$ to the link $L$ along a self-avoiding dual path, and $\anchor_E$ extends $\anchor_L$ with no offset, the inclusion Lemma \ref{lem:inclusion lemma} yields
	\begin{align*} \Dr_L^{(\I\I \rightarrow X \bar X)} P_E &= d_X^{-1/2} \, \Dr_{L} \left[ \,\,\,\,\, \adjincludegraphics[valign=c, width=1.5cm]{pair_creation_morphism.pdf} \right] P_E^{(\I \I)} \\  &=  d_X^{-1/2} \, \Dr_E^{\anchorino_E} \left[ \,\,\,\,\, \adjincludegraphics[valign=c, width=1.5cm]{pair_creation_morphism.pdf} \right] P_E^{(\I \I)} =  d_X^{-1/2} \,  \Dr_E^{\anchorino_E} \left[ \,\,\,\,\, \adjincludegraphics[valign=c, width=1.5cm]{pair_creation_morphism.pdf} \right] P_E
	\end{align*}
	where in the first and last equalities we noted that the Drinfeld insertions are supported on $P_L^{(\I \I)}$ and $P_E^{(\I \I)}$, and that $P_L^{\I \I} P_E = P_E^{(\I \I)}$. For the same reason we have
	$$ \Dr_L^{(\I\I \rightarrow X \bar X)} P_E =  d_X^{-1/2} \,  \Dr_E^{\anchorino_E} \left[ \,\,\,\,\, \adjincludegraphics[valign=c, width=1.5cm]{pair_creation_morphism.pdf} \right] P_E. $$
	Combining these result yields the first equality above.
\end{proof}

In order to apply this isotopy below, we will also need the following variation of the concatenation lemma:
\begin{lemma} \label[lemma]{lem:variation on concatenation}
	Let $X \in \Irr Z(\caC)$ and let $L_1$ and $L_2$ be composable links with $L = L_2 \wedge L_1$. Then
	$$  u_{L_1}^X u_{L_2}^X P_L^{X \I} = u_{L}^X P_L^{X \I}.$$
\end{lemma}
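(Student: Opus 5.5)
The plan is to derive this from the Concatenation Lemma by taking adjoints and using the self-adjointness of the gates. Each gate $u^X_{L}$ defined in Eq. \eqref{eq:unitary gate defined} is a self-adjoint unitary. The Concatenation Lemma gives
$$ u^X_{L_2} u^X_{L_1} P_L^{\I ?} = u^X_L P_L^{\I ?}, \qquad P_L^{\I ?} = P_L^{\I\I} + P_L^{\I X}. $$
Restricting to the summand $P_L^{\I X}$ gives $u^X_{L_2} u^X_{L_1} P_L^{\I X} = u^X_L P_L^{\I X}$. To justify the restriction, note that $P_L^{\I\I}$ and $P_L^{\I X}$ are orthogonal projections with orthogonal ranges. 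So $P_L^{\I X} = P_L^{\I ?} P_L^{\I X}$, and we may multiply the identity on the right by $P_L^{\I X}$.

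Next use the intertwining property \eqref{eq:intertwining properties of u_L}, $u^X_L P_L^{\I X} = P_L^{X\I} u^X_L$, together with unitarity. Since $u^X_L = (u^X_L)^*$, this also gives $P_L^{\I X} u^X_L = u^X_L P_L^{X\I}$. Taking adjoints of the restricted identity yields
$$ P_L^{\I X} u^X_{L_1} u^X_{L_2} = P_L^{\I X} u^X_L = u^X_L P_L^{X\I}. $$
It remains to move the projection to the right of the product $u^X_{L_1}u^X_{L_2}$. Equivalently, we must show that $u^X_{L_1}u^X_{L_2}$ maps $\operatorname{ran}P_L^{X\I}$ into $\operatorname{ran}P_L^{\I X}$.

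For this step, multiply the restricted identity on the left by $u^X_{L_1}u^X_{L_2}$ and use $(u^X)^2=\I$. This gives $P_L^{\I X} = u^X_{L_1}u^X_{L_2}u^X_L P_L^{\I X}$. Substituting $u^X_L P_L^{\I X} = P_L^{X\I}u^X_L$ gives
$$ P_L^{\I X} = u^X_{L_1}u^X_{L_2}P_L^{X\I}u^X_L. $$
Multiplying on the right by $u^X_L$ and using $u^X_L P_L^{\I X} u^X_L = P_L^{X\I}$ then yields
$$ P_L^{\I X}u^X_L = u^X_{L_1}u^X_{L_2}P_L^{X\I}. $$
Combining this with $P_L^{\I X}u^X_L = u^X_L P_L^{X\I}$ from the previous paragraph gives $u^X_{L_1}u^X_{L_2}P_L^{X\I} = u^X_L P_L^{X\I}$, which is the claim.

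The only point that needs care is the identity $u^X_L P_L^{\I X} u^X_L = P_L^{X\I}$. It follows from \eqref{eq:intertwining properties of u_L} together with $(u^X_L)^2=\I$. This is the step where an error is most likely, so I would double-check it against the explicit partial isometries $\Dr_L^{(X\I\to\I X)}$ and $\Dr_L^{(\I X\to X\I)}$ of Eqs. \eqref{eq:P_L defs1}--\eqref{eq:P_L defs4}. No further locality input is needed: the projector $P_L^{X\I}$ refers to the composite link $L$, exactly as in the Concatenation Lemma.
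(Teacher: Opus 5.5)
Your proof is correct, and it takes a different route from the paper's.

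\textbf{The paper's route.} The paper does not derive the statement from the Concatenation Lemma. It reruns that lemma's skein computation. On $\mathrm{ran}\,P_L^{X\I}$ it replaces $u^X_{L_1}u^X_{L_2}$ and $u^X_L$ by the partial isometries $\Dr_{L_1}^{(X\I\to\I X)}\Dr_{L_2}^{(X\I\to\I X)}$ and $\Dr_L^{(X\I\to\I X)}$. It then passes to the region $D=C^{L_1}\cup C^{L_2}$, with an anchor extending those of $L_1$ and $L_2$. It writes $B_L$ as $B_D$ times the vacuum projector at the middle puncture. Finally it uses the Inclusion and Multiplicativity Lemmas, together with Remark \ref{rem:adding vacuum lines}, to identify the product with $\Dr_L^{(X\I\to\I X)}P_L^{X\I}$.

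\textbf{Your route.} You observe that the claim is the formal inverse of the $P_L^{\I X}$-component of the Concatenation Lemma. The operator $A:=u^X_{L_2}u^X_{L_1}$ agrees with $u^X_L$ on $\mathrm{ran}\,P_L^{\I X}$. By \eqref{eq:intertwining properties of u_L}, $u^X_L$ maps this range unitarily onto $\mathrm{ran}\,P_L^{X\I}$. Hence $A^{-1}=u^X_{L_1}u^X_{L_2}$ agrees with $(u^X_L)^{-1}=u^X_L$ on $\mathrm{ran}\,P_L^{X\I}$.

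\textbf{What each buys.} Your argument is shorter and needs no new skein computation. It shows the lemma is a purely algebraic corollary of the Concatenation Lemma. In exchange, it relies on the gates being exact self-adjoint unitaries and on the exact swapping property $u^X_LP_L^{\I X}=P_L^{X\I}u^X_L$. The paper's direct computation does not use the involutive structure of the gates. It also exhibits the product explicitly as a Drinfeld insertion on $D$, in the same style as the later transporter and braiding arguments.

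\textbf{Minor remarks.} From the adjoint step in your second paragraph you only actually use $P_L^{\I X}u^X_L=u^X_LP_L^{X\I}$. In the right-multiplication step you only need $(u^X_L)^2=\I$, not $u^X_LP_L^{\I X}u^X_L=P_L^{X\I}$. Your orthogonality claim $P_L^{\I\I}P_L^{\I X}=0$ requires $X\neq\I$. This is implicit anyway in $u^X_L$ being a well-defined unitary.
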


\begin{proof}
	As in the proof of the concatenation Lemma 6.9 in \cite{bols2025levinI} we have
	$$ u_{L_1}^X u_{L_2}^X P_L^{X \I} = \Dr_{L_1}^{(X \I \rightarrow \I X)} \Dr_{L_2}^{(X \I \rightarrow \I X)} P_L^{X \I}  $$
	and
	$$ u_L^{X} P_L^{X \I} = \Dr_{L}^{(X \I \rightarrow \I X)} P_L^{X \I}. $$
	Consider the region $D = C^{L_1} \cup C^{L_2}$, and let $\anchor_D$ be as in the proof of Lemma 6.9 of \cite{bols2025levinI}, so that $\anchor_D$ extends the anchor $\anchor_{L_1}$ with offset 0, and $\anchor_D$ extends the anchor $\anchor_{L_1}$ with offset 1.
	As we get $L$ from filling in the middle puncture of $D$, we have
        $$B_L = B_D \frt_{\caS^{\anchorino_D}_2}(p^\1) = \frt_{\caS^{\anchorino_D}_2}(p^\1) B_D.$$
    Making use of this observation as well as \cref{lem:inclusion lemma} and \cref{lem:Drinfeld insertion multiplicativity} we find
    \begin{align*}
        \Dr_{L_1}^{(X \I \rightarrow \I X)} \Dr_{L_2}^{(X \I \rightarrow \I X)} P_L^{X \I} = &\Dr_D \left[ \adjincludegraphics[valign=c, width = 1.5cm]{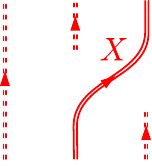} \right] \\
        \times &\Dr_D \left[ \adjincludegraphics[valign=c, width = 1.5cm]{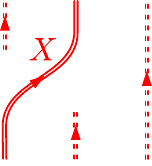} \right] \, P_L^{X \I} \\
        = &\Dr_L^{(X \I \rightarrow \I X)}  \, P_L^{X \I},
    \end{align*}
    where the last equality is evident as in Remark \ref{rem:adding vacuum lines}. 
\end{proof}

\subsubsection{Construction of transporters}

We now construct explicit transporters between string operators $\bar \rho_{\caC}^X$ and $\bar \rho_{\scrC'}^X$ supported on chains $\scrC, \scrC'$ which admit a so-called \emph{bridge}, see Figure \ref{fig:bridge}

\begin{definition} \label[definition]{def:bridge}
	Let $\scrC = \{ L_n \}_{n \in \N}$ and $\scrC' = \{ L'_n \}_{n \in \N}$ be chains. A bridge from $\scrC$ to $\scrC'$ is a pair $( \{ Q_n \}_{n \geq N}, \{ E_n \}_{n \geq N} \} )$ for some $N \in \N$, where each $Q_n$ is a link and each $E_n$ is a region such that for all $n \geq N$,
	\begin{enumerate}
		\item  $L_n$ and $Q_{n+1}$ are composable, i.e. $Q_{n+1} \wedge L_n$ is a link,
		\item $Q_{n}$ and $L'_n$ are composable, i.e. $L'_n \wedge Q_n$ is a link,
		\item The region $E_n$ supports simple isotopy between $Q_{n+1} \wedge L_n$ and $L'_n \wedge Q_{n}$,
		\item For any finite region $C$ there is $n_0 \geq N$ so that $E_n \cap C = \emptyset$ if $n \geq n_0$.
        \label{item:bridge eventually disjoint}
        \item The region $E_n$ is disjoint from the supports of $\scrC\mn{1}{n-2}$ and $\scrC'\mn{1}{n-2}$.
        \label{item:bridge untangled}
	\end{enumerate}	
    We say the bridge $( \{Q_n\}_{n \geq N}, \{ E_n \}_{n \geq N})$ is supported in a cone $\Lambda$ if the supports of all links $\{L_n, L'_{n}\}_{n \in \N}$ and $\{Q_n\}_{n \geq N}$ are subsets of $\Lambda$.
\end{definition}

\begin{figure}
    \begin{center}
        \includegraphics[width=10cm]{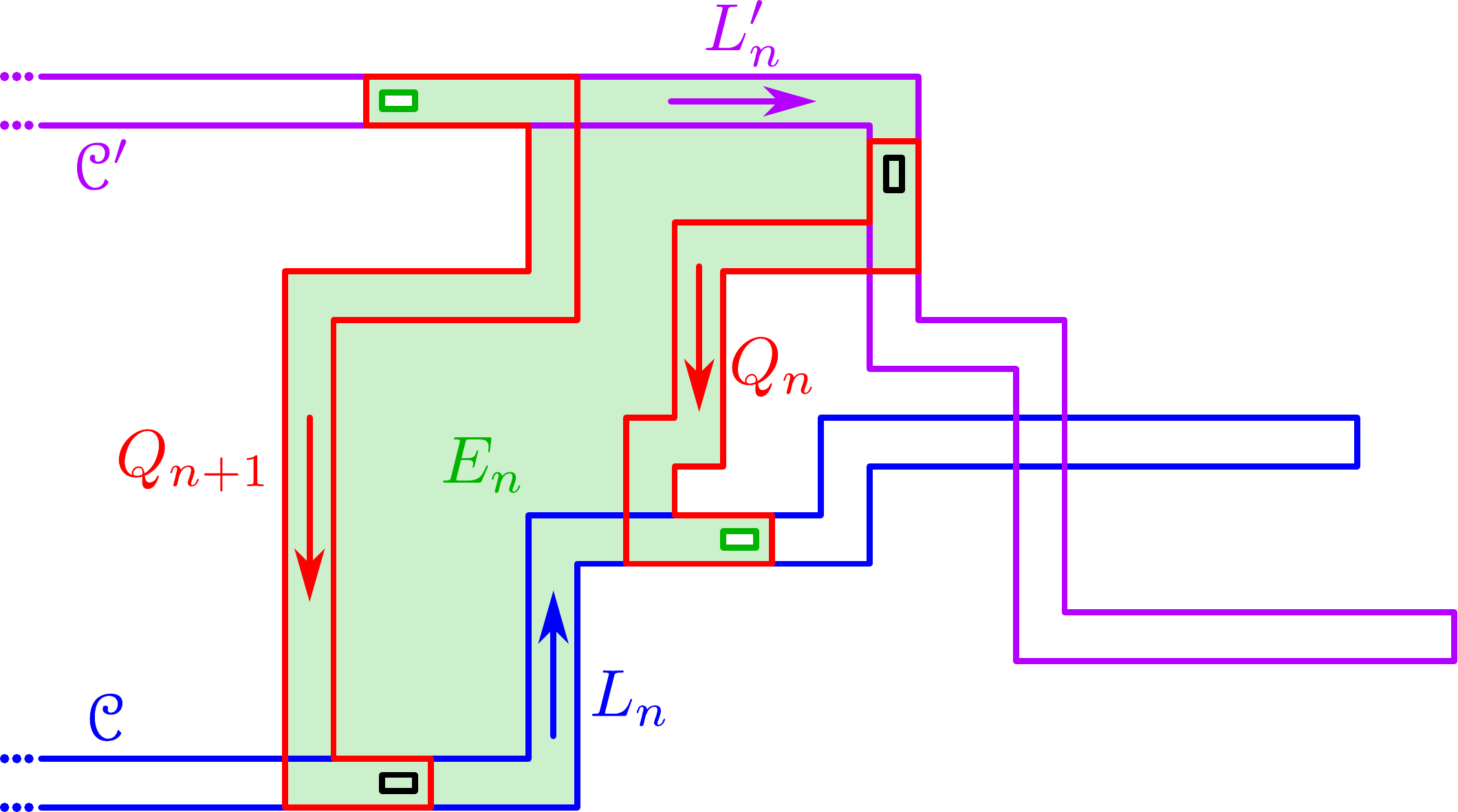}
    \end{center}
    \caption{Part of a bridge $( \{ Q_n \}, \{ E_n \} )$ between chains $\scrC = (L_n)$ and $\scrC' = (L'_n)$.}
    \label{fig:bridge}
\end{figure}

\begin{proposition} \label[proposition]{prop:explicit transporters}
	Let $\scrC$ and $\scrC'$ be chains and suppose $( \{ Q_n \}_{n \geq N}, \{ E_n \}_{n \geq N} )$ is a bridge from $\scrC$ to $\scrC'$. Let $X \in \Irr Z(\caC)$. For each $n \geq N$, define the unitary
	\begin{equation} \label{eq:finite transporter}
		T_n = (U_{\scrC'_n}^X)^* u_{Q_{n+1}}^X U_{\scrC_n}^X.
	\end{equation}
	Then the sequence $\{ T_n \}_{n \geq N}$is eventually constant on local states and the limit
	\begin{equation} \label{eq:limiting intertwiner}
		T := \lim_{n \uparrow \infty} \pi^{\I}(T_n)
	\end{equation}
	is a unitary intertwiner from $\bar \rho_{\scrC}^X$ to $\bar \rho_{\scrC'}^X$.

	Moreover, if the bridge $( \{ Q_n \}_{n \geq N}, \{ E_n \}_{n \geq N} )$ is supported in a cone $\Lambda$, then $T \in \caR(\Lambda)$. 
\end{proposition}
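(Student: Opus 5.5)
The plan is to show that, on a fixed local state $\ket\Psi = \pi^{\I}(x)\ket\Omega$, the vectors $T_n\ket\Psi$ stop depending on $n$. The key identity to prove is
\begin{equation*}
T_{n+1}\ket\Psi = T_n \ket\Psi \quad \text{for all } n \text{ large enough}.
\end{equation*}
Write $U_{\scrC_{n+1}}^X = u^X_{L_{n+1}} U^X_{\scrC_n}$ and $U_{\scrC'_{n+1}}^X = u^X_{L'_{n+1}} U^X_{\scrC'_n}$. The identity then reduces to
\begin{equation*}
u_{Q_{n+2}}^X u^X_{L_{n+1}} U^X_{\scrC_n}\ket\Psi = u^X_{L'_{n+1}} u^X_{Q_{n+1}} U^X_{\scrC_n}\ket\Psi .
\end{equation*}

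To prove this reduced identity I proceed in three steps.
\begin{enumerate}
\item By the Excitation Lemma (\Cref{lem:strings produce anyons}, \Cref{rem:single string produces single anyon}) applied to the chain $\scrC$, the vector $U^X_{\scrC_n}\ket\Psi$ is, for $n$ large, in the range of the projector $p^X$ at the puncture $e_n=\partial_\ii L_n$. It also satisfies the vacuum constraints on all faces and edges away from $\supp(\scrC[1\to n])\cup\supp(x)$. By item \ref{item:bridge untangled} and item \ref{item:bridge eventually disjoint} of \Cref{def:bridge}, this includes $E_{n+1}$ apart from its initial puncture. Hence $U^X_{\scrC_n}\ket\Psi$ lies in the range of $P_{E_{n+1}}$ and of $P^{\I?}$ for the composite links $Q_{n+2}\wedge L_{n+1}$ and $L'_{n+1}\wedge Q_{n+1}$.
\item On the left-hand side, use the Concatenation Lemma (\Cref{lem:concatenation lemma}) to write $u_{Q_{n+2}}^X u^X_{L_{n+1}} P^{\I ?} = u^X_{Q_{n+2}\wedge L_{n+1}} P^{\I ?}$. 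On the right-hand side, the order of the gates is reversed, so I use \Cref{lem:variation on concatenation} on the $P^{X\I}$ component. The $P^{\I\I}$ component should instead be handled by the intertwining relations \eqref{eq:intertwining properties of u_L}, or by splitting according to the boundary condition at the initial puncture.
\item Apply the Isotopy Lemma (\Cref{lem:isotopy lemma}) with support $E_{n+1}$ to replace $u^X_{Q_{n+2}\wedge L_{n+1}}P_{E_{n+1}}$ by $u^X_{L'_{n+1}\wedge Q_{n+1}}P_{E_{n+1}}$.
\end{enumerate}
Uniform boundedness of the unitaries $T_n$ together with \Cref{lem:evt constant product} then gives the SOT limit $T$.

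For $T^*$ I run the same argument with the roles of $\scrC$ and $\scrC'$ exchanged, using the reversed bridge. This shows that $T_n^*$ is also eventually constant on local states, so $T$ is an isometry with $T^*$ an isometry, i.e.\ $T$ is unitary.

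The intertwining property follows the computation in \Cref{lem:Phi defined is intertwiner}. For $y\in\caA^{\loc}$ and $n$ large, $u^X_{Q_{n+1}}$ commutes with $y$ by item \ref{item:bridge eventually disjoint}. Therefore
\begin{equation*}
T_n \rho^X_{\scrC_n}(y) = (U^X_{\scrC'_n})^* u^X_{Q_{n+1}}\, y\, U^X_{\scrC_n} = \rho^X_{\scrC'_n}(y)\, T_n .
\end{equation*}
Passing to the limit with \Cref{lem:evt constant product} gives $T\bar\rho^X_\scrC(\pi^{\I}(y)) = \bar\rho^X_{\scrC'}(\pi^{\I}(y))T$, and this extends to $\caB$ by density and WOT-continuity.

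The statement $T\in\caR(\Lambda)$ holds because each $T_n$ lies in $\caA_\Lambda$ when the bridge is supported in $\Lambda$, and $T$ is a strong limit of $\pi^{\I}(T_n)$.

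I expect the main obstacle to be the bookkeeping in step 2 for the right-hand side, where the gate order is reversed relative to the standard Concatenation Lemma. One has to check which boundary sectors the vector $U^X_{\scrC_n}\ket\Psi$ occupies on the composite link, namely $X$ at the initial puncture and vacuum at the final one, so that exactly \Cref{lem:variation on concatenation} applies. One also has to verify that $P_{E_{n+1}}$ commutes past the gates as required, as was done at the end of the proof of \Cref{lem:strings produce anyons}.
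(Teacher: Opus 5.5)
Your overall route is the paper's (the paper compares $T_n$ with $T_{n-1}$ rather than $T_{n+1}$ with $T_n$). However, Step~2 on the right-hand side does not work as stated, and this is exactly the step you flagged.

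By item~2 of \Cref{def:bridge} the composite link is $L'_{n+1}\wedge Q_{n+1}$. In the notation of the Concatenation Lemma this means $L_2=L'_{n+1}$ and $L_1=Q_{n+1}$, so $u^X_{L'_{n+1}}u^X_{Q_{n+1}}=u^X_{L_2}u^X_{L_1}$ is in the \emph{standard} order, not the reversed one.

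Moreover, $\phi:=U^X_{\scrC_n}\ket\Psi$ carries its $X$ at $e_n=\partial_\ii L_n=\partial_\f Q_{n+1}=\partial_\f L_{n+1}$. This is the \emph{final} puncture of both $L'_{n+1}\wedge Q_{n+1}$ and $Q_{n+2}\wedge L_{n+1}$, and there is vacuum at their common initial puncture $e'_{n+1}$. Hence $\phi$ lies in the $P^{\I X}$ sector of both composites, and its $P^{X\I}$ and $P^{\I\I}$ components vanish.

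Your recipe applies \Cref{lem:variation on concatenation} to the $P^{X\I}$ component and the intertwining relations to the $P^{\I\I}$ component. It therefore acts only on zero components and never converts the one component that is actually present. Besides, \Cref{lem:variation on concatenation} is about the product $u_{L_1}u_{L_2}$, which does not occur here. Your closing description ``$X$ at the initial puncture and vacuum at the final one'', and ``$E_{n+1}$ apart from its initial puncture'' in Step~1, are the same orientation slip. They also contradict the correct $P^{\I ?}$ claim you make in Step~1.

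The repair is simpler than what you planned: use the standard Concatenation Lemma on both sides. Apply the ``moreover'' clause of the Excitation Lemma with $k=1$, $X_0=X$, $X_1=\I$ and $D'=E_{n+1}$, as the paper does; this gives $\phi=P^{(\I,X)}_{E_{n+1}}\phi$. This is where item~5 of \Cref{def:bridge} enters. It is needed because $E_{n+1}$ does overlap $\supp(L_n)$, so the vacuum property there is not a mere disjointness statement, as your Step~1 suggests. Then
\[
u^X_{Q_{n+2}}u^X_{L_{n+1}}\phi=u^X_{Q_{n+2}\wedge L_{n+1}}\phi=u^X_{L'_{n+1}\wedge Q_{n+1}}\phi=u^X_{L'_{n+1}}u^X_{Q_{n+1}}\phi,
\]
by concatenation, the Isotopy Lemma on $E_{n+1}$, and concatenation again.

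This is the paper's computation shifted by one index. The paper tags these two steps with \Cref{lem:variation on concatenation}, but with this gate order and this sector it is the standard Concatenation Lemma that is being used. The configuration you had in mind, gates $u_{L_1}u_{L_2}$ acting on $P^{X\I}$, does arise elsewhere. If you compare $T_n^*$ with $T_{n-1}^*$ directly, you meet $u_{L_n}u_{Q_{n+1}}$ and $u_{Q_n}u_{L'_n}$ acting on a vector with $X$ at $e'_n$. The same happens in the proof of \Cref{lem:Phi preserves braiding}.

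The rest of your proposal matches the paper: the intertwining argument, unitarity via the reversed bridge, and $T\in\caR(\Lambda)$.
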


\begin{proof}
	Throughout this proof we identify $\caA$ with its image $\pi^{\I}(\caA)$ and drop $\pi^{\I}$ from the notation. Since the object $X \in \Irr Z(\caC)$ is fixed, we also drop it from the notation where no ambiguity arises.

	Consider a local state $\ket{\Psi} := x \ket{\Omega}$ for some $x \in \caA^{\loc}$. 

    Let $n \ge \max\{N+2,n_0\}$ where $N$ is given by \cref{lem:strings produce anyons} depending on $x$, and $n_0$ granted by \cref{item:bridge eventually disjoint} of \Cref{def:bridge} for the support of $x$. Taking into account \cref{item:bridge untangled} of Definition \ref{def:bridge},
    we see that the hypothesis of \cref{lem:strings produce anyons} is satisfied for $E_n$ (applying the lemma with values $n-1$ and $k=1$ with $X_0=X$ and $X_1=\I$, where the region $D=L_{(n-1)+1}\subset E_n$).
    This gives    
    \begin{equation} \label{eq:first property}
        U_{\scrC_{n-1}} \ket{\Psi} = P_{E_n}^{(\I, X)} U_{\scrC_{n-1}} \ket{\Psi}.
    \end{equation}
	Then we compute,
	\begin{align*}
		T_n \ket{\Psi} &= U_{\scrC'_n}^* \, u_{Q_{n+1}} \, U_{\scrC_n} \, \ket{\Psi} 
        = U_{\scrC'_n}^* \, u_{Q_{n+1}} \, u_{L_n} \, U_{\scrC_{n-1}} \, \ket{\Psi} \\
		&\stackrel{\eqref{eq:first property}}{=}  \, U_{\scrC'_n}^* \, u_{Q_{n+1}} \, u_{L_n} \, P_{E_n}^{(\I, X)} \, U_{\scrC_{n-1}} \, \ket{\Psi} \, 
        \stackrel{Lem. \ref{lem:variation on concatenation}}{=} \, U_{\scrC'_n}^* \, u_{Q_{n+1} \wedge L_n} \, P_{E_n}^{(\I, X)} \, U_{\scrC_{n-1}} \, \ket{\Psi} \\
        & \stackrel{Lem. \ref{lem:isotopy lemma}}{=} \, U_{\scrC'_n}^* \, u_{L'_n \wedge Q_n} \, P_{E_n}^{(\I, X)} \, U_{\scrC_{n-1}} \, \ket{\Psi} \, 
        \stackrel{Lem. \ref{lem:variation on concatenation}}{=} \, U_{\scrC'_n}^* \, u_{L'_n} \, u_{Q_n} \, P_{E_n}^{(\I, X)} \, U_{\scrC_{n-1}} \, \ket{\Psi}  \\
        &= U_{\scrC'_{n-1}}^* \, u_{Q_n} \, U_{\scrC_{n-1}} \, \ket{\Psi} = T_{n-1} \ket{\Psi}.
	\end{align*}
	Since this holds for all $n \geq \max\{N+2, n_0\}$, we conclude that $(T_n)_{n \geq N}$ is eventually constant on local states.

	Let us now show that the limit $T = \lim_{n} T_n$ is an intertwiner from $\bar \rho_{\scrC}^X$ to $\bar \rho^X_{\scrC'}$. Fix $y \in \caA^{\loc}$ and note that for all $n \in \N$ large enough so that $\supp Q_{n+1} \cap \supp(y) = \emptyset$ we have
	\begin{equation*}
		T_n \, \big( U_{\scrC_n}^* y U_{\scrC_n} \big) = U_{\scrC'_n} u_{Q_{n+1}} \, y U_{\scrC_n} = U_{\scrC'_n} y \, u_{Q_{n+1}} U_{\scrC_n} = \big( U_{\scrC'_n} y U_{\scrC'_n}^* \big) \, T_n.
	\end{equation*}
	Since $( T_n )$, $( U_{\scrC_n}^* y U_{\scrC_n} )$, and $( U_{\scrC'_n}^* y U_{\scrC'_n} )$ are sequences which are eventually constant on local states, converging to $T, \bar \rho^X_{\scrC}(y)$, and $\bar \rho^X_{\scrC'}(y)$ respectively, we conclude from \Cref{lem:evt constant product} that
	$$ T \, \bar \rho^X_{\scrC}(y) \ket{\Psi} = \bar \rho^X_{\scrC'}(y) \, T \ket{\Psi} $$
	for all $y \in \caA^{\loc}$ and any local state $\ket{\Psi} \in \caA^{\loc} \ket{\Omega}$. We then conclude that $T \in \DHR_f( \bar \rho^X_{\scrC} \rightarrow \bar \rho^X_{\scrC'} )$ by density.

	To see that $T$ is unitary, we first note that by changing the orientation of the links $\{Q_n\}$ we obtain a bridge from $\scrC'$ to $\scrC$. Then it follows in the same way as above that the sequence $( T_n^* )_{n \geq N}$ is eventually constant on local states and therefore converges strongly to $T^*$. Using unitarity of $T_n$ we find
	$$ T^* T \ket{\Psi} = \lim_n \, T_n^* T_n \ket\Psi = \ket\Psi,$$
	and similarly $T T^* \ket{\Psi} = \ket{\Psi}$ for any local state $\ket{\Psi} \in \caA^{\loc} \ket{\Omega}$ by \Cref{lem:evt constant product}. By density we conclude that $T$ is unitary.

	Finally, if $\Lambda \subset \R^2$ contains the supports of all links $\{L_n, L'_{n}\}_{n \in \N}$ and all links $\{Q_n\}_{n \geq N}$, then each $T_n$ is supported in $\Lambda$ and it follows that $T \in \caR(\Lambda)$.
\end{proof}

\begin{remark} \label[remark]{rem:transporter with punctures}
    The endomorphisms $\bar \rho_{\scrC}^Y$ are SOT-continuous on $\pi^{\1}(\caA_\Lambda)$ for all chains $\scrC$, cones $\Lambda$, and $Y\in \Irr\ZC$, so we observe that in the case of the above lemma where the SOT-convergence $T_n\to T$ takes place in $\pi^\1(\caA_\Lambda)''$ for an allowed cone $\Lambda$, then  
    \begin{equation}\label{eq:rho of T}
        \bar \rho_{\scrC}^Y(T) = \lim_n \bar\rho_{\scrC}^Y(\pi^\1(T_n)) = \lim_n \pi^\1\Big[(U_{\scrC_{n+1}}^Y)^* T_n U_{\scrC_{n+1}}^Y\Big].
    \end{equation}
    The sequence $\big( \, U_{\scrC_{n+1}}^Y)^* T_n U_{\scrC_{n+1}}^Y \, \big)_{n \geq N}$ is eventually constant on local states.
    A similar statement holds for $T^*$.
    We omit the proof, with the remark that it is identical to the above with the additional presence of $Y$ excitations.
\end{remark}

\subsection{Isomorphism of \texorpdfstring{$R$}{R}-symbols} \label{subsec:braiding}

For this subsection we fix a \emph{fiducial chain} $\scrC = (L_n)$ as follows. Identify the set of faces $\caF$ of $\Z^2$ with the dual lattice $(\Z^2)^*$ so that the face $f$ whose bottom left vertex is $(x, y) \in \Z^2$ is given the coordinate $(x, y) \in (\Z^2)^*$. For each $n \in \N$ we take $L_n = ( (-6n-2+i, 0) )_{i = 0}^7$, directed to the right. See figure \ref{fig:fiducial and right chain}.
With the fiducial chain $\scrC$ fixed, we write $\bar \rho^X := \bar \rho^X_{\scrC}$ for any $X \in \Irr Z(\caC)$, and let $\Phi_{XY}^Z : Z(\caC)(X \otimes Y \to \Z) \rightarrow \DHR_f( \bar \rho^X \times \bar \rho^Y \to \bar \rho^Z )$ be the maps constructed in Section \ref{subsec:isomorphism of fusion spaces} for the fiducial chain $\scrC$.

\begin{figure}
    \begin{center}
        \includegraphics[width=14cm]{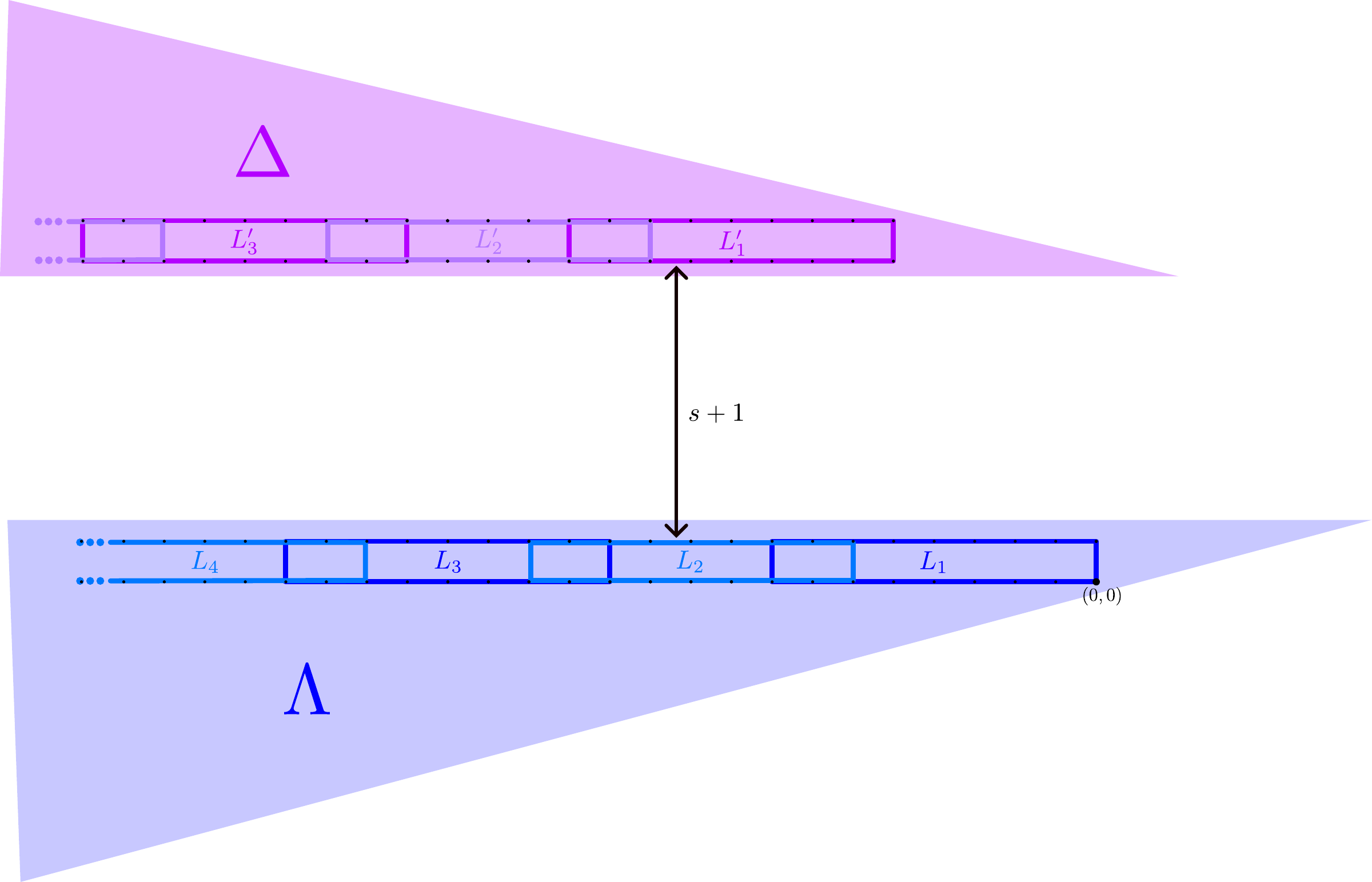}
    \end{center}
    \caption{The fiducial chain $\scrC$ in blue, separated a distance $s+1$ from the right chain $\scrC'$ in purple above it. Also shown are cones $\Lambda$ and $\Delta$ satisfying the assumptions of the braiding construction in Section \ref{ssubsec:tensor product and braiding}.}
    \label{fig:fiducial and right chain}
\end{figure}

In order to compute $R$-symbols for $\DHR_f$, we must compute the braidings $b_{Y, X} := b_{\bar \rho^Y, \bar \rho^X} : \bar \rho^Y \times \bar \rho^X \rightarrow \bar \rho^X \times \bar \rho^Y$ for any $X, Y \in \Irr Z(\caC)$. In order to do this, we fix a second chain $\scrC' = (L'_n)$ with links $L'_n$ obtained from $L_n$ by translating $s+2$ lattice spacings up and five lattice spacings to the left. This guarantees that the endomorphisms $\bar \rho_{\scrC}^X$ and $\bar \rho_{\scrC'}^X$ are localized in allowed cones $\Lambda$ and $\Delta$ respectively, satisfying the assumptions of the construction of the braiding in Section \ref{ssubsec:tensor product and braiding}, see Figure \ref{fig:fiducial and right chain}. 

To compute the braiding $b_{Y, X}$ we require a unitary transporter $T^X : \bar \rho^X \rightarrow \bar \rho^X_{\scrC'}$. We will take $T^X$ to be the transporter constructed in \Cref{prop:explicit transporters} using a bridge $( \{ Q_n \}_{n \in \N}, \{ E_n \}_{n \in N} )$ between $\scrC$ and $\scrC'$ with the links $Q_n$ and regions $E_n$ specified for each $n \in \N$ as follows:
\begin{center}
	   \includegraphics[width = 0.5\textwidth]{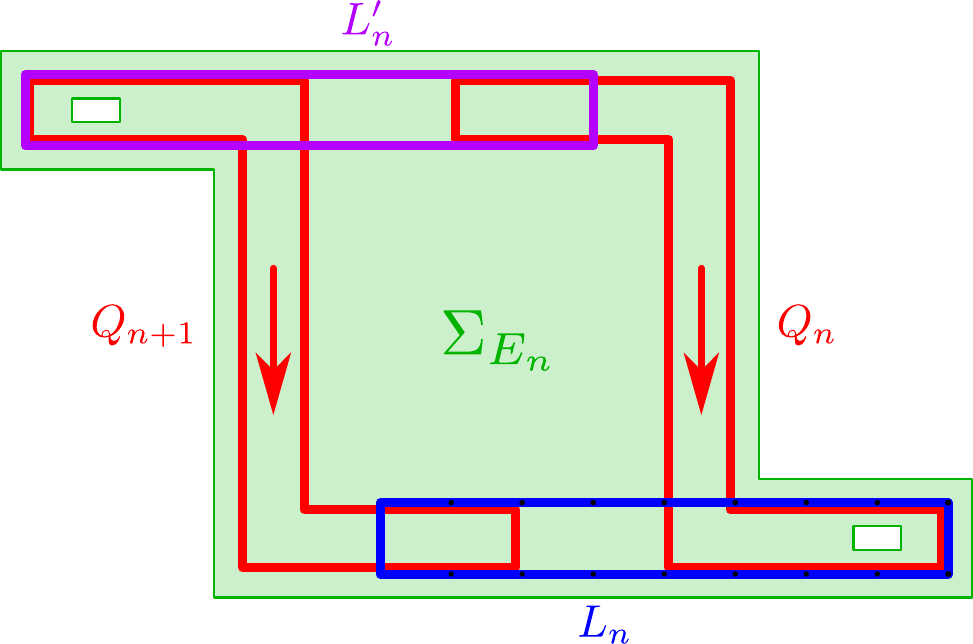}.
\end{center}
One easily checks that this indeed provides a bridge from $\scrC$ to $\scrC'$.

We then have
\begin{equation}
    b_{Y,X} = (T^X)^* \,  \bar \rho^Y( T^X ).
\end{equation}

Recall from appendix \ref{app:equivalence} that the $R$-symbols of $\DHR_f$ and $Z(\caC)$ are simply the actions of $- \circ b_{X, Y}$ and $- \circ \beta_{X, Y}$ on fusion spaces. The following lemma shows that the maps $\{\Phi_{XY}^Z\}$ intertwine the $R$-symbols in the sense described in Appendix \ref{app:equivalence}.

\begin{lemma}\label[lemma]{lem:Phi preserves braiding}
    For any $X,Y,W \in \Irr \ZC$, and $\alpha : X \otimes Y \to W  $,
    \begin{equation*}
        \Phi^W_{XY}[\alpha] \circ b_{Y,X}  = \Phi^W_{YX}\big[\alpha \circ \beta_{Y,X} \big]. 
    \end{equation*}
\end{lemma}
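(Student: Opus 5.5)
We verify the identity on the dense subspace $\Hloc$ of local states, dropping $\pi^{\I}$ from the notation, and reduce both sides to a single Drinfeld insertion on a finite region. Fix $\ket\Psi = x\ket\Omega$ with $x \in \caA^{\loc}$. By \Cref{lem:evt constant product}, \Cref{rem:transporter with punctures} and \Cref{rem:fusion channels with punctures}, the left hand side $\Phi^W_{XY}[\alpha]\,(T^X)^*\,\bar\rho^Y(T^X)\ket\Psi$ is the eventually constant limit of
\[
(U_n^W)^* \Dr_{n+1}[\alpha]\, U_n^Y U_{n+1}^X \, T_{n'}^*\, (U^Y_{\scrC_{n''}})^* T_{n'} U^Y_{\scrC_{n''}} \ket\Psi ,
\]
with $T_{n'} = (U^X_{\scrC'_{n'}})^* u^X_{Q_{n'+1}} U^X_{\scrC_{n'}}$. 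The indices must be chosen so that everything telescopes. We take $n'' = n$ and $n' = n+1$, so that the $X$-string sits one link further out than the $Y$-string, as in $\Phi_n$. Substituting, the outer $U^X_{\scrC_{n+1}}$ cancels. What remains is
\[
(U_n^W)^* \Dr_{n+1}[\alpha]\, U_n^Y\, (u^X_{Q_{n+2}})^* U^X_{\scrC'_{n+1}} (U_n^Y)^* (U^X_{\scrC'_{n+1}})^* u^X_{Q_{n+2}} U^X_{\scrC_{n+1}} U_n^Y \ket\Psi .
\]

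Next we use that $\scrC'_{n+1}$ is disjoint from the support of $U_n^Y$ (by the $s+2$ vertical offset), so $U^X_{\scrC'_{n+1}}$ commutes with $U^Y_n$ and the $\scrC'$ strings cancel. This leaves
\[
(U_n^W)^*\Dr_{n+1}[\alpha]\,U_n^Y (u^X_{Q_{n+2}})^* (U_n^Y)^* u^X_{Q_{n+2}} U^X_{n+1} U^Y_n\ket\Psi .
\]
Physically, the $X$ string is detoured through the bridge $Q_{n+2}$, passes above the $Y$ string, and comes back. The $Y$ string is applied after the $X$ string on one side and before it on the other, which is precisely the braiding.

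The plan is then to localize as in the proof of \Cref{lem:Phi defined is intertwiner}. By the \Cref{lem:strings produce anyons}, $U^X_{n+1}U^Y_n\ket\Psi$ lies in the range of $P_D^{(\cdots)}$ for a suitable finite region $D$ containing $L_n, L_{n+1}, Q_{n+2}, L'_{n+1}$ together with the bridge regions, with an anchor $\anchor_D$ extending the standard link anchors. We then use \Cref{lem:variation on concatenation}, \Cref{lem:isotopy lemma} and the concatenation lemma to rewrite each of $u^X_{Q_{n+2}}$, the $Y$ gates and $\Dr_{n+1}[\alpha]$ as Drinfeld insertions $\Dr_D^{\anchorino_D}[\cdot]$ via the \Cref{lem:inclusion lemma}. \Cref{lem:Drinfeld insertion multiplicativity} then multiplies these into $\Dr_D^{\anchorino_D}$ of a single string diagram.

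The main obstacle is this step. Because the transported $X$ strand is routed around the $Y$ excitation, the anchor-based translation turns the composite into a diagram in which the $X$ strand crosses the $Y$ strand. We must show this crossing is exactly the half-braiding $\beta_{Y,X}=\sigma_X$ of $Y$, as in \eqref{eq:Drinfeld center braiding}, with the correct over/under choice dictated by clockwise versus counterclockwise placement of $\Delta$. This requires carefully choosing $\anchor_D$ compatible with both the fiducial chain's right-strip anchors and the bridge. Once the diagram is identified as $(\alpha\circ\beta_{Y,X})$ inserted with $X$ and $Y$ endpoints swapped, the result reads $(U_n^W)^*\Dr_{n+1}[\alpha\circ\beta_{Y,X}]U_n^XU_{n+1}^Y\ket\Psi$ after the reverse application of the Inclusion and concatenation lemmas. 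This is by definition the $n$-th approximant $\Phi_n[\alpha\circ\beta_{Y,X}]\ket\Psi$ for $\Phi^W_{YX}$, which concludes the proof by density.
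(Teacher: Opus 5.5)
There is a genuine gap in the very first step, the finite approximant you use for $\bar\rho^Y(T^X)$, and it makes the rest of the argument fail.

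\Cref{rem:transporter with punctures} only gives $\bar\rho^Y(T^X)=\lim_m (U^Y_{\scrC_{m+1}})^*T_mU^Y_{\scrC_{m+1}}$. There the conjugating $Y$-string reaches one link \emph{beyond} the transporter, so every $X$-gate of $T_m$ acts at punctures in the interior of the $Y$-string, which are in the vacuum. You use $(U^Y_{\scrC_n})^*T_{n+1}U^Y_{\scrC_n}$ instead: the $Y$-string stops at $e_n$, while $T_{n+1}$ reaches $e_{n+1}$ and $Q_{n+2}$. \Cref{lem:evt constant product} does not license this. It lets you shift whole eventually-constant sequences against each other, but $(U^Y_{\scrC_m})_m$ is not itself eventually constant on local states, so the length of the conjugating string cannot be decoupled from the index of $T$.

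The choice is in fact wrong, as your own formula shows:
\begin{itemize}
\item $Q_{n+2}$ ends at $e_{n+1}$ and is disjoint from $\supp(\scrC_n)$, so $u^X_{Q_{n+2}}$ commutes with $U^Y_n$. Your expression therefore collapses to $(U^W_n)^*\Dr_{n+1}[\alpha]\,U^X_{n+1}U^Y_n\ket\Psi$. The bridge has dropped out entirely, so nothing is left that knows $\Delta$ lies clockwise of $\Lambda$.
\item In $U^X_{n+1}U^Y_n$ the longer string is applied \emph{last}. By the Excitation Lemma applied in the correct order, $U^X_{n-1}U^Y_n\ket\Psi$ carries $p^Y$ at $e_n$ and $p^X$ at $e_{n-1}$. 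For $X\neq\I$ and $Y\notin\{\I,X\}$, this vector is orthogonal to $P^{??}_{L_n}$, and the subsequent one is orthogonal to $P^{??}_{L_{n+1}}$. So $u^X_{L_n}$ and $u^X_{L_{n+1}}$ act as the identity by \eqref{eq:unitary gate defined}.
\item Hence $e_{n+1}$ stays in $p^{\I}$, and $\Dr_{n+1}[\alpha]$, which vanishes off $H^{(X,Y)}_{L_{n+1}}$, kills the vector. Your sequence is eventually $0$ on every local state, whereas $\Phi^W_{XY}[\alpha]\,b_{Y,X}\neq0$ for $\alpha\neq0$.
\end{itemize}
The picture of an $X$-string detouring over the $Y$-string is not what your operator does, and the later localisation step has nothing to act on.

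To repair this you must keep the $Y$-string longer than the reach of $T$. The paper does this most cleanly after passing to adjoints. It proves $b_{Y,X}^*\Phi^{XY}_W[\gamma]=\Phi^{YX}_W[\beta^\dagger_{Y,X}\circ\gamma]$ with $\gamma=\alpha^\dagger$, using the approximants
\begin{itemize}
\item $(U^Y_{\scrC_n})^*T_{n-1}^*U^Y_{\scrC_n}$ for $\bar\rho^Y(T^X)^*$,
\item $T_n$ for $T^X$,
\item $(U^X_{\scrC_n})^*(U^Y_{\scrC_{n-1}})^*\Dr_{L_n}[\gamma]U^W_{\scrC_{n-1}}$ for $\Phi^{XY}_W[\gamma]$.
\end{itemize}
After the cancellations, $(u^X_{Q_n})^*(u^X_{L'_n})^*\,u^Y_{L_n}\,u^X_{Q_{n+1}}\Dr_{L_n}[\gamma]$ survives, acting on $\mathrm{Ran}\,P^{(\I,W)}_{E_n}$. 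The extra gate $u^Y_{L_n}$ is present precisely because the $Y$-string is one link longer. It moves $Y$ from $e_{n-1}$ to $e_n$, while $X$ is routed $e_n\to e'_n\to e'_{n-1}\to e_{n-1}$ through $Q_{n+1}$, $L'_n$ and $Q_n$.

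This local identity is proved on $E_n$ with an extra puncture at $\partial_\ii L_n$:
\begin{itemize}
\item the Inclusion Lemma with one anchor handles the first three operators;
\item \Cref{lem:variation on concatenation} turns $(u^X_{Q_n})^*(u^X_{L'_n})^*$ into $u^X_{L'_n\wedge Q_n}$ on the relevant range, and a second anchor extending that of $L'_n\wedge Q_n$ computes its action;
\item an isotopy using the cloaking property then produces the crossing $\beta^\dagger_{Y,X}$.
\end{itemize}
Your proposal leaves this over/under identification as an open ``main obstacle''. In the paper it is settled by exactly this two-anchor argument.
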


\begin{proof}
    From \Cref{lem:Phi defined is intertwiner} we see that the claim is equivalent to
    \begin{equation*}
        b_{Y,X}^* \circ \Phi_W^{XY}[\gamma] = \Phi_W^{YX}\big[\beta_{Y,X}^\dagger \circ \gamma \big],
    \end{equation*}
    where $\gamma = \alpha^\dagger : W \to X \otimes Y$.
    It suffices to verify this identity on the dense subspace $\pi^{\1}(\caA^{\loc})\ket{\Omega}$ of local states. We shall henceforth omit $\pi^{\I}$ from notation and consider a local state $\ket\Psi = x \ket\Omega$ with $x\in \caA^{\loc}$.
    
    From \Cref{lem:strings produce anyons} and \Cref{def:bridge} we can find an $N \in \N$ large enough so that the support of $x$ is disjoint from $E_n$ for all $n \ge N$ and
    \begin{equation} \label{eq:W is there}
        U_{\scrC_{n-1}}^W \ket{\Psi} = P_{E_{n}}^{(\I, W)} U_{\scrC_{n-1}}^W \ket{\Psi}
    \end{equation}
    for all $n \geq N$.
    We aim to show
    \begin{equation} \label{eq:crucial braiding equality}
        (u_{Q_n}^X)^*   (u_{L'_n}^X)^* u_{L_n}^Y u_{Q_{n+1}}^X  \Dr_{L_n}\big[ \gamma \big] P_{E_{n}}^{(\I, W)}
        = \Dr_{L_n}\big[ \beta_{Y,X}^\dagger \circ \gamma \big] P_{E_{n}}^{(\I, W)}
    \end{equation}
    for all such $n$. Fix $n \geq N$, and let $D_n$ be the region obtained from $E_{n}$ by placing an additional puncture at $\partial_{\ii} L_n$, so $\Sigma_{D_n}$ becomes a disk with three punctures. Write $K_n := Q_{n+1} \wedge L_n$ and $K'_n := L'_n \wedge Q_n$ so the region $E_n$ supports a simple isotopy between $K_n$ and $K'_n$. As in the discussion below \Cref{def:simple isotopy}, we may take $\anchor_{E_n} = \anchor_{K_n}$.
    Note that we have $P_{D_n}^{(\I,\I, W)} = P_{E_n}^{(\I, W)}$ where we use the ordering $( \partial_{\ii} L'_n, \partial_{\ii} L_n, \partial_{\f} L_n )$ of the three punctures of $D_n$.
    
    Recall from Section \ref{subsec:Drinfeld insertions} that the range of $P_{D_n}^{(\I, \I, W)}$ is spanned by vectors of the form
    \begin{align}
    \label{eq:paramatrising under Psi}
        \psi &=  \caD^{-3} \times \Psi_{D_n}^{\anchorino_{1}} \big( \al \otimes w_0 \otimes w^{\I} \otimes w^{\I} \otimes w^W \big) \\
    \notag
        &=  \big( d_{X_0} d_{W} \big)^{1/2} \,\sigma_{D_n}^{-1} \left( \adjincludegraphics[valign=c, height = 2.0cm]{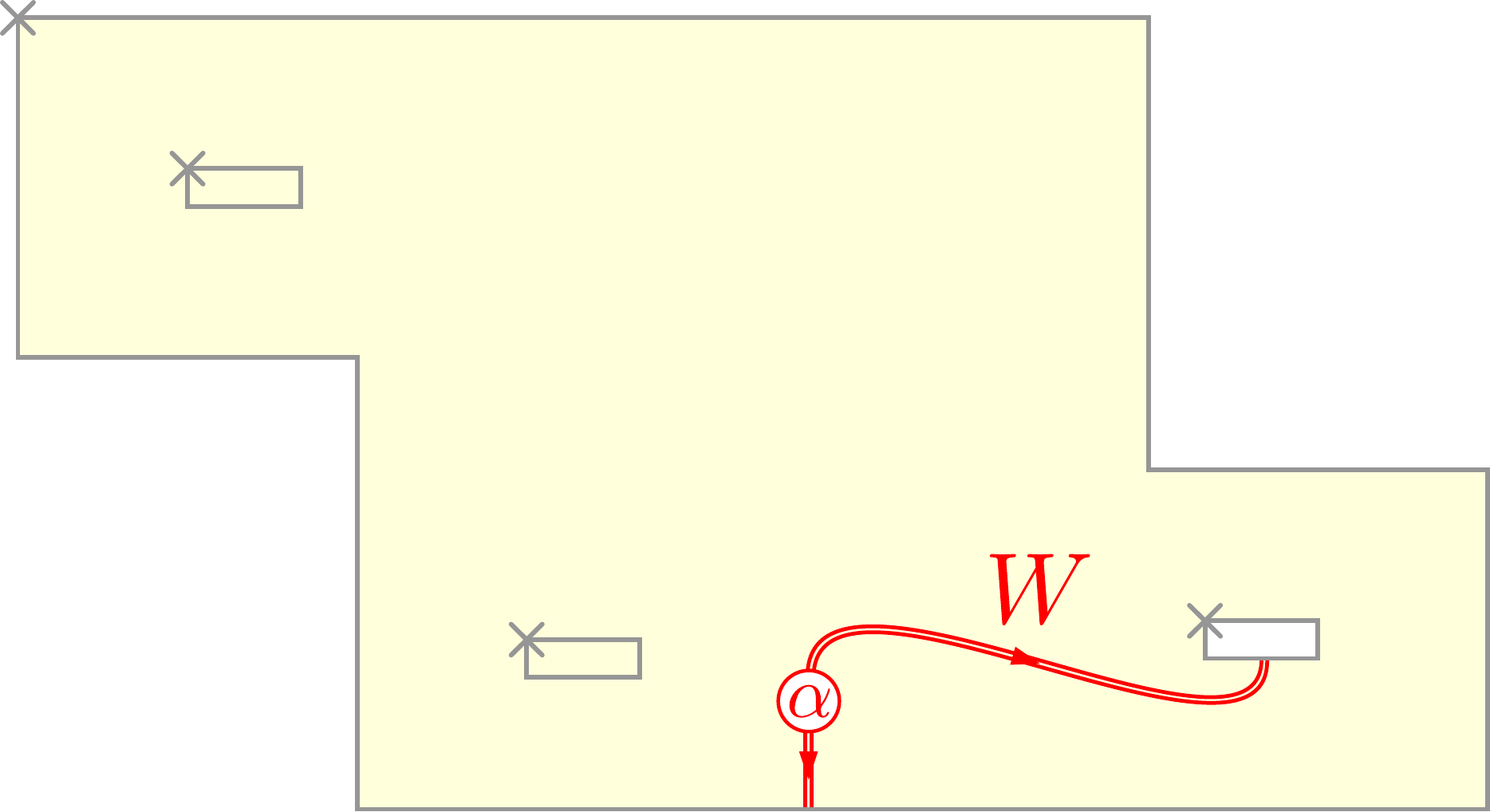} \right),
    \end{align}
    for arbitrary boundary conditions $w_0 : \bar W \rightarrow \chi^{\otimes \caS^{\anchorino_1}_0}$.
    Here, the anchor $\anchor_1$ is as shown in Figure \ref{fig:braiding anchors} and we use the graphical representation introduced in Eq. \eqref{eq:star to bar} for the unitary $\zeta_W : W^* \rightarrow \bar W$. We adopt the convention that we do not indicate vacuum lines and that the vacuum boundary condition $w^{\I}$ is imposed on boundary components with no (= vacuum) attaching strands (\cf \Cref{conv:Drinfeld strands attaching to boundary components}). We also fill in the punctures at such boundary components to emphasize that they are cloaked and any strand of a string diagram can be pulled across them by isotopy without changing without changing the vector being represented. We also drop the specification of the outer boundary condition $w_0$ from the figures as it will be fixed throughout the upcoming computation.

    Further dropping the notation for the isomorphism $\sigma_{D_n}$, we compute the action of the operator $(u_{Q_n}^X)^*   (u_{L'_n}^X)^* u_{L_n}^Y u_{Q_{n+1}}^X  \Dr_{L_n}\big[ \gamma \big]$ on a vector $\psi$ of this form. First, using \cref{lem:inclusion lemma} for $L_n \subset D_n$ with anchors $\anchor_{L_n}$ and $\anchor_1$ respectively, we get
    \begin{align*}
        \psi &\xmapsto{\Dr_n[\gamma]} \, \big( d_{X_0} d_{X} d_Y \big)^{1/2} \, \adjincludegraphics[valign=c, height = 2.0cm]{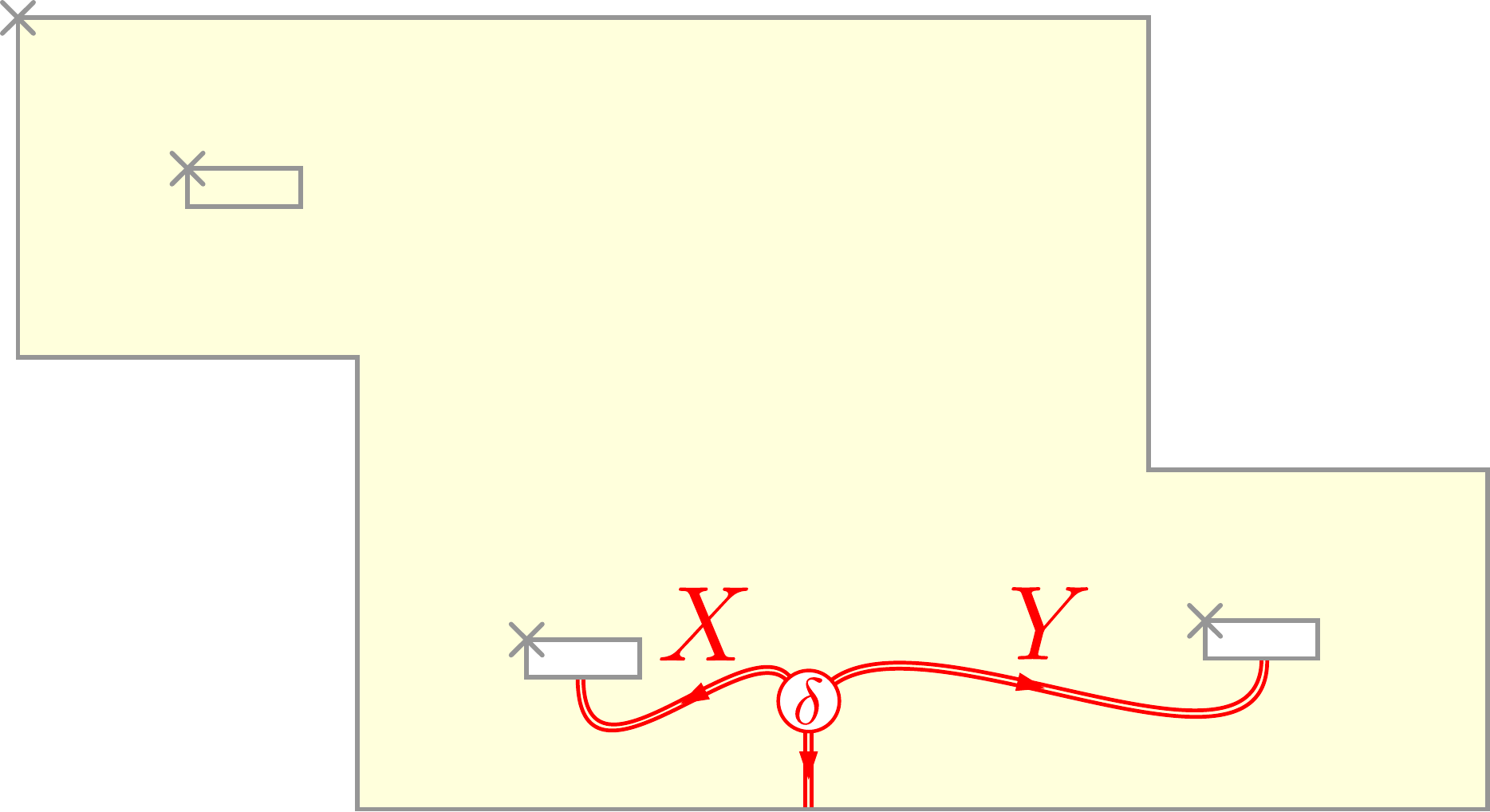} \\
        \intertext{Applying \cref{lem:inclusion lemma} to $Q_{n+1} \subset D_n$ using anchors $\anchor_{Q_{n+1}}$, and $\anchor_{1}$ respectively, now yields}
        &\xmapsto{u_{Q_{n+1}}^X} \, \big( d_{X_0} d_{X} d_Y \big)^{1/2} \, \adjincludegraphics[valign=c, height = 2.0cm]{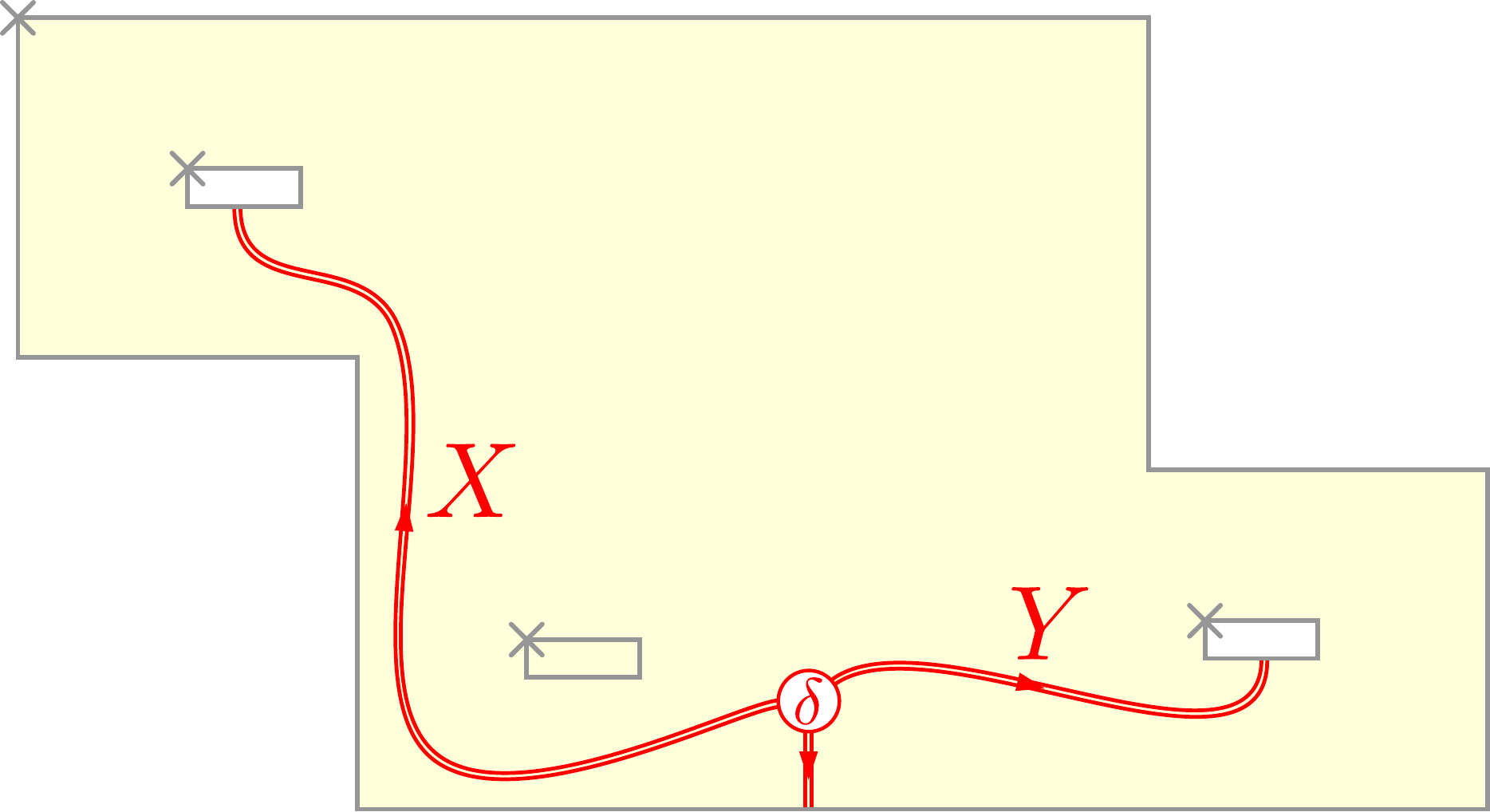}
        \intertext{A similar application of \cref{lem:inclusion lemma} shows}
        & \xmapsto{u_{L_n}^Y} \, \big( d_{X_0} d_{X} d_Y \big)^{1/2} \, \adjincludegraphics[valign=c, height = 2.0cm]{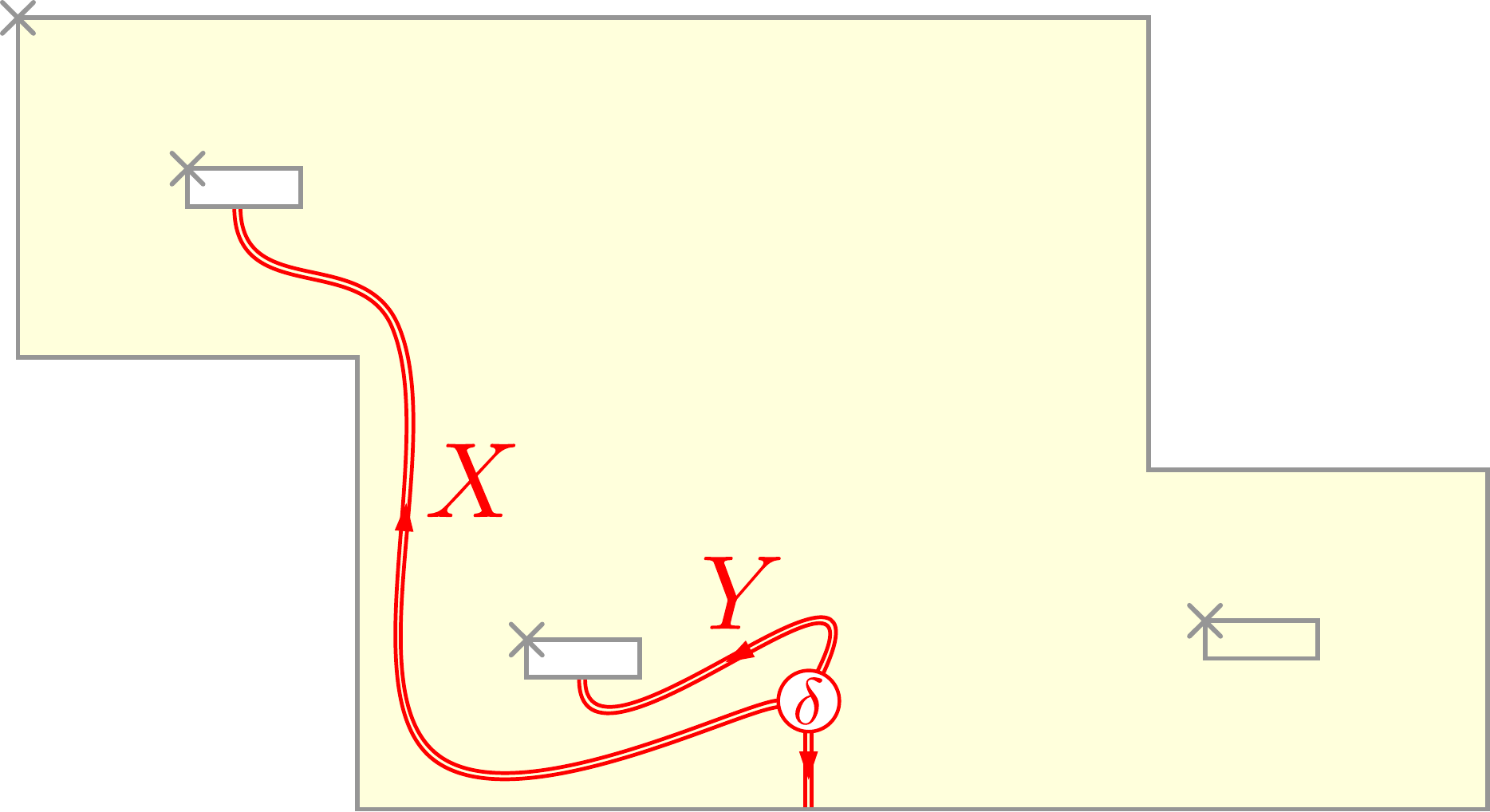}
        \intertext{Note that the resulting vector is in the range of $P_{K'_n}^{X\I}$. We can therefore apply \Cref{lem:variation on concatenation} and self-adjointness of $u_{L'_n}^X$ and $u_{Q_n}$ to see that the action of $(u_{Q_n}^X)^*   (u_{L'_n}^X)^*$ on this vector is the same as the action of $u_{K'_n}^X$.
        In order to apply \cref{lem:inclusion lemma} to understand the action of $u^X_{K'_n}$, we consider the anchor $\anchor_2$ shown in Figure \ref{fig:braiding anchors} which extends $\anchor_{K'_n}$. The corresponding $\Dr^{\anchorino_2}_{D_n}$-action is defined in terms of $\Psi^{\anchorino_2}_{D_n}$ and it is clear how the given state is parametrised under $\Psi^{\anchorino_2}_{D_n}$ (in the sense of Eq. \eqref{eq:paramatrising under Psi}), so we obtain}
        & \xmapsto{(u_{Q_n}^X)^*   (u_{L'_n}^X)^*} \, \big( d_{X_0} d_{X} d_Y \big)^{1/2} \, \adjincludegraphics[valign=c, height = 2.0cm]{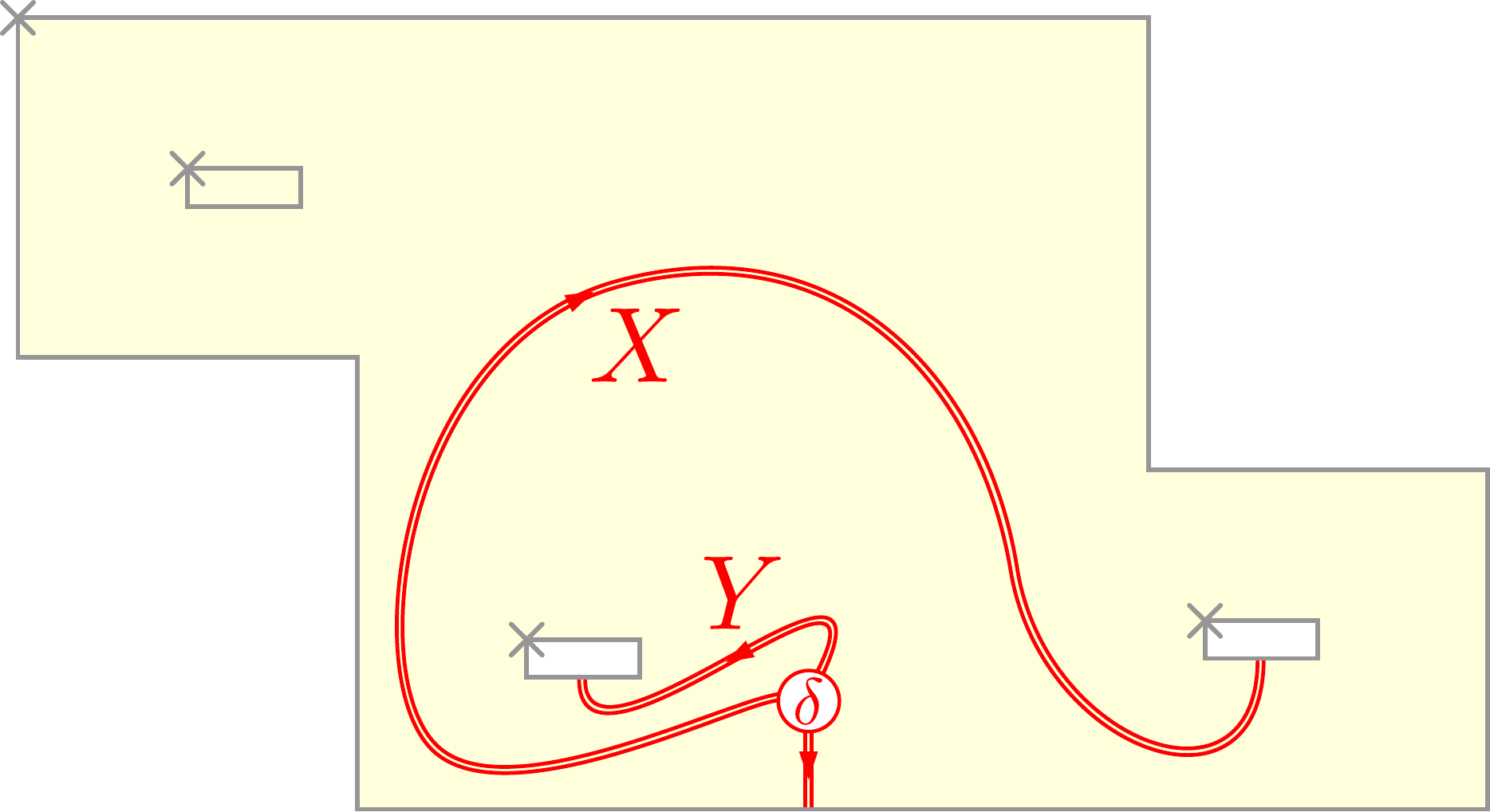}
        \intertext{Finally, using isotopy and Eq. \eqref{eq:normalization and cloaking properties}, and recalling \Cref{conv:Drinfeld strands attaching to boundary components}, we note that that the $X$-strand can be pulled through the puncture at $\partial_{\ii} L_n$, ending up below the $Y$-strand as follows:}
        &= \, \big( d_{X_0} d_{X} d_Y \big)^{1/2} \, \adjincludegraphics[valign=c, height = 2.0cm]{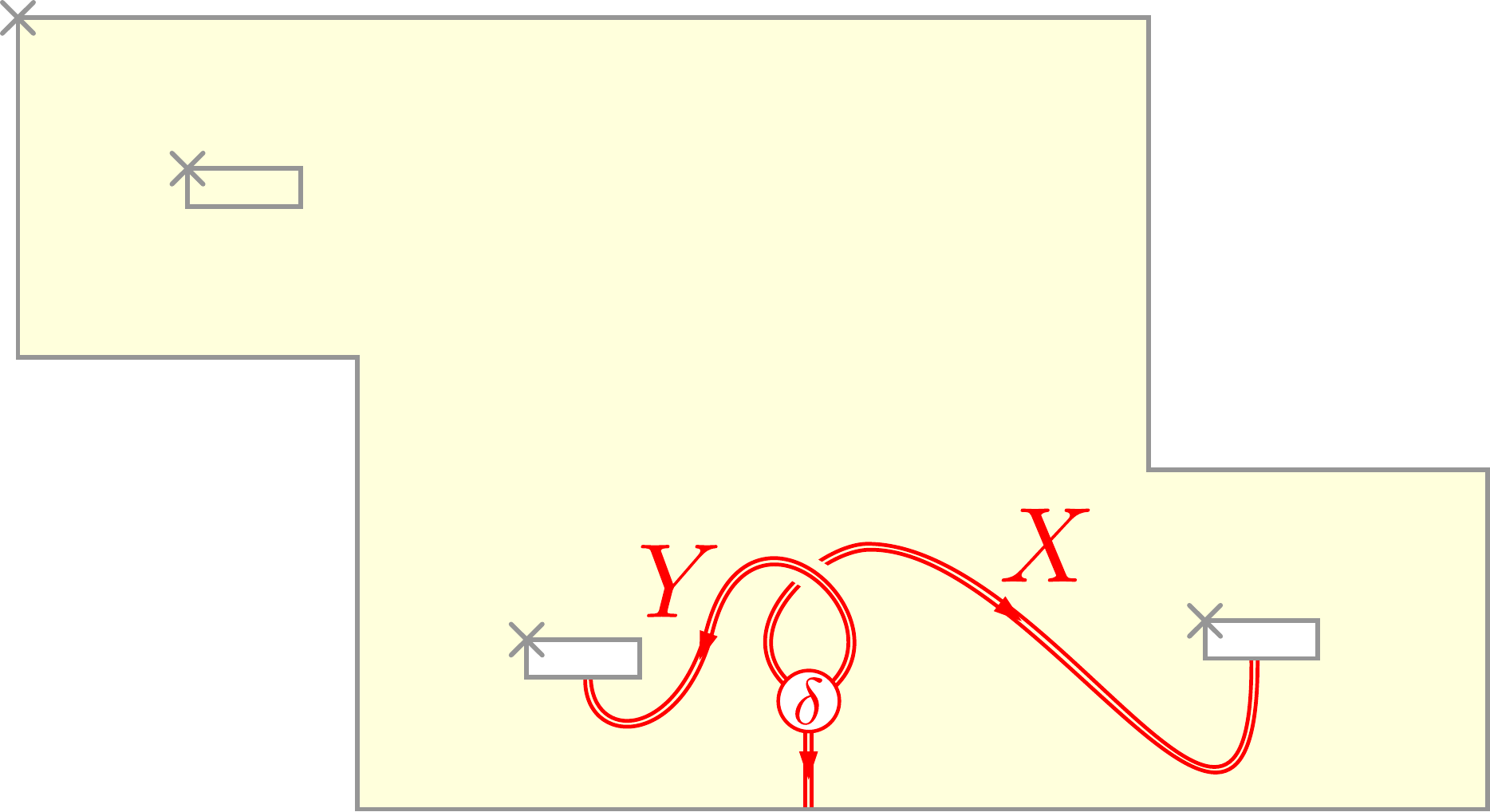} \\
        &= \Dr_{L_n} \big[ \beta_{Y, X}^{\dag} \circ \gamma \big] \, \psi,
    \end{align*}
    where we again used \cref{lem:inclusion lemma} in the last step. Since $\psi \in \Ran P_{D_n}^{(\I, \I, W)} = \Ran P_{E_n}^{(\I, W)}$ was arbitrary, this proves \eqref{eq:crucial braiding equality}.

    \begin{figure}
         \begin{center}
            \includegraphics[width=14cm]{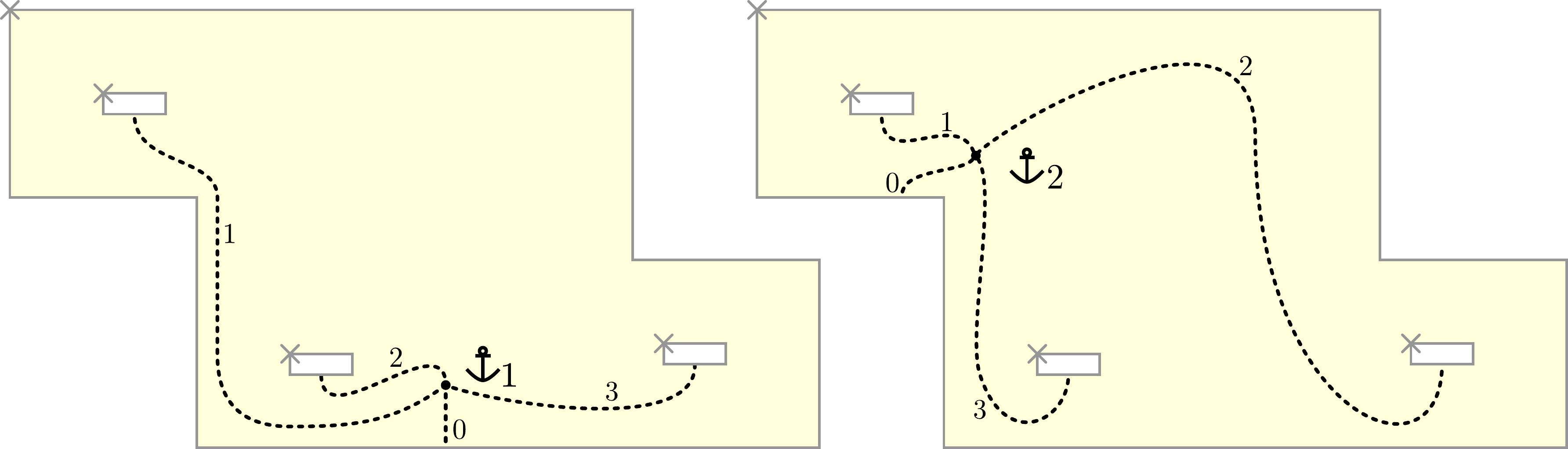}
        \end{center}
        \caption{Anchors used in the proof of \Cref{lem:Phi preserves braiding}.}
         \label{fig:braiding anchors}
    \end{figure}

    Let us now use the results \eqref{eq:W is there} and \eqref{eq:crucial braiding equality} to prove the claim of the lemma. From \Cref{rem:transporter with punctures} we have that the operator $\rho^Y( (T^X)^* )$ is the limit of a sequence which is eventually constant on local states. So, using \Cref{lem:evt constant product}, we compute
    \begin{align*}
        & \bar\rho^Y \big(  T^{X}  \big)^* T^X \, \Phi_W^{XY}[\gamma] \, \ket{\Psi} \\
        &= (U_{\scrC_n}^Y)^* (U_{\scrC_{n-1}}^X)^* (u_{Q_{n}}^X)^* U_{\scrC'_{n-1}}^X U_{\scrC_n}^Y 
        \times (U_{\scrC'_n}^X)^* u_{Q_{n+1}}^X  U_{\scrC_n}^X
        \times (U_{\scrC_n}^X)^* (U_{\scrC_{n-1}}^Y)^* \Dr_{L_n}\big[ \gamma \big] U_{\scrC_{n-1}}^W \ket{\Psi} \\
        &= (U_{\scrC_n}^Y)^* (U_{\scrC_{n-1}}^X)^* 
        (u_{Q_{n}}^X)^*   (u_{L'_n}^X)^* u_{n}^Y u_{Q_{n+1}}^X  
        \Dr_{L_n}\big[ \gamma \big] U_{\scrC_{n-1}}^W \ket{\Psi} \\
        &= (U_{\scrC_n}^Y)^* (U_{\scrC_{n-1}}^X)^* \Dr_{L_n}\big[ \beta_{Y,X}^\dagger \circ \gamma \big] U_{\scrC_{n-1}}^W \ket{\Psi}
        \\ &= \Phi_W^{YX}\big[\beta_{Y,X}^\dagger \circ \gamma \big] \ket{\Psi}.
    \end{align*}
    Since the local state $\ket{\Psi}$ was arbitrary, the claim now follows by density.
\end{proof}

\section{Proof of Theorem \ref{thm:main theorem}} \label{sec:proof of main theorem}

\begin{proof}
    Fix the fiducial chain $\scrC$ as in Section \ref{subsec:braiding} and write $\bar \rho^X = \bar \rho^X_{\scrC}$ for all $X \in \Irr Z(\caC)$.

    Recall from Section \ref{subsec:DHR} that under \Cref{ass:bounded spread Haag duality} the category $\SSS_f$ is a unitary braided tensor category. Theorem \ref{thm:bijection of simples} moreover implies that $\DHR_f$ is \emph{finite} semisimple, the set $\Irr \DHR_f := \{ \bar \rho^X \}_{X \in \Irr Z(\caC)}$ being a representative set of simple objects for $\DHR_f$. The map $X \mapsto \bar \rho^X$ provides a bijection from $\Irr Z(\caC)$ to $\Irr \DHR_f$. By Lemmas \ref{lem:Phi dagger lemma} and \ref{lem:Phi is monoidal} we have that the maps $\{ \Phi_{XY}^Z \}_{X, Y, Z \in \Irr Z(\caC)}$ are intertwiners of $F$ and $R$-symbols of $Z(\caC)$ and $\DHR_f$ with respect to these sets of simples. \Cref{prop:fusion rules in DHR} says that $\{\Phi_{XY}^Z\}$ preserve orthogonal direct sum decompositions. It follows from Proposition \ref{prop:isomophic symbols implies isomorphic categories} that $Z(\caC)$ and $\DHR_f$ are equivalent as unitary braided tensor categories.
\end{proof}

\newpage
\appendix

\section{\texorpdfstring{$F$}{F}- and \texorpdfstring{$R$}{R}-symbols determine braided tensor structure} \label{app:equivalence}

We give a detailed account of this well-known statement.

\subsection{Isomorphism of \texorpdfstring{$F$}{F}- and \texorpdfstring{$R$}{R}-symbols} \label{subsec:isomorphism of F- and R-symbols}

Let $(\caC, \otimes, \al, \I, l, r)$ be a semisimple monoidal category and let $\Irr \caC$ be a set of representative simple objects containing $\I$. The $F$-symbols of $\caC$ with respect to $\Irr \caC$ are the invertible linear maps
$$ F_{i j k}^l :  \bigoplus_{n \in \Irr \caC} \, \caC(i \otimes n \to l) \otimes \caC(i \otimes k \to n)      \rightarrow     \bigoplus_{m \in \Irr \caC} \, \caC(m \otimes k \to l) \otimes \caC(i \otimes j \to m) $$
defined for all $i, j, k, l \in \Irr \caC$ by the commuting diagram
\begin{equation*}
    \begin{tikzcd}
        \caC(i \otimes (j \otimes k) \to l) \, \arrow[rr, "\simeq"] \arrow[d, "-\, \circ \, \alpha_{i,j,k}"'] & & 
        \displaystyle\bigoplus_{n \in \Irr \caC} \caC(i \otimes n \to l) \otimes \caC(j \otimes k \to n) \arrow[d, "F_{ijk}^{l}"] \\
        \caC((i \otimes j) \otimes k \to l) \, \arrow[rr, "\simeq"]  & & 
        \displaystyle\bigoplus_{m \in \Irr \caC} \caC(m \otimes k \to l) \otimes \caC(i \otimes j \to m).
    \end{tikzcd}
\end{equation*}
If $\caC$ is equipped with a braiding $\beta$, then the $R$-symbols of $\caC$ with respect to $\Irr \caC$ are the invertible morphisms
$$ R_{ij}^k : \caC(j \otimes i \to k) \xrightarrow{- \circ \beta_{i, j}} \caC(i \otimes j \to k),$$
defined for all $i, j, k \in \Irr \caC$.

Suppose now $(\caC, \otimes, \psup{\caC}{\al}, \I, \psup{\caC}{l}, \psup{\caC}{r})$ and $(\caD, \otimes, \psup{\caD}{\al}, \I, \psup{\caD}{l}, \psup{\caD}{r})$ are semisimple monoidal categories with representative sets of simples $\Irr \caC$ and $\Irr \caD$, respectively.
In addition, suppose that there is a bijection $f : \Irr \caC \to \Irr \caD$ such that $f(\I) = \I$, which implies that $\caC$ and $\caD$ are equivalent as linear categories. We say $\caC$ and $\caD$ have the same fusion rules if
$$ \caC(i \otimes j \to k) \simeq \caD( f(i) \otimes f(j) \to f(k) ) $$
for all $i, j, k \in \Irr \caC$, in which case $f$ provides an isomorphism of fusion rules. 

Let $\psup{\caC}{F}$ and $\psup{\caD}{F}$ be the $F$-symbols of $\caC$ and $\caD$ with respect to the representative sets of simples $\Irr \caC$ and $\Irr \caD$. Writing $\und i = f(i)$ to lighten notation, we say a collection of morphisms
$$ \phi_{i, j}^k : \caC(i \otimes j \to k) \rightarrow \caD( \und{i} \otimes \und{j} \to \und{k} ), $$
intertwines the $F$-symbols if the following diagram commutes for all $i, j, k, l \in \Irr \caC$:
\begin{equation}\label{eq:isomorphic F-symbols}
\begin{tikzcd}
    \bigoplus_n \caC( i \otimes n \to l ) \otimes \caC(j \otimes k \to n) \arrow{rr}{\psup{\caC}{F}_{i j k}^l} \arrow{dd}{ \bigoplus_n \, \phi_{i, n}^l \otimes \phi_{j, k}^{n}} \, && \,
    \bigoplus_m \caC(m \otimes k \to l) \otimes \caC(i \otimes j \to m) \arrow{dd}{\bigoplus_m \, \phi_{m, k}^l \otimes \phi_{i, j}^m} \\ && \\
    \bigoplus_n \, \caD( \und{i} \otimes \und{n} \to  \und{l} ) \otimes \caD( \und{j} \otimes \und{k} \to \und{n} ) \arrow{rr}{\psup{\caD}{F}_{\und{i} \und{j} \und{k}}^{\und{l}}} \, && \,
    \bigoplus_m \, \caD( \und{m} \otimes \und{k} \to \und{l}) \otimes \caD( \und{i} \otimes \und{j} \to \und{m} )
\end{tikzcd}
\end{equation}
If each $\phi_{i, j}^k$ is moreover invertible, then we say $\{ \phi_{i, j}^k \}$ provides an isomorphism of $F$-symbols, which in particular implies that $f$ provides an isomorphism of fusion rules.

If $\caC$ and $\caD$ are also braided and have $R$-symbols $\psup{\caC}{R}$ and $\psup{\caD}{R}$ with respect to $\Irr \caC$ and $\Irr \caD$ respectively, then the maps $\{ \phi_{i, j}^k \}_{i, j, k \in \Irr \caC}$ are said to intertwine the $R$-symbols if the following diagram commutes for all $i, j, k \in \Irr \caC$:
\begin{equation}\label{eq:isomorphic R-symbols}
    \begin{tikzcd}
        \caC(j \otimes i \to k) \arrow[d, "\phi_{j, i}^k"'] \, \arrow[rr, "\psup{\caC}{R}_{i \, j}^k"] &&
        \, \caC(i \otimes j \to k) \arrow[d, "\phi_{i, j}^k"] \\
        \caD( \und j \otimes \und i \to \und k) \,\arrow[rr, "\psup{\caD}{R}_{\und{i} \, \und{j}}^{\und{k}}"] \, &&
        \caD( \und i \otimes \und j \to \und k)
    \end{tikzcd}
\end{equation}
If each $\phi_{i, j}^k$ is moreover invertible, then we say $\{ \phi_{i, j}^k \}$ provides an isomorphism of $R$-symbols.

We say $\caC$ and $\caD$ have isomorphic $F$-symbols if there is an isomorphism of $F$-symbols. We say $\caC$ and $\caD$ have isomorphic $F$- and $R$-symbols if there is an isomorphism of $F$-symbols that intertwines the $R$-symbols.

This appendix is devoted to proving the following result, whose proof appears in  Section \ref{sec:proof of isomorphic symbols implies isomorphic categories} below.
\begin{proposition} \label{prop:isomophic symbols implies isomorphic categories}
    Let $\caC$ and $\caD$ be semisimple monoidal categories. If $\caC$ and $\caD$ have isomorphic $F$-symbols, then they are monoidally equivalent. If $\caC$ and $\caD$ are moreover braided and have isomorphic $F$- and $R$-symbols, then they are braided monoidally equivalent.

    Suppose that $\caC$ and $\caD$ are unitary. We say that $\{ \phi_{i, j}^k \}_{i, j, k \in \Irr \caC}$ preserve orthogonal direct sum decompositions if 
    $( \phi_{i, j}^{k_{\kappa}}(\pi_{\kappa}) : \und i \otimes \und j \to \und{k_{\kappa}} )$ is an orthogonal direct sum decomposition of $\und i \otimes \und j$ for any orthogonal direct sum decomposition $( \pi_{\kappa} : i \otimes j \to k_{\kappa} )$ of $i \otimes j$ into simple objects for any $i, j \in \Irr \caC$. If this is the case and $\{ \phi_{i, j}^k \}_{i, j, k \in \Irr \caC}$ is an isomorphism of $F$-symbols, then there is a unitary monoidal equivalence $\caC \simeq \caD$. If $\caC$ and $\caD$ are moreover unitarily braided and $\{ \phi_{i j}^k\}$ also intertwines $R$-symbols, then there is a braided unitary monoidal equivalence $\caC \simeq \caD$.
\end{proposition}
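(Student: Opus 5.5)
The plan is to build an explicit functor $G : \caC \to \caD$ out of the data $f$ and $\{\phi_{i,j}^k\}$, and to show that the $F$-symbol diagram \eqref{eq:isomorphic F-symbols} is exactly the hexagon/pentagon-type coherence condition for a tensor structure on $G$.

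\textbf{Reduction to skeleta.} Replace $\caC$ and $\caD$ by equivalent skeletal semisimple categories. Their objects are finitely supported multiplicity functions on $\Irr\caC$, resp. $\Irr\caD$. For these skeleta, define $G$ on objects by $i \mapsto \und i$, extended additively. On morphisms, $G$ is the linear equivalence induced by $f$; for simples, $\caC(i\to j)=\delta_{ij}\C$, and we send scalars to scalars.

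\textbf{Tensor structure.} Define $J_{x,y} : G(x)\otimes G(y) \to G(x\otimes y)$ first for simple $x=i$, $y=j$. Decompose $i\otimes j \simeq \bigoplus_{k,\kappa} k$ via a basis $\{\pi_{k,\kappa}\}$ with dual basis $\{\iota_{k,\kappa}\}$, so that $\sum \iota\pi=\id$ and $\pi\iota=\delta$. Then set
$$J_{i,j} := \sum_{k,\kappa} G(\iota_{k,\kappa}) \circ \phi_{i,j}^k(\pi_{k,\kappa}).$$
Invertibility of $\phi$ ensures that $\{\phi(\pi_{k,\kappa})\}$ is again a basis of $\caD(\und i\otimes\und j\to\und k)$. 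Hence $J_{i,j}$ is an isomorphism, independent of the chosen basis because the sum is canonical (a trace over the multiplicity space). Extend $J$ to arbitrary objects by additivity, and take the unit isomorphism from $f(\I)=\I$. Naturality of $J$ is automatic, since morphisms between simples are scalars.

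\textbf{Associativity constraint.} This is the main step. We must check
$$G(\alpha_{i,j,k})\,J_{i\otimes j,k}(J_{i,j}\otimes\id) = J_{i,j\otimes k}(\id\otimes J_{j,k})\,\alpha^{\caD}_{\und i,\und j,\und k}.$$
Both sides are maps into $G(i\otimes j\otimes k)=\bigoplus_l \und l^{\oplus}$, so it suffices to test them against every $\und l$-component, i.e. to compose with $\phi$-images of fusion trees $\xi\circ(\eta\otimes\id)$ and $\gamma\circ(\id\otimes\delta)$. Composing the defining sum for $J$ with such trees collapses the basis sums. The identity then reads: the map sending $\bigoplus_n \phi\otimes\phi$ of a right-bracketed tree, precomposed with $\alpha^\caD$, equals $\bigoplus_m\phi\otimes\phi$ of the $F^\caC$-image. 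This is precisely diagram \eqref{eq:isomorphic F-symbols}. I expect the main bookkeeping obstacle to be this translation between $J$ and the trees, including consistent handling of the unitors when one index is $\I$; the rest is formal. Fully faithfulness and essential surjectivity follow from $f$ being a bijection on simples, so $(G,J)$ is a monoidal equivalence.

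\textbf{Braiding and unitarity.} For the braided case, $G$ is braided iff $G(\beta_{i,j})J_{i,j}=J_{j,i}\beta^\caD_{\und i,\und j}$ for simple $i,j$. Testing against fusion trees, this reduces to diagram \eqref{eq:isomorphic R-symbols}. In the unitary case, choose the bases $\pi_{k,\kappa}$ orthonormal, i.e. coisometries with $\iota=\pi^\dagger$. Preservation of orthogonal direct sum decompositions says that $\{\phi(\pi_{k,\kappa})\}$ is again such a family, so $J_{i,j}=\sum G(\pi^\dagger)\phi(\pi)$ is unitary. Taking $G$ to be a dagger functor on the skeleta, which is possible because the hom spaces between simples are $\C$ with the standard inner products, we conclude that $(G,J)$ is a unitary (braided) monoidal equivalence.
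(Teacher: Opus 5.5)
Your proposal follows essentially the same route as the paper: your $J_{i,j}=\sum_{k,\kappa}G(\iota_{k,\kappa})\circ\phi_{i,j}^k(\pi_{k,\kappa})$ is exactly the paper's tensorator $\tau_{i,j}=\phi_{i,j}^{i\otimes j}(\id_{i\otimes j})$, and you reduce the associativity, braiding and unitarity checks to the $F$- and $R$-symbol diagrams and to preservation of orthogonal decompositions by testing against fusion trees, as the paper does (the passage to skeleta is harmless but unnecessary). One small point: if the unit isomorphism ``from $f(\I)=\I$'' means the identity $\I\to G(\I)$, it is not compatible with $J$ in general, since replacing every $\phi_{i,j}^k$ by $\lambda\phi_{i,j}^k$ ($\lambda\neq 1$, even a phase) preserves all hypotheses but rescales $J$ by $\lambda$; you should instead take the unique unit isomorphism compatible with $J$, which is how the paper handles it (via Etingof et al., Prop.~2.4.3).
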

Note that if $\caC$ and $\caD$ are unitary fusion categories equipped with their canonical spherical traces then their morphism spaces are Hilbert spaces equipped with the trace inner product. Unitarity of the equivalence can then be conditioned on the unitarity of $\phi_{i,j}^k$.

\subsection{Functors of semisimple categories and their natural transformations}

Recall that a semisimple category $\caC$ has direct sum decompositions. That is, if $\Irr \caC$ is a representative set of simples for $\caC$, then each object $a \in \Ob \caC$ admits a finite set of morphisms $\pi_{\kappa} : a \to i_{\kappa}$ and $\iota_{\kappa} : i_{\kappa}  \to a$ for $\kappa = 1, \cdots, n$ to and from simples objects $i_{\kappa} \in \Irr \caC$ such that
$$ \pi_{\lambda} \circ \iota_{\kappa} = \delta_{\lambda, \kappa} \, \id_{i_{\kappa}} \quad \quad \text{and} \quad \quad \sum_{\kappa} \iota_{\kappa} \circ \pi_{\kappa} = \id_a.$$
If $\caC$ is unitary, then the direct sum decomposition can be taken to be \emph{orthogonal}, meaning that $\iota_{\kappa} = \pi_{\kappa}^{\dag}$ for all $\kappa$. The collection of maps $(\pi_{\kappa}, \iota_{\kappa})_{\kappa}$ is called a (orthogonal) direct sum decomposition of $a$.
The fact that any object of a semisimple category can be decomposed into simples implies that functors of semisimple categories and their natural transformations are determined by their values on simple objects. This hinges on the fact that morphisms are determined by their compositions which is expressed by the Yoneda Lemma. This subsection is devoted to reviewing these facts.

Given a linear category $\caC$ and any object $c \in \Ob  \caC$ we get a functor $\caC(c \to -) : \caC \rightarrow \Vec$ which sends an object $a$ to the vector space $\caC(c \to a)$ and sends a morphism $f : a \to b$ to the linear transformation $\caC(c \to a) \rightarrow \caC(c \to b)$ given by post-composition with $f$. Similarly we get a functor $\caC(- \to c) : \caC^{\op} \rightarrow \Vec$ which sends morphisms to their pre-composition. 

These assignments yield the \emph{Yoneda embeddings}
$$ \yo^{\op}_{\caC} : \caC^{\op} \to \Fun( \caC \to \Vec ), \quad \yo_{\caC} : \caC \to \Fun(\caC^{\op} \to \Vec) $$
which send morphisms in $\caC$ to \emph{natural transformations} given by pre- and post-composition respectively.
These maps sending morphisms to natural transformations are invertible by the \emph{Yoneda Lemma} (see for example \cite[~Lemma 2.9.1]{penneysSemisimpleCatNotes}):
\begin{lemma}[Yoneda Lemma]
    Let $F : \caC \to \Vec$ be a linear functor. For each $c \in \Ob \caC$ the map
    $$
        \Nat( \caC(c \to -) \implies F ) \rightarrow F(c) : \eta \mapsto \eta_c(\id_c)
    $$
    is an isomorphism. Similarly, if $F : \caC^{\op} \to \Vec$ is a linear functor, then
    $$
        \Nat( \caC(- \to c) \implies F ) \rightarrow F(c) : \eta \mapsto \eta_c(\id_c)
    $$
    is an isomorphism. 
    In both cases, the inverse maps $v \in F(c)$ to the natural transformation given by $\eta_a(f) = F(f) \big( v \big)$.
\end{lemma}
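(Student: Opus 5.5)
We prove the two assertions separately; for each, the map in the other direction is given explicitly and both composites are checked to be the identity.

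\textbf{First assertion.} Take a linear functor $F : \caC \to \Vec$ and an object $c \in \Ob\caC$. Define the candidate inverse by sending $v \in F(c)$ to the family $\eta^v_a(f) := F(f)(v)$ for $f \in \caC(c \to a)$.

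\textbf{Step 1: $\eta^v$ is a natural transformation.} For a morphism $g : a \to b$ the naturality square reads
\begin{equation*}
\eta^v_b(g \circ f) = F(g\circ f)(v) = F(g)\big(F(f)(v)\big) = F(g)\big(\eta^v_a(f)\big).
\end{equation*}
This uses only functoriality of $F$. Each component $\eta^v_a$ is linear in $f$ because $F$ is a linear functor.

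\textbf{Step 2: evaluating $\eta^v$ at $\id_c$ returns $v$.} Indeed $\eta^v_c(\id_c) = F(\id_c)(v) = v$.

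\textbf{Step 3: every natural transformation is of this form.} Let $\eta$ be natural and put $v := \eta_c(\id_c)$. For $f : c \to a$, apply naturality of $\eta$ along $f$ to the element $\id_c$:
\begin{equation*}
\eta_a(f) = \eta_a(f \circ \id_c) = F(f)\big(\eta_c(\id_c)\big) = F(f)(v) = \eta^v_a(f).
\end{equation*}
Hence $\eta = \eta^{v}$. Steps 2 and 3 together show the two maps are mutually inverse, so $\eta \mapsto \eta_c(\id_c)$ is a bijection. It is linear since evaluation at $\id_c$ is linear, and therefore it is an isomorphism.

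\textbf{Second assertion.} This is the first assertion applied to the opposite category. Let $F : \caC^{\op} \to \Vec$. Functoriality now reads $F(g \circ f) = F(f)\circ F(g)$, and the functor $\caC(- \to c)$ acts by precomposition. For $f : a \to c$ set $\eta^v_a(f) := F(f)(v)$.

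Naturality along $g : b \to a$ becomes
\begin{equation*}
\eta^v_b(f\circ g) = F(f \circ g)(v) = F(g)\big(F(f)(v)\big) = F(g)\big(\eta^v_a(f)\big).
\end{equation*}
The identities $\eta^v_c(\id_c)=v$ and $\eta = \eta^{\eta_c(\id_c)}$ then follow exactly as in Steps 2 and 3, with the argument $f = \id_c \circ f$ read in $\caC^{\op}$.

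\textbf{Main obstacle.} There is no real difficulty here. The only care needed is the variance in the contravariant case: one must write naturality in $\caC^{\op}$ and match it against precomposition in $\caC(- \to c)$, rather than against postcomposition.
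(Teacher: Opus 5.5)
Your proof is correct. It is the standard argument: naturality applied to $\id_c$ forces $\eta_a(f) = F(f)(\eta_c(\id_c))$, and functoriality makes $v \mapsto F(-)(v)$ natural, with the variance handled correctly in the contravariant case. The paper gives no proof of its own and simply cites Penneys' notes on semisimple categories, where this argument appears.
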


By applying the Yoneda Lemma to $F= \caC(b \to -)$, resp. $F=\caC(- \to b)$, it implies that the Yoneda embeddings are fully faithful (see for example \cite[~Corollary 2.9.3]{penneysSemisimpleCatNotes}).

\begin{lemma} \label{lem:morphisms determined by postcomposition on simples}
    Let $\caC$ be semisimple with representative set of simples $\Irr \caC$. Any morphism $f : a \to b$ is completely determined by its action by pre-composition on the spaces $\caC(b \to i)$ for $i \in \Irr \caC$.
\end{lemma}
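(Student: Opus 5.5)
The claim is that if $f, g : a \to b$ satisfy $h \circ f = h \circ g$ for every $i \in \Irr \caC$ and every $h \in \caC(b \to i)$, then $f = g$. Equivalently, by linearity, if $h \circ f = 0$ for all such $h$, then $f = 0$. I will prove this directly from a direct sum decomposition of $b$. The Yoneda Lemma is the conceptual background, but semisimplicity lets me avoid using it.

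Fix a direct sum decomposition $(\pi_\kappa : b \to i_\kappa,\ \iota_\kappa : i_\kappa \to b)_{\kappa=1}^n$ of $b$, with $i_\kappa \in \Irr \caC$. Such a decomposition exists because $\caC$ is semisimple, as recalled at the start of this subsection. Each $\pi_\kappa$ lies in $\caC(b \to i_\kappa)$, so it is among the test morphisms. Using $\sum_\kappa \iota_\kappa \circ \pi_\kappa = \id_b$, I would write
$$ f = \id_b \circ f = \sum_{\kappa} \iota_\kappa \circ (\pi_\kappa \circ f). $$
This shows that $f$ is recovered from the finitely many compositions $\pi_\kappa \circ f$. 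If $f$ and $g$ have the same compositions with every $h \in \caC(b \to i)$, then in particular $\pi_\kappa \circ f = \pi_\kappa \circ g$ for each $\kappa$. The same identity for $g$ then gives $f = g$.

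For the phrasing in terms of natural transformations, I would also note the following. The assignment $f \mapsto (h \mapsto h \circ f)$ is the restriction of $\yo^{\op}_{\caC}(f) \in \Nat(\caC(b \to -) \Rightarrow \caC(a \to -))$ to the simple objects. By the Yoneda Lemma, the full natural transformation determines $f$, since $f$ is its value on $\id_b$. The argument above shows that the components at simples already suffice: naturality applied to the $\pi_\kappa$ determines the component at $b$.

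There is no real obstacle here. The only step needing care is confirming that the projections $\pi_\kappa$ of the chosen decomposition are legitimate test morphisms, meaning their targets lie in the fixed representative set $\Irr \caC$. If a decomposition is given with targets only isomorphic to members of $\Irr \caC$, I compose each $\pi_\kappa$ and $\iota_\kappa$ with the chosen isomorphisms to replace the targets by their representatives. The relations $\pi_\lambda \circ \iota_\kappa = \delta_{\lambda\kappa}\id$ and $\sum_\kappa \iota_\kappa \circ \pi_\kappa = \id_b$ are preserved by this replacement.
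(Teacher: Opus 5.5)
Your proof is correct and is essentially the paper's argument. The paper first uses faithfulness of the Yoneda embedding to reduce to precomposition on all spaces $\caC(b \to c)$, and then inserts $\id_c = \sum_\kappa \iota_\kappa \circ \pi_\kappa$ for a decomposition of $c$; you insert the same identity for $b$ itself against the test morphism $\id_b$, which is exactly what Yoneda faithfulness amounts to, so you can skip that step (your order $\iota_\kappa \circ (\pi_\kappa \circ f)$ is also the type-correct one, whereas the paper's display swaps the roles of $\pi_\kappa$ and $\iota_\kappa$).
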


\begin{proof}
    Since the Yoneda embedding is faithful, $f$ is determined by its action on morphism spaces given by precomposition.
    Since $\caC$ is semisimple,
    the action by precomposition with $f$ on any $\caC(b \to c)$ is completely determined by the actions on $\caC(b \to i)$ for $i \in \Irr \caC$. 
    Indeed, let $(\iota_{\kappa}, \pi_{\kappa})$ be a direct sum decomposition of $c$. Then we find for any $g \in \caC(b \to c)$ that
    $$ g \circ f = \id_c \circ g \circ f = \sum_{\kappa} \pi_{\kappa} \circ (\iota_{\kappa} \circ g) \circ f. $$
\end{proof}

\begin{lemma} \label[lemma]{nat transf from simple components}
    Let $F, G : \caC \rightarrow \caD$ be linear functors between semisimple categories $\caC$ and $\caD$. Let $\Irr \caC$ be a representative set of simples for $\caC$. Any collection of morphisms $\{ \eta_i : F(i) \rightarrow G(i) \}_{i \in \Irr \caC}$ uniquely determines a natural transformation $\eta : F \implies G$ by the formula
    $$ \eta_a = \sum_{\kappa} \, G(\iota_{\kappa}) \circ \eta_{i_{\kappa}} \circ F(\pi_{\kappa})$$
    for any direct sum decomposition $(\pi_{\kappa}, \iota_{\kappa})$ of $a$ into simples $i_{\kappa} \in \Irr \caC$.

    If the $\{\eta_i\}_{i \in \Irr \caC}$ are invertible, then $\eta$ is a natural isomorphism. If $F$ and $G$ are dagger functors and the $\{ \eta_i \}_{i \in \Irr \caC}$ are unitary, then $\eta$ is a unitary natural isomorphism.
\end{lemma}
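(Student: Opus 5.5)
We prove the statement in three stages: well-definedness of the formula, naturality, and then invertibility and unitarity. Throughout we work with a fixed direct sum decomposition $(\pi_\kappa,\iota_\kappa)$ of each object $a$.

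\textbf{Independence of the decomposition.} Take two decompositions $(\pi_\kappa,\iota_\kappa)$ into simples $i_\kappa$ and $(\pi'_\lambda,\iota'_\lambda)$ into simples $j_\lambda$. Insert $\id_a=\sum_\lambda \iota'_\lambda\circ\pi'_\lambda$ on both sides of $\eta_{i_\kappa}$, using functoriality of $F$ and $G$. The defining sum then becomes
\[
\sum_{\kappa,\lambda,\mu} G(\iota'_\mu)\,G(\pi'_\mu\iota_\kappa)\,\eta_{i_\kappa}\,F(\pi_\kappa\iota'_\lambda)\,F(\pi'_\lambda).
\]
The maps $\pi'_\mu\iota_\kappa : i_\kappa\to j_\mu$ go between simples from the representative set. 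By Schur's lemma they vanish unless $i_\kappa = j_\mu$, and when $i_\kappa = j_\mu$ they are scalars. Hence $G(\pi'_\mu\iota_\kappa)\,\eta_{i_\kappa} = \eta_{j_\mu}\,F(\pi'_\mu\iota_\kappa)$ holds for all simple-to-simple maps; this is the only place where linearity of $F$ and $G$ enters. Move these scalars through $\eta$ and use $\sum_\kappa \pi'_\mu\iota_\kappa\pi_\kappa\iota'_\lambda = \pi'_\mu\iota'_\lambda = \delta_{\mu\lambda}$. The sum collapses to $\sum_\lambda G(\iota'_\lambda)\,\eta_{j_\lambda}\,F(\pi'_\lambda)$. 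Applied with $a = i$ simple and the trivial decomposition, the same computation shows $\eta_i$ is recovered.

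\textbf{Naturality and uniqueness.} Let $f : a\to b$, with decompositions $(\pi_\kappa,\iota_\kappa)$ of $a$ and $(\pi'_\lambda,\iota'_\lambda)$ of $b$. Write
\[
f=\sum_{\kappa,\lambda}\iota'_\lambda\,(\pi'_\lambda f\iota_\kappa)\,\pi_\kappa .
\]
Each middle factor $\pi'_\lambda f\iota_\kappa$ is a map between representative simples. The scalar-commutation property above therefore gives $\eta_b\,F(f) = G(f)\,\eta_a$ by direct expansion. Uniqueness follows because any natural $\eta'$ agreeing with $\eta$ on simples satisfies
\[
\eta'_a=\eta'_a\sum_\kappa F(\iota_\kappa\pi_\kappa)=\sum_\kappa G(\iota_\kappa)\,\eta'_{i_\kappa}\,F(\pi_\kappa),
\]
which is exactly the formula.

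\textbf{Invertibility and unitarity.} If every $\eta_i$ is invertible, set
\[
\theta_a := \sum_\kappa F(\iota_\kappa)\,\eta_{i_\kappa}^{-1}\,G(\pi_\kappa).
\]
Using $\pi_\lambda\iota_\kappa = \delta_{\lambda\kappa}$ one checks directly that $\theta_a\eta_a = \id_{F(a)}$ and $\eta_a\theta_a = \id_{G(a)}$. In the dagger case, first choose orthogonal decompositions, so $\iota_\kappa = \pi_\kappa^\dagger$. Because $F$ and $G$ are dagger functors, $\eta_a^\dagger = \sum_\kappa F(\iota_\kappa)\,\eta_{i_\kappa}^\dagger\,G(\pi_\kappa)$. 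If each $\eta_i$ is unitary this expression equals $\theta_a$, so $\eta_a$ is unitary; the first step guarantees this conclusion does not depend on using an orthogonal decomposition.

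We expect the main obstacle to be the first step, the well-definedness. The subtle point there is that Schur's lemma identifies maps between isomorphic simples only after restricting to the fixed representative set, so equal labels $i_\kappa = j_\mu$ are needed to get scalars.
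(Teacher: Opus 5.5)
Your proof is correct and is essentially the paper's. Both rest on the fact that $G(g)\circ\eta_i=\eta_j\circ F(g)$ for every $g:i\to j$ between representative simples (such $g$ is zero or a scalar multiple of $\id_i$). Both obtain naturality by expanding $f$ through decompositions of source and target, invert $\eta_a$ via $\sum_\kappa F(\iota_\kappa)\circ\eta_{i_\kappa}^{-1}\circ G(\pi_\kappa)$, and handle the dagger case with an orthogonal decomposition. Your independence-of-decomposition computation (the naturality computation for $\id_a$ with two different decompositions) and your uniqueness argument fill in what the paper leaves implicit or to the reader.
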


\begin{proof}
    Note first that the $\eta_i$ satisfy naturality conditions amongst themselves. That is, for any $f : i \to j$ we have that
    \begin{equation} \label{eq:naturality for simple components}
        \begin{tikzcd}
        	{F(i)} & {G(i)} \\
        	{F(j)} & {G(j)}
        	\arrow["{\eta_i}", from=1-1, to=1-2]
        	\arrow["{F(f)}"', from=1-1, to=2-1]
        	\arrow["{G(f)}", from=1-2, to=2-2]
        	\arrow["{\eta_j}", from=2-1, to=2-2]
        \end{tikzcd}
    \end{equation}
    commutes. Indeed, if $i \neq j$ then $f = 0$ and the diagram commutes. If $i = j$ then $f = \lambda \id_i$ for some $\lambda \in \C$ and both compositions in the diagram yield $\lambda \eta_i$.

    For every $a \in \Ob \caC$, fix a direct sum decomposition $(\pi^a_{\kappa}, \iota^a_{\kappa})_{\kappa}$ into simples $i^a_{\kappa} \in \Irr \caC$. 
    We define
    $$ \eta_a := \sum_{\kappa} G(\iota^a_{\kappa}) \circ \eta_{i^a_{\kappa}} \circ F(\pi^a_{\kappa}) $$
    for all $a \in \Ob \caC$. 
    We now verify that this indeed defines a natural transformation. For $f : a \to b$ we have
    \begin{align*}
        G(f) \circ \eta_a &= \sum_{\kappa} \, G(f) \circ G(\iota^a_{\kappa}) \circ \eta_{i^a_{\kappa}} \circ F(\pi^a_{\kappa}) \\
        &= \sum_{\kappa, \lambda} \, G(\iota^b_{\lambda}) \circ G(\pi^b_{\lambda} \circ f \circ \iota^a_{\kappa}) \circ \eta_{i^a_{\kappa}} \circ F(\pi^a_{\kappa}) \\
        &= \sum_{\kappa, \lambda} \, G(\iota^b_{\lambda}) \circ \eta_{i^b_{\lambda}} \circ F(\pi^b_{\lambda} \circ f \circ  \iota^a_{\kappa}) \circ F(\pi^a_{\kappa} ) \\
        &= \eta_b \circ F(f)
    \end{align*}
    where we used Eq. \eqref{eq:naturality for simple components} in the third step.

    It remains to verify that the $\eta_a$ are independent of the choice of direct sum decomposition. This follows from a similar computation as the one used to verify naturality, and is left to the reader.

    Finally, if the $\{ \eta_i \}$ are invertible then $\sum_{\kappa} F(\iota^a_{\kappa}) \circ \eta_{i^a_{\kappa}}^{-1} \circ G(\pi^a_{\kappa})$ is an inverse of $\eta_a$. Similarly, if $F$ and $G$ are dagger functors and the $\{ \eta_i \}$ are unitary, then we can use an orthogonal direct sum decomposition $\iota^a_{\kappa} = (\pi_{\kappa}^a)^{\dag}$ of $a$ to find that $\eta_a^{\dag} = \sum_{\kappa} F(\iota^a_{\kappa}) \circ \eta_{i^a_{\kappa}}^{\dag} \circ G(\pi^a_{\kappa})$ is the inverse of $\eta_a$.
\end{proof}

\begin{lemma} \label{lem:functor from assignment of simples}
    Let $\caC$ and $\caD$ be semisimple. Given an assignment $f : \Irr \caC \to \Ob \caD$, there is a  linear functor $F: \caC \to \caD$, unique up to a canonical natural isomorphism, which extends $f$ on objects. If $\caC$ and $\caD$ are moreover unitary categories, then $F$ may be taken to be a dagger functor. In either case, if $f : \Irr \caC \to \Irr \caD$ is a bijection, then $F$ is an equivalence. 
\end{lemma}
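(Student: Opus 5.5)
For every object $a \in \Ob\caC$ I fix a direct sum decomposition $(\pi^a_\kappa, \iota^a_\kappa)_\kappa$ into simples $i^a_\kappa \in \Irr\caC$, choosing it orthogonal when $\caC$ is unitary. For $i \in \Irr\caC$ I take the trivial decomposition $\pi = \iota = \id_i$. On objects I then set
\[
F(a) := \bigoplus_\kappa f(i^a_\kappa),
\]
where the right-hand side is some chosen direct sum in $\caD$ with structure maps $(\tilde\pi^a_\kappa, \tilde\iota^a_\kappa)$, orthogonal in the unitary case. Semisimple categories admit finite direct sums, so this object exists. For $i \in \Irr\caC$ I choose $F(i) = f(i)$, so that $F$ extends $f$.

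On morphisms $g : a \to b$, each component $\pi^b_\lambda \circ g \circ \iota^a_\kappa \in \caC(i^a_\kappa \to i^b_\lambda)$ is a morphism between simples. It is therefore a scalar $c_{\lambda\kappa}$ when $i^a_\kappa = i^b_\lambda$ and zero otherwise. I define
\[
F(g) := \sum_{\kappa,\lambda} c_{\lambda\kappa}\, \tilde\iota^b_\lambda \circ \tilde\pi^a_\kappa ,
\]
where the sum runs only over those pairs with $i^a_\kappa = i^b_\lambda$. Linearity and $F(\id_a) = \id_{F(a)}$ follow directly from $\pi_\lambda\iota_\kappa = \delta_{\lambda\kappa}$. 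Functoriality, $F(h\circ g) = F(h)\circ F(g)$, comes from inserting $\sum_\mu \iota^b_\mu \pi^b_\mu = \id_b$ between $h$ and $g$: the scalar matrices multiply exactly as the components do, because $\tilde\pi_\mu\tilde\iota_\nu = \delta_{\mu\nu}$. In the unitary case both decompositions are orthogonal, so $c_{\lambda\kappa}(g^\dagger) = \overline{c_{\kappa\lambda}(g)}$, and this gives $F(g^\dagger) = F(g)^\dagger$.

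For uniqueness, let $F'$ be another linear functor with $F'(i) = f(i)$ on simples. I apply Lemma \ref{nat transf from simple components} with $\eta_i = \id_{f(i)}$, which produces a natural isomorphism $F \Rightarrow F'$, unitary in the dagger case. Its component at $a$ is the canonical isomorphism $\sum_\kappa F'(\iota^a_\kappa)\circ\tilde\pi^a_\kappa$. The same argument shows that a different choice of decompositions in the construction yields a canonically isomorphic functor.

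Now suppose $f$ is a bijection onto $\Irr\caD$. For essential surjectivity, every object of $\caD$ is a direct sum of simples $f(i)$, and $F$ preserves direct sums up to the canonical isomorphism, so every object lies in the essential image. For full faithfulness, both $\caC(a\to b)$ and $\caD(F(a)\to F(b))$ decompose as $\bigoplus_{\kappa,\lambda}$ of the spaces $\caC(i^a_\kappa \to i^b_\lambda)$ and $\caD(f(i^a_\kappa)\to f(i^b_\lambda))$ respectively. Each of these is $\C$ or $0$ according to whether the two simples are equal, and $f$ is injective, so the two sides have the same pattern. $F$ acts as the identity on the coefficient matrices, hence is bijective on hom spaces. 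I expect the only mildly delicate step to be the bookkeeping behind the well-definedness of $F$ on morphisms, namely that the scalars $c_{\lambda\kappa}$ are canonical. This holds because $\caC(i\to i) = \C\,\id_i$, and it is the same naturality-on-simples observation used in Lemma \ref{nat transf from simple components}.
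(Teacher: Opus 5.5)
Your proposal is correct and follows essentially the same route as the paper. You fix decompositions into simples, define $F$ on morphisms through the scalar components $\pi^b_\lambda\circ g\circ\iota^a_\kappa$, and obtain uniqueness from Lemma~\ref{nat transf from simple components} with $\eta_i=\id_{f(i)}$, using orthogonal decompositions for the dagger case. You also spell out details the paper leaves implicit or as an exercise: trivial decompositions on simples so that $F(i)=f(i)$ exactly, a non-circular scalar formula for $F(g)$, and the full faithfulness and essential surjectivity argument in the bijective case.
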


\begin{proof}
    For each object $a\in \Ob \caC$, fix a direct sum decomposition $(\pi^a_\kappa, \iota^a_\kappa)$ of $a$ into simples $i^a_{\kappa} \in \Irr \caC$ and
    define 
    $$
        F(a) = \bigoplus_\kappa f(i^a_\kappa),
    $$
    where the right hand side is an arbitrary choice of direct sum of the objects $f(i^a_{\kappa})$ in $\caD$, given by a direct sum decomposition $(\tilde \pi^a_{\kappa}, \tilde \iota^a_{\kappa})$. By functoriality, $F$ is determined on morphisms $f : a \to b$ by
    \begin{equation} \label{eq:F determined on morphisms}
        F(f) = \sum_{\kappa, \lambda} \, \tilde \iota^b_{\kappa} \circ F( \pi^b_{\kappa} \circ f \circ \iota^a_{\lambda} ) \circ \tilde \pi^a_{\lambda},
    \end{equation} 
    where we note that $\pi^b_{\kappa} \circ f \circ \iota^a_{\lambda}$ is either zero or a multiple of the identity.
    With this definition on morphisms, one easily checks that $F(\id_a) = \id_{F(a)}$ and that $F(g) \circ F(f) = F(g \circ f)$ whenever $g$ and $f$ are composable. Suppose $G$ is another linear functor that extends $f$ on objects. Then by Lemma \ref{nat transf from simple components} the identity maps $\eta_i := \id_{F(i)} : F(i) \to G(i)$ for all ${i \in \Irr \caC}$ determine a canonical natural isomorphism $\eta : F \implies G$.

    For the unitary case, note that a unitary semisimple category admits all orthogonal direct sums. We can therefore assume that all direct sum decompositions in the construction above are orthogonal. That $F$ is a dagger functor then follows by noting that $F( \pi^b_{\kappa} \circ f \circ \iota^a_{\lambda} )^{\dag} = F( \pi^a_{\lambda} \circ f^{\dag} \circ \iota^b_{\kappa} )$, since the dagger of this morphism is given by complex conjugation.
    It is a simple exercise to check that $F$ is faithful and that if $f : \Irr \caC \to \Irr \caD$ is a bijection, then $F$ is essentially surjective and full.
\end{proof}

\subsection{Construction of a tensorator}

Under the assumptions of Proposition \ref{prop:isomophic symbols implies isomorphic categories}, Lemma \ref{lem:functor from assignment of simples} gives linear equivalence $F : \caC \to \caD$ such that $F(i) = f(i)$ for each $i \in \Irr \caC$, which we may take to be a unitary equivalence in the case where $\caC$ and $\caD$ are unitary.

By \Cref{nat transf from simple components}, the given isomorphism of $F$-symbols $\{\phi_{i j}^k\}_{i, j, k \in \Irr \caC}$ uniquely determines a natural transformation
$$ \phi_{i, j}^{-} : \caC(i \otimes j \to -) \implies \caD( F(i) \otimes F(j) \to F(-) ) $$
for every pair $i,j \in \Irr\caC$. The component morphisms are given by
\begin{equation} \label{eq:components of phi nat transf}
    \phi_{i, j}^c(\al) = \sum_{\kappa} F( \iota_{\kappa} ) \circ \phi_{i, j}^{i_{\kappa}}( \pi_{\kappa} \circ \al )
\end{equation}
for any direct sum decomposition $(\pi_{\kappa}, \iota_{\kappa})_{\kappa}$ of $c$ into simples $i_{\kappa} \in \Irr(\caC)$.

The Yoneda Lemma gives a canonical isomorphism
\begin{align*} 
    \Nat\big( \caC(i \otimes j \to -) \implies \caD(F(i) \otimes F(j) \to F(-)) \big) &\xrightarrow{\simeq} \caD(F(i) \otimes F(j) \to F(i \otimes j)) \\
     \rho &  \mapsto \,\rho_{i \otimes j}(\id_{i \otimes j}).
\end{align*}
Applying this to the natural transformation $\phi_{i j}^{-}$ yields the morphism
\begin{equation} \label{eq:simple tensorator defined}
    \tau_{i, j} := \phi_{i, j}^{i \otimes j}( \id_{i \otimes j}) : F(i) \otimes F(j) \to F(i \otimes j).
\end{equation}
It follows from the Yoneda isomorphism (or the naturality of $\phi_{i, j}^{-}$) that
\begin{equation} \label{eq:basic equation for simple tau}
    F(g) \circ \tau_{i, j} = \phi_{i, j}^k(g)
\end{equation}
for all $g \in \caC(i \otimes j \to k)$.

By \Cref{nat transf from simple components} the maps $\{\tau_{i, j}\}_{i,j\in\Irr\caC}$ uniquely determine a natural transformation
$$\tau : F(-) \otimes F(-) \implies F(- \otimes -).$$
We will show that this $\tau$ can serve as a tensorator to equip the equivalence $F$ with the structure of a monoidal equivalence.

\begin{lemma} \label{lem:monoidal structure axiom}
    The natural transformation $\tau$ satisfies the following commutative diagram for all $a, b, c \in \Ob \caC$:
    \begin{equation}\label{eq:monoidal structure axiom}
    \begin{tikzcd}[column sep=huge, row sep=large]
        (F(a)\otimes F(b))\otimes F(c)
          \arrow[r, "\psup{\caD}{\alpha}_{F(a),F(b),F(c)}"]
          \arrow[d, "\tau_{a,b}\otimes \id_{F(c)}"']
        &
        F(a)\otimes(F(b)\otimes F(c))
          \arrow[d, "\id_{F(a)}\otimes \tau_{b,c}"]
        \\
        F(a\otimes b)\otimes F(c)
          \arrow[d, "\tau_{a \otimes b, c}"']
        &
        F(a)\otimes F(b\otimes c)
          \arrow[d, "\tau_{a, b \otimes c}"]
        \\
        F((a \otimes b)\otimes c)
          \arrow[r, "F(\psup{\caC}{\alpha}_{a,b,c})"]
        &
        F(a \otimes (b \otimes c))
    \end{tikzcd}
    \end{equation}
\end{lemma}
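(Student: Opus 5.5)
The plan is to reduce the hexagon-type diagram \eqref{eq:monoidal structure axiom} to the case where $a,b,c$ are simple, and then settle the simple case with the $F$-symbol intertwining property \eqref{eq:isomorphic F-symbols} together with \eqref{eq:basic equation for simple tau}. For the reduction, note that both composites around the diagram are natural in $a$, $b$ and $c$: $\tau$ is natural, and both associators are natural. Both composites are therefore natural transformations
\[
(F(-)\otimes F(-))\otimes F(-) \Longrightarrow F(-\otimes(-\otimes -))
\]
of trilinear functors $\caC^{\times 3}\to\caD$. Such a transformation is determined by its components on triples of simples. To see this, choose direct sum decompositions of $a,b,c$ and expand exactly as in \Cref{nat transf from simple components}, applied in each variable. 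Hence it suffices to verify the diagram for $i,j,k\in\Irr\caC$.

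For simples $i,j,k$, I will use \Cref{lem:morphisms determined by postcomposition on simples}. Since $F$ is an equivalence, every simple of $\caD$ is of the form $F(l)$, and $\caD(F(a\otimes(b\otimes c))\to F(l)) = F(\caC(\cdots\to l))$ because $F$ is full. So it suffices to postcompose both sides with $F(g)$ for arbitrary $g\in\caC(i\otimes(j\otimes k)\to l)$, $l\in\Irr\caC$. By linearity I may take $g=\xi\circ(\id_i\otimes\eta)$ with $\eta\in\caC(j\otimes k\to n)$ and $\xi\in\caC(i\otimes n\to l)$; such elements span by semisimplicity (the top horizontal isomorphism in the $F$-symbol diagram).

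For the right-hand path, I first expand $\tau_{i,j\otimes k}$ using a decomposition of $j\otimes k$, via \eqref{eq:components of phi nat transf}. Then I repeatedly apply \eqref{eq:basic equation for simple tau} together with naturality of $\tau$ in the second slot. This should give
\[
F(g)\circ\tau_{i,j\otimes k}\circ(\id\otimes\tau_{j,k}) = \phi_{i,n}^l(\xi)\circ(\id_{F(i)}\otimes\phi_{j,k}^n(\eta)),
\]
which is the image of $\xi\otimes\eta$ under the left vertical map of \eqref{eq:isomorphic F-symbols}, followed by the $\caD$-composition map. For the left path I write $g\circ\psup{\caC}{\alpha}_{i,j,k}=\sum_m \xi'_m\circ(\eta'_m\otimes\id_k)$ with $\sum_m \xi'_m\otimes\eta'_m = \psup{\caC}{F}(\xi\otimes\eta)$. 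The same computation then gives $F(g\circ\psup{\caC}{\alpha})\circ\tau_{i\otimes j,k}\circ(\tau_{i,j}\otimes\id) = \sum_m\phi^l_{m,k}(\xi'_m)\circ(\phi^m_{i,j}(\eta'_m)\otimes\id)$. By commutativity of \eqref{eq:isomorphic F-symbols}, this equals the $\caD$-composition of $\psup{\caD}{F}$ applied to the image of $\xi\otimes\eta$. By the definition of $\psup{\caD}{F}$, that in turn equals the right-path expression precomposed with $\psup{\caD}{\alpha}$. This is exactly the commutativity of the diagram after postcomposition with $F(g)$.

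The main obstacle is the bookkeeping in the step that expands $\tau_{a\otimes b,c}$ and $\tau_{a,b\otimes c}$ when one argument is non-simple. The difficulty is justifying $F(\pi_\kappa)\circ\tau_{i,j\otimes k}=\tau_{i,n_\kappa}\circ(\id\otimes F(\pi_\kappa))$, which is naturality of $\tau$ in the second variable as constructed in \Cref{nat transf from simple components}, together with the identity $\tau_{j,k}=\sum_\kappa F(\iota_\kappa)\circ\phi^{n_\kappa}_{j,k}(\pi_\kappa)$. I would handle this first as a small standalone computation and then feed it into both paths.
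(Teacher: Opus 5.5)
Your proposal is correct and follows the paper's proof essentially step for step. You reduce to simple $i,j,k$ via Lemma~\ref{nat transf from simple components}, test against $\caD(F(\cdot)\to F(l))$ via Lemma~\ref{lem:morphisms determined by postcomposition on simples}, evaluate both paths on elements $\xi\circ(\id_i\otimes\eta)$ using naturality of $\tau$ and \eqref{eq:basic equation for simple tau}, and close with the definitions of the $F$-symbols together with the intertwining square \eqref{eq:isomorphic F-symbols}; the paper packages this same element-wise computation as the commuting diagrams \eqref{eq:diagram II} and \eqref{eq:diagram III}. The step you flag as the main obstacle needs no decomposition of $j\otimes k$: naturality of $\tau$ applied to $\eta$ gives $F(\id_i\otimes\eta)\circ\tau_{i,j\otimes k}=\tau_{i,n}\circ(\id_{F(i)}\otimes F(\eta))$ directly, which is how the paper handles it.
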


\begin{proof}
    The two paths from top left to bottom right are components of natural transformations $(F(-) \otimes F(-)) \otimes F(-) \implies F( - \otimes (- \otimes -) )$ which by \Cref{nat transf from simple components} are completely determined by their simple components. It is therefore sufficient to show that \eqref{eq:monoidal structure axiom} commutes for simple objects.
    This will follow from the assumption that the $\{ \phi_{i, j}^k \}$ provide an isomorphism of $F$-symbols.

    Let $i,j,k \in \Irr\caC$, and note that by Lemma \ref{lem:morphisms determined by postcomposition on simples}, the diagram \eqref{eq:monoidal structure axiom} with $a=i, b=j$, and  $c=k$, commutes, if and only if, the top square of
    \begin{equation} \label{eq:diagram I}
        \begin{tikzcd}
            {\caD \big( (F(i) \otimes F(j)) \otimes F(k) \to F(l) \big)} &&& {\caD \big( F(i) \otimes ( F(j) \otimes F(k) ) \to F(l) \big)} \\
            \\
            {\caD \big( F( (i \otimes j) \otimes k) \to F(l) \big)} &&& {\caD \big( F( i \otimes (j \otimes k) ) \to F(l) \big)} \\
            \\
            {\caC \big( (i \otimes j) \otimes k \to l \big)} &&& {\caC \big( i \otimes (j \otimes k) \to l \big)}
            \arrow["{- \circ \psup{\caD}{\alpha}_{\und i, \und j, \und k}}"', from=1-4, to=1-1]
            \arrow["{- \circ \tau_{i \otimes j, k} \circ (\tau_{i, j} \otimes \id_{\und k})}"', from=3-1, to=1-1]
            \arrow["{- \circ \tau_{i, j \otimes k} \circ (\id_{\und i} \otimes \tau_{j, k})}"', from=3-4, to=1-4]
            \arrow["{- \circ F( \psup{\caC}{\alpha}_{i, j, k} )}"', from=3-4, to=3-1]
            \arrow["F"', from=5-1, to=3-1]
            \arrow["F"', from=5-4, to=3-4]
            \arrow["{- \circ \psup{\caC}{\al}_{i, j, k}}"', from=5-4, to=5-1]
        \end{tikzcd}
    \end{equation}
    commutes for all $l \in \Irr \caC$. The bottom square commutes by functoriality of $F$.
    Let $l\in \Irr\caC$ be given, and observe that for any $n, m \in \Irr \caC$, any $\eta : i \otimes j \to m$ and $\xi : m \otimes k \to l$, we have
    \begin{align*}
        F(\xi \circ (\eta \otimes \id_k)) \circ \tau_{i \otimes j, k} \circ (\tau_{i, j} \otimes \id_k) &= F(\xi) \circ \tau_{m, k} \circ (F(\eta) \otimes \id_{\und k}) \circ ( \tau_{i, j} \otimes \id_{\und k} ) \\
        &= \phi_{m, k}^l(\xi) \circ \big( \phi_{i, j}^m(\eta) \otimes \id_{\und k} \big),
    \end{align*}
    where we used naturality of $\tau$ in the first step, and Eq. \eqref{eq:basic equation for simple tau} in the last. This shows that the following diagram commutes:
    \begin{equation} \label{eq:diagram II}
        \begin{tikzcd}
        	\begin{array}{c} \bigoplus_m \, \caD \big( F(m) \otimes F(k) \to F(l) \big) \\ \quad  \otimes \caD \big( F(i) \otimes F(j) \to F(m) \big) \end{array} && {\caD \big( (F(i) \otimes F(j)) \otimes F(k) \to F(l) \big)} \\
        	\\
        	&& {\caD \big( F( (i \otimes j) \otimes k) \to F(l) \big)} \\
        	\\
        	{\bigoplus_m \, \caC( m \otimes k \to l ) \otimes \caC(i \otimes j \to m)} && {\caC \big( (i \otimes j) \otimes k \to l \big)}.
        	\arrow["\simeq", from=1-1, to=1-3]
        	\arrow["{- \circ \tau_{i \otimes j, k} \circ (\tau_{i, j} \otimes \id_{\und k})}"', from=3-3, to=1-3]
        	\arrow["{\bigoplus_m \, \phi_{m, k}^l \otimes \phi_{i, j}^m}"', from=5-1, to=1-1]
        	\arrow["\simeq", from=5-1, to=5-3]
        	\arrow["F"', from=5-3, to=3-3]
        \end{tikzcd}
    \end{equation}
    An analogous computation shows commutativity of
    \begin{equation} \label{eq:diagram III}
        \begin{tikzcd}
        	{\caD \big( F(i) \otimes ( F(j) \otimes F(k) ) \to F(l)  \big)} && \begin{array}{c} \bigoplus_n \, \caD \big( F(i) \otimes F(n) \to F(l) \big) \\ \quad \otimes \caD \big( F(j) \otimes F(k) \to F(n) \big) \end{array} \\
        	\\
        	{\caD \big( F( i \otimes (j \otimes k) ) \to F(l) \big)} \\
        	\\
        	{\caC \big( i \otimes (j \otimes k) \to l \big)} && {\bigoplus_n \, \caC(i \otimes n \to l) \otimes \caC( j \otimes k \to n )}.
        	\arrow["\simeq", from=1-3, to=1-1]
        	\arrow["{- \circ \tau_{i, j \otimes k} \circ (\id_{\und i} \otimes \tau_{j, k})}"', from=3-1, to=1-1]
        	\arrow["F"', from=5-1, to=3-1]
        	\arrow["\simeq", from=5-3, to=5-1]
        	\arrow["{\bigoplus_n \phi_{i, n}^l \otimes \phi_{j, k}^n}"', from=5-3, to=1-3]
        \end{tikzcd}
    \end{equation} 
    Placing diagrams \eqref{eq:diagram II} and \eqref{eq:diagram III} on the sides of diagram \eqref{eq:diagram I}, we find that the latter commutes if and only if the outer square of the following diagram commutes:
    \begin{equation*}
        \begin{tikzcd}
        	{\caD \big( (F(i) \otimes F(j)) \otimes F(k) \to F(l) \big)} &&& {\caD \big( F(i) \otimes ( F(j) \otimes F(k) ) \to F(l)  \big)} \\
        	\begin{array}{c} \bigoplus_m \, \caD \big( F(m) \otimes F(k) \to F(l) \big) \\ \quad  \otimes \caD \big( F(i) \otimes F(j) \to F(m) \big) \end{array} &&& \begin{array}{c} \bigoplus_n \, \caD \big( F(i) \otimes F(n) \to F(l) \big) \\ \quad \otimes \caD \big( F(j) \otimes F(k) \to F(n) \big) \end{array} \\
        	\\
        	{\bigoplus_m \, \caC( m \otimes k \to l ) \otimes \caC(i \otimes j \to m)} &&& {\bigoplus_n \, \caC(i \otimes n \to l) \otimes \caC( j \otimes k \to n )} \\
        	{\caC \big( (i \otimes j) \otimes k \to l \big)} &&& {\caC \big( i \otimes (j \otimes k) \to l \big)}
        	\arrow["{- \circ \psup{\caD}{\alpha}_{\und i, \und j, \und k}}"', from=1-4, to=1-1]
        	\arrow["\simeq"', from=2-1, to=1-1]
        	\arrow["\simeq"', from=2-4, to=1-4]
        	\arrow["{\psup{\caD}F_{\und i \, \und j \, \und k}^{\und l}}"', from=2-4, to=2-1]
        	\arrow["{\bigoplus_m \, \phi_{m, k}^l \otimes \phi_{i, j}^m}"', from=4-1, to=2-1]
        	\arrow["\simeq", from=4-1, to=5-1]
        	\arrow["{\bigoplus_n \phi_{i, n}^l \otimes \phi_{j, k}^n}"', from=4-4, to=2-4]
        	\arrow["{\psup{\caC}{F}_{i \, j \, k }^l}"', from=4-4, to=4-1]
        	\arrow["\simeq", from=4-4, to=5-4]
        	\arrow["{- \circ \psup{\caC}{\al}_{i, j, k}}"', from=5-4, to=5-1]
        \end{tikzcd}\
    \end{equation*}
    But here the top and bottom squares are the definition of the $F$-symbols, and commutativity of the middle square is precisely the assumption that $\{ \phi_{i, j}^k \}$ intertwines $F$-symbols.
\end{proof}

\begin{lemma} \label{lem:braided structure axiom}
    Suppose $\caC$ and $\caD$ are equipped with braidings $\psup{\caC}{\beta}, \psup{\caD}{\beta}$ and the maps $\{ \phi_{i, j}^k \}$ provide an isomorphism of the corresponding $R$-symbols. Then the natural transformation $\tau$ satisfies the following commutative diagram for all $a, b \in \Ob \caC$:
    \begin{equation} \label{eq:braided structure axiom}
        \begin{tikzcd}
        	{F(a) \otimes F(b)} && {F(b) \otimes F(a)} \\
        	{F(a \otimes b)} && {F(b \otimes a)}
        	\arrow["{\psup{\caD}{\beta}_{F(a), F(b)}}", from=1-1, to=1-3]
        	\arrow["{\tau_{a, b}}"', from=1-1, to=2-1]
        	\arrow["{\tau_{b, a}}", from=1-3, to=2-3]
        	\arrow["{F(\psup{\caC}{\beta}_{a, b})}", from=2-1, to=2-3]
        \end{tikzcd}
    \end{equation}
\end{lemma}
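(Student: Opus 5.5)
The plan mirrors the proof of \Cref{lem:monoidal structure axiom}. Both composites in \eqref{eq:braided structure axiom} are components at $(a,b)$ of natural transformations $F(-)\otimes F(-) \implies F(- \otimes -)$ with the order of the arguments swapped in the target, namely $F(\psup{\caC}{\beta}_{-,-})\circ\tau_{-,-}$ and $\tau_{\text{swap}}\circ\psup{\caD}{\beta}_{F(-),F(-)}$. Naturality of each follows from naturality of $\tau$, of the braidings, and functoriality of $F$. By \Cref{nat transf from simple components}, applied in each variable separately (or to $\caC\boxtimes\caC$), such a transformation is determined by its components on pairs of simples. So it suffices to prove commutativity for $a=i$, $b=j$ with $i,j\in\Irr\caC$.

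For simple $i,j$, I would use \Cref{lem:morphisms determined by postcomposition on simples}. The two parallel morphisms $F(i)\otimes F(j)\to F(j\otimes i)$ are equal if and only if they agree after postcomposition with every morphism $F(j\otimes i)\to F(k)$ for $k\in\Irr\caC$. Since $F$ is an equivalence, hence full, every such morphism is of the form $F(g)$ with $g\in\caC(j\otimes i\to k)$. It therefore suffices to check, for all such $g$, that
$$F(g)\circ F(\psup{\caC}{\beta}_{i,j})\circ\tau_{i,j} = F(g)\circ\tau_{j,i}\circ\psup{\caD}{\beta}_{\und i,\und j}.$$

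The left side equals $F(g\circ\psup{\caC}{\beta}_{i,j})\circ\tau_{i,j}=\phi_{i,j}^k\big(g\circ\psup{\caC}{\beta}_{i,j}\big)=\phi^k_{i,j}\big(\psup{\caC}{R}^k_{ij}(g)\big)$ by \eqref{eq:basic equation for simple tau}. The right side equals $\phi^k_{j,i}(g)\circ\psup{\caD}{\beta}_{\und i,\und j}=\psup{\caD}{R}^{\und k}_{\und i\und j}\big(\phi^k_{j,i}(g)\big)$, again by \eqref{eq:basic equation for simple tau}. These agree exactly by the commutativity of \eqref{eq:isomorphic R-symbols}, which concludes the argument.

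The main point requiring care is the reduction to simples. The two composites have to be recognized as natural in both variables simultaneously, with target functor $F(-\otimes-)$ precomposed by the swap. I would handle this by fixing $b$ simple and applying \Cref{nat transf from simple components} in $a$, then fixing arbitrary $a$ and applying it in $b$. Fullness of $F$, given by \Cref{lem:functor from assignment of simples}, is the other ingredient used.
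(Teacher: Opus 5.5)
Your proposal is correct and follows the paper's proof. You reduce to simple $i,j$ via \Cref{nat transf from simple components}, test against morphisms into $F(k)$ via \Cref{lem:morphisms determined by postcomposition on simples}, and use \eqref{eq:basic equation for simple tau} to identify the two sides with $\phi^k_{i,j}\circ\psup{\caC}{R}^k_{ij}$ and $\psup{\caD}{R}^{\und k}_{\und i\und j}\circ\phi^k_{j,i}$; your explicit appeal to fullness of $F$ spells out what the paper leaves implicit in the bottom square of its diagram.
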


\begin{proof}
    The two paths from top left to bottom right of the diagram \eqref{eq:braided structure axiom} are components of natural transformations $F(-) \otimes F(\bullet) \implies F(\bullet \otimes -)$ which, by \Cref{nat transf from simple components} are completely determined by their simple components. It is therefore sufficient to show commutativity of \eqref{eq:braided structure axiom} in the case where $a = i, b = j \in \Irr \caC$ are simple. In this case, and using By Lemma \ref{lem:morphisms determined by postcomposition on simples}, commutativity of \eqref{eq:braided structure axiom} is equivalent to commutativity of the top square of
    \begin{equation} \label{eq:diagram i}
        \begin{tikzcd}
        	{\caD \big( F(i) \otimes F(j) \to F(k) \big)} &&& {\caD \big( F(j) \otimes F(i) \to F(k) \big)} \\
        	{\caD \big( F(i \otimes j) \to F(k) \big)} &&& {\caD \big( F(j \otimes i) \to F(k) \big)} \\
        	{\caC( i \otimes j \to k )} &&& {\caC(j \otimes i \to k)}
        	\arrow["{- \circ \, \psup{\caD}{\beta}_{F(i), F(j)}}"', from=1-4, to=1-1]
        	\arrow["{- \circ \, \tau_{i, j}}", from=2-1, to=1-1]
        	\arrow["{- \circ \, \tau_{j, i}}"', from=2-4, to=1-4]
        	\arrow["{- \circ \, F \big( \psup{\caC}{\beta}_{i, j} \big)}"', from=2-4, to=2-1]
        	\arrow["F", from=3-1, to=2-1]
        	\arrow["F"', from=3-4, to=2-4]
        	\arrow["{- \circ \, \psup{\caC}{\beta}_{i, j}}"', from=3-4, to=3-1]
        \end{tikzcd}
    \end{equation}
    for all $i, j, k \in \Irr \caC$. The bottom square commutes by functoriality of $F$.

    By Eq. \eqref{eq:basic equation for simple tau} we see that the vertical compositions in this diagram are precisely the maps $\phi_{i, j}^k$ and $\phi_{j, i}^k$. Recalling further that the $R$-symbols are defined by precomposition with the braiding, commutativity the diagram \eqref{eq:diagram i} is precisely the statement that the $\{ \phi_{i, j}^k \}$ intertwine the $R$-symbols.
\end{proof}

It  remains to show that $\tau$ is  a natural isomorphism, and in the unitary case, that $\tau$ is unitary.
\begin{lemma} \label{lem:tau is isomorphism}
    The natural transformation $\tau$ is a natural isomorphism.
\end{lemma}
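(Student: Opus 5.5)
By \Cref{nat transf from simple components}, a natural transformation whose simple components are invertible is a natural isomorphism. It therefore suffices to show that each $\tau_{i,j} : F(i)\otimes F(j) \to F(i \otimes j)$ with $i,j \in \Irr\caC$ is invertible; in the unitary case we will show it is unitary. The plan is to write $\tau_{i,j}$ and a candidate inverse explicitly in terms of a direct sum decomposition of $i\otimes j$. Then the hypothesis that each $\phi_{i,j}^k$ is invertible, together with equality of fusion multiplicities, forces these to be mutually inverse.

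\textbf{Explicit form.} Fix a direct sum decomposition $(\pi_\kappa, \iota_\kappa)_\kappa$ of $i\otimes j$ into simples $k_\kappa \in \Irr\caC$. By \eqref{eq:components of phi nat transf} applied with $c = i\otimes j$ and $\al = \id_{i\otimes j}$,
$$ \tau_{i,j} = \sum_\kappa F(\iota_\kappa)\circ \phi_{i,j}^{k_\kappa}(\pi_\kappa). $$
Since $F$ is a linear equivalence extending $f$, the morphisms $(F(\pi_\kappa), F(\iota_\kappa))$ give a direct sum decomposition of $F(i\otimes j)$ into the simples $\und{k_\kappa}$.

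\textbf{Decomposition of the domain.} The key claim is that the morphisms $\phi_{i,j}^{k_\kappa}(\pi_\kappa) : \und i\otimes\und j \to \und{k_\kappa}$ are the projections of a direct sum decomposition of $\und i\otimes \und j$. To see this, group the indices by $k$. For fixed $k$, the maps $\{\pi_\kappa\}_{k_\kappa = k}$ form a basis of $\caC(i\otimes j\to k)$, so their images under the invertible linear map $\phi_{i,j}^k$ form a basis of $\caD(\und i\otimes \und j \to \und k)$. By semisimplicity of $\caD$, any basis of $\caD(\und i\otimes\und j\to \und k)$, taken over all $k$, arises as the projections of some direct sum decomposition: there are inclusions $\tilde\iota_\kappa$ dual to them. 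This dual system is obtained from the canonical pairing $\caD(\und i\otimes\und j\to\und k)\times\caD(\und k\to\und i\otimes\und j)\to\C$, which is nondegenerate because $\und k$ is simple. The simples $\und k$ exhaust all summands of $\und i\otimes\und j$ because $f$ is a bijection on simples and the multiplicities agree.

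\textbf{Inverse.} Set $\sigma := \sum_\kappa \tilde\iota_\kappa \circ F(\pi_\kappa)$. Then $\tau_{i,j}\circ\sigma = \sum_{\kappa,\lambda} F(\iota_\kappa)\circ\phi(\pi_\kappa)\circ\tilde\iota_\lambda\circ F(\pi_\lambda) = \sum_\kappa F(\iota_\kappa)F(\pi_\kappa) = \id$. The check $\sigma\circ\tau_{i,j} = \id$ is analogous, using $\sum_\kappa\tilde\iota_\kappa\circ\phi(\pi_\kappa)=\id$.

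\textbf{Unitary case.} In the unitary case, take $(\pi_\kappa)$ orthogonal and $F$ a dagger functor. The hypothesis that $\{\phi_{i,j}^k\}$ preserves orthogonal direct sum decompositions gives directly that $\tilde\iota_\kappa = \phi_{i,j}^{k_\kappa}(\pi_\kappa)^\dagger$. Then $\tau_{i,j}$ is a sum of products of an orthogonal decomposition of the target with the adjoint of one of the source, hence unitary. Combined with \Cref{nat transf from simple components}, this makes $\tau$ a unitary natural isomorphism.

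\textbf{Main obstacle.} The main obstacle is the middle step: justifying that a basis of the multiplicity spaces can be completed to a full direct sum decomposition of $\und i\otimes\und j$. This needs nondegeneracy of the composition pairing in a semisimple category and the dimension count $\dim\caD(\und i\otimes \und j\to\und k) = \dim\caC(i\otimes j\to k)$. Once that is in place, the remaining verifications are routine.
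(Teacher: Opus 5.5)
Your argument is correct, but it takes a different route from the paper.

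\textbf{The paper's route.} The paper never writes down an inverse. It observes from Eq.~\eqref{eq:basic equation for simple tau} that $g\mapsto F(g)\circ\tau_{i,j}$, as a map $\caD(F(i\otimes j)\to F(k))\to\caD(F(i)\otimes F(j)\to F(k))$, is $\phi_{i,j}^k$ composed with the inverse of $F$ on Hom-spaces. It is therefore invertible for every $k\in\Irr\caC$. \Cref{nat transf from simple components} upgrades this to an invertible natural transformation between the representable functors $\caD(F(i\otimes j)\to -)$ and $\caD(F(i)\otimes F(j)\to -)$. Full faithfulness of the Yoneda embedding then gives invertibility of $\tau_{i,j}$.

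\textbf{Your route.} You instead produce the explicit inverse $\sum_\kappa\tilde\iota_\kappa\circ F(\pi_\kappa)$. You transport a basis of each multiplicity space through $\phi_{i,j}^k$ and complete it with a dual system. This is more concrete, and it has the advantage that the unitary case (the content of \Cref{lem:tau is unitary}) falls out at once: $\tilde\iota_\kappa=\phi_{i,j}^{k_\kappa}(\pi_\kappa)^\dagger$ makes your inverse equal to $\tau_{i,j}^\dagger$.

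\textbf{The price.} The cost is exactly the step you flag, which the Yoneda argument sidesteps entirely. Two small points need care there.
\begin{itemize}
\item Nondegeneracy of the composition pairing $\caD(\und i\otimes\und j\to\und k)\times\caD(\und k\to\und i\otimes\und j)\to\C$ comes from semisimplicity of $\caD$, not from simplicity of $\und k$. Simplicity only makes the pairing $\C$-valued; in a non-semisimple category it can degenerate.
\item The completeness relation $\sum_\kappa\tilde\iota_\kappa\circ\phi_{i,j}^{k_\kappa}(\pi_\kappa)=\id$ needs the observation that, for dual bases $\{x_\kappa\},\{y_\kappa\}$ of a multiplicity space, $\sum_\kappa y_\kappa\circ x_\kappa$ is independent of the basis. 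Hence it agrees with $\sum_\mu q_\mu\circ p_\mu$ for any fixed decomposition $(p_\mu,q_\mu)$ of $\und i\otimes\und j$ into $\Irr\caD$, and bijectivity of $f$ ensures every such summand type is covered.
\end{itemize}
Both points are routine, so the proof goes through as you outline.
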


\begin{proof}
    We must check that all components $\tau_{a, b}$ are invertible. By \Cref{nat transf from simple components} it is sufficient to verify this for the simple components $\{\tau_{i, j}\}_{i, j \in \Irr \caC}$.

    By assumption, the maps $\phi_{i, j}^k$ are invertible. The functor $F$ is an equivalence, in particular it  acts invertibly on $\Hom$-spaces. Together with Eq. \eqref{eq:basic equation for simple tau} we conclude that precomposition with $\tau_{i, j}$ acts invertibly on $\caD(F(i \otimes j) \to F(k))$ for any $k \in \Irr \caC$. By \Cref{nat transf from simple components} we find that pre-composition by $\tau_{i, j}$ is an \emph{invertible natural} transformation.
    This natural transformation is precisely the Yoneda embedding $\yo_{\caD}^{\op}(\tau_{i, j})$. Since the Yoneda embedding is fully faithful it follows that $\tau_{i,j}$ is invertible.
\end{proof}

\begin{lemma} \label{lem:tau is unitary}
    If $\caC$ and $\caD$ are unitary monoidal categories and the isomorphisms $\{ \phi_{i, j}^k  \}_{i, j, k \in \Irr \caC}$ preserve direct sum decompositions, then $\tau$ is a unitary natural isomorphism.
\end{lemma}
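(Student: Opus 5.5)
By Lemma \ref{lem:tau is isomorphism} we already know $\tau$ is a natural isomorphism, so it remains to show unitarity of each component $\tau_{a,b}$. I would first reduce to simple objects. Since $F$ is a dagger functor (Lemma \ref{lem:functor from assignment of simples}), both $F(-)\otimes F(-)$ and $F(-\otimes -)$ are dagger functors $\caC\times\caC\to\caD$. The product $\caC\times\caC$ is semisimple with simples the pairs $(i,j)$. Lemma \ref{nat transf from simple components}, applied with orthogonal direct sum decompositions, then shows that $\tau$ is unitary as soon as every $\tau_{i,j}$ with $i,j\in\Irr\caC$ is unitary.

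For simple $i,j$, fix an orthogonal direct sum decomposition $(\pi_\kappa, \pi_\kappa^\dag)_\kappa$ of $i\otimes j$ into simples $k_\kappa$, so that $\sum_\kappa \pi_\kappa^\dag\pi_\kappa=\id_{i\otimes j}$ and $\pi_\lambda\pi_\kappa^\dag=\delta_{\lambda\kappa}\id_{k_\kappa}$. Since $F$ is a dagger functor, $(F(\pi_\kappa), F(\pi_\kappa)^\dag)$ is an orthogonal direct sum decomposition of $F(i\otimes j)$. By Eq. \eqref{eq:basic equation for simple tau} we have $F(\pi_\kappa)\circ\tau_{i,j}=\phi_{i,j}^{k_\kappa}(\pi_\kappa)$, and therefore
\[
\tau_{i,j}=\sum_\kappa F(\pi_\kappa)^\dag\,\phi_{i,j}^{k_\kappa}(\pi_\kappa).
\]
By hypothesis $\{\phi_{i,j}^{k_\kappa}(\pi_\kappa)\}_\kappa$ is an orthogonal direct sum decomposition of $F(i)\otimes F(j)$, i.e.
\[
\phi_{i,j}^{k_\lambda}(\pi_\lambda)\,\phi_{i,j}^{k_\kappa}(\pi_\kappa)^\dag=\delta_{\lambda\kappa}\id, \qquad \sum_\kappa\phi_{i,j}^{k_\kappa}(\pi_\kappa)^\dag\phi_{i,j}^{k_\kappa}(\pi_\kappa)=\id.
\]
The same relations hold for the $F(\pi_\kappa)$.

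The two products then follow directly. We get $\tau_{i,j}^\dag\tau_{i,j}=\sum_{\kappa,\lambda}\phi(\pi_\kappa)^\dag F(\pi_\kappa)F(\pi_\lambda)^\dag\phi(\pi_\lambda)=\sum_\kappa\phi(\pi_\kappa)^\dag\phi(\pi_\kappa)=\id$. Symmetrically, $\tau_{i,j}\tau_{i,j}^\dag=\sum_\kappa F(\pi_\kappa)^\dag F(\pi_\kappa)=\id$. Hence $\tau_{i,j}$ is unitary.

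The main point requiring care is the reduction step. Lemma \ref{nat transf from simple components} is stated for one-variable functors, so I would either note that $\caC\times\caC$ (linear in each variable) is semisimple with the expected simples, or apply the lemma twice, once in each variable. Either way, orthogonal decompositions of $a$ and $b$ give an orthogonal decomposition of $a\otimes b$ via the morphisms $\pi_\kappa\otimes\pi_\lambda$, and the lemma's formula makes $\tau_{a,b}^\dag$ the inverse of $\tau_{a,b}$.
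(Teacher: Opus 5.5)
Your proposal is correct and matches the paper's proof: reduce to simple components via Lemma \ref{nat transf from simple components}, and expand $\tau_{i,j}=\sum_\kappa F(\pi_\kappa)^\dag\circ\phi_{i,j}^{k_\kappa}(\pi_\kappa)$ along an orthogonal decomposition of $i\otimes j$ (the paper reads this off Eq.~\eqref{eq:components of phi nat transf}, you derive it from Eq.~\eqref{eq:basic equation for simple tau}). Orthogonality of both $\{F(\pi_\kappa)\}$ and $\{\phi_{i,j}^{k_\kappa}(\pi_\kappa)\}$ then gives $\tau_{i,j}\tau_{i,j}^\dag=\id_{F(i\otimes j)}$ and $\tau_{i,j}^\dag\tau_{i,j}=\id_{F(i)\otimes F(j)}$; of your two options for the two-variable reduction, applying the lemma separately in each variable is the sound one, since $\otimes$ is bilinear rather than linear on the Cartesian product $\caC\times\caC$, a point the paper passes over.
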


\begin{proof}
    We must check that all components $\tau_{a, b}$ are unitary. By \Cref{nat transf from simple components} it is sufficient to verify this for the simple components $\{\tau_{i, j}\}_{i, j \in \Irr \caC}$.
    Writing $\tau_{i,j}$ in terms of Eq. \eqref{eq:simple tensorator defined} by specialising the formula \eqref{eq:components of phi nat transf} to an orthogonal direct sum decomposition $(\pi_{\kappa})$ of $i \otimes j$, we get
    \begin{align*}
        \tau_{i,j} \circ \tau_{i,j}^\dagger &= \sum_{\kappa, \lambda} \, F(\pi_{\kappa}^\dag) \circ \phi_{i, j}^{i_{\kappa}}( \pi_{\kappa} ) \circ \phi_{i, j}^{i_{\lambda}}( \pi_{\lambda} )^\dag \circ F(\pi_\lambda) = \sum_\kappa F(\pi_{\kappa}^\dagger) \circ  F(\pi_\kappa) = \id_{i\otimes j}, \\
    \intertext{and}
        \tau_{i,j}^\dagger \circ \tau_{i,j} &= \sum_{\kappa, \lambda} \, \phi_{i, j}^{i_{\lambda}}( \pi_{\lambda} )^\dag \circ F(\pi_\lambda) \circ F(\pi_{\kappa}^\dag) \circ \phi_{i, j}^{i_{\kappa}}( \pi_{\kappa} )  
        = \sum_{\kappa} \, \phi_{i, j}^{i_{\kappa}}( \pi_{\kappa})^\dag  \circ  \phi_{i, j}^{i_{\kappa}}( \pi_{\kappa} ) =  \id_{i\otimes j},
    \end{align*}
    using that $(\phi_{i,j}^k(\pi_\kappa)$ is an orthogonal direct sum decomposition of $\und i \otimes \und j$.
\end{proof}

\subsection{Proof of Proposition \ref{prop:isomophic symbols implies isomorphic categories}} \label{sec:proof of isomorphic symbols implies isomorphic categories}

\begin{proof}
    From Lemmas \ref{lem:monoidal structure axiom}, \ref{lem:braided structure axiom}, \ref{lem:tau is isomorphism}, and \ref{lem:tau is unitary} it follows that $\tau$ is an appropriate tensorator for the functor $F$ in all cases. A compatible unitor isomorphism $\upsilon : \I \to F(\I)$ exists and is uniquely determined by \cite[~Proposition 2.4.3]{etingof2015tensor}.
\end{proof}

\bibliographystyle{unsrturl}
\bibliography{bib}

\end{document}